\documentclass[11pt]{article}

\usepackage[letterpaper,margin=1in]{geometry}

\usepackage{amsmath,amssymb,amsthm,mathtools}
\usepackage{graphicx}
\usepackage{xcolor}
\usepackage{microtype}
\usepackage{comment}

\usepackage[colorlinks=true,linkcolor=blue,citecolor=blue,urlcolor=blue]{hyperref}
\usepackage[nameinlink,capitalise,noabbrev]{cleveref}

\usepackage{enumitem}
\usepackage{booktabs}
\usepackage{algorithm}
\usepackage{algorithmic}

\newtheorem{theorem}{Theorem}[section]
\newtheorem{lemma}[theorem]{Lemma}
\newtheorem{proposition}[theorem]{Proposition}
\newtheorem{corollary}[theorem]{Corollary}

\theoremstyle{definition}

\theoremstyle{remark}
\newtheorem{remark}[theorem]{Remark}

\newcommand{\F}{\mathbb{F}}

\newcommand{\BW}{\mathrm{BW}}
\newcommand{\IO}{\mathrm{IO}}

\newcommand{\PGL}{\mathrm{PGL}}

\newcommand{\opt}{\mathrm{opt}}
\newcommand{\rank}{\mathrm{rank}}
\newcommand{\nz}{\mathrm{nz}}

\newcommand{\col}{\mathrm{col}}
\newcommand{\Col}{\mathrm{Col}}
\newcommand{\Span}{\mathrm{span}}
\newcommand{\GL}{\mathrm{GL}}

\newcommand{\id}{\mathrm{id}}
\newcommand{\Tr}{\operatorname{Tr}}

\makeatletter
\newcommand*\l@appsubsection{\@dottedtocline{2}{1.5em}{1.8em}}
\newcommand*\l@appsubsubsection{\@dottedtocline{3}{3.3em}{2.6em}}
\def\toclevel@appsubsection{2}
\def\toclevel@appsubsubsection{3}
\makeatother

\begin{document}

\begin{center}
  {\LARGE\bfseries
  Optimal Repair Bandwidth and Repair I/O \\[0.3em]
  of $(n,n-2,2)$ MDS Array Codes\par}
\end{center}

\vspace{0.6em}
\begin{center}
  {\large
    Huawei Wu\textsuperscript{1,*}
  }\\[0.55em]
  {\normalsize
    \textsuperscript{1}Shanghai Fintelli Box Technology Co., Ltd., Shanghai, 200127, China\\[0.35em]
    \textsuperscript{*}Corresponding author.
    \href{mailto:wuhuawei1996@gmail.com}{\texttt{wuhuawei1996@gmail.com}}
  }
\end{center}

\begin{abstract}
We give a complete determination of the exact optimal worst-case repair
bandwidth and repair I/O for linear exact repair of \((n,n-2,2)\) MDS array
codes over every finite field \(\F_q\) and for every admissible code length
\(3\le n\le q^2+1\). For repair
bandwidth, we prove that the optimum is governed, up to a short explicit list
of small exceptional cases, by the maximum of the sharpened \(n\)-only lower
bound \(\lceil(5n-8)/4\rceil\) and the projective counting, equivalently
incidence-multiplicity, bound \(2n-q-3\). For repair I/O, we obtain the
analogous exact formula with \(\lceil(4n-6)/3\rceil\) in place of
\(\lceil(5n-8)/4\rceil\), with the single special value at \(n=4\). Thus, we
completely resolve the first non-trivial redundancy and sub-packetization
regime \((r,\ell)=(2,2)\) for both repair bandwidth and repair I/O.
  \end{abstract}

%
\newpage
\tableofcontents
\newpage

\section{Introduction}

\emph{Maximum-distance-separable (MDS) array codes} are among the most important code families for distributed storage systems, since they provide the optimal trade-off between storage overhead and erasure tolerance. In such systems, node failures occur frequently, making single-node repair a fundamental operation. Accordingly, designing MDS array codes with low repair cost has become a central objective. The classical metric for repair cost is the \emph{repair bandwidth}, namely, the total amount of data downloaded during repair. 

The \emph{cut-set bound} established in the regenerating-code framework in \cite{dimakis2010network} has served as a fundamental benchmark for the study of repair bandwidth. At the \emph{minimum-storage point} of the resulting trade-off between node storage and repair bandwidth, one obtains \emph{minimum-storage regenerating (MSR) codes}, which remain MDS while achieving the cut-set bound. However, known lower bounds indicate that, under exact repair, achieving this optimum generally requires large sub-packetization, often exponential in the high-rate regime; see, for example, \cite{balaji2018tight,alrabiah2019exponential,balaji2022lower}. This motivates the study of MDS array codes with fixed and small sub-packetization, where a central goal is to characterize the trade-off among repair bandwidth, sub-packetization, and field size, and to develop explicit constructions with optimal or near-optimal repair bandwidth~\cite[Open Problem~9]{ramkumar2022codes}.

Besides repair bandwidth, another repair metric that has attracted growing attention is the \emph{repair I/O}, namely, the total amount of data accessed at the helper nodes during repair; see, for instance, \cite{dau2018repair,li2019costs,liu2024formula,liu2025calculating}. While repair bandwidth measures the communication cost, repair I/O captures the helper-side access cost, and the two quantities can behave quite differently; see, for example, \cite{zhang2025optimal}.

In this paper, as in \cite{zhang2025optimal}, we study \emph{linear exact repair} for $(n,n-2,2)$ MDS array codes, corresponding to the simplest non-trivial case $(r,\ell)=(2,2)$. Somewhat surprisingly, in this regime we can determine the exact optimum of the worst-case repair bandwidth and repair I/O for every admissible code length $n$.

Before presenting our main results, we briefly introduce the necessary terminology and notation. Throughout, we work directly with formal mathematical definitions and omit their interpretations in the distributed storage setting. Readers interested in the latter are referred to \cite{zhang2025optimal}.

Let $q$ be a prime power, and let $n,k,\ell$ be positive integers with $k<n$. Set $r:=n-k$ and write $[n]:=\{1,2,\dots,n\}$. An $(n,k,\ell)$ MDS array code over $\F_q$ is an $\F_q$-linear subspace
\[
\mathcal{C} \le (\F_q^\ell)^n
\]
of dimension $k\ell$. A codeword of $\mathcal{C}$ is written as
\[
\mathbf{c}=(\mathbf{c}_1,\mathbf{c}_2,\dots,\mathbf{c}_n),
\qquad \mathbf{c}_i\in\F_q^\ell,
\]
where $\mathbf{c}_i$ is the block stored in node $i$.

Equivalently, $\mathcal{C}$ can be specified by a block parity-check matrix
\[
H=[H_1\,H_2\,\cdots\,H_n]\in \F_q^{r\ell\times n\ell},
\qquad H_i\in \F_q^{r\ell\times \ell},
\]
of full row rank $r\ell$, via
\[
\mathcal{C}=\ker(H)=\bigl\{\mathbf{c}\in(\F_q^\ell)^n:\; H_1\mathbf{c}_1+\cdots+H_n\mathbf{c}_n=0\bigr\}.
\]
The code $\mathcal{C}$ is called \emph{MDS} if, for every subset $I\subseteq [n]$ with $|I|=r$, the square block matrix
\[
H_I:=[\,H_i\,]_{i\in I}\in \F_q^{r\ell\times r\ell}
\]
is invertible.

Now fix $i\in[n]$, and suppose that node $i$ fails, so that the block $\mathbf{c}_i$ is unavailable. A \emph{linear exact repair scheme} for node $i$ is specified by a matrix
\[
M\in \F_q^{\ell\times r\ell}
\]
such that $MH_i$ is invertible. Indeed, applying $M$ to the parity-check equation
\[
\sum_{j=1}^n H_j\mathbf{c}_j=0
\]
gives
\[
\sum_{j=1}^n MH_j\mathbf{c}_j=0.
\]
Since $MH_i$ is invertible, the missing block $\mathbf{c}_i$ is uniquely determined by
\[
\mathbf{c}_i
=
-(MH_i)^{-1}\sum_{j\ne i} MH_j\mathbf{c}_j.
\]
Accordingly, whenever $MH_i$ is invertible, we call $M$ a \emph{repair matrix} for node $i$.

For a repair matrix $M$ for node $i$, define its \emph{repair bandwidth} and \emph{repair I/O} by
\[
\BW_i(M):=\sum_{j\ne i}\rank(MH_j),
\qquad
\IO_i(M):=\sum_{j\ne i}\nz(MH_j),
\]
respectively, where, for a matrix $A$, $\nz(A)$ denotes the number of non-zero columns of $A$. The quantity $\rank(MH_j)$ represents the number of $\F_q$-symbols transmitted from helper node $j$, whereas $\nz(MH_j)$ represents the number of subsymbols of $\mathbf{c}_j$ that must be accessed to generate this transmission. It is clear that
\[
\IO_i(M)\ge \BW_i(M)
\]
for every repair matrix $M$.

Let
\[
\mathcal{M}_i(\mathcal{C})
:=
\{\,M\in\F_q^{\ell\times r\ell}: MH_i \text{ is invertible}\,\}
\]
denote the set of all repair matrices for node $i$. The optimal repair bandwidth and optimal repair I/O for repairing node $i$ are then defined by
\[
\beta_i(\mathcal{C})
:=
\min_{M\in \mathcal{M}_i(\mathcal{C})}\BW_i(M),
\qquad
\gamma_i(\mathcal{C})
:=
\min_{M\in \mathcal{M}_i(\mathcal{C})}\IO_i(M),
\]
respectively. Accordingly, the worst-case repair bandwidth and worst-case repair I/O of $\mathcal{C}$ are defined by
\[
\beta(\mathcal{C})
:=
\max_{i\in[n]}\beta_i(\mathcal{C}),
\qquad
\gamma(\mathcal{C})
:=
\max_{i\in[n]}\gamma_i(\mathcal{C}),
\]
respectively. Finally, let $\mathfrak{C}_{q,n,r,\ell}$ denote the set of all $(n,n-r,\ell)$ MDS array codes over $\F_q$. We define
\[
\beta_{\opt}^{(q,n,r,\ell)}
:=
\min_{\mathcal{C}\in \mathfrak{C}_{q,n,r,\ell}} \beta(\mathcal{C}),
\qquad
\gamma_{\opt}^{(q,n,r,\ell)}
:=
\min_{\mathcal{C}\in \mathfrak{C}_{q,n,r,\ell}} \gamma(\mathcal{C}).
\]
That is, $\beta_{\opt}^{(q,n,r,\ell)}$ and $\gamma_{\opt}^{(q,n,r,\ell)}$ denote, respectively, the minimum achievable worst-case repair bandwidth and the minimum achievable worst-case repair I/O over all $(n,n-r,\ell)$ MDS array codes over $\F_q$.

For general parameters $(q,n,r,\ell)$, determining the exact values of $\beta_{\opt}^{(q,n,r,\ell)}$ and $\gamma_{\opt}^{(q,n,r,\ell)}$ seems a difficult problem. However, in the case $r=\ell=2$, our main results completely resolve this problem.

\begin{theorem}[Main Result I]\label{thm:main-1}
  Let \(q\) be a prime power and let \(n\) be an integer satisfying
  \(3\le n\le q^2+1\). Then
  \[
    \beta_{\opt}^{(q,n,2,2)}
    =
    \max\left\{
      \left\lceil \frac{5n-8}{4}\right\rceil,\,
      2n-q-3
    \right\},
  \]
  except in the following cases, where the value of
  \(\beta_{\opt}^{(q,n,2,2)}\) is as specified below:
  \begin{enumerate}[label=(\alph*)]
    \item If \(n=5\) and either \(q\in\{3,4\}\) or \(q\ge 7\), then $\beta_{\opt}^{(q,5,2,2)}=4$.
    \item If \(n=6\) and either \(q=4\) or \(q\ge 7\), then $\beta_{\opt}^{(q,6,2,2)}=5$.
    \item If \(n=9\) and either \(q\) is odd with \(q\ge 7\), or \(q\) is even with \(q\ge 16\), then $\beta_{\opt}^{(q,9,2,2)}=9$. 
    \item If \(n=10\) and either \(q\) is odd with \(q\ge 7\) and \(q\neq 9\), or \(q\) is even with \(q\ge 16\), then $\beta_{\opt}^{(q,10,2,2)}=10$. 
  \end{enumerate}
  \end{theorem}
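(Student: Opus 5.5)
We will first translate the problem into projective geometry over \(\F_q\). Up to block-column operations, the MDS condition for \(r=\ell=2\) lets us normalise \(H_i\in\F_q^{4\times 2}\), so that each node corresponds to a \(2\)-dimensional subspace \(U_i=\Col(H_i)\le\F_q^4\), that is, a line of \(\mathrm{PG}(3,q)\). The MDS property says exactly that any two of these lines are skew, so a code is a partial spread of \(n\) lines; this explains \(n\le q^2+1\). A repair matrix \(M\) for node \(i\) corresponds to its \(2\)-dimensional kernel complement, i.e. a line \(L=\ker M\) skew to \(U_i\). Then \(\rank(MH_j)=2-\dim(L\cap U_j)\), so
\[
\BW_i(M)=2(n-1)-\#\{j\ne i:\ L\text{ meets }U_j\}.
\]
Hence \(\beta_i\) is \(2(n-1)\) minus the maximum number of spread lines, other than \(U_i\), that can be met by a line skew to \(U_i\). (Since the lines are pairwise skew, \(L\) meets each \(U_j\) in at most a point, and a line \(L\) equal to some \(U_j\) contributes only one.) The whole problem therefore becomes extremal: choose \(n\) pairwise skew lines that minimise, over \(i\), the maximal number of other lines simultaneously met by a transversal skew to \(U_i\).

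For the lower bounds, the key steps are as follows.

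(i) The counting bound \(2n-q-3\). A line \(L\) has \(q+1\) points, and each \(U_j\) met uses a distinct point of \(L\). So \(L\) meets at most \(q+1\) lines, giving \(\beta_i\ge 2(n-1)-(q+1)\).

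(ii) The \(n\)-only bound \(\lceil(5n-8)/4\rceil\). We need, for some node \(i\), a transversal skew to \(U_i\) meeting at least about \(3(n-1)/4\) of the others; note this already exceeds the trivial \(n/2\)-type bound. The plan is to pick three of the other lines, which determine a regulus \(R\) with its opposite regulus of \(q+1\) transversals. Each further line meets the hyperbolic quadric in at most \(2\) points or lies in it. Each such intersection point lies on exactly one transversal, so we average over transversals and over choices of the triple and of \(i\). A transversal skew to \(U_i\) exists unless \(U_i\) is in the regulus. A double-counting/convexity argument over all triples should force some transversal to meet \(3+\lceil(n-4)/\ldots\rceil\) lines, yielding \(\lceil (5n-8)/4\rceil\). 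I expect to sharpen this by combining it with the alternative of using two lines \(U_a,U_b\) and the \(q+1\) lines through a point of \(U_a\), chosen to maximise hits.

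(iii) The exceptional values. For \(n=5,6,9,10\) we must show the general formula is beaten by \(1\), which is an upper-bound issue; lower bounds at those \(n\) come from (i) and (ii) again.

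For upper bounds we need explicit spreads. For \(n\) close to \(q^2+1\) we take subsets of a regular (Desarguesian) spread, where the regulus structure bounds transversal hits by \(q+1\); this attains \(2n-q-3\). For the \(\lceil(5n-8)/4\rceil\) regime we will build configurations from unions of lines on a few reguli, arranged so that every transversal meets at most about \(3/4\) of them. The exceptions (a)–(d) need special small configurations whose field conditions (\(q\ge7\), \(q\ne 9\), even \(q\ge16\)) suggest an existence argument via conics or elliptic-quadric sections that avoid forced collinearities. These would be verified by hand or by computer for small \(q\).

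The main obstacle is the exact \(n\)-only lower bound together with the matching constructions: proving that for every partial spread some node admits a transversal hitting \(\lceil 3(n-1)/4\rceil\)-ish lines, with the precise ceiling, and certifying that the exceptional \(n\) are the only failures. I would attack it first through the regulus averaging in (ii), handling small \(n\le 10\) by exhaustive case analysis of line configurations.
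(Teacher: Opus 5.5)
Your reformulation and the counting bound (i) are fine (minor: if $L=U_j$ the saving at node $j$ is $2$, not $1$). Step (ii), however, is set up in the wrong direction, and it is the crux of the theorem. Since $\beta(\mathcal C)=\max_i\beta_i(\mathcal C)$ and $\beta_i=2(n-1)-\alpha_i$, the bound $\beta(\mathcal C)\ge\lceil(5n-8)/4\rceil$ is a \emph{non-existence} statement: some node $i$ must satisfy $\sum_{j\ne i}\dim(L\cap U_j)\le\lfloor 3n/4\rfloor$ for \emph{every} line $L$ skew to $U_i$. Exhibiting, for some $i$, a transversal that meets many lines only bounds that $\beta_i$ from above, so it says nothing about the maximum. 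Your construction paragraph has the mirror-image inversion: an upper bound needs, for \emph{every} node, a repair line skew to it meeting at least $n-\lceil n/4\rceil$ others. It does not need every transversal to meet at most about $3n/4$ lines. In any case, averaging over the $q+1$ transversals of three node lines gives roughly $3+2(n-4)/(q+1)$ hits per transversal unless many node lines share a regulus, which is $q$-dependent.

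The missing idea is a rigidity statement about \emph{repair kernels}, not node lines. Node lines meeting three pairwise skew repair kernels all lie in the regulus opposite to those kernels; in systematic form, the corresponding $W_j$ are projectively equal, by sharp $3$-transitivity of $\PGL_2(\F_q)$. This underlies the bound in \cite{zhang2025optimal} on the number of distinct minimal rank-$2$ sets $\mathcal R_i$ (at most four, or three in the other ordering case). Then $\sum_a s_a|S_a|\ge\sum_a s_a^2\ge(\sum_a s_a)^2/4$ yields $\lceil(5n-10)/4\rceil$. So the regulus is the right object, but it must be spanned by three repair kernels.

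Even granting that, the improvement to $\lceil(5n-8)/4\rceil$ for $n\equiv1,2\pmod 4$ (the new part of the paper's lower bound) needs an equality-case analysis your plan lacks. One shows all $|\mathcal R_i|$ are equal and $[n-2]$ splits into exact blocks with $W_j=\lambda_jT_a$ on each block; in the other case, points $(x_j,y_j)$ are forced onto both a line and a hyperbola $xy+\cdots=0$. Then the repair matrix of node $n-1$ has rank $1$ on at most two nodes per block, because $\det(U+\lambda VT_a)$ is a nonzero polynomial of degree at most $2$ in $\lambda$. These counts close only for $n\ge 13$, and they must fail at $n=5,6,9,10$, where $\lceil(5n-10)/4\rceil$ is attained for large $q$. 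So your claim that (ii) gives $\lceil(5n-8)/4\rceil$ while (a)--(d) beat it by $1$ is self-contradictory.

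Likewise (iii) is not only an upper-bound issue. For $q=5$ with $n=5,6$, $q=8$ with $n=9$, and $q\in\{8,9\}$ with $n=10$, the formula exceeds both $\lceil(5n-10)/4\rceil$ and $2n-q-3$. You must therefore prove, over that specific field, that no code attains $\lceil(5n-10)/4\rceil$; the paper does this by normalization and exhaustive search, using the equality structure for $n=9,10$ and a puncturing lemma for $n=6$. The exceptions themselves hold for infinitely many $q$, so they need parametrized constructions with an existence proof, not small-$q$ checks. For $n=9,10$ the paper uses a template in $C,T_1,\dots,T_4$ with Weil-bound parameter counts.

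Finally, subsets of a regular spread cannot cover the range where $2n-q-3$ dominates but $n<2q$. Such a subset attains $2n-q-3$ only if every node avoids some regulus contained in the code. Two reguli of a regular spread share at most two lines, which forces $n\ge 2q$. The range $\lceil 4(q+1)/3\rceil\le n<2q$ therefore needs a different construction. The paper uses the templates $L_1,L_2,L_3$ and graphs $\Gamma(\lambda T^d)$ with $[T]$ of order $q+1$ and $[T-I]=[T]^2$.
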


  \begin{theorem}[Main Result II]\label{thm:main-2}
    Let $q$ be a prime power and let $n$ be an integer with $3\le n\le q^2+1$. Then
    \[
    \gamma_{\opt}^{(q,n,2,2)}
    =
    \begin{cases}
      \max\left\{
    \left\lceil \frac{4n-6}{3}\right\rceil,\,
    2n-q-3\right\}&n\neq 4,\\
    3&n=4.
    \end{cases}
    \]
    \end{theorem}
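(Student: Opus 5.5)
We prove Main Result II by translating repair I/O into a geometric problem over the projective line $\PGL$-orbits in $\F_q^2$, and then proving matching lower and upper bounds.

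\textbf{Normalization.} For $r=\ell=2$, fix the failed node $i$. Since $MH_i$ is invertible, we may multiply $M$ on the left by $(MH_i)^{-1}$ and change bases of the code, so that $H_i$ becomes part of a systematic form. Each helper $j$ then contributes the $2\times 2$ block $A_j:=MH_j$, and the contribution satisfies
\[
\nz(A_j)\in\{0,1,2\},\qquad \nz(A_j)\ge\rank(A_j).
\]
Moreover, $\nz(A_j)\le 1$ exactly when $A_j$ is rank $\le 1$ and its row space is spanned by a coordinate vector $e_1$ or $e_2$. Two different helpers cannot both have $A_j=0$, because MDS forces $[MH_j\ MH_i]$ to have rank $2$ for all pairs. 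The MDS property does, however, allow at most one helper with $\nz=0$.

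In the projective picture each node corresponds to a point or line in $\mathrm{PG}(3,q)$. The blocks $H_j$ are pairwise-complementary $2$-dim subspaces of $\F_q^4$, i.e.\ a partial spread of size $n\le q^2+1$. The repair matrix $M$ is a $2$-dim quotient. Helpers with rank-one contribution are those whose spread line meets the kernel line $L=\ker M$. With I/O we further require that the image be a coordinate direction, which costs a choice of basis $e_1,e_2$ of each $\F_q^2$ independently per node. Counting: a line $L$ skew to $H_i$ meets at most $q+1$ other spread lines (one per point of $L$). The remaining helpers cost $2$ each, and one helper can cost $\le 1$ through alignment. This yields $\gamma_i\ge 2(n-1)-(q+1)-1 = 2n-q-4$, and the extra $+1$ to reach $2n-q-3$ comes from showing that not all $q+1$ points of $L$ can give contributions of $1$ while also achieving a zero contribution. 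I would take this counting bound from the earlier bandwidth analysis, since $\gamma\ge\beta$ and the $2n-q-3$ bound there applies verbatim.

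\textbf{Lower bound $\lceil(4n-6)/3\rceil$.} This is the $n$-only part and the hard step. Let $a_t$ denote the number of helpers with $\nz=t$. Then $a_0\le1$ and $\gamma_i=a_1+2a_2$. The constraint to exploit is that, over all $n$ failed-node choices simultaneously, the coordinate directions are fixed per node. So a helper $j$ with $\nz(MH_j)=1$ for failed node $i$ forces a specific coordinate column of $H_j$ to lie in a specific 3-space relative to $H_i$. I would double count triples or pairs $(i,j)$ with $\nz=1$ and show that each unordered pair of nodes can be ``cheap'' in few directions. From the I/O-efficient pattern I expect $a_1\le \tfrac{2}{3}(n-1)+O(1)$ on average, which gives worst-case $\ge \lceil(4n-6)/3\rceil$. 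The main obstacle is precisely this combinatorial inequality. I would first test it on the extremal small cases $n=4,5,6,7$ by direct analysis, which also explains why $n=4$ gives $3$ rather than $\lceil 10/3\rceil=4$: a zero helper plus two unit helpers is achievable.

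\textbf{Upper bounds.} These are constructions. For $n\le$ roughly $3q/2$ I would take the cyclic/tripartite structure suggested by the $4/3$ ratio. Partition the nodes into groups of three, built from systematic-type blocks $H_j=\begin{pmatrix}I\\ D_j\end{pmatrix}$ with $D_j$ diagonal or antidiagonal. The $D_j$ are chosen from distinct field elements such that $D_j-D_{j'}$ is always invertible, which guarantees MDS. For each node $i$, the repair matrix is then a coordinate projection in which two thirds of the helpers cost $1$. For large $n$, where $2n-q-3$ dominates, I would extend by adding nodes along a regulus or spread so that each new node costs $2$ while the existing cheap structure is preserved. The final step is to check $n=4$ by hand and to check that the two regimes meet at the crossover $\lceil(4n-6)/3\rceil=2n-q-3$.
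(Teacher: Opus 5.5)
There is a genuine gap exactly where you put the difficulty: the bound $\lceil(4n-6)/3\rceil$. Since $\gamma_i=2(n-1)-a_1(i)$ whenever $a_0(i)=0$, you need precisely $\min_i a_1(i)\le\lfloor 2n/3\rfloor$. A double count with an unspecified $O(1)$ cannot give an exact ceiling like this. The averaging argument behind Theorem~\ref{thm:ZLH-lb} already gives $\lceil(4n-7)/3\rceil$, which equals the target unless $n\equiv 1\pmod 3$. For $n=3m+4$ the missing unit comes from excluding the equality case $\gamma(\mathcal C)=4m+3$, not from a better average. The paper does this as follows:
\begin{itemize}
\item Equality forces $\bar\gamma=\gamma$, hence $|\mathcal N_i|=m+1$ for all $i$, then $t=2m+2$ and a rigid partition $[n-2]=P_1\sqcup P_2\sqcup Q$ (Lemma~\ref{lem:io-mod-1-structure}).
\item After monomial column changes, the repair kernel of a node in $Q\cup\{n-1\}$ must contain $e_3$ (resp.\ $e_4$) together with the second (resp.\ first) columns of two nodes of $P_1$ (resp.\ $P_2$). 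This forces a collinearity that contradicts MDS (Lemma~\ref{lem:io-mod-1-geometry}).
\end{itemize}
That step needs $|P_a|=m+1\ge 2$, which is exactly why $n=4$ is exceptional.

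Your explanation of $n=4$ is wrong. If $MH_j=0$ then $\ker M=\mathcal H_j$, so every other helper meets $\ker M$ trivially and costs $2$, giving I/O $4$. The value $3$ arises because all three helpers can have $\nz=1$ at once, matching $\lceil(4\cdot 4-7)/3\rceil=3$. Your ``$2n-q-4$ plus one'' count is also confused, though harmless since you fall back on $\gamma\ge\beta\ge 2n-q-3$. Pairwise skewness gives $\sum_{j\ne i}\dim(W\cap\mathcal H_j)\le q+1$ directly.

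The constructions are not viable as sketched either. With your systematic blocks and $D_j$ diagonal, the two columns of $H_j$ lie on $\langle e_1,e_3\rangle$ and $\langle e_2,e_4\rangle$. With $D_j$ antidiagonal they lie on $\langle e_1,e_4\rangle$ and $\langle e_2,e_3\rangle$. Each family meets the other family's two lines trivially, so you only get a two-class scheme with I/O about $3n/2$.

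A three-class scheme needs repair lines $W_1,W_2,W_3$ such that each node of class $z$ misses $W_z$ and meets both other lines in column points. Distinct nodes meet $W_a$ in distinct points, so $|G_b|+|G_c|\le q+1$ and the length is at most $\lfloor 3(q+1)/2\rfloor$. You must essentially reach this maximum. Indeed $\lceil(4n-6)/3\rceil>2n-q-3$ for every $n<\widetilde N_{\IO}(q)$, so short codes are needed up to $\widetilde N_{\IO}(q)-1$, which equals $\lfloor 3(q+1)/2\rfloor$ for even $q$. Codes attaining $2n-q-3$ are needed from $\widetilde N_{\IO}(q)$ on. The paper gets these from $\F_{q^2}^\times$-orbits with residue-class index sets, plus, for even $q$, an extra node $\Gamma(A_s)$ chosen via a trace condition.

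In the long regime, leaving the old repairs unaffected is not enough:
\begin{itemize}
\item each new node needs its own repair line carrying $q+1$ chosen column points (the paper reuses $W_1$);
\item you must exhibit enough such new nodes up to $n=2q+1$;
\item for $2q+2\le n\le q^2+1$ you need an attainability result, which the paper cites. ``Extending along a regulus or spread'' does not by itself guarantee that your partial spread extends to every size.
\end{itemize}
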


  By \cite[Lemma~3.3]{liu2026linear}, every $(n,n-2,2)$ MDS array code over
$\F_q$ satisfies $n\le q^2+1$. Hence the range $3\le n\le q^2+1$ in
Theorem~\ref{thm:main-1} and Theorem~\ref{thm:main-2} already cover all possible code lengths. Therefore, these two theorems give a complete determination of
$\beta_{\opt}^{(q,n,2,2)}$ and $\gamma_{\opt}^{(q,n,2,2)}$.

Moreover, the values appearing in Theorem~\ref{thm:main-1} and Theorem~\ref{thm:main-2} are not ad hoc. Rather, they are dictated by the two lower-bound mechanisms that are currently known to be effective in the case $r=\ell=2$. More precisely, the first terms
\[
\left\lceil \frac{5n-8}{4}\right\rceil
\qquad\text{and}\qquad
\left\lceil \frac{4n-6}{3}\right\rceil
\]
originate from the lower bounds established in \cite{zhang2025optimal} for $(n,n-2,2)$ MDS array codes. The bounds proved there differ slightly from the above expressions; in Section~\ref{sec:lb} we sharpen them to the form stated in Theorem~\ref{thm:main-1} and Theorem~\ref{thm:main-2}. The second term
\[
2n-q-3
\]
comes from the projective counting bound proved in \cite{liu2026linear}, and, more generally, from the incidence-multiplicity bound established in \cite{wu2026incidence}. In the special case $r=\ell=2$, both bounds reduce to the same quantity $2n-q-3$. Therefore, Theorem~\ref{thm:main-1} and Theorem~\ref{thm:main-2} show that, in the regime $r=\ell=2$, the two known lower bounds already capture the exact optimum of the worst-case repair bandwidth and repair I/O.

The remainder of the paper is organized as follows. In Section~\ref{sec:prelim}, we recall the
intrinsic subspace reformulation of linear exact repair proposed in \cite{liu2026linear}. In Section~\ref{sec:lb}, we revisit the lower bounds
from~\cite{zhang2025optimal} and sharpen them to
\[
\beta(\mathcal C)\ge \left\lceil \frac{5n-8}{4}\right\rceil,
\qquad
\gamma(\mathcal C)\ge \left\lceil \frac{4n-6}{3}\right\rceil.
\]
In Section~\ref{sec:short}, we construct $(n,n-2,2)$ MDS array codes attaining
these bounds in the short-length regime. In Section~\ref{sec:long}, we turn to the $q$-dependent bound $2n-q-3$
and construct codes attaining it in the long-length regime. Finally, in Section~\ref{sec:mainproof}, we assemble the preceding results
and complete the proofs of Theorems~\ref{thm:main-1} and~\ref{thm:main-2},
using the small-parameter results proved in the appendix.

\section{The Subspace Reformulation of Linear Exact Repair}\label{sec:prelim}

Following the intrinsic viewpoint developed in
\cite{liu2026linear,wu2026incidence}, we reformulate linear exact repair in
terms of linear subspaces. We only record the definitions and notation that will be needed later.

Let
\[
\mathcal C=\ker(H)\le (\F_q^\ell)^n,
\qquad
H=[H_1\,H_2\,\cdots\,H_n]\in \F_q^{r\ell\times n\ell},
\qquad
H_i\in \F_q^{r\ell\times \ell},
\]
be an $(n,n-r,\ell)$ MDS array code over $\F_q$. Set
\[
\mathbb{V}:=\F_q^{r\ell}.
\]
For each $i\in[n]$, define the \emph{node subspace}
\[
\mathcal H_i:=\Col(H_i)\le \mathbb{V},
\]
where $\Col(H_i)$ denotes the column space of $H_i$.
Since $\mathcal C$ is MDS, each $\mathcal H_i$ has dimension $\ell$, and for
every subset $I\subseteq[n]$ with $|I|=r$,
\[
\mathbb{V}=\bigoplus_{i\in I}\mathcal H_i.
\]
Conversely, any family of $\ell$-dimensional subspaces
\[
\mathcal H_1,\dots,\mathcal H_n\le \mathbb V
\]
satisfying
\[
\mathbb V=\bigoplus_{i\in I}\mathcal H_i
\qquad\text{for every }I\subseteq [n]\text{ with }|I|=r
\]
determines an $(n,n-r,\ell)$ MDS array code over $\F_q$.

Now fix $i\in[n]$ and let
\[
M\in \mathcal M_i(\mathcal C)
=
\{\,M\in\F_q^{\ell\times r\ell}: MH_i \text{ is invertible}\,\}
\]
be a repair matrix for node $i$. Write
\[
W:=\ker(M)\le \mathbb{V}.
\]
Then $W$ has dimension $(r-1)\ell$, and the condition that $MH_i$ be invertible
is equivalent to
\[
W\cap \mathcal H_i=\{0\}.
\]
This leads to the family of \emph{feasible repair subspaces}
\[
\mathcal W_i:=
\{\,W\le \mathbb{V}:\ \dim W=(r-1)\ell,\ W\cap \mathcal H_i=\{0\}\,\}.
\]
Conversely, every $W\in\mathcal W_i$ is the kernel of some repair matrix for
node $i$.

The repair bandwidth admits a simple intrinsic description in terms of
intersection dimensions. If $M\in\mathcal M_i(\mathcal C)$ and $W=\ker(M)$,
then for every $j\ne i$,
\[
\rank(MH_j)=\ell-\dim(W\cap \mathcal H_j).
\]
Hence
\[
\BW_i(M)
=
\ell(n-1)-\sum_{j\ne i}\dim(W\cap \mathcal H_j).
\]
Accordingly, define
\[
\alpha_i(\mathcal C)
:=
\max_{W\in\mathcal W_i}
\sum_{j\ne i}\dim(W\cap \mathcal H_j).
\]
Then
\[
\beta_i(\mathcal C)=\ell(n-1)-\alpha_i(\mathcal C).
\]

To encode repair I/O, one must retain not only the node subspaces
$\mathcal H_i$, but also the column directions inside each block $H_i$. Write
\[
  H_i=[\,h_{i,1}\ \cdots\ h_{i,\ell}\,],\qquad h_{i,t}\in \mathbb{V},
\]
and define the associated \emph{projective column set}
\[
\mathcal X_i
:=
\{\,\langle h_{i,1}\rangle,\dots,\langle h_{i,\ell}\rangle\,\}
\subseteq \mathbb{P}(\mathcal H_i),
\]
where $\mathbb{P}(U)$ denotes the set of $1$-dimensional subspaces of a non-zero vector space $U$.

Let $M\in\mathcal M_i(\mathcal C)$ and $W=\ker(M)\in\mathcal W_i$. For each
$j\ne i$, define
\[
z_j(W):=\bigl|\mathcal X_j\cap \mathbb{P}(W)\bigr|.
\]
Equivalently, $z_j(W)$ is the number of columns of $H_j$ that lie in $W$. Since
the $t$-th column of $MH_j$ is zero if and only if $h_{j,t}\in W$, we have
\[
\nz(MH_j)=\ell-z_j(W).
\]
Therefore
\[
\IO_i(M)
=
\ell(n-1)-\sum_{j\ne i} z_j(W).
\]
Accordingly, define
\[
\lambda_i(\mathcal C)
:=
\max_{W\in\mathcal W_i}
\sum_{j\ne i}\bigl|\mathcal X_j\cap \mathbb{P}(W)\bigr|.
\]
Then
\[
\gamma_i(\mathcal C)=\ell(n-1)-\lambda_i(\mathcal C).
\]

In summary, linear exact repair of node $i$ is encoded by a feasible repair
subspace $W\in\mathcal W_i$. Under this reformulation, the quantities $\beta_i(\mathcal C)$ and
$\gamma_i(\mathcal C)$ are determined by $\alpha_i(\mathcal C)$ and
$\lambda_i(\mathcal C)$, respectively, where $\alpha_i(\mathcal C)$ and
$\lambda_i(\mathcal C)$ are obtained by maximizing
\[
\sum_{j\ne i}\dim(W\cap \mathcal H_j)
\qquad\text{and}\qquad
\sum_{j\ne i}\bigl|\mathcal X_j\cap \mathbb{P}(W)\bigr|,
\]
respectively, over all $W\in\mathcal W_i$.

This reformulation is particularly convenient for the study of the projective
counting bound, and it will be used repeatedly in
Section~\ref{sec:long}.

\section{%
  \texorpdfstring{A Sharpening of the $n$-Only Lower Bounds}%
                {A Sharpening of the n-Only Lower Bounds}%
}\label{sec:lb}

We begin by recalling the $n$-only lower bounds established in \cite{zhang2025optimal} for $(n,n-2,2)$ MDS array codes. 

\begin{theorem}[{\cite[Corollaries~2 and~4]{zhang2025optimal}}]\label{thm:ZLH-lb}
Let $\mathcal C$ be an $(n,n-2,2)$ MDS array code over $\F_q$. Then
\[
\beta(\mathcal C)\ge \left\lceil \frac{5n-10}{4}\right\rceil,
\qquad
\gamma(\mathcal C)\ge \left\lceil \frac{4n-7}{3}\right\rceil.
\]
\end{theorem}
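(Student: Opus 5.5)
The plan is to work entirely in the subspace reformulation of Section~\ref{sec:prelim} with $r=\ell=2$. Here $\mathbb V=\F_q^4$ and the node subspaces $\mathcal H_1,\dots,\mathcal H_n$ are lines of $\mathrm{PG}(3,q)$ that are pairwise skew. A feasible repair subspace $W\in\mathcal W_i$ is a line skew to $\mathcal H_i$. Since $\beta_i=2(n-1)-\alpha_i$, the bandwidth bound is equivalent to the existence of some $i$ with
\[
\alpha_i(\mathcal C)\le 2n-2-\left\lceil\tfrac{5n-10}{4}\right\rceil.
\]
The I/O bound is similarly equivalent to $\lambda_i(\mathcal C)\le 2n-2-\lceil(4n-7)/3\rceil$ for some $i$. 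I would argue by contradiction: assume every node admits an optimal $W_i$ whose value exceeds this threshold, and derive an incompatibility among several of the $W_i$.

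\textbf{Step 1 (local structure of one $W$).} If $W=\mathcal H_j$ for some $j\ne i$, then $W$ meets every other $\mathcal H_k$ trivially, so the value is only $2$. Otherwise $W$ meets each $\mathcal H_j$ in at most a point, and these points are distinct because the $\mathcal H_j$ are pairwise skew. Thus $\alpha_i$ equals $|S_i|$, where $S_i\subseteq[n]\setminus\{i\}$ is the set of node lines met by the transversal $W_i$. For I/O, each hit contributes $1$ only if the intersection point is one of the two marked points in $\mathcal X_j$. So $\lambda_i$ counts marked points lying on $W_i$, and $W_i$ is spanned by two such points when $\lambda_i\ge 2$.

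\textbf{Step 2 (interaction of two transversals).} Take two distinct optimal lines $W_i,W_{i'}$. If they share three node lines $\mathcal H_a,\mathcal H_b,\mathcal H_c$, then both lie in the opposite regulus of the regulus $\mathcal R(a,b,c)$. Every further node line met by both must then belong to $\mathcal R(a,b,c)$ and hence meet every transversal of it. In particular, if $\mathcal H_i\in\mathcal R(a,b,c)$, then $\mathcal H_i$ would meet $W_i$, contradicting feasibility.

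Two consequences follow, with the degenerate case where $W_i$ and $W_{i'}$ meet handled separately. First, $i'\notin S_i$ or $i\notin S_{i'}$ forces $|S_i\cap S_{i'}|$ to be small. Second, $|S_i\cap S_{i'}|\ge\alpha_i+\alpha_{i'}-(n-2)$ by inclusion--exclusion inside $[n]\setminus\{i,i'\}$.

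\textbf{Step 3 (double counting over a small set of nodes).} For bandwidth, I would pick four nodes $i_1,\dots,i_4$ adaptively, for example $i_2\notin S_{i_1}$ and so on, and sum the inclusion--exclusion inequalities over the pairs. The regulus obstruction bounds each pairwise intersection. This should yield $4\alpha\le 3n+c$, giving the denominator $4$ and the constant $-10$. For I/O the marked-point restriction is stronger: two marked points determine $W$, so a marked-point transversal is determined by fewer incidences. I expect three nodes to suffice, giving $3\lambda\le 2n+c'$ and hence the $(4n-7)/3$ bound.

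The main obstacle is Step 3. Choosing the nodes so that the regulus argument applies to every pair simultaneously, including the cases where some $W_i$ coincide, intersect, or where the $S$-sets contain the chosen nodes themselves, requires a careful case analysis. I would first try the greedy choice $i_{t+1}\notin\bigcup_{s\le t}S_{i_s}\cup\{i_s\}$. If that fails, I would fall back to averaging over all $n$ choices of $i$, counting pairs $(i,j)$ with $j\in S_i$.
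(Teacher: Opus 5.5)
Your Step~2 rests on a false incidence claim, and Step~3, where the bound would actually have to be produced, is not carried out.

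\textbf{The regulus claim in Step~2.} Suppose $W_i,W_{i'}$ are both transversals of the pairwise skew lines $\mathcal H_a,\mathcal H_b,\mathcal H_c$. A further node line meeting both of them need not lie in $\mathcal R(a,b,c)$. Two skew lines have $(q+1)^2$ common transversals, and a line is forced into $\mathcal R(a,b,c)$ only if it meets \emph{three} lines of the opposite regulus.

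The $(6,4,2)$ code of Proposition~\ref{prop:beta-n6} is an explicit counterexample. There every $W_i$ meets all five other node lines, so $W_1$ and $W_2$ both meet $\mathcal H_3,\dots,\mathcal H_6$. If $\mathcal H_6$ were in the regulus through $\mathcal H_3,\mathcal H_4,\mathcal H_5$, then $W_3$, which meets $\mathcal H_4,\mathcal H_5,\mathcal H_6$, would belong to the opposite regulus and hence meet $\mathcal H_3$. It does not, so the claim fails.

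\textbf{Why pairwise counting cannot work.} The ``small $|S_i\cap S_{i'}|$'' consequence is therefore unsupported. No pairwise argument can succeed in any case: the inclusion--exclusion value $|S_i\cap S_{i'}|\approx 2\cdot\frac{3n}{4}-n=\frac n2$ is actually realized by distinct repair lines. For example, $W_1$ and $W_2$ in Lemma~\ref{lem:endpoint-short} both hit all of $G_3\cup G_4$. The obstruction lives only in four-wise overlaps for bandwidth and three-wise overlaps for I/O. Your greedy rule also points the wrong way. An index outside $\bigcup_s S_{i_s}\cup\{i_s\}$ lies in the intersection of the small miss sets, which is typically empty.

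\textbf{The missing idea: nest the choices inside hit sets.} Then a chosen node's own line witnesses the contradiction. Suppose $\beta(\mathcal C)<(5n-10)/4$. No optimal $W_i$ is a node line, since that would give $\beta_i=2n-4$. Hence $|S_i|>(3n+2)/4$ for all $i$.
\begin{enumerate}[label=(\arabic*)]
\item Take $i_2\in S_{i_1}$, $i_3\in S_{i_1}\cap S_{i_2}$ and $i_4\in S_{i_1}\cap S_{i_2}\cap S_{i_3}$; these exist by counting. Then $W_{i_1},\dots,W_{i_4}$ are pairwise distinct.
\item Two meeting lines have at most two pairwise skew common transversals. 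The pairwise overlaps exceed $(n+2)/2$, so they are at least $3$, which makes $W_{i_1},W_{i_2},W_{i_3}$ pairwise skew.
\item Their common transversals, $\mathcal H_{i_4}$ among them, form the opposite regulus $\mathcal Q'$ of the regulus $\mathcal Q$ through them.
\item The four-wise overlap of $S_{i_1},\dots,S_{i_4}$ exceeds $4\cdot\frac{3n+2}{4}-3n=2$. So $W_{i_4}$ meets three lines of $\mathcal Q'$, hence lies in $\mathcal Q$ and meets $\mathcal H_{i_4}$, contradicting feasibility.
\end{enumerate}

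For I/O, three nested nodes suffice. If $\gamma(\mathcal C)<(4n-7)/3$, the marked-hit sets have size $>(2n+1)/3$. Pairwise overlaps then exceed $(n+2)/3$, which makes the three repair lines pairwise skew for $n\ge4$ (smaller $n$ is trivial), while the triple overlap is non-empty. A node in it carries only two column points, so two of the three lines pass through the same point.

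Assuming only $\gamma(\mathcal C)\le(4n-7)/3$, the same counts still give pairwise overlaps $\ge(n+2)/3\ge3$ for $n\ge7$ and a non-empty triple overlap. So this route even proves $\gamma(\mathcal C)\ge\lceil(4n-6)/3\rceil$ for $n\ge7$.

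\textbf{Comparison with the paper.} The paper imports the statement from \cite{zhang2025optimal}, whose proof, recalled in Section~\ref{sec:lb}, is global rather than local. It uses $\prec$-minimal miss sets $\mathcal R_i$ and the index $t$. Sharp $3$-transitivity of $\PGL_2(\F_q)$ allows at most three or four distinct $\mathcal R_i$ among the first $t$ nodes, which gives $\sum_i|\mathcal R_i|\ge n(n-2)/4$. That bounds even the average $\bar\beta(\mathcal C)$, which the paper's equality-case analysis relies on.
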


The authors of \cite{zhang2025optimal} also gave two explicit constructions,
denoted by $C_1$ and $C_2$, which already suggest that these bounds are not yet
best possible. More precisely, the construction $C_1$ is defined over $\F_q$
for $n\le q-3$ and satisfies
\[
\beta(C_1)=\left\lceil \frac{5n-8}{4}\right\rceil,
\]
while the construction $C_2$ is defined over $\F_q$ for $n\le q-1$ and
satisfies
\[
\gamma(C_2)=\left\lceil \frac{4n-6}{3}\right\rceil.
\]

One notices that
\[
\left\lceil \frac{5n-8}{4}\right\rceil-\left\lceil \frac{5n-10}{4}\right\rceil
=
\begin{cases}
1, & n\equiv 1,2 \pmod 4,\\
0, & n\equiv 0,3 \pmod 4,
\end{cases}
\]
and
\[
\left\lceil \frac{4n-6}{3}\right\rceil-\left\lceil \frac{4n-7}{3}\right\rceil
=
\begin{cases}
1, & n\equiv 1 \pmod 3,\\
0, & n\equiv 0,2 \pmod 3.
\end{cases}
\]
In particular, when $n\equiv 1,2\pmod 4$, the value $\beta(C_1)$ exceeds the
lower bound in Theorem~\ref{thm:ZLH-lb} by exactly $1$, and when $n\equiv 1\pmod 3$, the value $\gamma(C_2)$ exceeds the corresponding
lower bound in Theorem~\ref{thm:ZLH-lb} by exactly $1$.

We shall show in this section that the issue does not lie in the constructions
$C_1$ and $C_2$: they are already optimal (apart from some small-parameter cases). Rather, the lower bounds in
Theorem~\ref{thm:ZLH-lb} are not yet tight and need to be sharpened to
\[
\beta(\mathcal C)\ge \left\lceil \frac{5n-8}{4}\right\rceil,
\qquad
\gamma(\mathcal C)\ge \left\lceil \frac{4n-6}{3}\right\rceil.
\]

In the remainder of this section, we always assume, as at the beginning of Section~3 of~\cite{zhang2025optimal}, that the parity-check matrix~$H$ is in systematic form:
\[
    H=
    \begin{bmatrix}
    A_1&\cdots&A_{n-2}&I_2&0\\
    B_1&\cdots&B_{n-2}&0&I_2
    \end{bmatrix},
    \]
  with each \(A_j,B_j\in\GL_2(\F_q)\); this is without loss of generality. To facilitate comparison with \cite{zhang2025optimal}, we align our notation in this section with theirs as far as possible. Although this may clash with some symbols used earlier in the present paper, no ambiguity should arise.

\subsection{The Repair Bandwidth Lower Bound}\label{subsec:repair-bd-lb}

Before proceeding, we recall some notation from \cite[Subsection~3.1]{zhang2025optimal}. For any $i\in [n]$, let $M_i\in\mathcal{M}_i(\mathcal{C})$ be such that
$$\beta_i(\mathcal{C})=\BW_i(M_i).$$
By \cite[Lemma~6]{zhang2025optimal}, we may assume that
\[
M_iH_j\neq 0
\qquad\text{for all }j\in[n].
\]
For any matrix
\(M\in\mathbb{F}_q^{2\times 4}\), define
\[
\mathcal R(M):=\{\,j\in[n]: \rank(MH_j)=2\,\}.
\]
Then
\[
\beta_i(\mathcal C)=\BW_i(M_i)=n-2+|\mathcal R(M_i)|.
\]

Let \(\prec\) denote the total order on subsets of \([n]\) introduced in \cite[Definition~1]{zhang2025optimal}. For each \(i\in[n]\), let
\(\mathcal R_i\) denote the minimum set in
\[
\{\,\mathcal R(M): M\in\mathcal M_i(\mathcal C)\,\}
\]
with respect to \(\prec\), and let $\widetilde{M}_i\in\mathcal{M}_i(\mathcal{C})$ be such that $\mathcal R(\widetilde{M}_i)=\mathcal R_i$. Then we have
$$\beta_i(\mathcal{C})\ge n-2+|\mathcal R_i|=n-2+|\mathcal R(\widetilde{M}_i)|\geq\BW_i(\widetilde{M}_i).$$
By definition, we have $\beta_i(\mathcal{C})\le\BW_i(\widetilde{M}_i)$, which implies that $\beta_i(\mathcal{C})=\BW_i(\widetilde{M}_i)$ and thus
$$\widetilde{M}_iH_j\ne 0\qquad\text{for all }j\in[n].$$
Therefore, we may assume that $M_i=\widetilde{M}_i$ and 
$$\beta_i(\mathcal{C})=n-2+|\mathcal R_i|\qquad\text{for all }i\in[n].$$

After relabeling the coordinates if necessary, we may assume that \[ \mathcal R_1\preceq \mathcal R_2\preceq \cdots \preceq \mathcal R_n. \] 
\cite[Lemma~8]{zhang2025optimal} ensures that there exists some \(i\in[n-2]\) such that
\[
\mathcal R_i\prec \mathcal R_{n-1}.
\]
Hence we can define
\[
t:=\max\{\,i\in[n-2]: \mathcal R_i\prec \mathcal R_{n-1}\,\}.
\]

Since we only study repair bandwidth in this subsection, we may further apply a suitable block-wise invertible column transformation and assume that
\[
H_j=
\begin{bmatrix}
I_2\\
W_j
\end{bmatrix}
\quad (j\in[n-2]),\qquad
H_{n-1}=
\begin{bmatrix}
I_2\\
0
\end{bmatrix},
\qquad
H_n=
\begin{bmatrix}
0\\
I_2
\end{bmatrix},
\]
with each \(W_j\in \GL_2(\F_q)\). For each \(i\in[t]\), let $\mathbf{u}_i$ and $\mathbf{v}_i$ be the vectors in $\F_q^2$ defined in \cite[Lemma~9]{zhang2025optimal}.

The following lemma will be used repeatedly in what follows.

\begin{lemma}\label{lem:three-witness}
  Let \(A,B,C\) be three pairwise distinct sets among
  \[
  \mathcal R_1,\dots,\mathcal R_t.
  \]
  Let \(P\subseteq [n-2]\) be such that \(P\cap A=P\cap B=P\cap C=\varnothing\). Then all matrices \(W_j\) with \(j\in P\) lie in the same projective class.
\end{lemma}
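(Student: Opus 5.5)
The plan is to turn the three distinct minimal sets into three incidence conditions on each $W_j$ with $j\in P$, and then use the fact that a projective transformation of $\mathbb P^1$ is pinned down by the images of three distinct points. If $|P|\le 1$ there is nothing to prove, so assume $|P|\ge 2$.

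First I will recall what \cite[Lemma~9]{zhang2025optimal} provides for each $i\in[t]$. I expect it to say the following. The optimal repair matrix $M_i=\widetilde M_i$ can be normalized so that its action on the blocks $H_j=\begin{bmatrix}I_2\\ W_j\end{bmatrix}$ is governed by the nonzero vectors $\mathbf u_i,\mathbf v_i\in\F_q^2$. Moreover, for every $j\in[n-2]\setminus\mathcal R_i$ the condition $\rank(M_iH_j)=1$ is equivalent to a relation of the form
\[
\mathbf u_i^{\top}W_j\in\langle \mathbf v_i^{\top}\rangle .
\]
In projective language, the projective map induced by $W_j^{\top}$ sends the point $\langle\mathbf u_i\rangle$ to the point $\langle\mathbf v_i\rangle$. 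I will state this reformulation explicitly, up to transposes, since everything else rests on it.

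Next, pick indices $a,b,c\in[t]$ with $\mathcal R_a=A$, $\mathcal R_b=B$ and $\mathcal R_c=C$. Since $P$ is disjoint from $A$, $B$ and $C$, every $j\in P$ satisfies the three relations
\[
W_j^{\top}\langle\mathbf u_a\rangle=\langle\mathbf v_a\rangle,\qquad
W_j^{\top}\langle\mathbf u_b\rangle=\langle\mathbf v_b\rangle,\qquad
W_j^{\top}\langle\mathbf u_c\rangle=\langle\mathbf v_c\rangle .
\]
By the fundamental theorem for $\PGL_2(\F_q)$ acting on $\mathbb P^1(\F_q)$, an element of $\PGL_2$ is uniquely determined by the images of three distinct points. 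So if $\langle\mathbf u_a\rangle$, $\langle\mathbf u_b\rangle$ and $\langle\mathbf u_c\rangle$ are pairwise distinct, all $W_j^{\top}$ with $j\in P$, and hence all $W_j$, define the same element of $\PGL_2(\F_q)$. That is exactly the claim.

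The main obstacle is showing that the three points $\langle\mathbf u_a\rangle$, $\langle\mathbf u_b\rangle$, $\langle\mathbf u_c\rangle$ are pairwise distinct. I will argue by contradiction: suppose $\langle\mathbf u_a\rangle=\langle\mathbf u_b\rangle$. Since $|P|\ge 2$, take any $j\in P$. Because $W_j$ is invertible, $W_j^{\top}$ maps this common point to a single point, so $\langle\mathbf v_a\rangle=\langle\mathbf v_b\rangle$ as well. The rank-one condition on each $j\in[n-2]$, and also on the nodes $n-1,n$ as encoded in Lemma~9, depends only on the projective pair $(\langle\mathbf u\rangle,\langle\mathbf v\rangle)$. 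Hence $\mathcal R(\widetilde M_a)$ and $\mathcal R(\widetilde M_b)$ are determined by the same data. To conclude $A=B$, I would show that one of the two repair matrices is also feasible for the other node, for instance by checking via the construction in Lemma~9 that $\widetilde M_a H_b$ is invertible (or modifying it so that it is). Then the $\prec$-minimality of both $\mathcal R_a$ and $\mathcal R_b$ forces $A=B$, contradicting distinctness. If this feasibility transfer fails, I would instead use the exact characterization in \cite[Lemma~9]{zhang2025optimal} of $\mathcal R_i$ in terms of $(\mathbf u_i,\mathbf v_i)$ to show directly that equal projective pairs give equal sets.
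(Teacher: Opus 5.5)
Your proposal is correct and follows the paper's route. The paper gets the three incidence relations from \cite[Lemma~10]{zhang2025optimal}, where the relation reads $\langle\mathbf v_iW_j\rangle=\langle\mathbf u_i\rangle$; your roles of $\mathbf u_i,\mathbf v_i$ are swapped, which is harmless. It then obtains pairwise distinctness of the three source points by arguing as in the proof of \cite[Lemma~12]{zhang2025optimal}, and finishes with sharp $3$-transitivity of $\PGL_2(\F_q)$ via \cite[Theorem~11]{zhang2025optimal}. Your hedged distinctness step already closes without any feasibility transfer or appeal to $\prec$-minimality: after normalizing $\widetilde M_i$ to $\operatorname{diag}(\mathbf u_i,\mathbf v_i)$ one has $\mathcal R_i=\{j\in[n-2]:\langle\mathbf v_iW_j\rangle\neq\langle\mathbf u_i\rangle\}$, so equal pairs $(\langle\mathbf u_a\rangle,\langle\mathbf v_a\rangle)=(\langle\mathbf u_b\rangle,\langle\mathbf v_b\rangle)$ give $A=B$ directly.
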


\begin{proof}
  Fix \(j,j'\in P\). Choose \(a,b,c\in[t]\) such that
  \[
  \mathcal R_a=A,\qquad \mathcal R_b=B,\qquad \mathcal R_c=C.
  \]
  Since \(j,j'\notin \mathcal R_a,\mathcal R_b,\mathcal R_c\), \cite[Lemma~10]{zhang2025optimal} gives
  \[
  \langle \mathbf v_aW_j\rangle=\langle \mathbf u_a\rangle
  =\langle \mathbf v_aW_{j'}\rangle,
  \]
  and similarly
  \[
  \langle \mathbf v_bW_j\rangle=\langle \mathbf u_b\rangle
  =\langle \mathbf v_bW_{j'}\rangle,
  \qquad
  \langle \mathbf v_cW_j\rangle=\langle \mathbf u_c\rangle
  =\langle \mathbf v_cW_{j'}\rangle.
  \]
  Arguing exactly as in the proof of \cite[Lemma~12]{zhang2025optimal}, one sees that
  \[
  \langle \mathbf u_a\rangle,\ \langle \mathbf u_b\rangle,\ \langle \mathbf u_c\rangle
  \]
  are pairwise distinct, and so are
  \[
  \langle \mathbf v_a\rangle,\ \langle \mathbf v_b\rangle,\ \langle \mathbf v_c\rangle.
  \]
  Hence \cite[Theorem~11]{zhang2025optimal} implies that \(W_j\) and \(W_{j'}\) lie in the same projective class.
\end{proof}

The case \(n\equiv 2\pmod 4\) is comparatively simple; we treat it first.

\begin{proposition}\label{prop:beta-mod-2}
  Let \(\mathcal{C}\) be an \((n,n-2,2)\) MDS array code. Assume that $n\ge 14$ and $n\equiv 2 \pmod 4$. Then
  \[
  \beta(\mathcal{C})\ge \left\lceil \frac{5n-8}{4}\right\rceil.
  \]
  \end{proposition}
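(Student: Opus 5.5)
Write $n=4m+2$ with $m\ge 3$. Then $\lceil(5n-10)/4\rceil=5m$ and $\lceil(5n-8)/4\rceil=5m+1$. Since $\beta(\mathcal C)=\max_i\beta_i(\mathcal C)=n-2+\max_i|\mathcal R_i|$, it suffices to exclude $\max_i|\mathcal R_i|=m$, i.e. $|\mathcal R_i|\le m$ for all $i$. So I argue by contradiction and assume $|\mathcal R_i|\le m$ for every $i\in[n]$. By Theorem~\ref{thm:ZLH-lb} we already have $\max_i |\mathcal R_i|\ge m$, so at least one $|\mathcal R_i|$ equals $m$. All the structural facts from \cite{zhang2025optimal} used above remain available: the ordering $\mathcal R_1\preceq\cdots\preceq\mathcal R_n$, the index $t$, the vectors $\mathbf u_i,\mathbf v_i$, and Lemma~10, Lemma~12 and Theorem~11 there.

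The plan is to re-run the counting argument of \cite{zhang2025optimal} that yields the bound $5m$, and track where it has slack when $n\equiv 2\pmod 4$. The first step is to show that $t\ge 3$ and that $\mathcal R_1,\dots,\mathcal R_t$ contain at least three pairwise distinct sets $A,B,C$. The idea is that with only one or two distinct sets, each of size at most $m$, the complement of their union in $[n-2]$ is large. One would then obtain many indices $j$ with a common projective class of $W_j$, either directly or through the two-witness version of the argument in \cite[Lemma~12]{zhang2025optimal}. That contradicts MDS-ness, since two blocks $W_j,W_{j'}$ in the same projective class make $H_j,H_{j'}$ together with $H_{n-1}$ or $H_n$ dependent; more precisely, $W_j-W_{j'}$ would be singular.

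With $A,B,C$ in hand, set $P:=[n-2]\setminus(A\cup B\cup C)$. Lemma~\ref{lem:three-witness} forces all $W_j$ with $j\in P$ into one projective class. MDS-ness permits at most one index per projective class: if $W_{j'}=cW_j$ with $c\ne 1$, then the pair $\{j,j'\}$ is not MDS because $W_j-W_{j'}$ is singular; and $c=1$ is impossible. Hence $|P|\le 1$, so
\[
|A\cup B\cup C|\ge n-3=4m-1 .
\]
But $|A|+|B|+|C|\le 3m$, which is too weak on its own. The real input must therefore combine this with the constraints on $\mathcal R_{n-1},\mathcal R_n$ and on the repair sets of the indices in $A\cup B\cup C$ themselves. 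I would redo the double counting from \cite{zhang2025optimal}: for each node $j$, count how many of the sets $\mathcal R_1,\dots,\mathcal R_n$ avoid $j$, and apply Lemma~\ref{lem:three-witness} to every triple of distinct sets to cap the number of nodes avoided by a given triple at one. Summing gives an inequality of the shape $4\max|\mathcal R_i|\ge n-2+\varepsilon$. Here $\varepsilon=0$ recovers the old bound $5m$, so the task is to show $\varepsilon\ge 1$ in the extremal configuration.

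The main obstacle is exactly this extremal analysis. Equality in the old bound with $n=4m+2$ forces a rigid structure: essentially four distinct repair sets of size exactly $m$ that partition $[n-2]$ into blocks of size $m$. One should also pin down where the special nodes $n-1,n$ lie relative to these sets. I would then pick a node $j$ in one block and examine $\mathcal R_j$, which also has size at most $m$. Using Lemma~10 of \cite{zhang2025optimal} together with Theorem~11 there, I would show that $\mathcal R_j$ cannot avoid enough nodes, producing a fifth distinct set or a triple whose common complement contains two indices. Either outcome contradicts Lemma~\ref{lem:three-witness}. The hypothesis $n\ge 14$ (so $m\ge 3$) is presumably what guarantees enough nodes in each block to find two indices outside any three sets. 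Small $n$ would be left to the appendix computations.
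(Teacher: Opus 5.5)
There is a genuine gap, and it sits exactly where you need the contradiction. Your claim that MDS-ness allows at most one index per projective class is false. With $H_j=\begin{bmatrix}I_2\\W_j\end{bmatrix}$ one has $\det[H_j,H_{j'}]=\det(W_{j'}-W_j)$. If $W_{j'}=cW_j$ with $c\neq 1$, then $W_{j'}-W_j=(c-1)W_j$ is invertible. So MDS only forces the scalars $\lambda_j$ in $W_j=\lambda_jT$ to be pairwise distinct, which allows up to $q-1$ indices per class.

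This has three consequences. First, $|P|\le 1$ does not follow. Second, your first step (excluding one or two distinct sets among $\mathcal R_1,\dots,\mathcal R_t$) has no valid justification. Third, and most importantly, your endgame cannot work. In the extremal configuration you correctly anticipate, $[n-2]=P_1\sqcup P_2\sqcup P_3\sqcup P_4$ with $|P_a|=m$ and $\mathcal R_i=P_a$ for $i\in P_a$. Each $P_a$ is already the common complement of the other three sets, so Lemma~\ref{lem:three-witness} gives $W_j=\lambda_jT_a$ for $j\in P_a$ with distinct $\lambda_j$. That configuration is compatible with both MDS and Lemma~\ref{lem:three-witness}. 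Finding a triple whose common complement contains two (or $m$) indices is therefore no contradiction at all, and neither is the vague double count with $\varepsilon\ge 1$.

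The missing idea is to test this structure against the repair matrix of a parity node. Write $M_{n-1}=[\,U_{n-1}\ \ V_{n-1}\,]$. Since $n-1\in\mathcal R(M_{n-1})$, the matrix $U_{n-1}$ is invertible. For $j\in P_a$,
\[
\det(M_{n-1}H_j)=\det(U_{n-1}+\lambda_jV_{n-1}T_a),
\]
which is a polynomial of degree at most $2$ in $\lambda_j$ and is non-zero at $\lambda_j=0$. Because the $\lambda_j$ are distinct within a block, at most two indices per block, hence at most $8$ in total, lie outside $\mathcal R(M_{n-1})$. On the other hand, $|\mathcal R_{n-1}|\le m$ and $n-1\in\mathcal R_{n-1}$ force at least $4m-(m-1)=3m+1$ indices of $[n-2]$ outside it. The hypothesis $m\ge 3$ is exactly what makes $3m+1>8$; it is not about finding two indices outside a triple.

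To reach the rigid structure, you also need the two inputs you leave vague:
\begin{enumerate}
\item The averaged bound $\bar\beta(\mathcal C)\ge 5(n-2)/4$ from \cite[Theorem~1]{zhang2025optimal}. This upgrades $|\mathcal R_i|\le m$ to $|\mathcal R_i|=m$ for every $i$.
\item The Case~1 estimate $\sum_i|\mathcal R_i|\ge n(n-1)/4$ from the same proof. This rules out $t<n-2$.
\end{enumerate}
Then equality throughout $\sum_a s_a|S_a|\ge\sum_a s_a^2\ge\tfrac14\bigl(\sum_a s_a\bigr)^2$ gives $d=4$, $s_a=|S_a|=m$, and $P_a=S_a$.
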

  
  \begin{proof}
  By Theorem~\ref{thm:ZLH-lb}, it suffices to rule out the equality case
  \[
  \beta(\mathcal C)=\left\lceil \frac{5n-10}{4}\right\rceil.
  \]
  Write $n=4m+2$ with $m\ge 3$. Then
  \[
  \beta(\mathcal C)=\left\lceil \frac{5n-10}{4}\right\rceil=5m.
  \]

  Since \(\beta_i(\mathcal C)=n-2+|\mathcal R_i|\) and
  \(\beta(\mathcal C)=\max_{i\in[n]}\beta_i(\mathcal C)\), we have
  \[
  |\mathcal R_i|\le m
  \qquad\text{for all }i\in[n].
  \]
  On the other hand, \cite[Theorem~1]{zhang2025optimal} gives
  \[
  \bar\beta(\mathcal C)\ge \frac{5(n-2)}{4}=5m,
  \]
  where
  \[
  \bar\beta(\mathcal C):=\frac{1}{n}\sum_{i\in[n]}\beta_i(\mathcal C).
  \]
  Since \(\bar\beta(\mathcal C)\le \beta(\mathcal C)=5m\), it follows that
  \[
  \bar\beta(\mathcal C)=\beta(\mathcal C)=5m.
  \]
  Hence
  \[
  |\mathcal R_i|=m
  \qquad\text{for all }i\in[n].
  \]
  
  We claim that \(t=n-2\). Indeed, if \(t<n-2\), then Case~1 in the proof of
  \cite[Theorem~1]{zhang2025optimal} yields
  \[
  \sum_{i=1}^{n} |\mathcal R_i|
     \ge \frac{(n-1)n}{4},
  \]
  whereas from $|\mathcal R_i|=m$ for all $i\in[n]$, we have
  \[
  \sum_{i=1}^{n} |\mathcal R_i|
     =nm=\frac{n(n-2)}{4},
  \]
  a contradiction. Therefore $t=n-2$.
  
  We are thus in Case~2 of the proof of \cite[Theorem~1]{zhang2025optimal}. Let
  \[
  S_1,\dots,S_d
  \]
  be the distinct sets among \(\mathcal R_1,\dots,\mathcal R_{n-2}\), and let
  \[
  s_a:=|\{\,i\in[n-2]: \mathcal R_i=S_a\,\}|
  \qquad (a\in[d])
  \]
  be their multiplicities. Then \(d\le 4\) by \cite[Lemma~4]{zhang2025optimal}, and the proof of \cite[Theorem~1]{zhang2025optimal} gives
  \[
  \sum_{i=1}^{n-2}|\mathcal R_i|
  =
  \sum_{a=1}^d s_a|S_a|
  \ge \sum_{a=1}^d s_a^2
  \ge \frac{1}{4}\Bigl(\sum_{a=1}^d s_a\Bigr)^2.
  \]
  Since \(|\mathcal R_i|=m\) for all \(i\in[n]\), we have
  \[
  \sum_{i=1}^{n-2}|\mathcal R_i|=(n-2)m=4m^2
  =\frac14(4m)^2
  =\frac{1}{4}\Bigl(\sum_{a=1}^d s_a\Bigr)^2.
  \]
  Hence every inequality above is an equality. In particular,
  \[
  d=4,
  \qquad
  s_1=s_2=s_3=s_4=m,
  \qquad
  |S_a|=m \ \text{for all }a.
  \]
  
  For each \(a\in\{1,2,3,4\}\), define
  \[
  P_a:=\{\,i\in[n-2]: \mathcal R_i=S_a\,\}.
  \]
  Then \(|P_a|=s_a=m\), the sets \(P_1,P_2,P_3,P_4\) form a partition of \([n-2]\), and
  since \(i\in\mathcal R_i\) for every \(i\in[n]\), we have \(P_a\subseteq S_a\).
  As \(|P_a|=|S_a|=m\), it follows that
  \[
  P_a=S_a
  \qquad\text{for each }a.
  \]
  Hence
  \[
  [n-2]=P_1\sqcup P_2\sqcup P_3\sqcup P_4,
  \qquad |P_a|=m,
  \]
  and
  \[
  \mathcal R_i=P_a
  \qquad\text{for every }i\in P_a.
  \]
  
  By Lemma~\ref{lem:three-witness}, for each
\(a\in\{1,2,3,4\}\), all matrices \(W_j\) with \(j\in P_a\) lie in the same
projective class. Thus there exist scalars
\(\lambda_j\in\F_q^\times\) and an invertible matrix \(T_a\) such that
\[
W_j=\lambda_jT_a
\qquad (j\in P_a).
\]
Moreover, the scalars \(\lambda_j\) \((j\in P_a)\) are pairwise distinct; otherwise, if
\(\lambda_j=\lambda_{j'}\) for some distinct \(j,j'\in P_a\), then \(W_j=W_{j'}\), hence
\(H_j=H_{j'}\), contradicting the MDS property.

  Write
  \[
  M_{n-1}=[\,U_{n-1}\ \ V_{n-1}\,]\qquad\text{with }U_{n-1},V_{n-1}\in \F_q^{2\times 2}.
  \]
  Since \(n-1\in\mathcal R(M_{n-1})\), $U_{n-1}=M_{n-1}H_{n-1}$ is invertible. For any \(j\in P_a\), one has
  \[
  M_{n-1}H_j=U_{n-1}+V_{n-1}W_j=U_{n-1}+\lambda_jV_{n-1}T_a.
  \]
  Therefore
  \[
  \det(M_{n-1}H_j)=\det(U_{n-1}+\lambda_jV_{n-1}T_a),
  \]
  which is a polynomial in \(\lambda_j\) of degree at most \(2\), and is not
  identically zero because \(\det(U_{n-1})\neq 0\). Hence, for each block \(P_a\), there are at most two indices \(j\in P_a\) such that \(\rank(M_{n-1}H_j)=1\), that is, at most two indices in \(P_a\) lie outside \(\mathcal R(M_{n-1})\). Summing over the four blocks, there are at most \(8\) indices \(j\in[n-2]\) outside \(\mathcal R(M_{n-1})\).
  
  On the other hand, since \(|\mathcal R(M_{n-1})|=m\) and \(n-1\in\mathcal R(M_{n-1})\),
  among the \(n-2=4m\) indices at least \(3m+1\) lie outside \(\mathcal R(M_{n-1})\), which is \(>8\) for \(m\ge 3\). This contradiction shows that
  \[
  \beta(\mathcal C)\neq \left\lceil \frac{5n-10}{4}\right\rceil.
  \]
  Therefore \(\beta(\mathcal C)\ge \left\lceil \frac{5n-8}{4}\right\rceil\).
  \end{proof}

  The case \(n\equiv 1\pmod 4\) is somewhat more involved; we organize the
  complete proof into several lemmas.

    \begin{lemma}\label{lem:mod-1-case2}
      Let \(\mathcal C\) be an \((n,n-2,2)\) MDS array code. Assume that
      \[
      n=4m+5,\qquad m\ge 2,
      \]
      and
      \[
      \beta(\mathcal C)=\left\lceil \frac{5n-10}{4}\right\rceil=5m+4.
      \]
      Then
      \[
      t\neq n-2.
      \]
    \end{lemma}
    
    \begin{proof}
      Since \(\beta_i(\mathcal C)=n-2+|\mathcal R_i|\) and
      \(\beta(\mathcal C)=\max_{i\in[n]}\beta_i(\mathcal C)\), we have
      \[
      |\mathcal R_i|\le m+1
      \qquad\text{for all }i\in[n].
      \]
    
      Assume that \(t=n-2\). By \cite[Lemma~12]{zhang2025optimal}, the distinct sets among $\mathcal R_1,\dots,\mathcal R_{n-2}$ are at most four. Let these distinct sets be
  \[
  S_1\prec S_2\prec \cdots \prec S_d
  \qquad (d\le 4),
  \]
  and let
  \[
  P_a:=\{\,i\in[n-2]: \mathcal R_i=S_a\,\},
  \qquad
  s_a:=|P_a|
  \qquad (a\in[d]).
  \]
  Then
  \[
  \sum_{a=1}^d s_a=n-2=4m+3.
  \]
  Since \(i\in\mathcal R_i\) for every \(i\in[n]\), one has
  \[
  P_a\subseteq S_a,
  \qquad
  s_a\le |S_a|\le m+1
  \qquad\text{for all }a.
  \]
  If \(d\le 3\), then
  \[
  4m+3=\sum_{a=1}^d s_a\le d(m+1)\le 3(m+1)=3m+3,
  \]
  a contradiction. Hence $d=4$. Moreover, since
  \[
  \sum_{a=1}^4 s_a=4m+3
  \qquad\text{and}\qquad
  s_a\le m+1
  \quad\text{for all }a,
  \]
  there exists a unique index \(b\in\{1,2,3,4\}\) such that $s_b=m$ while $s_a=m+1$ for all $a\neq b$. Therefore
  \[
  P_a=S_a
  \qquad\text{for all }a\neq b.
  \]
  Set $E_b:=S_b\setminus P_b$. Since \(P_b\subseteq S_b\), \(|P_b|=m\), and \(|S_b|\le m+1\), we have $|E_b|\le 1$.
    
      For each \(a\neq b\), we show that all matrices \(W_j\) with $j\in P_a\setminus E_b$ lie in the same projective class. Let
      \[
      \{c,d\}=\{1,2,3,4\}\setminus\{a,b\}.
      \]
      For every \(j\in P_a\setminus E_b\), we have $j\notin S_b$ by the definition of \(E_b\), and
      \[
      j\notin S_c,\qquad j\notin S_d
      \]
      since \(S_c=P_c\) and \(S_d=P_d\) are disjoint from \(P_a\). Hence, by
      Lemma~\ref{lem:three-witness}, all matrices \(W_j\) with \(j\in P_a\setminus E_b\)
      lie in the same projective class.
  
      For each \(a\neq b\), choose an invertible matrix \(A_a\) representing the
      corresponding projective class. Then for every \(j\in P_a\setminus E_b\), $W_j=\lambda_jA_a$ for some \(\lambda_j\in\F_q^\times\). These scalars are pairwise distinct;
      otherwise, if \(\lambda_j=\lambda_{j'}\) for some distinct \(j,j'\), then
      \(W_j=W_{j'}\), hence \(H_j=H_{j'}\), contradicting the MDS property.
    
      Write
      \[
      M_{n-1}=[\,U_{n-1}\ \ V_{n-1}\,]
      \qquad\text{with }U_{n-1},V_{n-1}\in\F_q^{2\times 2}.
      \]
      Since \(n-1\in\mathcal R(M_{n-1})\), the matrix $U_{n-1}=M_{n-1}H_{n-1}$ is invertible. For every \(a\neq b\) and every \(j\in P_a\setminus E_b\), one has
      \[
      M_{n-1}H_j=U_{n-1}+V_{n-1}W_j
               =U_{n-1}+\lambda_jV_{n-1}A_a.
      \]
      Therefore
      \[
      \det(M_{n-1}H_j)=\det(U_{n-1}+\lambda_jV_{n-1}A_a),
      \]
      which is a polynomial in \(\lambda_j\) of degree at most \(2\), and is not
      identically zero because \(\det(U_{n-1})\neq 0\). Hence, for each exact block
      \(P_a\) with \(a\neq b\), at most two indices in \(P_a\setminus E_b\) lie outside
      \(\mathcal R(M_{n-1})\).
    
      Since \(|E_b|\le 1\), among the three exact blocks \(P_a\) with \(a\neq b\), at most
      one can intersect \(E_b\). Therefore at most one of them contributes \(3\) indices
      outside \(\mathcal R(M_{n-1})\), while the other two contribute at most \(2\) each.
      Together with the small block \(P_b\), this gives
      \[
      |[n-2]\setminus \mathcal R(M_{n-1})|\le m+2+2+3=m+7.
      \]
    
      On the other hand, since \(|\mathcal R(M_{n-1})|=|\mathcal R_{n-1}|\le m+1\) and
      \(n-1\in\mathcal R(M_{n-1})\), among the \(n-2=4m+3\) indices in \([n-2]\) at least
      \[
      (4m+3)-m=3m+3
      \]
      lie outside \(\mathcal R(M_{n-1})\). If \(m\ge 3\), then $3m+3>m+7$, a contradiction. Hence it remains to consider the case \(m=2\), that is, \(n=13\).
  
      We now assume \(m=2\). Since
      \[
      9=3m+3\le |[11]\setminus \mathcal R(M_{12})|\le m+7=9,
      \]
      we have
      \[
      |[11]\setminus \mathcal R(M_{12})|=9,
      \qquad
      |\mathcal R(M_{12})\cap [11]|=2.
      \]
      Since \(12\in \mathcal R(M_{12})\) and \(|\mathcal R(M_{12})|=|\mathcal R_{12}|\le 3\), it follows that $|\mathcal R_{12}|=3$. Analyzing the equality conditions in the preceding estimates, we obtain:
      \begin{itemize}
          \item \(P_b\) contributes exactly \(2\) indices outside \(\mathcal R(M_{12})\);
          \item \(|E_b|=1\);
          \item exactly one of the three sets \(P_a\) with \(a\neq b\) intersects \(E_b\), this set contributes exactly \(3\) indices outside \(\mathcal R(M_{12})\), and each of the other two contributes exactly \(2\);
          \item the set \(\mathcal R(M_{12})\cap [11]\) consists of exactly one index from each of the two sets \(P_a\) with \(a\neq b\) and \(P_a\cap E_b=\varnothing\). We call such a set $P_a$ a \emph{clean block} (for convenient reference below).   
      \end{itemize}
  
      For every \(a>b\) and every \(j\in P_a\), we have $S_b\prec S_a=\mathcal R_j$, so \cite[Lemma~7]{zhang2025optimal} implies that \(j\notin S_b\). It follows that $E_b\cap P_a=\varnothing$ for every $a>b$. Since $|E_b|=1$ and $E_b\cap P_a\neq\varnothing$ for some $a\ne b$, we have
  \[
  E_b\subseteq P_1\sqcup\cdots\sqcup P_{b-1}.
  \]
  In particular, \(b\neq 1\). 
  
  We claim that $E_b\subseteq P_{b-1}$. Since \(|E_b|=1\) and \(|P_b|=2=|P_{b-1}|-1\), we have
  \[
  |S_b|=|P_b\cup E_b|=3=|P_{b-1}|=|S_{b-1}|.
  \]
  If the unique element \(e\in E_b\) lay in some \(P_a\) with \(a\le b-2\), then
  \(e<\min(P_{b-1})\), because the blocks \(P_1,P_2,P_3,P_4\) occur in this order in
  \([11]\). Hence, when comparing \(S_b=P_b\cup\{e\}\) and \(S_{b-1}=P_{b-1}\) in
  the total order \(\prec\), the first differing element is smaller for \(S_b\). Therefore
  \[
  S_b\prec S_{b-1},
  \]
  contradicting
  \[
  S_{b-1}\prec S_b.
  \]
  It follows that $E_b\subseteq P_{b-1}$.
  
  Since $t=n-2=11$, we have $S_4\prec \mathcal R_{12}$. However, this is impossible. First, we have $|\mathcal{R}_{12}|=|S_4|=3$. Moreover,
    \begin{itemize}
      \item if \(b=2\), then $E_2\subseteq P_1$ and the two clean blocks are \(P_3\) and \(P_4\), so the
      smallest element of \(\mathcal R_{12}\) lies in \(P_3\), whereas every element of
      \(S_4=P_4\) lies after \(P_3\); hence \(\mathcal R_{12}\prec S_4\);
      \item if \(b=3\), then $E_3\subseteq P_2$ and the clean blocks are \(P_1\) and \(P_4\), so the
      smallest element of \(\mathcal R_{12}\) lies in \(P_1\), whereas every element of
      \(S_4=P_4\) lies after \(P_1\); hence \(\mathcal R_{12}\prec S_4\);
      \item if \(b=4\), then $E_4\subseteq P_3$ and the clean blocks are \(P_1\) and \(P_2\), so the smallest element of
      \(\mathcal R_{12}\) lies in \(P_1\), whereas the smallest element of \(S_4\) lies
      in \(P_3\); hence \(\mathcal R_{12}\prec S_4\).
    \end{itemize}
    This contradiction shows that \(t\neq n-2\).
    \end{proof}

  \begin{lemma}\label{lem:mod-1-case1-structure}
    Let \(\mathcal C\) be an \((n,n-2,2)\) MDS array code. Assume that
    \[
    n=4m+5,
    \qquad
    \beta(\mathcal C)=5m+4,
    \qquad m\ge 0,\qquad
    t<n-2.
    \]
    Then
    \[
    t=3m+3,
    \qquad
    |\mathcal R_i|=m+1
    \quad\text{for all }i\in[n],
    \]
    and there exists a partition
    \[
    [n-2]=P_1\sqcup P_2\sqcup P_3\sqcup Q,
    \]
    where
    \[
    |P_1|=|P_2|=|P_3|=m+1,
    \qquad
    |Q|=m,
    \]
    such that
    \[
    \mathcal R_i=P_a
    \qquad(i\in P_a,\ a\in\{1,2,3\}),
    \]
    and
    \[
    \mathcal R_i=Q\cup\{n-1\}
    \qquad\text{for all }i\in Q\cup\{n-1\}.
    \]
  \end{lemma}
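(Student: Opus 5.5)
The plan is to run the Case~1 counting argument from the proof of \cite[Theorem~1]{zhang2025optimal} and force equality throughout. Since \(\beta(\mathcal C)=5m+4=n-2+(m+1)\), we have \(|\mathcal R_i|\le m+1\) for all \(i\in[n]\). In Case~1, where \(t<n-2\), that proof bounds \(\sum_{i=1}^n|\mathcal R_i|\) from below by a quantity involving \(t\). I will redo that estimate keeping \(t\) explicit.

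\textbf{The two-part lower bound.} The estimate splits the indices into two groups.
\begin{itemize}
\item The first group is \(i\le t\). There are at most three distinct sets among \(\mathcal R_1,\dots,\mathcal R_t\), since these are \(\prec\)-smaller than \(\mathcal R_{n-1}\). Let \(S_a\) occur with multiplicity \(s_a\). Then \(P_a:=\{i\le t:\mathcal R_i=S_a\}\subseteq S_a\), which gives \(\sum_{i\le t}|\mathcal R_i|\ge \sum s_a^2\ge t^2/3\).
\item The second group is the indices \(i\in\{t+1,\dots,n\}\). These share a common structure: all \(\mathcal R_i\) with \(i>t\), including \(i=n-1,n\), are at least \(\mathcal R_{n-1}\). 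The Case~1 argument gives \(|\mathcal R_i|\ge (n-t)/2\) or similar for them; I will recover the exact form from \cite{zhang2025optimal}.
\end{itemize}

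\textbf{Pinning down \(t\) and the block structure.} Combining the two bounds with \(|\mathcal R_i|\le m+1\) should force \(t=3m+3\) and \(|\mathcal R_i|=m+1\) for all \(i\). The first group gives three blocks of size \(t/3=m+1\), since \(s_a\le |S_a|\le m+1\) and \(\sum s_a=t\) force \(s_a=m+1\) and \(P_a=S_a\). For the remaining indices \(Q\cup\{n-1,n\}\), with \(Q=\{t+1,\dots,n-2\}\) and \(|Q|=m\), the estimate from \cite{zhang2025optimal} for \(i>t\) should be that \(\mathcal R_i\) contains all of \(\{t+1,\dots,n-1\}\), or at least is \(\succeq\mathcal R_{n-1}\). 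By the maximality of \(t\), \(\mathcal R_i=\mathcal R_{n-1}\) for \(t<i\le n-1\).

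\textbf{Identifying \(\mathcal R_{n-1}\) (main obstacle).} We have \(i\in\mathcal R_i\), so \(Q\cup\{n-1\}\subseteq\mathcal R_{n-1}\). This set has size \(m+1=|\mathcal R_{n-1}|\), which gives \(\mathcal R_i=Q\cup\{n-1\}\). The main obstacle is establishing that \(\mathcal R_i=\mathcal R_{n-1}\) for every \(i\in Q\). This requires reproducing the relevant step of Case~1 (essentially \cite[Lemma~7]{zhang2025optimal} together with the definition of \(t\)). I would first try to argue that \(\mathcal R_i\succeq\mathcal R_{n-1}\) and \(\mathcal R_i\preceq\mathcal R_{n-1}\) for \(i\in(t,n-2]\); the latter follows from the ordering and \(\mathcal R_i\not\prec\mathcal R_{n-1}\). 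I would also check that the small cases \(m=0,1\) are covered by the same equality analysis.
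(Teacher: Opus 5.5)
Your route is the paper's: import the Case~1 count from \cite{zhang2025optimal}, compare it with the upper bound $n(m+1)$, and read off the equality structure. However, the step that actually carries the lemma is left as ``should force $t=3m+3$'', and the placeholder you propose for it would not work.

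\textbf{The missing bound.} You need the correct bound on the $n-t$ indices $i\in\{t+1,\dots,n\}$. Each satisfies $\mathcal R_i\succeq\mathcal R_{n-1}$, and $\prec$ compares cardinalities first, so $|\mathcal R_i|\ge|\mathcal R_{n-1}|\ge\max\{t/3,\,n-1-t\}$. The term $t/3$ comes from $\mathcal R_{n-1}\succ S_a$ for the most frequent prefix set, with $|S_a|\ge s_a\ge t/3$. The term $n-1-t$ comes from $\{t+1,\dots,n-1\}\subseteq\mathcal R_{n-1}$. Hence
\[
\sum_{i=1}^n|\mathcal R_i|\ge F(t):=\frac{t^2}{3}+(n-t)\max\Bigl\{\frac t3,\,n-1-t\Bigr\}.
\]
You must then verify $F(t)\ge(4m+5)(m+1)$, with equality only at $t=3m+3$:
\begin{itemize}
\item For $t\le 3m+3$, use the second branch: $3\bigl(F(t)-n(m+1)\bigr)\ge(4t-12m-15)(t-3m-3)$. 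This is a product of two non-positive factors, vanishing only at $t=3m+3$.
\item For $t\ge 3m+3$, use the first branch: $F(t)\ge nt/3\ge n(m+1)$, again with equality only at $t=3m+3$.
\end{itemize}
Only then do $t=3m+3$ and $|\mathcal R_i|=m+1$ for all $i$ follow. The latter holds because the sum is squeezed to $n(m+1)$ while each term is at most $m+1$. A per-index bound like $(n-t)/2$ is too weak. At $t=3m+3$ it totals only $n(m+1)-m(m+2)/2$, leaving slack for every $m\ge1$, so nothing is forced.

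\textbf{Your ``main obstacle''.} What you call the main obstacle is immediate, and it logically comes first. For $t<i\le n-2$, the ordering gives $\mathcal R_i\preceq\mathcal R_{n-1}$ and the maximality of $t$ gives $\mathcal R_i\not\prec\mathcal R_{n-1}$. So $\mathcal R_i=\mathcal R_{n-1}$ by totality of $\prec$, and no appeal to \cite[Lemma~7]{zhang2025optimal} is needed. This identification is an input to the $n-1-t$ term above, not a consequence of the count. Also, $d\le3$ for the prefix is imported from \cite{zhang2025optimal}; it does not follow merely from these sets being $\prec$-smaller than $\mathcal R_{n-1}$.

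\textbf{What is sound.} Once $t=3m+3$ is known, the rest of your plan works.
\begin{itemize}
\item Your prefix argument is valid: $\sum_a s_a=3m+3$ with $d\le3$ and $s_a\le|S_a|\le m+1$ forces $d=3$, $s_a=m+1$ and $P_a=S_a$. It is a bit quicker than the paper's equality chain $\sum_a s_a|S_a|\ge\sum_a s_a^2\ge t^2/3$.
\item Your final step, $Q\cup\{n-1\}\subseteq\mathcal R_{n-1}$ with both sides of size $m+1$, is exactly the paper's.
\item The cases $m=0,1$ need no separate treatment.
\end{itemize}
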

  
  \begin{proof}
    Since \(\beta_i(\mathcal C)=n-2+|\mathcal R_i|\) and
    \(\beta(\mathcal C)=\max_{i\in[n]}\beta_i(\mathcal C)\), we have
    \[
    |\mathcal R_i|\le m+1
    \qquad\text{for all }i\in[n].
    \]
    Hence
    \begin{equation}\label{eq:mod-1-case1-sum-R}
    \sum_{i=1}^{n}|\mathcal R_i|
    \le n(m+1)=(4m+5)(m+1).
    \end{equation}
  
    Since \(t<n-2\), we are in Case~1 of the proof of \cite[Theorem~1]{zhang2025optimal},
    where
    \begin{equation}\label{eq:mod-1-case1-sum-R-ineq}
    \sum_{i=1}^{n}|\mathcal R_i|
    \ge
    \frac{t^2}{3}+(n-t)\max\!\left\{\frac{t}{3},\,n-1-t\right\}.
    \end{equation}
    If \(t\le 3m+3\), then
    \[
    \frac{t^2}{3}+(4m+5-t)(4m+4-t)-(4m+5)(m+1)
    =
    \frac{(4t-12m-15)(t-3m-3)}{3}\ge 0,
    \]
    with equality if and only if \(t=3m+3\). If \(t\ge 3m+3\), then
    \[
    \frac{t^2}{3}+(4m+5-t)\frac{t}{3}-(4m+5)(m+1)
    =
    \frac{(4m+5)(t-3m-3)}{3}\ge 0,
    \]
    with equality if and only if \(t=3m+3\). Therefore
    \begin{equation}\label{eq:mod-1-case1-sum-R-ineq-2}
    \frac{t^2}{3}+(n-t)\max\!\left\{\frac{t}{3},\,n-1-t\right\}\ge (4m+5)(m+1),
    \end{equation}
    and equality can occur only when
    \[
    t=3m+3.
    \]
    Combining Inequalities~\eqref{eq:mod-1-case1-sum-R},~\eqref{eq:mod-1-case1-sum-R-ineq} and~\eqref{eq:mod-1-case1-sum-R-ineq-2}, we obtain
    \[
      t=3m+3,
    \qquad
   |\mathcal R_i|=m+1
    \quad\text{for all }i\in[n].
    \]

    As in the equality analysis in Case~1 of the proof of \cite[Theorem~1]{zhang2025optimal},
    let $S_1,\dots,S_d\ (d\le 3)$ be the distinct sets among \(\mathcal R_1,\dots,\mathcal R_t\), and let
    \[
    s_a:=|\{\,i\in[t]: \mathcal R_i=S_a\,\}|
    \qquad (a\in[d]).
    \]
    Then
    \[
    \sum_{i=1}^{t}|\mathcal R_i|
    =
    \sum_{a=1}^d s_a|S_a|
    \ge \sum_{a=1}^d s_a^2
    \ge\frac1d\Bigl(\sum_{a=1}^3 s_a\Bigr)^2\ge \frac13\Bigl(\sum_{a=1}^3 s_a\Bigr)^2
    =\frac{t^2}{3}.
    \]
    Since \(t=3m+3\) and \(|\mathcal R_i|=m+1\) for all \(i\), the left-hand side is
    \[
    t(m+1)=\frac{t^2}{3}.
    \]
    Hence every inequality above is an equality. In particular,
    \[ 
    d=3,\qquad s_1=s_2=s_3=m+1,
    \qquad
    |S_a|=m+1
    \quad\text{for all }a.
    \]
  
    For each \(a\in\{1,2,3\}\), define
    \[
    P_a:=\{\,i\in[t]: \mathcal R_i=S_a\,\}.
    \]
    Then \(|P_a|=m+1\), the sets \(P_1,P_2,P_3\) form a partition of \([t]\), and
    \(P_a\subseteq S_a\). Since \(|P_a|=|S_a|=m+1\), it follows that
    \[
    P_a=S_a
    \qquad\text{for each }a.
    \]
    Hence
    \[
    \mathcal R_i=P_a
    \qquad(i\in P_a,\ a\in\{1,2,3\}).
    \]
  
    Let
    \[
    Q:=\{t+1,t+2,\dots,n-2\}.
    \]
    Then \(|Q|=(n-2)-t=m\). Since
    \[
    t=\max\{\,i\in[n-2]: \mathcal R_i\prec \mathcal R_{n-1}\,\},
    \]
    we have
    \[
    \mathcal R_i=\mathcal R_{n-1}
    \qquad\text{for all }i\in Q.
    \]
    Since \(i\in\mathcal R_i\) for every \(i\in Q\) and \(n-1\in\mathcal R_{n-1}\), we get
    \[
    Q\cup\{n-1\}\subseteq \mathcal R_{n-1}.
    \]
    As both sides have size \(m+1\), it follows that
    \[
    \mathcal R_{n-1}=Q\cup\{n-1\}.
    \]
    Hence
    \[
    \mathcal R_i=Q\cup\{n-1\}
    \qquad\text{for all }i\in Q\cup\{n-1\}.
    \]
  \end{proof}
  
  \begin{lemma}\label{lem:mod-1-geometry}
    No \((n,n-2,2)\) MDS array code \(\mathcal C\), where \(n=4m+5\) with \(m\ge 2\),
    admits a partition
    \[
    [n-2]=P_1\sqcup P_2\sqcup P_3\sqcup Q,
    \]
    with
    \[
    |P_1|=|P_2|=|P_3|=m+1,
    \qquad
    |Q|=m,
    \]
    such that
    \[
    \mathcal R_i=P_a
    \qquad(i\in P_a,\ a\in\{1,2,3\}),
    \]
    and
    \[
    \mathcal R_i=Q\cup\{n-1\}
    \qquad\text{for all }i\in Q\cup\{n-1\}.
    \]
  \end{lemma}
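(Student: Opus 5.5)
We argue by contradiction, reusing the determinant-counting device of Proposition~\ref{prop:beta-mod-2} and Lemma~\ref{lem:mod-1-case2}. Suppose such a partition exists. The plan has three steps: show the $W_j$ are constant up to scalars on each $P_a$; show the $W_j$ on $Q$ are constrained by the repair matrices $M_i$ with $i\in P_a$; then count nonzero determinants for a repair matrix of a node in $Q\cup\{n-1\}$.

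\textbf{Step 1: projective classes on the $P_a$.} The three distinct sets $P_1,P_2,P_3$ appear among $\mathcal R_1,\dots,\mathcal R_t$, where $t=3m+3$ and $[t]=P_1\sqcup P_2\sqcup P_3$. No index of $Q$ lies in any $P_a$. Hence Lemma~\ref{lem:three-witness} applied with $P=Q$ shows that all $W_j$ with $j\in Q$ lie in a single projective class, say $W_j=\mu_jT$ with pairwise distinct $\mu_j\in\F_q^\times$. Inside the $P_a$, however, the three witness sets are not disjoint from $P_a$ itself, so Lemma~\ref{lem:three-witness} does not apply there. Instead we use \cite[Lemma~10]{zhang2025optimal} for the two other blocks: for $j\in P_a$ and $b\neq a$, $\langle \mathbf v_bW_j\rangle=\langle\mathbf u_b\rangle$. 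This gives two projective point conditions on $W_j$, which is weaker than fixing its projective class.

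\textbf{Step 2: the repair matrix $M_{n-1}$.} We have $\mathcal R(M_{n-1})=Q\cup\{n-1\}$. For $j\in[t]=P_1\sqcup P_2\sqcup P_3$ this means $\det(U_{n-1}+V_{n-1}W_j)=0$, while for $j\in Q$ it means $\det(U_{n-1}+\mu_jV_{n-1}T)\neq0$. In addition, $n\notin\mathcal R(M_{n-1})$, so $\rank V_{n-1}=1$, and $U_{n-1}$ is invertible. Normalizing $U_{n-1}=I_2$ and $V_{n-1}=\mathbf x\mathbf y^{\top}$ of rank one, the condition $\det(I+\mathbf x\mathbf y^{\top}W_j)=1+\mathbf y^{\top}W_j\mathbf x=0$ becomes a single linear condition on each $W_j$, $j\in[t]$. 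Together with Step~1, this pins each $W_j$ with $j\in P_a$ to an affine or projective line.

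\textbf{Step 3: a node of $Q\cup\{n-1\}$ and the count.} Pick $i\in Q$ (possible since $m\ge2$) and use $M_i$. We have $\mathcal R(M_i)=Q\cup\{n-1\}$, so all $3m+3$ indices of $P_1\cup P_2\cup P_3$ are outside $\mathcal R(M_i)$. Writing $M_i=[U\ V]$, for $j\in P_a$ the condition reads $\det(U+VW_j)=0$. The aim is to show that, with $W_j$ constrained as in Steps~1--2, this quadratic condition holds for at most a bounded number of $j$ in each $P_a$, say at most $2$ or $3$, unless $M_i$ is degenerate in a way that forces $M_i$ and $M_{n-1}$ to have proportional kernels. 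A bound of $3$ per block gives at most $9<3m+3$ indices when $m\ge3$; $m=2$ needs $3\cdot3=9=3m+3$, so that equality case must be examined as in Lemma~\ref{lem:mod-1-case2}. In the degenerate alternative, $\ker M_i$ and $\ker M_{n-1}$ give the same repair subspace. This contradicts $n-1\in\mathcal R(M_i)$ versus $i\in\mathcal R(M_{n-1})$ together with the invertibility of $M_iH_i$ and the MDS property of $H_i$ and $H_{n-1}$. We would also try symmetric choices, repairing two different nodes of $Q$ and using the distinct $\mu_j$, to get contradictions from $Q$ when $m\ge2$.

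\textbf{Main obstacle.} The hard step is Step~3. The block members $W_j$, $j\in P_a$, are only known to satisfy a few point conditions, not to be scalar multiples of a fixed matrix. Making $\det(U+VW_j)$ a low-degree nonzero polynomial in a single parameter requires first parametrizing $W_j$ in each $P_a$ by one scalar, which we would get by combining the two $\langle\mathbf u_b\rangle$ conditions with the linear condition from $M_{n-1}$. We expect the equality case $m=2$ to need a separate order-$\prec$ argument like the end of Lemma~\ref{lem:mod-1-case2}.
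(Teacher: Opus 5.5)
\textbf{Assessment: there is a genuine gap, in Step~3.} Steps 1 and 2 are sound. The one-parameter description you are after in your ``main obstacle'' paragraph is also the right one. Take $i_b\in P_b$ and $i_c\in P_c$ for the two other blocks. The conditions $\langle\mathbf v_{i_b}W_j\rangle=\langle\mathbf u_{i_b}\rangle$ and $\langle\mathbf v_{i_c}W_j\rangle=\langle\mathbf u_{i_c}\rangle$ let you write $W_j=A_a\,\mathrm{diag}(x_j,y_j)\,B_a$ for $j\in P_a$. By MDS, the $x_j$ are pairwise distinct and so are the $y_j$. Your rank-one computation for $V_{n-1}$ then puts all points $(x_j,y_j)$, $j\in P_a$, on one line $\alpha_ax+\beta_ay+1=0$. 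The projective-class statement on $Q$ in Step~1 is true but plays no role.

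\textbf{Why Step~3 fails.} For $i\in Q$ you have $\mathcal R(M_i)=Q\cup\{n-1\}=\mathcal R(M_{n-1})$. So $M_iH_{n-1}$ is invertible and $M_iH_n$ is singular, exactly as for $M_{n-1}$. Hence $M_i$ imposes a constraint of the same linear type as $M_{n-1}$. On each block this is another line through the same $m+1\ge 3$ points, i.e.\ the same line, so it can never bound the number of $j\in P_a$ outside $\mathcal R(M_i)$. Your fallback also fails: $M_i$ and $M_{n-1}$ sharing a kernel is no contradiction. Since $i\in\mathcal R(M_{n-1})$, the matrix $M_{n-1}$ is itself a valid repair matrix for node $i$, and one may literally take $M_i=M_{n-1}$. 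Using several nodes of $Q$ fails for the same reason.

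\textbf{The missing ingredient is node $n$.}
\begin{itemize}
\item In the ambient setting of Lemma~\ref{lem:mod-1-case1-structure} you have $\mathcal R_n\succeq\mathcal R_{n-1}=Q\cup\{n-1\}$ and $|\mathcal R_n|=m+1$. The definition of $\prec$ then forces $\mathcal R_n\cap[t]=\varnothing$.
\item Hence $\det(U_n+V_nW_j)=0$ for every $j\in P_a$, and now $V_n=M_nH_n$ is \emph{invertible}.
\item In the diagonal coordinates this becomes $x_jy_j+\alpha'_ax_j+\beta'_ay_j+\gamma'_a=0$. This is a conic with nonzero $xy$-coefficient and no $x^2$ or $y^2$ terms.
\item Three distinct points lie on both the line and this conic, so the line is a factor of the conic. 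Any linear factor of such a conic has the form $x=\mathrm{const}$ or $y=\mathrm{const}$, contradicting the distinctness of the $x_j$ or of the $y_j$.
\end{itemize}
This argument only uses $|P_a|\ge 3$, so it works uniformly for all $m\ge 2$. No separate equality-case analysis for $m=2$ is needed.
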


  \begin{proof}
    Assume to the contrary that such a partition exists. For each \(a\in\{1,2,3\}\), let \(\{b,c\}=\{1,2,3\}\setminus\{a\}\), and choose
    \(i_b\in P_b\) and \(i_c\in P_c\). Since
    \[
    \mathcal R_{i_b}=P_b,\qquad \mathcal R_{i_c}=P_c,
    \]
    we have \(P_a\cap \mathcal R_{i_b}=P_a\cap \mathcal R_{i_c}=\varnothing\). Hence,
    for every \(j\in P_a\), \cite[Lemma~10]{zhang2025optimal} gives
    \[
    \langle \mathbf v_{i_b}W_j\rangle=\langle \mathbf u_{i_b}\rangle,
    \qquad
    \langle \mathbf v_{i_c}W_j\rangle=\langle \mathbf u_{i_c}\rangle.
    \]
    Exactly as in the proof of \cite[Lemma~12]{zhang2025optimal}, the fact that
    \(\mathcal R_{i_b}\neq \mathcal R_{i_c}\) implies
    \[
    \langle \mathbf u_{i_b}\rangle\neq \langle \mathbf u_{i_c}\rangle,
    \qquad
    \langle \mathbf v_{i_b}\rangle\neq \langle \mathbf v_{i_c}\rangle.
    \]

    Then we may choose
invertible matrices \(A_a,B_a\in\GL_2(\F_q)\) such that
\[
\langle \mathbf v_{i_b}B_a^{-1}\rangle=\langle (1,0)\rangle,\qquad
\langle \mathbf v_{i_c}B_a^{-1}\rangle=\langle (0,1)\rangle,
\]
and
\[
\langle (1,0)A_a\rangle=\langle \mathbf u_{i_b}\rangle,\qquad
\langle (0,1)A_a\rangle=\langle \mathbf u_{i_c}\rangle.
\]
For each \(j\in P_a\), set
\[
\widetilde W_j:=A_a^{-1}W_jB_a^{-1}.
\]
Then
\[
\langle (1,0)\widetilde W_j\rangle=\langle (1,0)\rangle,\qquad
\langle (0,1)\widetilde W_j\rangle=\langle (0,1)\rangle.
\]
Hence \(\widetilde W_j\) is diagonal, say
\[
\widetilde W_j=
\begin{bmatrix}
x_j&0\\
0&y_j
\end{bmatrix},
\qquad x_j,y_j\in\F_q^\times.
\]
Therefore
\[
W_j=
A_a
\begin{bmatrix}
x_j&0\\
0&y_j
\end{bmatrix}
B_a.
\] Moreover, the \(x_j\)'s are pairwise distinct
    and the \(y_j\)'s are pairwise distinct on \(P_a\); otherwise
    \(W_j-W_{j'}\) would be singular for some distinct \(j,j'\in P_a\), contradicting
    the MDS property.
    
    We now use the repair matrix \(M_{n-1}\). Write
    \[
    M_{n-1}=[\,U_{n-1}\ \ V_{n-1}\,].
    \]
    Since
    \[
    \mathcal R_{n-1}=Q\cup\{n-1\},
    \]
    we have \(n-1\in\mathcal R_{n-1}\) and \(n\notin\mathcal R_{n-1}\). Hence
    \[
    U_{n-1}=M_{n-1}H_{n-1}\in\GL_2(\F_q),
    \qquad
    \det(V_{n-1})=\det(M_{n-1}H_n)=0.
    \]
    For every \(j\in P_a\), we have \(j\notin\mathcal R_{n-1}\), and therefore
    \[
    \det(M_{n-1}H_j)=\det(U_{n-1}+V_{n-1}W_j)=0.
    \]
    Since $U_{n-1}$ is invertible, we have
    $$\det(I_2+U_{n-1}^{-1}V_{n-1}W_j)=0.$$
    Using Sylvester’s determinant identity, we have 
    $$\det(I_2+T_aD_j)=0,$$
    where
    \[
    T_a=B_aU_{n-1}^{-1}V_{n-1}A_a,\qquad D_j:=\begin{bmatrix}x_j&0\\0&y_j\end{bmatrix}.
    \]
    Since $\det(V_{n-1})=0$, we have $\det(T_a)=0$. If 
    $$T_a=\begin{bmatrix}\alpha_a&*\\ *&\beta_a\end{bmatrix},\qquad\alpha_a,\beta_a\in\mathbb{F}_q,$$
    then
    \[
    0=\det(I_2+T_aD_j)=1+\alpha_ax_j+\beta_ay_j+\det(T_a)x_jy_j=1+\alpha_ax_j+\beta_ay_j.
    \]
    Thus all points \((x_j,y_j)\) with
    \(j\in P_a\) lie on the affine line
    \[
    \alpha_a x+\beta_a y+1=0.
    \]
    
    Next we use the repair matrix \(M_n\). Write
    \[
    M_n=[\,U_n\ \ V_n\,].
    \]
    Since \(n\in\mathcal R_n\), the matrix $V_n=M_nH_n$ is invertible. Moreover, \(|\mathcal R_n|=m+1=|\mathcal R_{n-1}|\) and
    \[
    \mathcal R_n\succeq \mathcal R_{n-1}=Q\cup\{n-1\}.
    \]
    By the definition of \(\prec\), this forces
    \[
    \mathcal R_n\cap [t]=\varnothing.
    \]
    Hence
    \[
    P_a\cap \mathcal R_n=\varnothing
    \qquad (a=1,2,3).
    \]
    So for every \(j\in P_a\),
    \[
    \det(U_n+V_nW_j)=0.
    \]
    Setting
    \[
    C_a:=A_a^{-1}V_n^{-1}U_nB_a^{-1},
    \]
    we obtain
    \[
    0=\det(C_a+D_j)
     =x_jy_j+\alpha'_ax_j+\beta'_ay_j+\gamma'_a
    \]
    for suitable \(\alpha'_a,\beta'_a,\gamma'_a\in\F_q\). Hence the points
    \[
    (x_j,y_j),\qquad j\in P_a,
    \]
    all lie on the affine conic
    \[
    xy+\alpha'_a x+\beta'_a y+\gamma'_a=0.
    \] 
    
    Since \(|P_a|=m+1\ge 3\), \(P_a\) provides at least three distinct
    points lying simultaneously on the above line and conic. Therefore the line must
    be a linear factor of the conic polynomial
    \[
    xy+\alpha'_a x+\beta'_a y+\gamma'_a.
    \]
    But any linear factor of this polynomial must be of the form
    \[
    x+c=0
    \qquad\text{or}\qquad
    y+d=0,
    \]
    because the coefficients of \(x^2\) and \(y^2\) are zero while the coefficient
    of \(xy\) is non-zero. This contradicts the pairwise distinctness of the \(x_j\)'s
    and the \(y_j\)'s on \(P_a\). Hence the configuration in the statement of Lemma~\ref{lem:mod-1-geometry} cannot occur.
    \end{proof}
  
  \begin{proposition}\label{prop:beta-mod-1}
    Let \(\mathcal{C}\) be an \((n,n-2,2)\) MDS array code. Assume that \(n\ge 13\) and \(n\equiv 1 \pmod 4\). Then
    \[
    \beta(\mathcal{C})\ge \left\lceil \frac{5n-8}{4}\right\rceil.
    \]
  \end{proposition}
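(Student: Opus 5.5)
Since \(n\equiv 1\pmod 4\), Theorem~\ref{thm:ZLH-lb} gives \(\beta(\mathcal C)\ge\lceil(5n-10)/4\rceil\). Moreover, \(\lceil(5n-8)/4\rceil-\lceil(5n-10)/4\rceil=1\) in this residue class. So it suffices to rule out the equality case \(\beta(\mathcal C)=\lceil(5n-10)/4\rceil\). Write \(n=4m+5\). The hypothesis \(n\ge 13\) gives \(m\ge 2\), and a direct computation shows \(\lceil(5n-10)/4\rceil=\lceil(20m+15)/4\rceil=5m+4\). We therefore assume, for contradiction, that \(\beta(\mathcal C)=5m+4\).

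We work with the normalization fixed at the start of Subsection~\ref{subsec:repair-bd-lb}: the ordering \(\mathcal R_1\preceq\cdots\preceq\mathcal R_n\) and the index \(t=\max\{i\in[n-2]:\mathcal R_i\prec\mathcal R_{n-1}\}\). The index \(t\) is well defined by \cite[Lemma~8]{zhang2025optimal}, and it satisfies \(t\le n-2\). We split according to whether \(t=n-2\) or \(t<n-2\).

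\emph{Case \(t=n-2\).} Since \(m\ge 2\) and \(\beta(\mathcal C)=5m+4\), Lemma~\ref{lem:mod-1-case2} applies directly and shows that this case cannot occur.

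\emph{Case \(t<n-2\).} Lemma~\ref{lem:mod-1-case1-structure} applies, as it requires only \(m\ge 0\). It yields \(t=3m+3\) and \(|\mathcal R_i|=m+1\) for all \(i\). It also produces a partition \([n-2]=P_1\sqcup P_2\sqcup P_3\sqcup Q\) with \(|P_a|=m+1\) and \(|Q|=m\), such that \(\mathcal R_i=P_a\) for \(i\in P_a\) and \(\mathcal R_i=Q\cup\{n-1\}\) for \(i\in Q\cup\{n-1\}\). This is exactly the configuration that Lemma~\ref{lem:mod-1-geometry} forbids when \(m\ge 2\), so this case is impossible as well.

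Both cases lead to contradictions, so \(\beta(\mathcal C)\ne 5m+4\), and hence \(\beta(\mathcal C)\ge 5m+5=\lceil(5n-8)/4\rceil\).

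The substantive work is already done in the three preceding lemmas; what remains is bookkeeping. The one point that needs care is checking that the hypotheses line up. First, the normalizations \(M_i=\widetilde M_i\), the systematic form \(H_j=[I_2;W_j]\), and the relabeling by \(\prec\) must all be in force simultaneously. Second, the threshold \(n\ge 13\) (equivalently \(m\ge 2\)) is exactly what Lemmas~\ref{lem:mod-1-case2} and~\ref{lem:mod-1-geometry} need.
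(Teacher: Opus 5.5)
Your proposal is correct and follows the paper's proof exactly: reduce to excluding $\beta(\mathcal C)=5m+4$, rule out $t=n-2$ via Lemma~\ref{lem:mod-1-case2}, and in the case $t<n-2$ obtain a contradiction by combining Lemma~\ref{lem:mod-1-case1-structure} with Lemma~\ref{lem:mod-1-geometry}. Your range $m\ge 2$ is the right one, since $n=13$ gives $m=2$; the paper's proof writes $m\ge 3$, but the cited lemmas already cover $m=2$.
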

  \begin{proof}
    By Theorem~\ref{thm:ZLH-lb}, it suffices to rule out the equality case
    \[
    \beta(\mathcal C)=\left\lceil \frac{5n-10}{4}\right\rceil.
    \]
    Write $n=4m+5$ with $m\ge 3$. Assume, for contradiction, that
    \[
    \beta(\mathcal C)=\left\lceil \frac{5n-10}{4}\right\rceil=5m+4.
    \]
  
    By Lemma~\ref{lem:mod-1-case2}, one has
    \[
    t\neq n-2.
    \]
    Hence \(t<n-2\). By Lemma~\ref{lem:mod-1-case1-structure}, there exists a partition
    \[
    [n-2]=P_1\sqcup P_2\sqcup P_3\sqcup Q
    \]
    satisfying the conditions in Lemma~\ref{lem:mod-1-case1-structure}. This contradicts
    Lemma~\ref{lem:mod-1-geometry}. Therefore
    \[
    \beta(\mathcal C)\ge 5m+5=\left\lceil \frac{5n-8}{4}\right\rceil.
    \]
  \end{proof}

  Theorem~\ref{thm:ZLH-lb}, Proposition~\ref{prop:beta-mod-2} and Proposition~\ref{prop:io-nmod1} together give the following result:

  \begin{corollary}\label{cor:beta-small-n}
    Let $\mathcal{C}$ be an $(n,n-2,2)$ MDS array code over $\F_q$ with $n\not\in\{5,6,9,10\}$. Then 
    \[
    \beta(\mathcal{C})\ge\left\lceil \frac{5n-8}{4}\right\rceil.
    \]
  \end{corollary}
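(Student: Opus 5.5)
The plan is a case split on the residue of \(n\) modulo \(4\), matching each class to the earlier result that covers it.

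\emph{Residues \(0\) and \(3\).} When \(n\equiv 0,3\pmod 4\), the two ceilings coincide, as recorded after Theorem~\ref{thm:ZLH-lb}:
\[
\left\lceil \frac{5n-8}{4}\right\rceil=\left\lceil \frac{5n-10}{4}\right\rceil .
\]
So in this case Theorem~\ref{thm:ZLH-lb} already gives the claimed bound.

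\emph{Residue \(2\).} The values \(n\equiv 2\pmod 4\) are \(n=6,10,14,18,\dots\). Since \(n\notin\{6,10\}\) and \(n\ge 3\), we have \(n\ge 14\). Proposition~\ref{prop:beta-mod-2} therefore applies directly.

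\emph{Residue \(1\).} The values \(n\equiv 1\pmod 4\) with \(n\ge 3\) are \(n=5,9,13,\dots\). Excluding \(5\) and \(9\) leaves \(n\ge 13\), which is exactly the hypothesis of Proposition~\ref{prop:beta-mod-1}. Note that the corollary's text cites ``Proposition~\ref{prop:io-nmod1}''; the intended reference here is Proposition~\ref{prop:beta-mod-1}, and that is the result I would invoke.

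\emph{Where care is needed.} The main point to check is the boundary bookkeeping: the excluded set \(\{5,6,9,10\}\) should be exactly the residue-\(1,2\) values below the thresholds \(13\) and \(14\).

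There is a slip to watch in Proposition~\ref{prop:beta-mod-1}. Its proof writes \(n=4m+5\) with \(m\ge 3\), but \(n=13\) corresponds to \(m=2\). Lemmas~\ref{lem:mod-1-case2} and~\ref{lem:mod-1-geometry} are both stated for \(m\ge 2\), and Lemma~\ref{lem:mod-1-case1-structure} holds for \(m\ge 0\). So the argument covers \(m=2\), i.e.\ \(n=13\), and I would read the proposition with \(m\ge 2\).

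Once that is settled, the three cases together give the corollary.
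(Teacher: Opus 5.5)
Your proposal is correct and matches the paper's argument. The residues $0,3 \pmod 4$ follow from Theorem~\ref{thm:ZLH-lb} because the two ceilings agree, and after excluding $\{5,6,9,10\}$ the residues $2$ and $1$ follow from Propositions~\ref{prop:beta-mod-2} and~\ref{prop:beta-mod-1}. You are also right that the cited ``Proposition~\ref{prop:io-nmod1}'' should be Proposition~\ref{prop:beta-mod-1}, and that its proof's ``$m\ge 3$'' should read $m\ge 2$ so that $n=13$ is covered; the supporting lemmas, stated for $m\ge 2$, allow this.
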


  \subsection{The Repair I/O Lower Bound}

    Before proceeding, we recall some notation from \cite[Subsection~3.2]{zhang2025optimal}. For any $i\in [n]$, let $\widetilde{M}_i'\in\mathcal{M}_i(\mathcal{C})$ be such that
    $$\gamma_i(\mathcal{C})=\IO_i(\widetilde{M}_i').$$
    By \cite[Lemma~6]{zhang2025optimal}, we may assume that
    \[
    \widetilde{M}_i'H_j\neq 0
    \qquad\text{for all }j\in[n].
    \]
    For any matrix $M\in\mathcal{M}_i(\mathcal{C})$, define
    \[
    \mathcal{N}(M):=\{\,j\in[n]: \nz(MH_j)=2\,\}.
    \]
    Then
    $$\gamma_i(\mathcal{C})=\IO_i(\widetilde{M}_i')=n-2+|\mathcal{N}(\widetilde{M}_i')|.$$

    For each $i\in [n]$, define 
    \[\mathcal M_i'(\mathcal{C})
    :=
    \{\,M\in\F_q^{2\times 4}: \rank(M)=2,\ \nz(MH_i)=2\,\},\]
    let $\mathcal{N}_i$ denote the minimum set in 
    $$\{\,\mathcal{N}(M): M\in\mathcal{M}_i'(\mathcal{C})\,\},$$
    with respect to the total order \(\prec\), and let $M_i'\in\mathcal{M}_i'(\mathcal{C})$ be such that
    $$\mathcal{N}(M_i')=\mathcal{N}_i.$$
    It is clear that $\mathcal{M}_i(\mathcal{C})\subseteq\mathcal{M}_i'(\mathcal{C})$ for every $i\in [n]$, which implies that $\widetilde{M}_i'\in\mathcal{M}_i'(\mathcal{C})$. It follows that
    \begin{equation}\label{eq:gamma-nmod1}
      \gamma_i(\mathcal{C})=n-2+|\mathcal{N}(\widetilde{M}_i')|\ge n-2+|\mathcal{N}_i|.
    \end{equation} 
    Henceforth we will not refer to the matrices $\widetilde{M}_i'$ again, and the reader may disregard them.

    After relabeling the coordinates if necessary, we may assume that
    \[
    \mathcal{N}_1\preceq \mathcal{N}_2\preceq \cdots \preceq \mathcal{N}_n.
    \]
    \cite[Lemma~14]{zhang2025optimal} ensures that there exists some \(i\in[n-2]\) such that
    \[
    \mathcal{N}_i\prec \mathcal{N}_{n-1}.
    \]
    Hence we can define
    \[
    t:=\max\{\,i\in[n-2]: \mathcal{N}_i\prec \mathcal{N}_{n-1}\,\}.
    \]
    
    Recall that we have assumed that the parity-check matrix $H$ is in systematic form:
    \[
    H=
    \begin{bmatrix}
    A_1&\cdots&A_k&I_2&0\\
    B_1&\cdots&B_k&0&I_2
    \end{bmatrix},
    \]
    where each \(A_j,B_j\in\GL_2(\F_q)\). Furthermore, for each \(i\in[t]\), by \cite[Lemma~15]{zhang2025optimal} we may assume that
    \[
    M_i'=
    \begin{bmatrix}
    \mathbf{u}_i&0\\
    0&\mathbf{v}_i
    \end{bmatrix},\qquad \langle \mathbf{u}_i\rangle,\langle \mathbf{v}_i\rangle\in\{\langle \mathbf{e}_1\rangle,\langle \mathbf{e}_2\rangle\},
    \]
    where
    \[
    \mathbf{e}_1=(1,0),\qquad \mathbf{e}_2=(0,1).
    \]

    The proof of Proposition~\ref{prop:io-nmod1} is divided into several lemmas. We begin with one establishing node-wise monomial invariance for repair I/O.

    \begin{lemma}\label{lem:node-wise-monomial-io}
      Let \(\mathcal{C}\) be an \((n,n-2,2)\) MDS array code over \(\F_q\), with parity-check blocks
      \[
      H=[\,H_1\ \cdots\ H_n\,],\qquad H_j\in \F_q^{4\times 2}.
      \]
      For each \(j\in[n]\), let \(P_j\in \F_q^{2\times 2}\) be a monomial matrix, and define
      \[
      \widetilde H_j:=H_jP_j.
      \]
      Let \(\mathcal C'\) be the code with parity-check blocks
      \[
      \widetilde H=[\,\widetilde H_1\ \cdots\ \widetilde H_n\,].
      \]
      Then \(\mathcal C'\) is again an \((n,n-2,2)\) MDS array code. Moreover, for every
      \(M\in \F_q^{2\times 4}\) and every \(j\in[n]\),
      \[
      \nz(M\widetilde H_j)=\nz(MH_j).
      \]
      In particular,
      \[
      \gamma(\mathcal C')=\gamma(\mathcal C).
      \]
      \end{lemma}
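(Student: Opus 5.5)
The plan is to verify three things in turn: the new code is still MDS, each $\nz$ value is preserved, and therefore the optimal I/O values coincide. Throughout we use that a monomial matrix $P_j$ factors as $P_j=D_j\Pi_j$, where $D_j$ is an invertible diagonal matrix and $\Pi_j$ is a permutation matrix. In particular $P_j$ is invertible.

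\emph{MDS property.} For $i\neq j$, the block matrix $[\,\widetilde H_i\ \widetilde H_j\,]$ equals $[\,H_i\ H_j\,]\cdot\mathrm{diag}(P_i,P_j)$. The second factor is invertible, so $[\,\widetilde H_i\ \widetilde H_j\,]$ is invertible exactly when $[\,H_i\ H_j\,]$ is. Taking $i=j'$ pairs over all $2$-subsets of $[n]$ shows $\mathcal C'$ is again an $(n,n-2,2)$ MDS array code. The same identity with all blocks at once shows $\widetilde H$ still has full row rank.

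\emph{Preservation of $\nz$.} We have $M\widetilde H_j=(MH_j)D_j\Pi_j$. Right multiplication by $D_j$ scales each column by a nonzero scalar, so a column is zero before scaling if and only if it is zero after. Right multiplication by $\Pi_j$ only permutes columns. Hence the multiset of zero and nonzero columns is unchanged, and $\nz(M\widetilde H_j)=\nz(MH_j)$ for every $M\in\F_q^{2\times 4}$.

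\emph{Equality of $\gamma$.} The repair matrix sets agree: $\mathcal M_i(\mathcal C')=\mathcal M_i(\mathcal C)$, because $M\widetilde H_i=(MH_i)P_i$ is invertible if and only if $MH_i$ is. For each such $M$, the previous step gives $\IO_i(M)$ equal for the two codes. So $\gamma_i(\mathcal C')=\gamma_i(\mathcal C)$ for every $i$, and taking the maximum over $i$ gives $\gamma(\mathcal C')=\gamma(\mathcal C)$.

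None of these steps is a real obstacle. The only point needing care is to note that the repair-matrix sets coincide, rather than merely that the $\nz$ values match for a fixed $M$.
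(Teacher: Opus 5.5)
Your proof is correct and matches the paper's argument step for step: you use the factorization $[\widetilde H_i\ \widetilde H_j]=[H_i\ H_j]\,\mathrm{diag}(P_i,P_j)$ for the MDS property, column rescaling and permutation for $\nz$, and the equality $\mathcal M_i(\mathcal C')=\mathcal M_i(\mathcal C)$ to transfer each $\gamma_i$. The only differences are the order of the first two steps and your extra remark about full row rank.
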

      
      \begin{proof}
      Since \(P_j\) is monomial, right-multiplication by \(P_j\) only permutes the two columns of \(MH_j\)
      and rescales them by non-zero scalars. Hence
      \[
      \nz(MH_jP_j)=\nz(MH_j)
      \]
      for every \(M\in\F_q^{2\times 4}\) and every \(j\in[n]\).
      
      For any distinct \(a,b\in[n]\),
      \[
      [\widetilde H_a\ \widetilde H_b]
      =
      [H_a\ H_b]
      \begin{bmatrix}
      P_a&0\\
      0&P_b
      \end{bmatrix},
      \]
      so \([\widetilde H_a\ \widetilde H_b]\) is invertible if and only if \([H_a\ H_b]\) is invertible. Thus
      \(\mathcal C'\) is again an \((n,n-2,2)\) MDS array code.
      
      For each \(i\in[n]\) and every \(M\in\F_q^{2\times 4}\),
      \[
      M\widetilde H_i=MH_iP_i,
      \]
      and \(P_i\) is invertible. Hence \(MH_i\) is invertible if and only if \(M\widetilde H_i\) is invertible, so
      \[
      \mathcal M_i(\mathcal C')=\mathcal M_i(\mathcal C).
      \]
      Therefore \(\IO_i^{\mathcal C'}(M)=\IO_i^{\mathcal C}(M)\) for every
      \(M\in\mathcal M_i(\mathcal C)=\mathcal M_i(\mathcal C')\), and taking minima gives
      \[
      \gamma_i(\mathcal C')=\gamma_i(\mathcal C)
      \qquad\text{for all }i\in[n].
      \]
      Thus \(\gamma(\mathcal C')=\gamma(\mathcal C)\).
      \end{proof}
      
      \begin{lemma}\label{lem:io-mod-1-structure}
      Let \(\mathcal{C}\) be an \((n,n-2,2)\) MDS array code over \(\F_q\). Assume that
      \[
      n=3m+4
      \qquad\text{with }m\ge 1,
      \]
      and
      \[
      \gamma(\mathcal{C})=4m+3.
      \]
      Then
      \[
      t=2m+2,
      \qquad
      |\mathcal{N}_i|=m+1
      \quad\text{for all }i\in[n],
      \]
      and there exists a partition
      \[
      [n-2]=P_1\sqcup P_2\sqcup Q,
      \]
      where
      \[
      |P_1|=|P_2|=m+1,
      \qquad
      |Q|=m,
      \]
      such that
      \[
      \mathcal{N}_i=P_1 \quad(i\in P_1),
      \qquad
      \mathcal{N}_i=P_2 \quad(i\in P_2),
      \]
      and
      \[
      \mathcal{N}_i=Q\cup\{n-1\}
      \qquad\text{for all }i\in Q\cup\{n-1\}.
      \]
      \end{lemma}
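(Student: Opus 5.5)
We follow the template of Lemma~\ref{lem:mod-1-case1-structure}, with the I/O counting from \cite[Subsection~3.2]{zhang2025optimal} in place of the bandwidth counting. The constant $3$ there becomes $2$ here.

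\textbf{Step 1: upper bound on the sum.} Since $\gamma(\mathcal C)=4m+3=n-2+(m+1)$, inequality~\eqref{eq:gamma-nmod1} gives $|\mathcal N_i|\le m+1$ for all $i\in[n]$. Hence
\[
\sum_{i=1}^n|\mathcal N_i|\le (3m+4)(m+1).
\]

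\textbf{Step 2: lower bound and the value of $t$.} We next need the lower bound from the proof of the I/O bound in \cite{zhang2025optimal}. There, at most two distinct sets occur among $\mathcal N_1,\dots,\mathcal N_t$, by the I/O analogue of Lemma~12. Combining this with the analogue of the Case~1 / Case~2 split, we expect
\[
\sum_{i=1}^n|\mathcal N_i|\ \ge\ \frac{t^2}{2}+(n-t)\max\Bigl\{\frac{t}{2},\,n-1-t\Bigr\}.
\]
We then check that the right-hand side is at least $(3m+4)(m+1)$, with equality only at $t=2m+2$. There are two branches.
\begin{itemize}
\item For $t\ge 2m+2$: the right-hand side minus $(3m+4)(m+1)$ equals $(3m+4)(t-2m-2)/2\ge 0$.
\item For $t\le 2m+2$: the difference $t^2/2+(3m+4-t)(3m+3-t)-(3m+4)(m+1)$ factors as $\tfrac12(t-2m-2)(3t-\dots)$, and is nonnegative with a zero only at $t=2m+2$.
\end{itemize}
In the case $t=n-2$ there are no indices beyond $t$, so the bound reads $\sum\ge (n-2)^2/2+\dots$. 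We must check that this is strictly larger than $(3m+4)(m+1)$ whenever $m\ge 1$. The Case~2 estimate from \cite{zhang2025optimal} should cover it, or else the count $d\le 2$ blocks of size $\le m+1$ covering $3m+2$ indices, which is impossible. Once equality is forced, we obtain $t=2m+2$ and $|\mathcal N_i|=m+1$ for all $i$.

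\textbf{Step 3: the blocks $P_1,P_2$.} Let $S_1,\dots,S_d$ ($d\le 2$) be the distinct sets among $\mathcal N_1,\dots,\mathcal N_t$, with multiplicities $s_a$. The chain
\[
t(m+1)=\sum_a s_a|S_a|\ \ge\ \sum_a s_a^2\ \ge\ \frac{t^2}{2}
\]
uses $s_a\le|S_a|$, which holds because $i\in\mathcal N_i$. This chain must be tight, so $d=2$ and $s_a=|S_a|=m+1$. Setting $P_a=\{i\in[t]:\mathcal N_i=S_a\}$, we get $P_a\subseteq S_a$ with equal sizes, so $P_a=S_a$.

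\textbf{Step 4: the block $Q$.} Set $Q=\{t+1,\dots,n-2\}$, so $|Q|=m$. By maximality of $t$ and the ordering, $\mathcal N_i=\mathcal N_{n-1}$ for every $i\in Q$. This yields $Q\cup\{n-1\}\subseteq\mathcal N_{n-1}$, and since both sides have size $m+1$, they are equal.

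\textbf{Main obstacle.} The hard part is Step 2. We must correctly extract from \cite{zhang2025optimal} the exact form of the Case~1 inequality for $\mathcal N$, including the role of $n-1-t$. We must also exclude the case $t=n-2$ cleanly for every $m\ge1$, including $m=1$.
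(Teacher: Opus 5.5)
Your Steps~1, 3 and~4 coincide with the paper's proof. So does your two-branch analysis of $F(t):=\frac{t^2}{2}+(n-t)\max\{\frac t2,n-1-t\}$; the omitted factor is $3t-6m-8$.

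The gap is the case $t=n-2$ that you flag, and it comes from a misquoted input. The I/O analogue of Lemma~12, \cite[Lemma~17(1)]{zhang2025optimal}, does not say that at most two distinct sets occur among $\mathcal N_1,\dots,\mathcal N_t$. It says at most three, with three possible only when $t=n-2$. Hence neither of your justifications for $t=n-2$ goes through:
\begin{itemize}
\item Your fallback count fails because three blocks $P_a\subseteq S_a$ with $|S_a|\le m+1$ can cover $3m+2\le 3(m+1)$ indices, with sizes $m+1,m+1,m$. This is the I/O analogue of the configuration that required all of Lemma~\ref{lem:mod-1-case2} in the bandwidth setting.
\item If you derive the prefix term of $F(t)$ from $s_a\le|S_a|$ and Cauchy--Schwarz, then with $d=3$ you only get $t^2/3$. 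Even adding the largest possible contribution $2(m+1)$ of the indices $n-1,n$ gives $\frac{(3m+2)^2}{3}+2(m+1)<(3m+4)(m+1)$, so no contradiction arises.
\end{itemize}

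The paper does not treat $t=n-2$ separately. It takes $\sum_{i=1}^n|\mathcal N_i|\ge F(t)$ from the proof of \cite[Theorem~3]{zhang2025optimal} for every $t\in[n-2]$. Then $t=n-2$ falls into your first branch, since $F(n-2)-n(m+1)=\frac{m(3m+4)}{2}>0$. Only once $t=2m+2<n-2$ is known does the paper apply \cite[Lemma~17(1)]{zhang2025optimal} to get $d\le 2$, which is what your Step~3 actually needs. The paper obtains $|\mathcal N_i|=m+1$ from the final inequality $\frac1n\sum_i|\mathcal N_i|+n-2\ge\frac{4n-7}{3}$ of that proof rather than from $F$, but your route is equally valid.

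If you want a self-contained exclusion of $t=n-2$ instead, split on $d$:
\begin{itemize}
\item If $d\le 2$, your count already gives a contradiction.
\item If $d=3$, the three diagonal pairs $(\langle\mathbf u_i\rangle,\langle\mathbf v_i\rangle)\in\{\langle\mathbf e_1\rangle,\langle\mathbf e_2\rangle\}^2$ are distinct, so two of them differ in exactly one coordinate. Then \cite[Lemma~16]{zhang2025optimal}, in the form used in Lemma~\ref{lem:io-mod-1-geometry}, makes the two corresponding sets cover $[n-2]$. This is impossible since $2(m+1)<3m+2$.
\end{itemize}
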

      
      \begin{proof}
     By Inequality~\eqref{eq:gamma-nmod1}, we have
      \[
      |\mathcal{N}_i|
      \le\gamma_i(\mathcal{C})-(n-2)
      \le \gamma(\mathcal{C})-(n-2)
      =(4m+3)-(3m+2)=m+1
      \]
      for all \(i\in[n]\) Let $\bar\gamma(\mathcal C):=\frac{1}{n}\sum_{i=1}^{n}\gamma_i(\mathcal C)$. Then by definition, we have
      \begin{equation}\label{eq:bar-gamma-ineq}
        \bar\gamma(\mathcal C)\le \gamma(\mathcal C)=4m+3.
      \end{equation}
      
      We now revisit the proof of \cite[Theorem~3]{zhang2025optimal}. The final inequalities of the proof yield
      $$\bar\gamma(\mathcal{C})\ge\frac{1}{n}\sum\limits_{i=1}^{n}|\mathcal{N}_i|+n-2\ge\frac{4n-7}{3}=4m+3.$$
      Combining this with Inequality~\eqref{eq:bar-gamma-ineq}, we obtain $\bar\gamma(\mathcal{C})=4m+3$ and
      \begin{equation}\label{eq:sum-N-ineq}
        \sum\limits_{i=1}^n|\mathcal{N}_i|=(4m+3)n-(n-2)n=(m+1)n.
      \end{equation}
      Since $|\mathcal{N}_i|\le m+1$ for every $i\in[n]$, it follows that $|\mathcal{N}_i|=m+1$ for every $i\in[n]$

      Moreover, the proof of \cite[Theorem~3]{zhang2025optimal} shows that
      \[
      \sum_{i=1}^{n}|\mathcal{N}_i|
      \ge
      F(t):=
      \frac{t^2}{2}+(n-t)\max\!\left\{\frac t2,\ n-1-t\right\}.
      \]
      \begin{itemize}
      \item If \(t\ge 2m+2\), then \(\frac t2\ge n-1-t\), and hence
      \[
      F(t)=\frac{t^2}{2}+\frac{t}{2}(n-t)=\frac{nt}{2}\ge(m+1)n,
      \]
      with equality if and only if \(t=2m+2\).
      \item If \(t\le 2m+1\), then \(n-1-t\ge \frac t2\), and hence
      \[
      F(t)=\frac{t^2}{2}+(n-t)(n-1-t)=\frac{t^2}{2}+(3m+4-t)(3m+3-t).
      \]
      Differentiating with respect to $t$ yields
      \[
      F'(t)=3t-(6m+7),
      \]
      which is strictly negative for $t\le 2m+1$, so $F(t)$ is strictly decreasing on the interval \(t\le 2m+1\). Therefore its minimum on
      that interval occurs at $(-\infty,2m+1]$, where
      \[
      F(2m+1)
      =(3m+4)(m+1)+\frac52
      >(3m+4)(m+1)=(m+1)n.
      \]
      \end{itemize}
      Thus Equality~\eqref{eq:sum-N-ineq} forces
      \[
      t=2m+2.
      \]
      
      We next analyze the prefix \((\mathcal{N}_i)_{i\in[t]}\). By \cite[Lemma~17(1)]{zhang2025optimal}, one has \(|\{\mathcal N_i:i\in[t]\}|\le 3\). In the proof of \cite[Theorem~3]{zhang2025optimal}, if
      \[
      |\{N_i:i\in[t]\}|=1,
      \]
      then
      \[
      \sum_{i=1}^t |\mathcal N_i|\ge t^2=(2m+2)^2>t(m+1),
      \]
      contradicting \(|\mathcal N_i|=m+1\) for any \(i\in[n]\). If
      \[
      |\{\mathcal N_i:i\in[t]\}|=3,
      \]
      then \cite[Lemma~17(1)]{zhang2025optimal} gives \(t=n-2\), contradicting
      \[
      t=2m+2<3m+2=n-2.
      \]
      Hence there are exactly two distinct sets among \(\mathcal N_1,\dots,\mathcal N_t\). Let them be
      \[
      S_1\prec S_2,
      \]
      and define
      \[
      P_1:=\{\,i\in[t]: \mathcal N_i=S_1\,\},
      \qquad
      P_2:=\{\,i\in[t]: \mathcal N_i=S_2\,\}.
      \]
      Then
      \[
      P_1\sqcup P_2=[t].
      \]
      
      As in Case~2 of the proof of \cite[Theorem~3]{zhang2025optimal},
      \[
      \sum_{i=1}^t |\mathcal N_i|
      =
      |P_1||S_1|+|P_2||S_2|
      \ge |P_1|^2+|P_2|^2
      \ge \frac{(|P_1|+|P_2|)^2}{2}
      =\frac{t^2}{2}.
      \]
      Since
      \[
      \sum_{i=1}^t |\mathcal N_i|=t(m+1)=(2m+2)(m+1)=\frac{t^2}{2},
      \]
      every inequality above is an equality. Therefore
      \[
      |P_1|=|P_2|=m+1,
      \qquad
      |S_1|=|S_2|=m+1.
      \]
      Since \(i\in \mathcal N_i\) for every \(i\in[n]\), we have \(P_a\subseteq S_a\), and hence
      \[
      S_1=P_1,
      \qquad
      S_2=P_2.
      \]
      Thus
      \[
      [t]=P_1\sqcup P_2,
      \qquad
      |P_1|=|P_2|=m+1,
      \]
      and
      \[
      \mathcal N_i=P_1 \quad(i\in P_1),
      \qquad
      \mathcal N_i=P_2 \quad(i\in P_2).
      \]
      
      Let
      \[
      Q:=\{t+1,\dots,n-2\}=\{2m+3,\dots,3m+2\}.
      \]
      Then
      \[
      |Q\cup\{n-1\}|=m+1.
      \]
      By the definition of \(t\), we have
      \[
      \mathcal N_{t+1}=\cdots=\mathcal N_{n-1}.
      \]
      Moreover, for every \(i\in[n-1]\), one has $i\in \mathcal N_i$. Hence, for every
      \[
      i\in \{t+1,\dots,n-1\}=Q\cup\{n-1\},
      \]
      we have $i\in \mathcal N_i=\mathcal N_{n-1}$. Therefore
      \[
      Q\cup\{n-1\}\subseteq \mathcal N_{n-1}.
      \]
      Since \(|\mathcal N_{n-1}|=m+1=|Q\cup\{n-1\}|\), it follows that
      \[
      \mathcal N_{n-1}=Q\cup\{n-1\}.
      \]
      Consequently,
      \[
      \mathcal N_i=Q\cup\{n-1\}
      \qquad
      \text{for all }i\in Q\cup\{n-1\}.
      \]
      \end{proof}
      
      \begin{lemma}\label{lem:io-mod-1-geometry}
      No \((n,n-2,2)\) MDS array code \(\mathcal{C}\), where \(n=3m+4\) with \(m\ge 1\), admits a partition
      \[
      [n-2]=P_1\sqcup P_2\sqcup Q,
      \]
      with
      \[
      |P_1|=|P_2|=m+1,
      \qquad
      |Q|=m,
      \]
      such that
      \[
      \mathcal N_i=P_1 \quad(i\in P_1),
      \qquad
      \mathcal N_i=P_2 \quad(i\in P_2),
      \]
      and
      \[
      \mathcal N_i=Q\cup\{n-1\}
      \qquad\text{for all }i\in Q\cup\{n-1\}.
      \]
      \end{lemma}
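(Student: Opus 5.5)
The plan is to mirror the proof of Lemma~\ref{lem:mod-1-geometry}, with projective classes replaced by zero patterns. Assume such a partition exists. The witnesses $M_i'$ ($i\in[t]$) have the diagonal form $\mathrm{diag}(\mathbf u_i,\mathbf v_i)$ with $\langle\mathbf u_i\rangle,\langle\mathbf v_i\rangle\in\{\langle\mathbf e_1\rangle,\langle\mathbf e_2\rangle\}$. This turns the conditions $j\notin\mathcal N_i$ into explicit zero-entry conditions on $A_j,B_j$.

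\emph{Step 1 (zero patterns).} Fix $a\in\{1,2\}$, let $b$ be the other index, and pick $i_b\in P_b$. For every $j\in P_a$ we have $j\notin\mathcal N_{i_b}=P_b$, and $M_{i_b}'H_j\ne0$ (otherwise $\mathcal N(M_{i_b}')$ could be decreased). So
\[
M_{i_b}'H_j=\begin{bmatrix}\mathbf u_{i_b}A_j\\ \mathbf v_{i_b}B_j\end{bmatrix}
\]
has exactly one zero column. Hence a prescribed row of $A_j$ and a prescribed row of $B_j$ vanish in a common column $c_j$. By Lemma~\ref{lem:node-wise-monomial-io} we may permute and rescale the columns of each $H_j$ independently without changing $\gamma$, the MDS property, or any $\mathcal N(M)$. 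So we normalize $c_j=1$ and scale so that a chosen non-zero entry equals $1$. Then $H_j$ ($j\in P_a$) depends on few free parameters. I would first check, using $i_b$ and $i_a$ together with the fact that $\mathcal N_{i_a}=P_a$ contains $j$, whether $(\mathbf u_{i_b},\mathbf v_{i_b})$ is forced. If it is, the pattern is the same throughout $P_a$.

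\emph{Step 2 (constraints from $M_{n-1}'$ and $M_n'$).} Write $M_{n-1}'=[U\ V]$. Since $n-1\in\mathcal N_{n-1}$, both columns of $U$ are non-zero. Since $n\notin\mathcal N_{n-1}$, $V$ has a zero column. Also $P_1\cup P_2$ is disjoint from $\mathcal N_{n-1}$, so $UA_j+VB_j$ has a zero column for every $j\in P_1\cup P_2$.

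Next we handle $M_n'=[U'\ V']$. We have $|\mathcal N_n|=m+1=|\mathcal N_{n-1}|$ and $\mathcal N_n\succeq\mathcal N_{n-1}=Q\cup\{n-1\}$. By the definition of $\prec$, exactly as in Lemma~\ref{lem:mod-1-geometry}, this gives $\mathcal N_n\cap[t]=\varnothing$. So $U'A_j+V'B_j$ also has a zero column for all $j\in P_1\cup P_2$, and $V'$ has no zero column. Each condition "column $c$ of $UA_j+VB_j$ is zero" is two bilinear equations in the entries of $(U,V)$ and $(A_j,B_j)$. After the normalization of Step~1, these become equations in the free parameters of $H_j$.

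\emph{Step 3 (contradiction).} For each block $P_a$, with $|P_a|=m+1\ge2$, take two distinct $j,j'\in P_a$. The aim is to show that Step~2 leaves at most a one-parameter family in which any two members $H_j,H_{j'}$ give a singular $[H_j\ H_{j'}]$. For instance, two members might share a column direction, or $A_j-A_{j'}$ or $B_j-B_{j'}$ might be forced singular, as in the line-versus-conic argument. Since $m\ge1$ only guarantees two points per block, I expect to need both blocks simultaneously. The zero column of $V$ restricts which column of $UA_j+VB_j$ can vanish. Combining this with the rank-$2$ condition on $M_{n-1}'$ should force all $j\in P_1\cup P_2$ to kill the same column, and I would use this to tie $P_1$ and $P_2$ together. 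If needed, I would also bring in $Q$: for $j\in Q$ we have $j\notin\mathcal N_{i}$ for $i\in P_1\cup P_2$, giving zero patterns for $j\in Q$ as well.

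The main obstacle is the case analysis in Steps~2--3. Which column of $UA_j+VB_j$ and of $U'A_j+V'B_j$ vanishes may a priori depend on $j$. In addition, the pairs $(\mathbf u_{i_b},\mathbf v_{i_b})$ range over four coordinate patterns. I would first try to reduce the number of cases using the symmetry of Lemma~\ref{lem:node-wise-monomial-io} and a row-swap symmetry inside the top and bottom halves. Only after that reduction would I carry out the determinant computations showing that two members of the same block violate MDS.
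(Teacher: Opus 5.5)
There is a genuine gap. Step~3, where the contradiction has to be produced, is only a statement of intent. The route you sketch for it does not close the argument: tying $P_1$ and $P_2$ together, forcing every $j\in P_1\cup P_2$ to kill a common column, and bringing in $M_n'$ and zero patterns on $Q$.

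Step~1 must also be settled first rather than left as a check. For $j\in P_a$, the facts $j\in\mathcal N_{i_a}$ and $j\notin\mathcal N_{i_b}$ rule out equal witness patterns for $i_a$ and $i_b$. Any $j\in Q$ (non-empty since $m\ge1$) rules out patterns differing in exactly one coordinate, since otherwise $A_j$ or $B_j$ would have a zero column. This is what \cite[Lemma~16]{zhang2025optimal} provides.

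Hence, after relabeling and possibly swapping the two bottom coordinates, you may take every $P_1$ witness to have pattern $(\langle\mathbf e_1\rangle,\langle\mathbf e_1\rangle)$ and every $P_2$ witness to have pattern $(\langle\mathbf e_2\rangle,\langle\mathbf e_2\rangle)$. Column permutations then make $A_j,B_j$ upper triangular for $j\in P_1$, with diagonals $a_j,d_j$ and $\tilde a_j,\tilde d_j$, and lower triangular for $j\in P_2$. MDS applied to rows $2,4$ of $[H_j\ H_{j'}]$ gives $(d_j,\tilde d_j)\nparallel(d_{j'},\tilde d_{j'})$ for distinct $j,j'\in P_1$. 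Similarly, rows $1,3$ give $(a_j,\tilde a_j)\nparallel(a_{j'},\tilde a_{j'})$ on $P_2$. (Incidentally, $M'_{i_b}H_j\neq0$ simply because $\mathbf u_{i_b}A_j\neq0$; no minimality is involved.)

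The missing idea is to read $M'_{n-1}$ through its $2$-dimensional kernel $W$. Your Step~2 facts say exactly three things: $e_1,e_2\notin W$; at least one of $e_3,e_4$ lies in $W$; and every $H_j$ with $j\in P_1\cup P_2$ has a column in $W$.

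Suppose $e_3\in W$ and take $j\in P_1$. The first column $(a_j,0,\tilde a_j,0)^T$ lies in $\Span\{e_1,e_3\}$ with $a_j\neq0$, so it cannot lie in $W$, since together with $e_3$ it would put $e_1$ in $W$. Hence the second column of $H_j$ lies in $W$. For two $j,j'\in P_1$ this gives three vectors, $e_3$ and the two second columns, in a $2$-dimensional space. Projecting the resulting dependence onto coordinates $2,4$ kills $e_3$ and yields $(d_j,\tilde d_j)\parallel(d_{j'},\tilde d_{j'})$, contradicting MDS. If instead $e_4\in W$, the same argument runs on $P_2$, using $\Span\{e_2,e_4\}$ and coordinates $1,3$.

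Each case uses only one block and two of its members, and neither $M'_n$ nor further information about $Q$ is needed. This matters: your use of $M'_n$ requires $|\mathcal N_n|=m+1$ to get $\mathcal N_n\cap[t]=\varnothing$. That is not a hypothesis of this lemma; it comes from Lemma~\ref{lem:io-mod-1-structure}.
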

      
      \begin{proof}
      Assume to the contrary that such a partition exists. Since \(\mathcal{N}_i\in\{P_1,P_2\}\) for every \(i\in[t]\), and since \(P_1\cup P_2\subsetneq [n-2]\), \cite[Lemma~16]{zhang2025optimal}
      implies that the corresponding pairs
      \[
      (\langle \mathbf{u}_i\rangle,\langle \mathbf{v}_i\rangle),\qquad i\in[t]
      \]
      either coincide or differ in both coordinates. We may assume that
      \[
      (\langle \mathbf{u}_i\rangle,\langle \mathbf{v}_i\rangle)=(\langle \mathbf{e}_1\rangle,\langle \mathbf{e}_1\rangle)
      \qquad (i\in P_1),
      \]
      and
      \[
      (\langle \mathbf{u}_i\rangle,\langle \mathbf{v}_i\rangle)=(\langle \mathbf{e}_2\rangle,\langle \mathbf{e}_2\rangle)
      \qquad (i\in P_2).
      \]
      
      Fix \(j\in P_1\), and choose any \(i\in P_2\). Since $j\notin \mathcal N_i=P_2$, we have $nz(M_i'H_j)=1$. Because \((\langle \mathbf{u}_i\rangle,\langle \mathbf{v}_i\rangle)=(\langle \mathbf{e}_2\rangle,\langle \mathbf{e}_2\rangle)\), this
      means that the second rows of \(A_j\) and \(B_j\) are supported in the same single
      column. Therefore, after a node-wise column permutation, we may assume that
      \[
      A_j=
      \begin{bmatrix}
      a_j & b_j\\
      0 & d_j
      \end{bmatrix},
      \qquad
      B_j=
      \begin{bmatrix}
      \widetilde{a}_j & \widetilde{b}_j\\
      0 & \widetilde{d}_j
      \end{bmatrix}
      \qquad \forall\,j\in P_1.
      \]
      Lemma~\ref{lem:node-wise-monomial-io} guarantees that this can be done without loss of generality. Likewise, we may assume that
      \[
      A_j=
      \begin{bmatrix}
      a_j & 0\\
      c_j & d_j
      \end{bmatrix},
      \qquad
      B_j=
      \begin{bmatrix}
      \widetilde{a}_j & 0\\
      \widetilde{c}_j & \widetilde{d}_j
      \end{bmatrix}
      \qquad \forall\,j\in P_2.
      \]
      
      For distinct \(j,j'\in P_1\), the MDS property implies that $[H_j,H_{j'}]$ is invertible. Note that
      \[
        [H_j,H_{j'}]=
        \begin{bmatrix}
        a_j &b_j& a_{j'}&b_{j'}\\
        0 & d_j& 0& d_{j'}\\
        \widetilde{a}_j & \widetilde{b}_j& \widetilde{a}_{j'}& \widetilde{b}_{j'}\\
        0 & \widetilde{d}_j& 0& \widetilde{d}_{j'}
        \end{bmatrix}
      \]
      Looking at the first and third columns of this matrix, we see that $(a_j,\widetilde{a}_j)\nparallel (a_{j'},\widetilde{a}_{j'})$. Looking at the second and fourth rows of this matrix, we see that the matrix 
      \[
      \begin{bmatrix}
      d_j & d_{j'}\\
      \widetilde{d}_j & \widetilde{d}_{j'}
      \end{bmatrix}
      \]
      is invertible, which implies that $(d_j,\widetilde{d}_j)\nparallel (d_{j'},\widetilde{d}_{j'})$. Similarly, we can show that for any distinct \(j,j'\in P_2\),
      \[
      (a_j,\widetilde{a}_j)\nparallel (a_{j'},\widetilde{a}_{j'}),
      \qquad
      (d_j,\widetilde{d}_j)\nparallel (d_{j'},\widetilde{d}_{j'}).
      \]
      
      Fix any $i\in Q\cup\{n-1\}$ and let $W_i:=\ker(M_i')$. Since $\rank(M_i')=2$, the space $W_i$ has dimension $2$. By assumption, \(n-1\in \mathcal{N}_i\), and thus the matrix \(M_i'H_{n-1}\) has two non-zero columns, so neither column of \(H_{n-1}\) lies in \(W_i\), i.e., $e_1,e_2\notin W_i$. On the other hand, since \(n\notin \mathcal{N}_i\), the matrix \(M_i'H_n\) has exactly one non-zero column,
      so exactly one of $e_3,\ e_4$ belongs to \(W_i\). For any \(j\in P_1\cup P_2\), since \(j\notin\mathcal{N}_i\), the matrix \(M_i'H_j\) has
      exactly one non-zero column. Equivalently, exactly one of the two columns of \(H_j\)
      lies in \(W_i\). For any $j\in P_1\cup P_2$, let $\col_j^{(1)}$ and $\col_j^{(2)}$ be the first and second columns of $H_j$, respectively.
      
      Now consider first the case \(e_3\in W_i\). For every \(j\in P_1\), we have
      \[
      \col_j^{(1)}=(a_j,0,\widetilde a_j,0)^T\in U_1:=\Span\{e_1,e_3\}.
      \]
      Since \(A_j\in\GL_2(\F_q)\), we have \(a_j\neq 0\), so \(\col_j^{(1)}\notin \langle e_3\rangle\). Hence, if \(\col_j^{(1)}\in W_i\), then
      \[
      U_1=\Span\{e_3,\col_j^{(1)}\}\subseteq W_i,
      \]
      which would imply \(e_1\in W_i\), a contradiction. Therefore \(\col_j^{(1)}\notin W_i\) for every \(j\in P_1\). Since exactly one column of \(H_j\) lies in \(W_i\), it follows that
      \[
      \col_j^{(2)}\in W_i\qquad\forall\,j\in P_1.
      \]
      Because \(|P_1|=m+1\ge 2\), we may choose distinct \(j,j'\in P_1\). Then
      \[
      e_3,\ \col_j^{(2)},\ \col_{j'}^{(2)}\in W_i.
      \]
      Since \(\dim W_i=2\), these three vectors are linearly dependent. Restricting to the second and fourth coordinates, we obtain
      \[
      (d_j,\widetilde d_j)\parallel (d_{j'},\widetilde d_{j'}),
      \]
      contradicting the pairwise non-collinearity of \((d_j,\widetilde d_j)\) within \(P_1\).
      
      Next consider the case \(e_4\in W_i\). For every \(j\in P_2\), we have
      \[
      \col_j^{(2)}=(0,d_j,0,\widetilde d_j)^T\in U_2:=\Span\{e_2,e_4\}.
      \]
      Since \(A_j\in\GL_2(\F_q)\), we have \(d_j\neq 0\), so \(\col_j^{(2)}\notin \langle e_4\rangle\). Hence, if \(\col_j^{(2)}\in W_i\), then
      \[
      U_2=\Span\{e_4,\col_j^{(2)}\}\subseteq W_i,
      \]
      which would imply \(e_2\in W_i\), a contradiction. Therefore \(\col_j^{(2)}\notin W_i\) for every \(j\in P_2\). Since exactly one column of \(H_j\) lies in \(W_i\), it follows that
      \[
      \col_j^{(1)}\in W_i\qquad\forall\,j\in P_2.
      \]
      Because \(|P_2|=m+1\ge 2\), we may choose distinct \(s,s'\in P_2\). Then
      \[
      e_4,\ \col_s^{(1)},\ \col_{s'}^{(1)}\in W_i.
      \]
      Since \(\dim W_i=2\), these three vectors are linearly dependent. Restricting to the first and third coordinates, we obtain
      \[
      (a_s,\widetilde a_s)\parallel (a_{s'},\widetilde a_{s'}),
      \]
      contradicting the pairwise non-collinearity of \((a_s,\widetilde a_s)\) within \(P_2\).
      
      In either case we obtain a contradiction. Therefore the configuration in the statement of Lemma~\ref{lem:io-mod-1-geometry} cannot occur.
      \end{proof}

      \begin{proposition}\label{prop:io-nmod1}
      Let \(\mathcal{C}\) be an \((n,n-2,2)\) MDS array code over \(\F_q\). Assume that $n\ge 7$ and $n\equiv 1 \pmod 3$.
      Then
      \[
      \gamma(\mathcal{C})\ge \left\lceil \frac{4n-6}{3}\right\rceil.
      \]
      \end{proposition}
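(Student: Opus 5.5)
The plan mirrors the bandwidth case (Proposition~\ref{prop:beta-mod-1}) and only assembles the two preceding lemmas. Write $n=3m+4$. The hypotheses $n\ge 7$ and $n\equiv 1\pmod 3$ give $m\ge 1$. Then
\[
\left\lceil\frac{4n-7}{3}\right\rceil=\left\lceil\frac{12m+9}{3}\right\rceil=4m+3,
\qquad
\left\lceil\frac{4n-6}{3}\right\rceil=\left\lceil\frac{12m+10}{3}\right\rceil=4m+4 .
\]
By Theorem~\ref{thm:ZLH-lb} we already know $\gamma(\mathcal C)\ge 4m+3$. So it suffices to rule out the single equality case $\gamma(\mathcal C)=4m+3$.

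Assume, for contradiction, that $\gamma(\mathcal C)=4m+3$. I would first put the parity-check matrix in systematic form, choose the matrices $M_i'$ and the sets $\mathcal N_i$, relabel the nodes so that $\mathcal N_1\preceq\cdots\preceq\mathcal N_n$, and define $t$, all exactly as in the setup preceding Lemma~\ref{lem:node-wise-monomial-io}. None of these normalizations changes $\gamma(\mathcal C)$: relabeling and row operations on $H$ are harmless, and node-wise monomial column changes are covered by Lemma~\ref{lem:node-wise-monomial-io}. With $m\ge1$, Lemma~\ref{lem:io-mod-1-structure} then applies. It gives $t=2m+2$, $|\mathcal N_i|=m+1$ for all $i$, and a partition $[n-2]=P_1\sqcup P_2\sqcup Q$ with $|P_1|=|P_2|=m+1$ and $|Q|=m$. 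In this partition, $\mathcal N_i=P_a$ for $i\in P_a$, and $\mathcal N_i=Q\cup\{n-1\}$ for $i\in Q\cup\{n-1\}$.

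This is exactly the configuration that Lemma~\ref{lem:io-mod-1-geometry} shows is impossible for $n=3m+4$ with $m\ge1$. That contradiction gives $\gamma(\mathcal C)\ne 4m+3$, hence $\gamma(\mathcal C)\ge 4m+4=\lceil(4n-6)/3\rceil$.

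The only point needing care is consistency. The normalizations used inside Lemma~\ref{lem:io-mod-1-geometry}, namely the node-wise column permutations, must preserve both $\gamma$ and the sets $\mathcal N_i$. Lemma~\ref{lem:node-wise-monomial-io} guarantees this, since $\nz(MH_j)$ is invariant under those permutations. I must also check that $m\ge1$ supplies the two distinct indices in $P_1$ and in $P_2$ that the geometric lemma needs; this holds because $|P_a|=m+1\ge 2$. Beyond this bookkeeping, no new estimate is required.
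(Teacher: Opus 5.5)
Your proposal is correct and matches the paper's proof: you use Theorem~\ref{thm:ZLH-lb} to reduce to excluding the single equality case $\gamma(\mathcal C)=4m+3$, then obtain the contradiction by combining Lemma~\ref{lem:io-mod-1-structure} with Lemma~\ref{lem:io-mod-1-geometry}. Your range $m\ge 1$ is the correct one; the paper's proof writes $m\ge 2$, which would leave out $n=7$, although both lemmas are stated and valid for $m\ge 1$.
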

      
      \begin{proof}
        By Theorem~\ref{thm:ZLH-lb}, it suffices to rule out the equality case
        \[
        \gamma(\mathcal{C})=\left\lceil \frac{4n-7}{3}\right\rceil.
        \]
        Write $n=3m+4$ with $m\ge 2$. Assume, for contradiction, that
        \[
        \gamma(\mathcal{C})=\left\lceil \frac{4n-7}{3}\right\rceil=4m+3.
        \]

      By Lemma~\ref{lem:io-mod-1-structure}, there exists a partition
      \[
      [n-2]=P_1\sqcup P_2\sqcup Q
      \]
      satisfying the conditions in Lemma~\ref{lem:io-mod-1-structure}. This contradicts Lemma~\ref{lem:io-mod-1-geometry}.
      Therefore
      \[
      \gamma(\mathcal{C})\ge 4m+4=\left\lceil \frac{4n-6}{3}\right\rceil.
      \]
      \end{proof}

      Theorem~\ref{thm:ZLH-lb} and Proposition~\ref{prop:io-nmod1} together give the following result:

      \begin{corollary}\label{cor:gamma-small-n}
        Let $\mathcal{C}$ be an $(n,n-2,2)$ MDS array code over $\F_q$ with $n\ne 4$. Then 
        \[
        \gamma(\mathcal{C})\ge\left\lceil \frac{4n-6}{3}\right\rceil.
        \]
      \end{corollary}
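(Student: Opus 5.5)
The plan is a case split on $n\bmod 3$, using Theorem~\ref{thm:ZLH-lb} where it is already sharp and Proposition~\ref{prop:io-nmod1} in the one residue class where it is not. Throughout, $\mathcal C$ is an $(n,n-2,2)$ MDS array code, so $n\ge 3$, and the statement assumes $n\neq 4$.

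First I record when the two ceilings differ. Writing $n=3m+s$ with $s\in\{0,1,2\}$, we have $4n-7=12m+4s-7$ and $4n-6=12m+4s-6$. The quantity $4n-7$ is divisible by $3$ exactly when $4s\equiv 1\pmod 3$, that is, when $s=1$. So if $n\not\equiv 1\pmod 3$, then $3\nmid 4n-7$, and incrementing by one cannot cross a multiple of $3$. Hence
\[
\left\lceil \frac{4n-6}{3}\right\rceil=\left\lceil \frac{4n-7}{3}\right\rceil ,
\]
which is just the case computation already displayed at the start of this section.

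In that case the desired bound is exactly the repair I/O bound of Theorem~\ref{thm:ZLH-lb}, and nothing further is needed.

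If $n\equiv 1\pmod 3$, the admissible lengths are $n\in\{4,7,10,\dots\}$. Since $n=4$ is excluded by hypothesis, we have $n\ge 7$. Proposition~\ref{prop:io-nmod1} then gives $\gamma(\mathcal C)\ge\lceil(4n-6)/3\rceil$ directly.

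The only point needing care is whether Proposition~\ref{prop:io-nmod1} really covers the smallest case $n=7$, which corresponds to $m=1$ in the parametrization $n=3m+4$. Its proof writes $m\ge 2$, but the two ingredients it uses, Lemmas~\ref{lem:io-mod-1-structure} and~\ref{lem:io-mod-1-geometry}, are both stated for $m\ge 1$. The geometric lemma only needs $|P_1|=|P_2|=m+1\ge 2$ to pick two distinct indices in each block, and this holds at $m=1$. So the argument applies verbatim to $n=7$, which completes the proof.
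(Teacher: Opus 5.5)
Your proposal is correct and matches the paper's argument exactly: the corollary follows from Theorem~\ref{thm:ZLH-lb} when $n\not\equiv 1\pmod 3$, where the two ceilings coincide, and from Proposition~\ref{prop:io-nmod1} when $n\equiv 1\pmod 3$ with $n\ge 7$. Your check of $n=7$ is also right: with $n=3m+4$, the hypothesis $n\ge 7$ means $m\ge 1$, so the ``$m\ge 2$'' in the proof of Proposition~\ref{prop:io-nmod1} is a slip, and Lemmas~\ref{lem:io-mod-1-structure} and~\ref{lem:io-mod-1-geometry} are stated for $m\ge 1$ and do cover that case.
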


      The exceptional case $n=4$ will be treated in Proposition~\ref{prop:gamma-opt-4-2-2}.

      \section{The Short-Length Regime}\label{sec:short}

      \subsection{Repair Bandwidth}\label{subsec:bw-short}
      
      For any prime power \(q\), put
      \[
      N_{\BW}(q)=
      \begin{cases}
      \dfrac{4q}{3},&q\equiv0\pmod3,\\[2mm]
      \dfrac{4q+2}{3},&q\equiv1\pmod3,\\[2mm]
      \dfrac{4(q+1)}{3},&q\equiv2\pmod3.
      \end{cases}
      \]
      In this subsection we study the repair bandwidth in the range
      \[
      3\le n\le N_{\BW}(q).
      \]

      We begin with a four-class orbit construction that will be used throughout this
      section. Let \(E:=\F_{q^2}\), viewed as a two-dimensional vector space over
      \(\F_q\). Choose a primitive element \(\xi\in E^\times\), and let
      \[
      A:E\to E,\qquad x\mapsto \xi x,
      \]
      be the \(\F_q\)-linear map given by multiplication by \(\xi\).
      
      The image of \(A\) in \(\PGL(E)\cong\PGL_2(\F_q)\) has order \(q+1\).
      Indeed, the image of \(A^m\) is trivial if and only if multiplication by
      \(\xi^m\) is an \(\F_q\)-scalar map, equivalently, $\xi^m\in\F_q^\times$. Since \(\xi\) generates \(E^\times=\F_{q^2}^{\times}\), the smallest positive
      such \(m\) is
      \[
      [E^\times:\F_q^\times]
      =
      \frac{q^2-1}{q-1}
      =
      q+1.
      \]
      
      Since \(A\) is an \(\F_q\)-linear automorphism of the two-dimensional
      \(\F_q\)-space \(E\), the Cayley--Hamilton theorem gives
      \[
      A^2=\sigma A+\tau\,\id_E
      \]
      for some \(\sigma\in\F_q\) and \(\tau\in\F_q^\times\). Moreover,
      \(\sigma\neq0\). Indeed, if \(\sigma=0\), then
      \[
      \xi^2=\tau\in\F_q^\times,
      \]
      so the order of \(\xi\) would divide \(2(q-1)\), contradicting that \(\xi\)
      has order \(q^2-1\).
      
      We claim that for any \(x\in E\setminus\{0\}\), the projective points
      \[
      \langle x\rangle,\,\langle A(x)\rangle,\,\langle A^2(x)\rangle,\,\dots,\,\langle A^q(x)\rangle\in\mathbb P(E)
      \]
      are pairwise distinct. Indeed, if $\langle A^i(x)\rangle=\langle A^j(x)\rangle$ for some \(0\le i<j\le q\), then $A^{j-i}(x)=\xi^{j-i}x=\lambda x$ for some \(\lambda\in\F_q^\times\). Since \(x\neq0\), this implies that $\xi^{j-i}=\lambda\in\F_q^\times$, contradicting the fact that \(q+1\) is the smallest positive integer \(m\)
      such that \(\xi^m\in\F_q^\times\). Hence the above \(q+1\) projective points
      are pairwise distinct. Since $|\mathbb P(E)|=q+1$, they are precisely all points of \(\mathbb P(E)\cong\mathbb{P}^1(\F_q)\). In particular, for
      every \(x\in E\setminus\{0\}\), the vectors \(x\) and \(A(x)\) form an
      \(\F_q\)-basis of \(E\).
      
      Identify $\mathbb V:=\F_q^4$ with \(E\oplus E\). For any \(x\in E\setminus\{0\}\), define four
      two-dimensional \(\F_q\)-subspaces of \(\mathbb V\) as follows:
      \begin{align*}
        &  \mathcal H_1(x):=\Span_{\F_q}\{(x,0),\,(-A(x),A(x))\}, & \mathcal H_2(x):=\Span_{\F_q}\{(x,-x),\,(0,A(x))\},\\
        &  \mathcal H_3(x):=\Span_{\F_q}\{(x,0),\,(0,A^2(x))\}, & \mathcal H_4(x):=\Span_{\F_q}\{(x,0),\,(0,x)\}.
      \end{align*}
      Fix \(x_0\in E\setminus\{0\}\), and define
      \[
      x_t:=A^t(x_0)=\xi^t x_0,
      \qquad t\in\mathbb Z/(q+1)\mathbb Z.
      \]
      For \(z\in[4]\) and \(t\in\mathbb Z/(q+1)\mathbb Z\), put
      \[
      \mathcal H_{z,t}:=\mathcal H_z(x_t).
      \]
      
      We also define four two-dimensional subspaces of \(\mathbb V\):
      \[
      W_1:=\{0\}\oplus E,
      \qquad
      W_2:=E\oplus\{0\},
      \qquad
      W_3:=\{(u,-u):u\in E\},
      \]
      and
      \[
      W_4:=\{(u,-B(u)):u\in E\},
      \qquad
      B:=\id_E+\frac{\sigma}{\tau}A.
      \]
      For a coordinate whose node subspace is of the form \(\mathcal H_{z,t}\), we
use \(W_z\) as its repair subspace. Note that, for every
      \(t\in\mathbb Z/(q+1)\mathbb Z\), one has
      \[
      \Span_{\F_q}\{x_t,B(x_t)\}
      =
      \Span_{\F_q}\{x_t,A(x_t)\}
      =
      E.
      \]
      In particular, \(B(x_t)\neq0\).

\begin{lemma}\label{lem:orbit-template-subspace}
  For every \(z,z'\in[4]\) and every
  \(t,t'\in\mathbb Z/(q+1)\mathbb Z\), the following properties hold.

\begin{enumerate}[label=(\alph*)]
\item One has
\[
\dim_{\F_q}\bigl(W_z\cap\mathcal H_{z',t'}\bigr)
=
\begin{cases}
0,&z'=z,\\
1,&z'\neq z.
\end{cases}
\]

\item Suppose \(z<z'\), and write
\[
\delta=t'-t\in\mathbb Z/(q+1)\mathbb Z.
\]
Then
\[
\mathcal H_{z,t}\cap\mathcal H_{z',t'}\neq\{0\}
\]
can occur only for the following values of \(\delta\):
\[
\begin{array}{c|c}
(z,z')&\text{exceptional values of }\delta\\ \hline
(1,2)&\{\pm1\}\\
(1,3)&\{0,-1\}\\
(1,4)&\{0,+1\}\\
(2,3)&\{0,-1\}\\
(2,4)&\{0,+1\}\\
(3,4)&\{0,+2\}
\end{array}
\]

\item For \(z=z'\), one has
\[
\mathcal H_{z,t}\cap\mathcal H_{z,t'}\neq\{0\}
\quad\Longleftrightarrow\quad
t=t'.
\]
\end{enumerate}
\end{lemma}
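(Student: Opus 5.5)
The whole lemma is a direct linear-algebra computation in $\mathbb V=E\oplus E$. Two facts about $A$ do the work. First, for every nonzero $x$ and every integer $k$ with $k\not\equiv 0\pmod{q+1}$, the vectors $x$ and $A^k(x)$ are $\F_q$-independent. Second, $A$ commutes with every polynomial in $A$, in particular with $B$.

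\textbf{Intersections with the repair subspaces, part (a).} I would treat each $z$ separately and reduce the question to a linear system in two unknown coefficients $a,b\in\F_q$.

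For $z=1$, take $W_1=\{0\}\oplus E$. A vector $a(x,0)+b(-A x,Ax)$ in $\mathcal H_1(x)$ lies in $W_1$ iff $ax-bAx=0$, i.e.\ $a=b=0$. So $W_1\cap\mathcal H_1(x)=\{0\}$. For the other three node types one checks directly that $W_1\cap\mathcal H_{z'}$ is one-dimensional:
\begin{itemize}
\item $(0,Ax)$ spans the intersection for $\mathcal H_2$;
\item $(0,A^2x)$ spans it for $\mathcal H_3$;
\item $(0,x)$ spans it for $\mathcal H_4$.
\end{itemize}
In each case the intersection is exactly this line, because $x$ and $Ax$ (resp.\ $A^2x$) are independent. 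The cases $W_2=E\oplus\{0\}$ and $W_3=\{(u,-u)\}$ are handled the same way. For instance, $\mathcal H_3\cap W_3$ requires $ax=-bA^2x$, hence $a=b=0$ using $\delta=2\not\equiv0$ (note $q+1\ge3$).

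The case $W_4$ uses the Cayley--Hamilton relation. One verifies that $\mathcal H_4\cap W_4=0$: the condition $ax=-B(bx)$ forces $a=b=0$, since $x$ and $Bx$ span $E$. For the other types, computing $B(Ax)=Ax+(\sigma/\tau)A^2x$ and reducing via $A^2=\sigma A+\tau$ expresses everything in the basis $x,Ax$. One then checks that the $2\times2$ determinant of the resulting system vanishes, so the intersection is exactly one-dimensional. For example, $-A x=B(x)\cdot c$ type identities come from $A^{-1}=(A-\sigma)/\tau$, i.e.\ $B=\id+\frac{\sigma}{\tau}A$ is built exactly so that $(Ax,-B(\cdot))$ coincidences occur.

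\textbf{Pairwise node intersections, parts (b) and (c).} Writing $x_{t'}=A^\delta x_t$, the condition $\mathcal H_{z,t}\cap\mathcal H_{z',t'}\ne0$ becomes: some nontrivial combination of the four spanning vectors vanishes. Projecting onto the two $E$-components gives two equations in $E$, each involving $x$, $A^{\delta}x$ and $A^{\delta\pm1}x$ and similar terms. Using the independence fact, each equation forces relations among the coefficients unless two of the exponents involved coincide mod $q+1$. The finitely many coincidences produce exactly the listed exceptional $\delta$.

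Example, $(3,4)$: a common vector $(ax, bA^2x)=(cA^\delta x, dA^\delta x)$ needs either $\delta=0$ (first coordinate) or $\delta=2$ (second coordinate) for nonzero solutions. Part (c) is the same computation with $z=z'$, where only $\delta=0$ survives. For example, for $\mathcal H_4$ a common vector $(ax,bx)=(cA^\delta x,dA^\delta x)$ forces $\delta=0$.

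\textbf{Expected obstacle.} The main obstacle is the bookkeeping in (b) for the pairs $(1,2)$, $(1,3)$ and $(2,3)$, where three or four distinct powers $A^{0},A^{1},A^{\delta},A^{\delta+1}$ appear. There one must argue that $x,A^ix,A^jx$ satisfy a nontrivial relation only in controlled ways. I would handle this by writing every power in the basis $\{x,Ax\}$ of $E$: a relation $\alpha A^i x+\beta A^jx=0$ with $i\ne j$ mod $q+1$ forces $\alpha=\beta=0$. For expressions with three terms, I would first eliminate one coefficient using the other coordinate equation, reducing to a two-term relation. Only necessity of the listed $\delta$ is claimed, so no exact case analysis of the exceptional values is needed.
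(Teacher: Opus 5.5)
Your proposal is correct. Parts (a) and (c) are the paper's computations in dual form: the paper computes $\dim\pi_z(\mathcal H_{z',t'})$ for quotient maps $\pi_z$ with $\ker\pi_z=W_z$, which is the same as your direct membership test.

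\textbf{Part (b): a different route.} The paper writes $x_{t'}=ax_t+bA(x_t)$ and evaluates each $4\times4$ determinant as a product of linear forms in $(a,b)$. It then has to translate $b=0$, $a=0$, $a+b\sigma=0$, $a\sigma-b\tau=0$ into $\delta=0,1,-1,2$, using $A(A-\sigma\,\id_E)=\tau\,\id_E$ and $A^2=\tau\,\id_E+\sigma A$. This yields an exact criterion. You instead write $x_{t'}=A^{\delta}x_t$ (up to a scalar) and let the exceptional shifts appear as coincidences of exponents. That gives only the necessity direction, which is all the lemma asserts, with no determinant expansion and no identification of linear factors with shifts.

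\textbf{Points to tighten.}

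First, the general sentence in your (b) paragraph is false as written. Any three vectors in the $2$-dimensional space $E$ are dependent, so a three-term relation among $x, A^ix, A^jx$ always has nontrivial solutions. The elimination you propose at the end does work in all six cases:
\begin{itemize}
\item For $(1,2)$, adding the two coordinate equations gives $ax=dA^{\delta+1}x$. So $a=d=0$ unless $\delta=-1$. Then $-bAx=cA^{\delta}x$ gives $b=c=0$ unless $\delta=1$.
\item In the other five pairs one coordinate equation is already two-term, e.g.\ $bAx=dA^{\delta+2}x$ for $(1,3)$. The same two-step argument returns exactly the tabulated sets.
\end{itemize}

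Second, for $W_4$ the identities that make the determinants vanish are
\[
A^2=\tau B,\qquad A-BA=-\sigma B,\qquad B-\id_E=\tfrac{\sigma}{\tau}A.
\]
Drop your remark about ``$-Ax=B(x)\cdot c$'': $Ax$ and $Bx$ are in fact independent. A vanishing $2\times2$ determinant only gives dimension $\ge 1$. You must also note the system is non-zero, i.e.\ some generator lies outside $W_4$; e.g.\ $(x,0)\notin W_4$ because $B$ is invertible.

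Third, two small slips. The condition for $\mathcal H_4\cap W_4$ is $bx=-aB(x)$, not $ax=-B(bx)$; this is harmless, since both force $a=b=0$. In (c), state the trivial converse: $t=t'$ gives equal subspaces, hence a nonzero intersection.
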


\begin{proof}
We prove the three assertions separately.

\begin{enumerate}[label=(\alph*)]
\item It is convenient to use the quotient maps whose kernels are the subspaces
\(W_z\). Define
\[
\pi_1(u,v)=u,\qquad
\pi_2(u,v)=v,\qquad
\pi_3(u,v)=u+v,\qquad
\pi_4(u,v)=B(u)+v.
\]
Then $W_z=\ker(\pi_z)$ for every $z\in[4]$. Since each \(\mathcal H_{z',t'}\) has dimension \(2\), the rank-nullity theorem
applied to \(\pi_z|_{\mathcal H_{z',t'}}\) gives
\[
\dim_{\F_q}(W_z\cap\mathcal H_{z',t'})
=
2-\dim_{\F_q}\pi_z(\mathcal H_{z',t'}).
\]
Thus it suffices to compute the dimensions of the images
\(\pi_z(\mathcal H_{z',t'})\).

\indent\quad We first consider \(\pi_1\). By definition,
\[
\pi_1(\mathcal H_{1,t'})
=
\Span_{\F_q}\{x_{t'},-A(x_{t'})\}
=
E,
\]
because \(x_{t'}\) and \(A(x_{t'})\) are \(\F_q\)-linearly independent. On the other
hand,
\[
\pi_1(\mathcal H_{2,t'})
=
\pi_1(\mathcal H_{3,t'})
=
\pi_1(\mathcal H_{4,t'})
=
\Span_{\F_q}\{x_{t'}\}.
\]
Hence \(\pi_1\) has two-dimensional image on \(\mathcal H_{1,t'}\) and
one-dimensional image on the other three subspaces.

\indent\quad Similarly,
\[
\pi_2(\mathcal H_{2,t'})
=
\Span_{\F_q}\{-x_{t'},A(x_{t'})\}
=
E,
\]
while
\[
\pi_2(\mathcal H_{1,t'})=\Span_{\F_q}\{A(x_{t'})\},
\qquad
\pi_2(\mathcal H_{3,t'})=\Span_{\F_q}\{A^2(x_{t'})\},
\qquad
\pi_2(\mathcal H_{4,t'})=\Span_{\F_q}\{x_{t'}\}.
\]
Thus \(\pi_2\) has two-dimensional image exactly on \(\mathcal H_{2,t'}\).

\indent\quad For \(\pi_3\), we have
\[
\pi_3(\mathcal H_{3,t'})
=
\Span_{\F_q}\{x_{t'},A^2(x_{t'})\}
=
E.
\]
Indeed, \(x_{t'}\) and \(A^2(x_{t'})\) are linearly independent, since otherwise
$\langle A^2(x_{t'})\rangle=\langle x_{t'}\rangle$, contradicting the fact that the projective orbit has
length \(q+1\). For the remaining three subspaces, one has
\[
\pi_3(\mathcal H_{1,t'})
=
\pi_3(\mathcal H_{4,t'})
=
\Span_{\F_q}\{x_{t'}\},
\qquad
\pi_3(\mathcal H_{2,t'})
=
\Span_{\F_q}\{A(x_{t'})\}.
\]
Thus \(\pi_3\) has two-dimensional image exactly on \(\mathcal H_{3,t'}\).

\indent\quad It remains to consider \(\pi_4\). Recall that
\[
B=\id_E+\frac{\sigma}{\tau}A.
\]
Since $A^2=\sigma A+\tau\,\id_E$, we have $A^2=\tau B$. Moreover,
\[
-BA+A
=
-\left(\id_E+\frac{\sigma}{\tau}A\right)A+A
=
-\frac{\sigma}{\tau}A^2
=
-\sigma B.
\]
Therefore
\[
\pi_4(\mathcal H_{1,t'})
=
\Span_{\F_q}\{B(x_{t'}),\,(-BA+A)(x_{t'})\}
=
\Span_{\F_q}\{B(x_{t'})\},
\]
and
\[
\pi_4(\mathcal H_{3,t'})
=
\Span_{\F_q}\{B(x_{t'}),\,A^2(x_{t'})\}
=
\Span_{\F_q}\{B(x_{t'})\}.
\]
For \(\mathcal H_{2,t'}\), we get
\[
\pi_4(\mathcal H_{2,t'})
=
\Span_{\F_q}\{(B-\id_E)(x_{t'}),\,A(x_{t'})\}
=
\Span_{\F_q}\{A(x_{t'})\},
\]
because
\[
(B-\id_E)(x_{t'})=\frac{\sigma}{\tau}A(x_{t'}).
\]
Finally, one has
\[
\pi_4(\mathcal H_{4,t'})
=
\Span_{\F_q}\{B(x_{t'}),x_{t'}\}
=
E.
\]

\indent\quad Combining the four computations, we have shown that
\[
\dim_{\F_q}\pi_z(\mathcal H_{z',t'})
=
\begin{cases}
2,&z'=z,\\
1,&z'\neq z,
\end{cases}
\]
which implies that
\[
\dim_{\F_q}(W_z\cap\mathcal H_{z',t'})=
2-\dim_{\F_q}\pi_z(\mathcal H_{z',t'})=
\begin{cases}
0,&z'=z,\\
1,&z'\neq z.
\end{cases}
\]
\item Write
\[
x_{t'}=ax_t+bA(x_t),\qquad a,b\in\F_q.
\]
This expression is unique because \(x_t\) and \(A(x_t)\) form an \(\F_q\)-basis of \(E\). From
\[
A^2=\sigma A+\tau\,\id_E
\]
we obtain that
\[
A^2(x_t)=\tau x_t+\sigma A(x_t),\qquad A(x_{t'})=b\tau x_t+(a+b\sigma)A(x_t),
\]
and
\begin{align*}
  A^2(x_{t'})&=b\tau A(x_t)+(a+b\sigma)A^2(x_t)\\
  &=b\tau A(x_t)+(a+b\sigma)(\tau x_t+\sigma A(x_t))\\
  &=(a+b\sigma)\tau x_t+(a\sigma+b\sigma^2+b\tau) A(x_t).
\end{align*}

\indent\quad The condition
\[
\mathcal H_z(x_t)\cap\mathcal H_{z'}(x_{t'})\ne\{0\}
\]
is equivalent to the four displayed generators of these two subspaces being
linearly dependent in \(E\oplus E\). Using the basis
\(\{x_t,A(x_t)\}\) of \(E\), this gives the following equivalent conditions:
\[
\begin{array}{c|c}
(z,z')&\text{condition for a non-zero intersection}\\ \hline
(1,2)&a(a+b\sigma)=0\\
(1,3)&b(a+b\sigma)=0\\
(1,4)&ab=0\\
(2,3)&b(a+b\sigma)=0\\
(2,4)&ab=0\\
(3,4)&b(a\sigma-b\tau)=0
\end{array}
\]
\indent\quad For illustration, we spell out the computation for \((z,z')=(1,2)\). Use $\{x_t,A(x_t)\}$ as an \(\F_q\)-basis of \(E\). Then with respect to the induced basis of
\(E\oplus E\), we have
\[
\mathcal H_{1,t}
=
\Span_{\F_q}\{(1,0,0,0),(0,-1,0,1)\},
\]
while
\[
\mathcal H_{2,t'}
=
\Span_{\F_q}
\{(a,b,-a,-b),(0,0,b\tau,a+b\sigma)\}.
\]
Hence
\[
\mathcal H_{1,t}\cap \mathcal H_{2,t'}\neq\{0\}
\]
if and only if the matrix
\[
\begin{bmatrix}
1&0&0&0\\
0&-1&0&1\\
a&b&-a&-b\\
0&0&b\tau&a+b\sigma
\end{bmatrix}
\]
is singular. Adding \(-a\) times the first row and \(b\) times the second row
to the third row, we obtain
\[
\begin{bmatrix}
1&0&0&0\\
0&-1&0&1\\
0&0&-a&0\\
0&0&b\tau&a+b\sigma
\end{bmatrix}.
\]
Thus the matrix is singular if and only if
\[
\det
\begin{bmatrix}
-a&0\\
b\tau&a+b\sigma
\end{bmatrix}
=0\quad\Longleftrightarrow\quad a(a+b\sigma)=0.
\]
The remaining entries in the table are obtained in exactly the same way.

\indent\quad The four possible linear factors, $a,b,a+b\sigma,a\sigma-b\tau$, correspond to the following projective
relations. First,
\[
b=0
\quad\Longleftrightarrow\quad
\langle x_{t'}\rangle=\langle x_t\rangle
\quad\Longleftrightarrow\quad
\delta=0.
\]
Second,
\[
a=0
\quad\Longleftrightarrow\quad
\langle x_{t'}\rangle=\langle A(x_t)\rangle
\quad\Longleftrightarrow\quad
\delta=1.
\]
Third,
\[
a+b\sigma=0\quad\Longleftrightarrow\quad\det
\begin{bmatrix}
a&b\\
-\sigma&1
\end{bmatrix}=0,
\]
which means that \(x_{t'}\) is proportional to \(Ax_t-\sigma x_t\). Since
\[
A(A(x_t)-\sigma x_t)=A^2(x_t)-\sigma A(x_t)=\tau x_t,
\]
we have
\[
\langle A(x_t)-\sigma x_t\rangle=\langle A^{-1}x_t\rangle,
\]
and hence
\[
a+b\sigma=0
\quad\Longleftrightarrow\quad
\delta=-1.
\]
Finally,
\[
a\sigma-b\tau=0\quad\Longleftrightarrow\quad\det
\begin{bmatrix}
a&b\\
\tau&\sigma
\end{bmatrix}=0,
\]
which means that \(x_{t'}\) is proportional to
\[
\tau x_t+\sigma A(x_t)=A^2(x_t),
\]
and hence
\[
a\sigma-b\tau=0
\quad\Longleftrightarrow\quad
\delta=2.
\]

Substituting these four possibilities into the table above gives exactly the
exceptional values listed in the statement.

\item We prove that
\[
\mathcal H_{z,t}\cap\mathcal H_{z,t'}\neq\{0\}
\]
only when \(\langle x_t\rangle=\langle x_{t'}\rangle\), which is equivalent to \(t=t'\).

\indent\quad For \(z=4\), this is immediate from
\[
\mathcal H_{4,t}=\Span_{\F_q}\{(x_t,0),(0,x_t)\}.
\]
Indeed, if \(x_t\) and \(x_{t'}\) are linearly independent in \(E\), then
\[
\Span_{\F_q}\left\{(x_t,0),(0,x_t)\right\}\cap\Span_{\F_q}\left\{(x_{t'},0),(0,x_{t'})\right\}=\{0\}.
\]
The same argument applies to \(z=3\), since \(A^2\) is invertible.

\indent\quad For \(z=1\), an element of \(\mathcal H_{1,t}\) has the form
\[
(a x_t-b A(x_t),\ b A(x_t)),
\qquad a,b\in\F_q.
\]
If this also lies in \(\mathcal H_{1,t'}\), then its second component lies in
both \(\Span_{\F_q}\left\{A(x_{t})\right\}\) and
\(\Span_{\F_q}\left\{A(x_{t'})\right\}\). If $t\neq t'$, then $\langle A(x_t)\rangle\neq \langle A(x_{t'})\rangle$, so the second component is zero and thus \(b=0\). The first component then lies in both
\(\Span_{\F_q}\left\{x_t\right\}\) and \(\Span_{\F_q}\left\{x_{t'}\right\}\), and is therefore
zero. Thus the intersection is trivial if $t\neq t'$. The proof for \(z=2\) is identical. This completes the proof.
\end{enumerate}
\end{proof}

Having prepared the required subspace construction, we now treat the endpoint case $n=N_{\BW}(q)$.

\begin{lemma}\label{lem:endpoint-short}
  Let \(q\) be a prime power. Write $q=3r+\varepsilon$ with $\varepsilon\in\{0,1,2\}$. For \(i\in[4]\), set
  \[
  c_i=
  \begin{cases}
  r+1,&1\le i\le 2\varepsilon,\\
  r,&2\varepsilon<i\le 4.
  \end{cases}
  \]
  Then there exists an \((N_{\BW}(q),N_{\BW}(q)-2,2)\) MDS array code
  \(\mathcal C\) over \(\F_q\) whose node subspaces form a disjoint union
  \[
  G_1\sqcup G_2\sqcup G_3\sqcup G_4,
  \qquad |G_i|=c_i,
  \]
  such that, for every \(z,z'\in[4]\) and every \(\mathcal H\in G_{z'}\), one has
\[
\dim_{\F_q}(W_z\cap \mathcal H)
=
\begin{cases}
0,&z'=z,\\
1,&z'\neq z.
\end{cases}
\]
  Consequently, this construction gives
  \[
  \beta(\mathcal C)
  \le
  N_{\BW}(q)-2+\max_{1\le z\le4}c_z
  =
  \left\lceil \frac{5N_{\BW}(q)-8}{4}\right\rceil.
  \]
  \end{lemma}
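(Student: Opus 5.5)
The plan is to realize the code directly inside the four-class orbit construction: choose index sets $T_1,T_2,T_3,T_4\subseteq\mathbb Z/(q+1)\mathbb Z$ with $|T_z|=c_z$, and take as node subspaces
\[
G_z:=\{\mathcal H_{z,t}: t\in T_z\},\qquad z\in[4].
\]
Since $\dim\mathbb V=4$ and every node subspace has dimension $2$, the MDS condition for $r=2$ says exactly that any two distinct node subspaces intersect trivially. By Lemma~\ref{lem:orbit-template-subspace}(c), inside one class this only needs the $t$'s to be distinct. Across classes we need $t'-t$, for $t\in T_z$ and $t'\in T_{z'}$ with $z<z'$, to avoid the exceptional $\delta$'s of Lemma~\ref{lem:orbit-template-subspace}(b). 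The intersection property with the $W_z$ is then Lemma~\ref{lem:orbit-template-subspace}(a) verbatim.

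For the choice of $T_z$, the key observation is that the pair $(1,2)$ forbids only $\delta=\pm1$, so classes $1$ and $2$ may share indices. I would take $T_1=T_2=S$, $T_3\subseteq S+1$ and $T_4\subseteq S+2$, where $S$ is an arithmetic progression of step $3$ starting at $0$. The relevant differences are:
\begin{itemize}
\item $T_2-T_1\subseteq 3\mathbb Z$, which avoids $\pm1$;
\item $T_3-T_1$ and $T_3-T_2$ are $\equiv1\pmod 3$ as integers, which avoids $\{0,-1\}$;
\item $T_4-T_1$ and $T_4-T_2$ are $\equiv2$, which avoids $\{0,1\}$;
\item $T_4-T_3$ is $\equiv1$, which avoids $\{0,2\}$.
\end{itemize}
These integer congruences must then be checked modulo $q+1$, case by case in $\varepsilon$.

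For $\varepsilon=2$ we have $q+1=3(r+1)$, so the residues mod $3$ are well defined modulo $q+1$. Taking $S=3\mathbb Z/(q+1)$, which has $r+1$ elements, with $T_3=S+1$ and $T_4=S+2$ (all of size $r+1=c_z$), the argument above applies directly.

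For $\varepsilon\in\{0,1\}$ we have $q+1=3r+1+\varepsilon$, and I would work with integer representatives instead:
\begin{itemize}
\item $S=\{0,3,\dots,3(c_1-1)\}$, and $T_3,T_4$ the first $r$ elements of $S+1$ and $S+2$;
\item all representatives then lie in $[0,3r+2\varepsilon-1]\subseteq[0,q]$.
\end{itemize}
An integer difference $d$ with $|d|\le q$ is congruent modulo $q+1$ to a forbidden value $\delta\in\{0,\pm1,2\}$ only if $d=\delta$ or $d=\delta\pm(q+1)$. The first option is excluded by the mod-$3$ argument. For the second, $q+1\equiv 1+\varepsilon\pmod 3$, so one checks the few boundary differences explicitly. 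For example, when $\varepsilon=0$ a difference $3k+1$ would have to equal $-1+(3r+1)=3r$ or $-1-(3r+1)$, and neither is $\equiv1\pmod 3$. I expect this wraparound bookkeeping for $\varepsilon=0,1$ to be the main (though routine) obstacle. If some boundary difference does hit a forbidden value, I would shift one progression (for instance start $T_4$ at $S+2$ shifted by $3$) to break it.

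Finally, for the bandwidth bound, repair node $\mathcal H\in G_z$ using $W_z$. By (a), $W_z\cap\mathcal H=\{0\}$, so $W_z$ is a feasible repair subspace, and $W_z$ meets each of the $n-c_z$ nodes in other classes in dimension $1$. Hence
\[
\beta_i\le 2(n-1)-(n-c_z)=n-2+c_z .
\]
It remains to verify $N_{\BW}(q)-2+\max_z c_z=\lceil(5N_{\BW}(q)-8)/4\rceil$ in the three cases:
\begin{itemize}
\item $\varepsilon=0$: $N=4r$, and both sides equal $5r-2$;
\item $\varepsilon=1$: $N=4r+2$, and both sides equal $5r+1$;
\item $\varepsilon=2$: $N=4r+4$, and both sides equal $5r+3$.
\end{itemize}
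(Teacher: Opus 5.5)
Your construction coincides exactly with the paper's: it takes $S_1=S_2=\{3a:0\le a<a_0\}$, $S_3=\{3a+1:0\le a<a_1\}$, $S_4=\{3a+2:0\le a<a_1\}$ with $(a_0,a_0,a_1,a_1)=(c_1,\dots,c_4)$, and your MDS check via Lemma~\ref{lem:orbit-template-subspace}(b),(c) and bandwidth count via part~(a) are also the paper's, so this is the paper's proof. The wraparound bookkeeping you left open goes through with no shift needed, but it is the range of the differences, not their residue mod~$3$, that excludes the dangerous candidates (e.g.\ $d=\pm3r$ in row $(1,2)$ and $d=1-3r$ in row $(3,4)$ for $\varepsilon=0$, and $d=3r+1$, $d=-3r-1$ in rows $(1,3)$, $(1,4)$ for $\varepsilon=1$); likewise, in your own example $-1-(3r+1)=-3r-2$ is actually $\equiv1\pmod 3$ and is excluded only because $|d|\le q$.
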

  
  \begin{proof}
  Define
  \[
  a_0:=r+\min\{\varepsilon,1\},
  \qquad
  a_1:=r+\max\{\varepsilon-1,0\}.
  \]
  Then
  \[
  (a_0,a_0,a_1,a_1)
  =
  \begin{cases}
  (r,r,r,r),&\varepsilon=0,\\
  (r+1,r+1,r,r),&\varepsilon=1,\\
  (r+1,r+1,r+1,r+1),&\varepsilon=2.
  \end{cases}
  \]
  Thus
  \[
  (a_0,a_0,a_1,a_1)=(c_1,c_2,c_3,c_4).
  \]
  
  Choose subsets of \(\mathbb Z/(q+1)\mathbb Z\) by
  \[
  S_1=S_2=\{3a:0\le a<a_0\},
  \]
  and
  \[
  S_3=\{3a+1:0\le a<a_1\},
  \qquad
  S_4=\{3a+2:0\le a<a_1\}.
  \]
  For \(z\in[4]\), set
  \[
  G_z:=\{\mathcal H_{z,t}:t\in S_z\}.
  \]
  Then
  \[
  |G_1|=|G_2|=a_0,\qquad |G_3|=|G_4|=a_1,
  \]
  and hence
  \[
  (|G_1|,|G_2|,|G_3|,|G_4|)
  =
  (c_1,c_2,c_3,c_4).
  \]
  Moreover,
  \[
  |G_1|+|G_2|+|G_3|+|G_4|
  =
  2a_0+2a_1
  =
  4r+2\varepsilon
  =
  N_{\BW}(q).
  \]
  
  We first verify the MDS property. Since each \(\mathcal H_{z,t}\) has
  dimension \(2\) in the four-dimensional space \(\mathbb V\), it is enough to
  show that any two distinct selected subspaces intersect trivially.
  
  For two distinct subspaces from the same class \(G_z\), say
\(\mathcal H_{z,t}\) and \(\mathcal H_{z,t'}\), one has \(t\neq t'\).
Hence Lemma~\ref{lem:orbit-template-subspace}(c) gives
\[
\mathcal H_{z,t}\cap\mathcal H_{z,t'}=\{0\}.
\]
Now consider two different classes \(G_z\) and \(G_{z'}\), with \(z<z'\).
It remains to check that the corresponding differences of indices avoid the
exceptional values in Lemma~\ref{lem:orbit-template-subspace}(b). The relevant
differences have the following forms:
\[
\begin{array}{c|c|c}
\text{pair of sets} & \text{form of }t'-t & \text{exceptional values}\\ \hline
(S_1,S_2) & 3u,\quad -(a_0-1)\le u\le a_0-1 & \{\pm1\}\\
(S_1,S_3),(S_2,S_3) & 3u+1,\quad -(a_0-1)\le u\le a_1-1 & \{0,-1\}\\
(S_1,S_4),(S_2,S_4) & 3u+2,\quad -(a_0-1)\le u\le a_1-1 & \{0,+1\}\\
(S_3,S_4) & 3u+1,\quad -(a_1-1)\le u\le a_1-1 & \{0,+2\}
\end{array}
\]
We verify the first row; the other rows are checked in exactly the same way.

For a pair from \(S_1\) and \(S_2\), the difference has the form
  \[
  3u,\qquad -(a_0-1)\le u\le a_0-1.
  \]
  We claim that
  \[
  3u\not\equiv \pm1\pmod {q+1}.
  \]
  Indeed, if \(\varepsilon=2\), then \(q+1=3r+3\) is divisible by \(3\), so the
  congruence is impossible. If \(\varepsilon=0\), then \(q+1=3r+1\) and
  \[
  -(r-1)\le u\le r-1.
  \]
  The congruence \(3u\equiv1\pmod {q+1}\) would force \(u\ge r+\frac{2}{3}\) or $u\le -r$, while
  \(3u\equiv-1\pmod {q+1}\) would force \(u\ge r\) or $u\le -r-\frac{2}{3}$, both outside the range. If
  \(\varepsilon=1\), then \(q+1=3r+2\) and
  \[
  -r\le u\le r.
  \]
  The congruence \(3u\equiv1\pmod {q+1}\) would force \(u\ge r+1\) or $u\le -r-\frac{1}{3}$, while
  \(3u\equiv-1\pmod {q+1}\) would force \(u\ge r+\frac{1}{3}\) or $u\le -r-1$, again outside the range.
  Thus the exceptional values \(\{\pm1\}\) are avoided.

  Thus every selected pair from distinct classes avoids the exceptional values
  in Lemma~\ref{lem:orbit-template-subspace}(b), and hence has trivial
  intersection. Therefore any two selected node subspaces have direct sum
  \(\mathbb V\). By the subspace reformulation, the selected family determines
  an \((N_{\BW}(q),N_{\BW}(q)-2,2)\) MDS array code over \(\F_q\).
  
  It remains to verify the repair property and compute the bandwidth. Let
  \(\mathcal H\in G_z\). By Lemma~\ref{lem:orbit-template-subspace}(a),
  \[
  W_z\cap\mathcal H=\{0\},
  \]
  so \(W_z\) is feasible for repairing \(\mathcal H\). Moreover, for every
  selected subspace \(\mathcal H'\neq \mathcal H\),
  \[
  \dim_{\F_q}(W_z\cap \mathcal H')
  =
  \begin{cases}
  0,&\mathcal H'\in G_z,\\
  1,&\mathcal H'\notin G_z.
  \end{cases}
  \]
  Since here \(\ell=2\), the repair bandwidth corresponding to \(W_z\) is
  \[
  2\bigl(N_{\BW}(q)-1\bigr)
  -
  \sum_{\mathcal H'\neq\mathcal H}
  \dim_{\F_q}(W_z\cap\mathcal H').
  \]
  For \(\mathcal H\in G_z\), the sum of intersection dimensions equals
  \[
  N_{\BW}(q)-|G_z|,
  \]
  because the subspaces outside \(G_z\) contribute \(1\), while the other
  subspaces in \(G_z\) contribute \(0\). Hence the displayed repair has bandwidth
  \[
  2\bigl(N_{\BW}(q)-1\bigr)-\bigl(N_{\BW}(q)-|G_z|\bigr)
  =
  N_{\BW}(q)+|G_z|-2.
  \]
  Therefore this construction gives
  \[
  \beta(\mathcal C)
  \le
  N_{\BW}(q)-2+\max_{1\le z\le4}|G_z|
  =
  N_{\BW}(q)-2+\max_{1\le z\le4}c_z.
  \]
  
  Finally, since
  \[
  N_{\BW}(q)=4r+2\varepsilon
  \]
  and
  \[
  \max_{1\le z\le4}c_z=
  \begin{cases}
  r,&\varepsilon=0,\\
  r+1,&\varepsilon\in\{1,2\},
  \end{cases}
  \]
  we obtain
  \[
  N_{\BW}(q)-2+\max_z c_z
  =
  \begin{cases}
  5r-2,&\varepsilon=0,\\
  5r+1,&\varepsilon=1,\\
  5r+3,&\varepsilon=2.
  \end{cases}
  \]
  On the other hand,
  \[
  \left\lceil\frac{5N_{\BW}(q)-8}{4}\right\rceil
  =
  \left\lceil
  \frac{5(4r+2\varepsilon)-8}{4}
  \right\rceil
  =
  \begin{cases}
  5r-2,&\varepsilon=0,\\
  5r+1,&\varepsilon=1,\\
  5r+3,&\varepsilon=2.
  \end{cases}
  \]
  Thus
  \[
  N_{\BW}(q)-2+\max_z c_z
  =
  \left\lceil\frac{5N_{\BW}(q)-8}{4}\right\rceil.
  \]
  This proves the lemma.
  \end{proof}

  \begin{proposition}\label{prop:short-bw}
    Let \(q\) be a prime power. For every integer \(n\) with
    \[
    3\le n\le N_{\BW}(q),
    \]
    there exists an \((n,n-2,2)\) MDS array code \(\mathcal C\) over \(\F_q\) such that
    \[
    \beta(\mathcal C)
    \le
    \left\lceil\frac{5n-8}{4}\right\rceil.
    \]
    \end{proposition}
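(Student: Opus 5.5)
The plan is to take a sub-family of the endpoint code of Lemma~\ref{lem:endpoint-short}, choosing the class sizes carefully. The key observation is that the MDS property and the intersection pattern in Lemma~\ref{lem:endpoint-short} are inherited by any subset of the node subspaces. The MDS condition only asks that pairwise intersections be trivial, and this is preserved under deletion. The dimensions $\dim(W_z\cap\mathcal H)$ are intrinsic to each pair $(W_z,\mathcal H)$, so they are also unchanged. Hence, if we keep sub-classes $G_z'\subseteq G_z$ of sizes $c_z'$ with $\sum_z c_z'=n$, then exactly the same computation as in the proof of Lemma~\ref{lem:endpoint-short} gives, for a node in $G_z'$, the repair bandwidth
\[
n+|G_z'|-2 .
\]
Consequently
\[
\beta(\mathcal C)\le n-2+\max_z c_z'.
\]
(If some $G_z'$ is empty, that class simply contributes no nodes, and nothing breaks.)

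It therefore suffices to choose $0\le c_z'\le c_z$ with $\sum_z c_z'=n$ and $\max_z c_z'=\lceil n/4\rceil$. Since
\[
n-2+\left\lceil \frac{n}{4}\right\rceil=\left\lceil\frac{5n-8}{4}\right\rceil,
\]
this gives the claimed bound. The natural choice is the balanced one: write $n=4s+e$ with $0\le e\le 3$, and let the $c_z'$ take the values $s+1$ ($e$ times) and $s$ otherwise. We place the larger values on the classes with the largest $c_z$, that is, on $z=1,2,\dots$ in the ordering of Lemma~\ref{lem:endpoint-short}, where $c_1\ge c_2\ge c_3\ge c_4$.

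The only real check is feasibility, namely $c_z'\le c_z$. Write $q=3r+\varepsilon$, so that $N_{\BW}(q)=4r+2\varepsilon$ and $n\le 4r+2\varepsilon$. If $\varepsilon=0$, then $n\le 4r$ forces $\lceil n/4\rceil\le r=c_z$ for every $z$. If $\varepsilon=1$, we have $c=(r+1,r+1,r,r)$ and $n\le 4r+2$. Either $n\le 4r$, in which case the balanced sizes are at most $r$, or $n\in\{4r+1,4r+2\}$, in which case the balanced sizes are $(r+1,r,r,r)$ or $(r+1,r+1,r,r)$; both fit. If $\varepsilon=2$, we have $c_z=r+1$ for all $z$ and $n\le 4r+4$, so $\lceil n/4\rceil\le r+1$.

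The main potential obstacle is small $n$, where $s=0$ or some classes are empty. I would verify that the construction still makes sense there, so that nodes exist and $W_z$ is feasible for each retained node, and that $n\ge 3$ gives valid array codes. Apart from this, the argument is a direct specialization of Lemma~\ref{lem:endpoint-short}.
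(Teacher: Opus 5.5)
Your proposal is correct and follows the paper's proof essentially verbatim. It punctures the endpoint code of Lemma~\ref{lem:endpoint-short} to a balanced subfamily with class sizes $0\le g_z\le c_z$, $\sum_z g_z=n$ and $\max_z g_z=\lceil n/4\rceil$, then reads off the bandwidth $n+g_z-2$ from the inherited intersection pattern; your feasibility check by $\varepsilon$ is just a reorganization of the paper's split according to whether $\lfloor n/4\rfloor<r$, $=r$, or $=r+1$.
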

    
    \begin{proof}
    Write $q=3r+\varepsilon$ with $\varepsilon\in\{0,1,2\}$ and define \(c_1,c_2,c_3,c_4\) as in Lemma~\ref{lem:endpoint-short}. By
    Lemma~\ref{lem:endpoint-short}, there exists an endpoint code of length
    \(N_{\BW}(q)\) whose node subspaces form a disjoint union
    \[
    G_1\sqcup G_2\sqcup G_3\sqcup G_4,
    \qquad |G_i|=c_i,
    \]
    and such that, for every \(z,z'\in[4]\) and every \(\mathcal H\in G_{z'}\),
    \[
    \dim_{\F_q}(W_z\cap\mathcal H)
    =
    \begin{cases}
    0,&z'=z,\\
    1,&z'\neq z.
    \end{cases}
    \]
    
    We choose integers
    \[
    0\le g_i\le c_i\qquad(i\in[4])
    \]
    such that
    \[
    g_1+g_2+g_3+g_4=n
    \]
    and
    \[
    \max_{1\le i\le4}g_i=\left\lceil\frac n4\right\rceil.
    \]
    Such a choice is always possible. Indeed, write
    \[
    n=4a+b,\qquad a\in\mathbb{N},\,b\in\{0,1,2,3\}.
    \]
    \begin{itemize}
      \item If \(a<r\), take \(b\) of the \(g_i\)'s equal to \(a+1\), and the remaining
      \(4-b\) of them equal to \(a\). Then all \(g_i\le r\le c_i\).
      \item If \(a=r\), then
      \[
      n=4r+b\le N_{\BW}(q)=4r+2\varepsilon,
      \]
      so $b\le 2\varepsilon$. In this case, take \(b\) of the enlarged classes, namely among those indices
      \(i\) with \(c_i=r+1\), to have size \(r+1\), and take all remaining $4-b$ classes
      to have size \(r\).
      \item Finally, if \(a=r+1\), then we must have \(n=4r+4\) and $\epsilon=2$. Indeed, in this case, we have
      \[
      4r+4+b=n\le N_{\BW}(q)=4r+2\varepsilon\le 4r+4,
      \]
      which can happen only when $n=4r+4$ and \(\varepsilon=2\). It follows that
      \[
      c_1=c_2=c_3=c_4=r+1,
      \]
      and we take
      \[
      g_1=g_2=g_3=g_4=r+1.
      \]
    \end{itemize}
    Thus the desired balanced choice of \(g_i\)'s exists in all cases.

    Now choose subsets
    \[
    G_i'\subseteq G_i,
    \qquad |G_i'|=g_i,
    \]
    and keep exactly the node subspaces in
    \[
    G_1'\sqcup G_2'\sqcup G_3'\sqcup G_4'.
    \]
    Since the endpoint code is MDS, any two node subspaces in the endpoint family
    have direct sum \(\mathbb V\). Therefore any two retained node subspaces still
    have direct sum \(\mathbb V\). By the subspace reformulation, the retained
    family determines an \((n,n-2,2)\) MDS array code over \(\F_q\), which we denote
    by \(\mathcal C\).
    
    It remains to estimate the repair bandwidth. Let $\mathcal H\in G_z'$. By the intersection pattern above, \(W_z\cap\mathcal H=\{0\}\), so \(W_z\) is
    feasible for repairing \(\mathcal H\). Moreover, among the retained node
    subspaces other than \(\mathcal H\), exactly \(n-g_z\) lie outside \(G_z'\), and
    each of them contributes intersection dimension \(1\) with \(W_z\); the
    remaining \(g_z-1\) lie in \(G_z'\), and each contributes intersection dimension
    \(0\). Hence
    \[
    \sum_{\substack{\mathcal H'\neq\mathcal H\\ \mathcal H'\ \text{retained}}}
    \dim_{\F_q}(W_z\cap\mathcal H')
    =
    n-g_z.
    \]
    Since \(\ell=2\), the bandwidth of this repair is
    \[
    2(n-1)-(n-g_z)
    =
    n+g_z-2.
    \]
    Therefore
    \[
    \beta(\mathcal C)
    \le
    n-2+\max_{1\le z\le4}g_z
    =
    n-2+\left\lceil\frac n4\right\rceil.
    \]
    Finally,
    \[
    n-2+\left\lceil\frac n4\right\rceil
    =
    \left\lceil\frac{5n-8}{4}\right\rceil.
    \]
    Thus
    \[
    \beta(\mathcal C)
    \le
    \left\lceil\frac{5n-8}{4}\right\rceil,
    \]
    as claimed.
    \end{proof}

    Corollary~\ref{cor:beta-small-n} and Proposition~\ref{prop:short-bw} together give the following result:

  \begin{corollary}\label{cor:short-optimal-bw}
    Let $q$ be a prime power. Then for any integer $n$ with $3\le n\le N_{\BW}(q)$ and $n\not\in\{5,6,9,10\}$, one has 
    $$\beta_{\opt}^{(q,n,2,2)}=\left\lceil \frac{5n-8}{4}\right\rceil.$$
  \end{corollary}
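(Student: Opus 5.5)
The corollary is a direct sandwich of the lower bound against the construction, so the plan is simply to combine two earlier results and check that their hypotheses match the stated range.

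First, the lower bound. Fix \(n\) with \(3\le n\le N_{\BW}(q)\) and \(n\notin\{5,6,9,10\}\). Every \((n,n-2,2)\) MDS array code \(\mathcal C\) over \(\F_q\) satisfies \(\beta(\mathcal C)\ge\lceil(5n-8)/4\rceil\) by Corollary~\ref{cor:beta-small-n}. Taking the minimum over \(\mathfrak C_{q,n,2,2}\) then gives
\[
\beta_{\opt}^{(q,n,2,2)}\ge\left\lceil\tfrac{5n-8}{4}\right\rceil.
\]

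Let me check that Corollary~\ref{cor:beta-small-n} really covers all residues outside \(\{5,6,9,10\}\). For \(n\equiv 0,3\pmod 4\) the two ceilings coincide, so Theorem~\ref{thm:ZLH-lb} already suffices. For \(n\equiv 2\pmod 4\), Proposition~\ref{prop:beta-mod-2} handles \(n\ge 14\), and \(n=2\) lies outside the range \(n\ge 3\). For \(n\equiv 1\pmod 4\), Proposition~\ref{prop:beta-mod-1} handles \(n\ge 13\). The gaps left by these propositions are exactly \(n\in\{5,9\}\) and \(\{6,10\}\), which are precisely the excluded values.

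Second, the upper bound. Proposition~\ref{prop:short-bw} provides, for every \(3\le n\le N_{\BW}(q)\), an \((n,n-2,2)\) MDS array code \(\mathcal C\) over \(\F_q\) with \(\beta(\mathcal C)\le\lceil(5n-8)/4\rceil\). Since \(\beta_{\opt}\) is a minimum over all such codes, this yields the reverse inequality.

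Combining the two bounds gives equality. There is no real obstacle here; the only point requiring care is the residue bookkeeping above, which confirms that the exceptional set \(\{5,6,9,10\}\) is exactly what Corollary~\ref{cor:beta-small-n} omits.
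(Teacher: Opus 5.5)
Your proposal is correct and is exactly the paper's argument: the lower bound is Corollary~\ref{cor:beta-small-n}, and the matching upper bound is the construction of Proposition~\ref{prop:short-bw}. Your residue check is also right, and it correctly attributes the case $n\equiv 1\pmod 4$ to Proposition~\ref{prop:beta-mod-1}; the paper's sentence introducing Corollary~\ref{cor:beta-small-n} cites Proposition~\ref{prop:io-nmod1} at that point, which is a slip.
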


  \subsection{Repair I/O}

  For any prime power \(q\), put
  \[
  N_{\IO}(q):=\left\lfloor \frac{3(q+1)}{2}\right\rfloor
  =
  \begin{cases}
  \dfrac{3(q+1)}{2},&q\text{ odd},\\[2mm]
  \dfrac{3q}{2}+1,&q\text{ even}.
  \end{cases}
  \]
  In this subsection we study the repair I/O in the range
  \[
  3\le n\le N_{\IO}(q).
  \]
  
  We use the same orbit notation as in Subsection~\ref{subsec:bw-short}. Thus $E=\F_{q^2}$ is viewed as a two-dimensional \(\F_q\)-vector space, \(A:E\to E\) is
  multiplication by a primitive element of \(E^\times\), and
  \[
  A^2=\sigma A+\tau\,\id_E
  \]
  for some \(\sigma\in\F_q^\times\) and \(\tau\in\F_q^\times\).
  We also fix \(x_0\in E\setminus\{0\}\) and write
  \[
  x_t:=A^t(x_0),
  \qquad t\in\mathbb Z/(q+1)\mathbb Z.
  \]
  
  Identify
\[
\mathbb V=\F_q^4
\]
with \(E\oplus E\). For \(x\in E\setminus\{0\}\), define three projective
column sets in \(\mathbb P(\mathbb V)\) by
\[
\mathcal X_1(x)
:=
\{\langle(A^{-1}(x),0)\rangle,\langle(-x,x)\rangle\},
\]
\[
\mathcal X_2(x)
:=
\{\langle(x,-x)\rangle,\langle(0,A^{-1}(x))\rangle\},
\]
and
\[
\mathcal X_3(x)
:=
\{\langle(x,0)\rangle,\langle(0,A(x))\rangle\}.
\]
Let
\[
\mathcal K_z(x):=\Span_{\F_q}\mathcal X_z(x)\le \mathbb V,
\qquad z\in[3],
\]
where \(\Span_{\F_q}\mathcal X_z(x)\) denotes the subspace spanned by any
non-zero representatives of the two projective points in \(\mathcal X_z(x)\).
For \(z\in[3]\) and \(t\in\mathbb Z/(q+1)\mathbb Z\), put
\[
\mathcal X_{z,t}:=\mathcal X_z(x_t),
\qquad
\mathcal K_{z,t}:=\mathcal K_z(x_t).
\] 
We use the same repair subspaces \(W_1,W_2,W_3\) as in the repair-bandwidth
construction:
\[
W_1:=\{0\}\oplus E,
\qquad
W_2:=E\oplus\{0\},
\qquad
W_3:=\{(u,-u):u\in E\}.
\]
For a coordinate whose node subspace is of the form \(\mathcal K_{z,t}\), we
use \(W_z\) as its repair subspace.

  \begin{lemma}\label{lem:io-template-subspace}
    For every \(z,z'\in[3]\) and every \(t,t'\in\mathbb Z/(q+1)\mathbb Z\), the following properties hold.
    
    \begin{enumerate}[label=(\alph*)]
    \item One has
    \[
    |\,\mathcal X_{z',t'}\cap\mathbb P(W_z)\,|
    =
    \begin{cases}
    0,&z'=z,\\
    1,&z'\neq z.
    \end{cases}
    \]
    
    \item Suppose \(z<z'\), and write
    \[
    \delta=t'-t\in\mathbb Z/(q+1)\mathbb Z.
    \]
    Then
    \[
    \mathcal K_{z,t}\cap\mathcal K_{z',t'}\neq\{0\}
    \]
    can occur only for the following values of \(\delta\):
    \[
    \begin{array}{c|c}
    (z,z')&\text{exceptional values of }\delta\\ \hline
    (1,2)&\{0\}\\
    (1,3)&\{-1\}\\
    (2,3)&\{0,-2\}
    \end{array}
    \]
    
    \item For \(z=z'\), one has
    \[
    \mathcal K_{z,t}\cap\mathcal K_{z,t'}\neq\{0\}
    \quad\Longleftrightarrow\quad
    t=t'.
    \]
    \end{enumerate}
    \end{lemma}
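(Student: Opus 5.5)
I will mirror the proof of Lemma~\ref{lem:orbit-template-subspace}: describe each \(W_z\) as the kernel of an explicit quotient map and test projective points directly, and reduce the intersection questions in (b) and (c) to \(4\times4\) determinants written in the basis \(\{x_t,A(x_t)\}\) of \(E\). Throughout I use \(A^2=\sigma A+\tau\,\id_E\), hence \(A^{-1}=\tau^{-1}(A-\sigma\,\id_E)\), and the fact that \(\langle A^s(x_t)\rangle=\langle x_{t+s}\rangle\) are pairwise distinct projective points for distinct \(s\) modulo \(q+1\).

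For (a), take \(\pi_1(u,v)=u\), \(\pi_2(u,v)=v\), \(\pi_3(u,v)=u+v\), with \(W_z=\ker\pi_z\). A projective point \(\langle(u,v)\rangle\) lies in \(\mathbb P(W_z)\) exactly when \(\pi_z(u,v)=0\). I will check the nine cases against the two representatives in each \(\mathcal X_{z'}(x)\).
\begin{itemize}
\item For \(\mathcal X_1\), the points are \((A^{-1}x,0)\) and \((-x,x)\). Under \(\pi_1\) they map to \(A^{-1}x,\,-x\), both non-zero. Under \(\pi_2\) they map to \(0,\,x\). Under \(\pi_3\) they map to \(A^{-1}x,\,0\).
\item For \(\mathcal X_2\), the points are \((x,-x)\) and \((0,A^{-1}x)\). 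Under \(\pi_1\) they map to \(x,\,0\). Under \(\pi_2\) they map to \(-x,\,A^{-1}x\), both non-zero. Under \(\pi_3\) they map to \(0,\,A^{-1}x\).
\item For \(\mathcal X_3\), the points are \((x,0)\) and \((0,Ax)\). Under \(\pi_1\) they map to \(x,\,0\). Under \(\pi_2\) they map to \(0,\,Ax\). Under \(\pi_3\) they map to \(x,\,Ax\), both non-zero.
\end{itemize}
This gives the \(0/1\) pattern immediately. I also note that each \(\mathcal K_{z}(x)\) is two-dimensional, because its two generators are visibly independent.

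For (c), I will argue as in the earlier lemma.
\begin{itemize}
\item \(\mathcal K_{3,t}=\langle (x_t,0),(0,Ax_t)\rangle\) is a "product" subspace, so two such subspaces meet non-trivially only if \(\langle x_t\rangle=\langle x_{t'}\rangle\).
\item \(\mathcal K_{1,t}\) consists of vectors \((aA^{-1}x_t-bx_t,\;bx_t)\). Suppose such a vector lies in \(\mathcal K_{1,t'}\) with \(t\ne t'\). Then its second coordinate lies in \(\langle x_t\rangle\cap\langle x_{t'}\rangle=0\), so \(b=0\). The first coordinate then lies in \(\langle A^{-1}x_t\rangle\cap\langle A^{-1}x_{t'}\rangle=0\).
\item \(\mathcal K_{2,t}\) is handled symmetrically, using the first coordinate \(bx_t\).
\end{itemize}

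For (b), write \(x_{t'}=ax_t+bA(x_t)\) and use the earlier dictionary: \(b=0\iff\delta=0\), \(a=0\iff\delta=1\), \(a+b\sigma=0\iff\delta=-1\), \(a\sigma-b\tau=0\iff\delta=2\). I will also need the analogous characterisation for \(\delta=-2\), namely \(x_{t'}\parallel A^{-2}x_t\). Computing \(A^{-2}x_t=\tau^{-2}\bigl((\sigma^2+\tau)x_t-\sigma A x_t\bigr)\), this corresponds to the linear form \(\sigma a+(\sigma^2+\tau)b=0\).

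Next I express \(A^{\pm1}x_{t'}\) in the basis \(\{x_t,Ax_t\}\) and row-reduce the three \(4\times4\) matrices, as was done for \((1,2)\) before. I expect the following determinants.
\begin{itemize}
\item For \((1,2)\), the determinant reduces to a single factor proportional to \(b\), since the two "diagonal" generators \((-x,x)\) and \((x,-x)\) already produce dependence when \(x_t\parallel x_{t'}\).
\item For \((1,3)\), I expect a factor corresponding to \(\delta=-1\).
\item For \((2,3)\), I expect the product of the factors for \(\delta=0\) and \(\delta=-2\).
\end{itemize}
In each case the remaining factors should be non-zero constants (powers of \(\tau\)).

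The main obstacle is this determinant bookkeeping for \((2,3)\) and \((1,3)\). There the mixture of \(A^{-1}\) in one family and \(A\) in the other makes it easy to mis-identify a factor, and I must make sure the determinant has no spurious extra factor such as \(a\), which would introduce an unlisted exceptional value. To guard against this, I will use a cleaner shortcut first: project onto a suitable quotient to reduce to a \(2\times2\) determinant, then cross-check each claimed exceptional \(\delta\) by exhibiting an explicit common vector. For example, in case \((2,3)\) with \(x_{t'}=\lambda A^{-2}x_t\), the vector \((0,A^{-1}x_t)\) lies in both subspaces up to scaling.
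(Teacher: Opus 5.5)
Your proposal is correct and follows the paper's proof: the same point-by-point membership check for (a), the same coordinate argument (intersecting $\langle x_t\rangle$ with $\langle x_{t'}\rangle$) for (c), and for (b) the same reduction, in the basis $\{x_t,A(x_t)\}$, to linear forms in $(a,b)$ that are then matched against the shift dictionary. Carrying out your row reductions gives determinants $b^2/\tau$, $(a+\sigma b)^2/\tau$ and $b\bigl(\sigma a+(\sigma^2+\tau)b\bigr)/\tau$ (up to sign) for $(1,2)$, $(1,3)$, $(2,3)$, exactly as you predicted; your identifications $a+\sigma b=0\iff\delta=-1$ and $A^{-2}x_t=\tau^{-2}\bigl((\sigma^2+\tau)x_t-\sigma A x_t\bigr)$ are also right, whereas the paper's text writes $\delta=1$ and $-\tau^2$ at these two spots, both harmless slips.
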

    
    \begin{proof}
    We prove the three assertions separately.
    
    \begin{enumerate}[label=(\alph*)]
    \item This is immediate from the definitions. For \(W_1=\{0\}\oplus E\), the
    two points in \(\mathcal X_{1,t'}\) do not lie in \(\mathbb P(W_1)\), while
    exactly one point of each of \(\mathcal X_{2,t'}\) and \(\mathcal X_{3,t'}\)
    lies in \(\mathbb P(W_1)\). Hence
    \[
    |\,\mathcal X_{z',t'}\cap\mathbb P(W_1)\,|
    =
    \begin{cases}
    0,&z'=1,\\
    1,&z'\neq1.
    \end{cases}
    \]
    The same verification for $W_2=E\oplus\{0\}$ gives
    \[
    |\,\mathcal X_{z',t'}\cap\mathbb P(W_2)\,|
    =
    \begin{cases}
    0,&z'=2,\\
    1,&z'\neq2.
    \end{cases}
    \]
    Finally, for $W_3=\{(u,-u):u\in E\}$, the point \(\langle(-x_{t'},x_{t'})\rangle\) from \(\mathcal X_{1,t'}\) and the point
    \(\langle(x_{t'},-x_{t'})\rangle\) from \(\mathcal X_{2,t'}\) lie in
    \(\mathbb P(W_3)\), while neither point of \(\mathcal X_{3,t'}\) does. Thus
    \[
    |\,\mathcal X_{z',t'}\cap\mathbb P(W_3)\,|
    =
    \begin{cases}
    0,&z'=3,\\
    1,&z'\neq3.
    \end{cases}
    \]
    This proves (a).
    
    \item Write
    \[
    x_{t'}=ax_t+bA(x_t),\qquad a,b\in\F_q.
    \]
    This expression is unique because \(x_t\) and \(A(x_t)\) form an \(\F_q\)-basis of
    \(E\). Using
    \[
    A^2=\sigma A+\tau\,\id_E
    \]
    and thus
    \[
    A^{-1}=\frac1{\tau}(A-\sigma\,\id_E),
    \]
    one checks that the condition
    \[
    \mathcal K_{z,t}\cap\mathcal K_{z',t'}\neq\{0\}
    \]
    is equivalent to the following conditions:
    \[
    \begin{array}{c|c}
    (z,z')&\text{condition for a non-zero intersection}\\ \hline
    (1,2)&b=0\\
    (1,3)&a+\sigma b=0\\
    (2,3)&b\bigl(\sigma a+(\sigma^2+\tau)b\bigr)=0
    \end{array}
    \]
    
    \indent\quad The three possible linear factors, $b,a+\sigma b,\sigma a+(\sigma^2+\tau)b$, correspond to the following projective relations. We have shown in the proof of Lemma~\ref{lem:orbit-template-subspace} that
    \[
    b=0
    \quad\Longleftrightarrow\quad
    \langle x_{t'}\rangle=\langle x_t\rangle
    \quad\Longleftrightarrow\quad
    \delta=0,
    \]
    and 
    \[
      a+\sigma b=0\quad\Longleftrightarrow\quad\langle A(x_t)-\sigma x_t\rangle=\langle A^{-1}x_t\rangle\quad\Longleftrightarrow\quad\delta=1.
    \]
    Finally,
    \[
    \sigma a+(\sigma^2+\tau)b=0\quad\Longleftrightarrow\quad\det\begin{bmatrix}
    a&b\\
    \sigma^2+\tau&-\sigma
    \end{bmatrix}=0,
    \]
    which means that \(x_{t'}\) is proportional to
    \[
    (\sigma^2+\tau)x_t-\sigma A(x_t).
    \]
    This vector is proportional to \(A^{-2}(x_t)\). Indeed,
    \begin{align*}
      &A^2\bigl((\sigma^2+\tau)\id_E-\sigma A\bigr)\\
      =\,&(\sigma A+\tau\,\id_E)\bigl((\sigma^2+\tau)\id_E-\sigma A\bigr)\\
      =\,&\sigma(\sigma^2+\tau)A+\tau(\sigma^2+\tau)\id_E-\sigma^2A^2-\sigma\tau A\\
      =\,&\sigma^3A+\tau(\sigma^2+\tau)\id_E-\sigma^2(\sigma A+\tau\,\id_E)\\
      =\,&-\tau^2\id_E,
    \end{align*}
    which implies that
    \[
    (\sigma^2+\tau)(x_t)-\sigma A(x_t)=-\tau^2A^{-2}(x_t).
    \]
    Hence
    \[
    \sigma a+(\sigma^2+\tau)b=0
    \quad\Longleftrightarrow\quad
    \delta=-2.
    \]
    Substituting these possibilities into the table gives the exceptional values
    listed in (b).
    
    \item If \(z=z'\), the same argument as in Lemma~\ref{lem:orbit-template-subspace}
    shows that
    \[
    \mathcal K_{z,t}\cap\mathcal K_{z,t'}\neq\{0\}
    \]
    only when $\langle x_t\rangle=\langle x_{t'}\rangle$. Since the projective points
    \[
    \langle x_0\rangle,\langle x_1\rangle,\dots,\langle x_q\rangle
    \]
    are pairwise distinct, this is equivalent to \(t=t'\). The converse is
    immediate.
    \end{enumerate}
    \end{proof}

  \begin{proposition}\label{prop:short-io}
    Let \(q\) be a prime power. For every integer \(n\) with
    \[
    3\le n\le N_{\IO}(q),
    \]
    there exists an \((n,n-2,2)\) MDS array code \(\mathcal C\) over \(\F_q\) such
    that
    \[
    \gamma(\mathcal C)
    \le
    \left\lceil\frac{4n-6}{3}\right\rceil.
    \]
    \end{proposition}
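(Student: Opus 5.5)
We mirror the proof of Proposition~\ref{prop:short-bw}, with three classes instead of four. First we build an endpoint code of length $N_{\IO}(q)$ whose coordinates split into three classes $G_1\sqcup G_2\sqcup G_3$. Class $G_z$ consists of the node subspaces $\mathcal K_{z,t}$ for $t$ in an index set $S_z\subseteq\mathbb Z/(q+1)\mathbb Z$, and each node carries the projective column set $\mathcal X_{z,t}$, so that $H_j$ has columns representing the two points of $\mathcal X_{z,t}$.

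The index sets are chosen to avoid the exceptional differences in Lemma~\ref{lem:io-template-subspace}(b). The conditions are:
\begin{itemize}
\item pairs $(1,2)$ need $\delta\neq0$;
\item pairs $(1,3)$ need $\delta\neq-1$;
\item pairs $(2,3)$ need $\delta\notin\{0,-2\}$.
\end{itemize}
A natural choice uses residues mod $2$ (or near-halves of the cycle). Take $S_1=S_3$ to be an interval-type set and $S_2$ disjoint from $S_1$. Concretely, the plan is to take $S_1=S_2$ shifted by one and $S_3$ so that all pairwise differences miss the forbidden values. For example, with $q+1$ even:
\begin{itemize}
\item $S_1$ the even residues;
\item $S_2$ the odd residues;
\item $S_3$ the even residues.
\end{itemize}
Then $(1,2)$ differences are odd, so they are never $0$. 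The $(1,3)$ differences $t'-t$ are even, so never $-1$. The $(2,3)$ differences are odd, so never $0$ or $-2$.

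This gives $|G_z|=(q+1)/2$ for each $z$, so the total is $3(q+1)/2=N_{\IO}(q)$ when $q$ is odd. For $q$ even, $q+1$ is odd and the parity trick fails near the wrap-around. We then take arithmetic progressions of step $2$ truncated to avoid the wrap. The sizes become $q/2+1$, $q/2$ and $q/2$, arranged so that the wrap-around difference misses the forbidden values, for a total of $3q/2+1$. Checking the congruences on these ranges, as in the table of Lemma~\ref{lem:endpoint-short}, is the main obstacle: the wrap-around differences must be verified case by case. Within a class, Lemma~\ref{lem:io-template-subspace}(c) gives trivial intersections. Hence any two node subspaces have direct sum $\mathbb V$, and the family is MDS.

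For repair, a node in $G_z$ uses $W_z$. Lemma~\ref{lem:io-template-subspace}(a) gives $\mathcal X_{z,t}\cap\mathbb P(W_z)=\varnothing$. Since $\mathcal K_{z,t}$ is spanned by those two points and $\dim W_z=2$, we also need $W_z\cap\mathcal K_{z,t}=\{0\}$ for feasibility; this follows from the same direct computation as in Lemma~\ref{lem:orbit-template-subspace}(a), since each $W_z$ is a complement of $\mathcal K_{z,t}$ by inspection of the projections. Every node outside $G_z$ has exactly one column in $W_z$, while nodes in $G_z$ have none. The I/O for a node in $G_z$ is therefore
\[
2(n-1)-(n-|G_z|)=n-2+|G_z|.
\]

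For general $3\le n\le N_{\IO}(q)$, we retain $g_z\le|G_z|$ nodes from each class, with $\sum_z g_z=n$ and $\max_z g_z=\lceil n/3\rceil$. This balancing is possible by the same three-case argument as in Proposition~\ref{prop:short-bw}, since the class sizes differ by at most one and the larger class can take the extra node. Subfamilies of an MDS family remain MDS, so
\[
\gamma(\mathcal C)\le n-2+\left\lceil\frac n3\right\rceil=\left\lceil\frac{4n-6}{3}\right\rceil.
\]
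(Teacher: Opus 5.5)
Your route is the paper's. You use the same templates $\mathcal K_{z,t}$, $\mathcal X_{z,t}$, $W_z$, the same count $n-2+|G_z|$ from Lemma~\ref{lem:io-template-subspace}(a), and the same balancing of $(g_1,g_2,g_3)$. For $q$ odd, your sets (evens, odds, evens) are the paper's sets (odds, evens, odds) translated by $1$. Translation changes no difference $t'-t$, so that case is complete. Your remark that feasibility needs $W_z\cap\mathcal K_{z,t}=\{0\}$ is also correct and worth keeping, since Lemma~\ref{lem:io-template-subspace}(a) alone does not give it. It holds because the maps $(u,v)\mapsto u$, $(u,v)\mapsto v$, $(u,v)\mapsto u+v$, whose kernels are $W_1,W_2,W_3$, send $\mathcal K_{1,t},\mathcal K_{2,t},\mathcal K_{3,t}$ onto $\Span\{A^{-1}(x_t),x_t\}$, $\Span\{x_t,A^{-1}(x_t)\}$, $\Span\{x_t,A(x_t)\}$ respectively, and each of these equals $E$.

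The gap is the case $q$ even. There you only announce the sizes $(q/2+1,q/2,q/2)$ and say the wrap-around differences must be checked case by case, without exhibiting any sets. When $q+1$ is odd, the parity argument fails exactly at the wrap, so this check is the whole content of that case; as written it is not proved. It closes easily. Write $q+1=2m+1$, view residues as integers in $\{0,\dots,2m\}$, and take
\[
S_1=\{0,2,\dots,2m\},\qquad S_2=\{1,3,\dots,2m-1\},\qquad S_3=\{0,2,\dots,2m-2\}.
\]
The three conditions then check out as follows.
\begin{itemize}
\item For $(1,2)$: $S_1\cap S_2=\varnothing$.
\item For $(1,3)$: a bad pair needs $t'=t-1$ with $t\in S_1$. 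This is odd, hence not in $S_3$, except when $t=0$; then $t-1\equiv 2m\notin S_3$.
\item For $(2,3)$: a bad pair needs $t'\in\{t,t-2\}$ with $t\in S_2$. Both candidates are odd except $t-2\equiv 2m$ when $t=1$, and $2m\notin S_3$.
\end{itemize}
Structurally, $S_2=S_3+1$ and $S_1$ is the complement of $S_2$, so every condition reduces to $S_3\cap(S_3+1)=\varnothing$. The paper's sets $S_3=\{2,4,\dots,2m\}$, $S_2=S_3+1$, $S_1=\mathbb Z/(2m+1)\mathbb Z\setminus S_2$ have exactly this shape, and the same recipe also yields your odd case. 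The sizes $(m+1,m,m)$ then support your balancing, taking $(g_1,g_2,g_3)=(m+1,m,m)$ itself when $n=3m+1$.
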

    
    \begin{proof}
    We first construct endpoint families.
    
    \medskip
    
    \noindent
    \textbf{Case 1: \(q\) odd.}
    Write $q+1=2m$. Choose subsets of \(\mathbb Z/(q+1)\mathbb Z\) by
    \[
    S_1=S_3=\{1,3,5,\dots,2m-1\},
    \qquad
    S_2=\{0,2,4,\dots,2m-2\}.
    \]
    For \(z\in[3]\), set
    \[
    G_z:=\{\mathcal K_{z,t}:t\in S_z\}.
    \]
    Then
    \[
    |G_1|=|G_2|=|G_3|=m,
    \]
    which implies that 
    \[
    \sum\limits_{z=1}^3 |G_z|=3m=\frac{3(q+1)}{2}=N_{\IO}(q).
    \]
    
    We verify the MDS property using Lemma~\ref{lem:io-template-subspace}. For
    \(G_1\) and \(G_2\), only shift \(0\) is exceptional, and $S_1\cap S_2=\varnothing$. For \(G_1\) and \(G_3\), only shift \(-1\) is exceptional. Since
    \[
    S_3+1=\{0,2,4,\dots,2m-2\}=S_2,
    \]
    we have $S_1\cap(S_3+1)=\varnothing$. For \(G_2\) and \(G_3\), the exceptional shifts are \(0\) and \(-2\). We have $S_2\cap S_3=\varnothing$, and since $S_3+2=S_3$, we also have $S_2\cap(S_3+2)=\varnothing$. Thus every two selected node subspaces have trivial intersection, and hence
    the endpoint construction gives an \((3m,3m-2,2)\) MDS array code.
  
    \medskip
    
    \noindent
    \textbf{Case 2: \(q\) even.}
    Write $q+1=2m+1$. Choose subsets of \(\mathbb Z/(q+1)\mathbb Z\) by
    \[
    S_3=\{2,4,\dots,2m\},
    \]
    \[
    S_2=\{0,3,5,\dots,2m-1\},
    \]
    and
    \[
    S_1=\mathbb Z/(2m+1)\mathbb Z\setminus S_2
    =
    \{1,2,4,\dots,2m\}.
    \]
    For \(z\in[3]\), set
    \[
    G_z:=\{\mathcal K_{z,t}:t\in S_z\}.
    \]
    Then
    \[
    |G_1|=m+1,\qquad |G_2|=|G_3|=m,
    \]
    which implies that
    \[
    \sum\limits_{z=1}^3 |G_z|=m+1+m+m=3m+1=\frac{3q}{2}+1=N_{\IO}(q).
    \]
    
    The MDS verification is the same. For \(G_1\) and \(G_2\), shift \(0\) is
    avoided because $S_1\cap S_2=\varnothing$. For \(G_1\) and \(G_3\), shift \(-1\) is avoided because $S_3+1=\{0,3,5,\dots,2m-1\}=S_2$ and hence $S_1\cap(S_3+1)=\varnothing$. For \(G_2\) and \(G_3\), shifts \(0\) and \(-2\) are avoided because $ S_2\cap S_3=\varnothing$ and $S_3+2=\{1,4,6,\dots,2m\}$ is disjoint from \(S_2\). 
    
    \medskip 
    
    It remains to pass from the endpoint length to all shorter lengths. Let $3\le n\le N_{\IO}(q)$. Retain a subfamily of the endpoint node subspaces, with class sizes $g_1,g_2,g_3$. The MDS property is preserved, because every pair of retained node subspaces
    already had trivial intersection in the endpoint construction. By the subspace reformulation, the retained subfamily determines an \((n,n-2,2)\) MDS array code over \(\F_q\), which we denote
    by \(\mathcal C\). Moreover, the
    projective column sets and repair subspaces still satisfy the intersection
    pattern in Lemma~\ref{lem:io-template-subspace}(a). Hence the displayed repair
    scheme has maximum repair I/O
    \[
    n-2+\max\{g_1,g_2,g_3\}.
    \]
    
    When \(q\) is odd, the endpoint capacities are \((m,m,m)\). Write $n=3a+b$ with $a\in\mathbb{N}$ and $b\in\{0,1,2\}$. Choose \(b\) of the \(g_i\)'s to be \(a+1\), and the remaining \(3-b\) of them
    to be \(a\). Then
    \[
    g_1+g_2+g_3=n,
    \qquad
    \max_i g_i=\left\lceil\frac n3\right\rceil.
    \]
    Moreover, \(n\le3m\) implies
    \[
    \left\lceil\frac n3\right\rceil\le m,
    \]
    so
    \[
    0\le g_i\le m
    \qquad(i\in[3]).
    \]
    Thus this choice is compatible with the endpoint capacities.
    
    When \(q\) is even, the endpoint capacities are \((m+1,m,m)\). If
    \(n\le3m\), the same balanced choice as above gives
    \[
    g_1+g_2+g_3=n,
    \qquad
    0\le g_i\le m,
    \qquad
    \max_i g_i=\left\lceil\frac n3\right\rceil.
    \]
    If \(n=3m+1\), we take
    \[
    (g_1,g_2,g_3)=(m+1,m,m).
    \]
    This again respects the endpoint capacities, and in this case
    \[
    \max_i g_i=m+1=\left\lceil\frac{3m+1}{3}\right\rceil
    =
    \left\lceil\frac n3\right\rceil.
    \]
    
    Therefore, in both parity cases, we may retain \(g_i\) node subspaces from the
    \(i\)-th class so that
    \[
    g_1+g_2+g_3=n,
    \qquad
    \max_i g_i=\left\lceil\frac n3\right\rceil.
    \]
   It follows that the constructed code $\mathcal{C}$ satisfies
\[
\gamma(\mathcal C)
\le
n-2+\left\lceil\frac n3\right\rceil.
\]
Finally,
\[
n-2+\left\lceil\frac n3\right\rceil
=
\left\lceil\frac{4n-6}{3}\right\rceil.
\]
This proves the proposition.
    \end{proof}

    Corollary~\ref{cor:gamma-small-n} and Proposition~\ref{prop:short-io} together give the following result:

  \begin{corollary}\label{cor:short-optimal-io}
    Let $q$ be a prime power. Then for any integer $n$ with $3\le n\le N_{\IO}(q)$ and $n\neq 4$, one has 
    $$\gamma_{\opt}^{(q,n,2,2)}=\left\lceil\frac{4n-6}{3}\right\rceil.$$
  \end{corollary}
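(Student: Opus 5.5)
The corollary has two halves, and I would treat them separately: a lower bound valid for every code, and a matching construction over $\F_q$.

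\textbf{Lower bound.} Fix $n\ne 4$ with $3\le n\le N_{\IO}(q)$. By Corollary~\ref{cor:gamma-small-n}, every $(n,n-2,2)$ MDS array code $\mathcal C$ over $\F_q$ with $n\neq 4$ satisfies $\gamma(\mathcal C)\ge\lceil(4n-6)/3\rceil$. Since $\gamma_{\opt}^{(q,n,2,2)}$ is the minimum of $\gamma(\mathcal C)$ over all such codes, it follows that
\[
\gamma_{\opt}^{(q,n,2,2)}\ge\left\lceil\frac{4n-6}{3}\right\rceil .
\]
Corollary~\ref{cor:gamma-small-n} itself is assembled from Theorem~\ref{thm:ZLH-lb} and Proposition~\ref{prop:io-nmod1}. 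When $n\not\equiv 1\pmod 3$, the two ceilings coincide. When $n\equiv1\pmod 3$ and $n\ge 7$, the sharpening is exactly Proposition~\ref{prop:io-nmod1}. The only residue-$1$ value below $7$ is $n=4$, which is excluded here. I would note this bookkeeping so the reader sees why $n=4$ is the sole exception.

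\textbf{Upper bound.} Proposition~\ref{prop:short-io} provides, for every $3\le n\le N_{\IO}(q)$, an $(n,n-2,2)$ MDS array code $\mathcal C$ over $\F_q$ with $\gamma(\mathcal C)\le\lceil(4n-6)/3\rceil$. Hence
\[
\gamma_{\opt}^{(q,n,2,2)}\le\left\lceil\frac{4n-6}{3}\right\rceil .
\]

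Combining the two inequalities gives equality. No step is a genuine obstacle, because all the substance lies in the cited results. The one point to check is the domain of the lower bound: Corollary~\ref{cor:gamma-small-n} is stated for all $n\ne4$ with no field-size restriction, so it applies throughout the range $3\le n\le N_{\IO}(q)$.
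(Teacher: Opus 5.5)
Your proposal is correct and matches the paper's argument: the lower bound comes from Corollary~\ref{cor:gamma-small-n} (Theorem~\ref{thm:ZLH-lb} combined with Proposition~\ref{prop:io-nmod1}), and the matching upper bound comes from the construction in Proposition~\ref{prop:short-io}. Your residue check, showing that $n=4$ is the only length in range where the sharpened I/O bound is unavailable, is also accurate.
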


\section{The Long-Length Regime}\label{sec:long}

We begin by recalling the incidence-multiplicity bound established in~\cite{wu2026incidence}.

\begin{theorem}[{\cite[Theorem 3.3]{wu2026incidence}}]\label{thm:incidence-multiplicity-bound}
  Let $\mathcal{C}$ be an $(n,k,\ell)$ MDS array code over $\mathbb{F}_q$ with redundancy $r=n-k\ge 2$. Then for any $i\in[n]$,
  $$\gamma_i(\mathcal{C})\ge \beta_i(\mathcal{C})\ge\ell(n-1)-(r-1)\frac{q^{\ell}-1}{q-1}.$$
  Hence 
  $$\gamma(\mathcal{C})\ge\beta(\mathcal{C})\ge\ell(n-1)-(r-1)\frac{q^{\ell}-1}{q-1}.$$
\end{theorem}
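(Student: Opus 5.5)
The plan is to work in the subspace reformulation of Section~\ref{sec:prelim}. There $\beta_i(\mathcal C)=\ell(n-1)-\alpha_i(\mathcal C)$, so the bound on $\beta_i$ amounts to
\[
\sum_{j\ne i}\dim(W\cap\mathcal H_j)\le (r-1)\frac{q^\ell-1}{q-1}
\qquad\text{for every } W\in\mathcal W_i .
\]
The inequality $\gamma_i\ge\beta_i$ is immediate: $\nz(A)\ge\rank(A)$ for every matrix, so $\IO_i(M)\ge\BW_i(M)$ for every repair matrix $M$, and we minimize both sides over $\mathcal M_i(\mathcal C)$. The statements for $\gamma(\mathcal C)$ and $\beta(\mathcal C)$ then follow by taking the maximum over $i$.

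For the main inequality, fix $W\in\mathcal W_i$, so $\dim W=(r-1)\ell$ and $W\cap\mathcal H_i=\{0\}$. Let $\pi:\mathbb V\to\mathbb V/W$ be the quotient map; $\mathbb V/W$ has dimension $\ell$. Since $W$ is a complement of $\mathcal H_i$, $\pi$ restricts to an isomorphism $\mathcal H_i\to\mathbb V/W$. For $j\ne i$, set $d_j:=\dim(W\cap\mathcal H_j)$.

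The key step is a projective incidence count. Each nonzero vector of $W\cap\mathcal H_j$ is a point of $\mathbb P(W)$, which has $\frac{q^{(r-1)\ell}-1}{q-1}$ points. That gives a bound, but not the right one; we need the count to be governed by the $\ell$-dimensional space $\mathbb P(\mathbb V/W)$ instead.

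The MDS property is what does this. Any $r$ distinct node subspaces form a direct sum equal to $\mathbb V$. Hence any $r-1$ of them, say $\mathcal H_{j_1},\dots,\mathcal H_{j_{r-1}}$ with all $j_s\ne i$, together with $\mathcal H_i$, span $\mathbb V$ directly.

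To use this, I would associate to each $j\ne i$ the subspace
\[
U_j:=\pi^{-1}\bigl(\pi(\mathcal H_j)\bigr)\cap\mathcal H_i\cong\pi(\mathcal H_j),
\]
which has dimension $\ell-d_j$. The planned incidence is between points $P\in\mathbb P(\mathcal H_i)$ and the index sets $J_P:=\{j\ne i: P\notin \mathbb P(U_j)\}$.

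The claim to prove is that each point $P\in\mathbb P(\mathcal H_i)$ lies in $U_j$ for at least $n-1-(r-1)$ of the indices $j\ne i$. Equivalently, $|J_P|\le r-1$. Suppose instead that $r$ indices $j\in J_P$ exist. I expect to derive a nonzero vector in the sum $\mathcal H_i\oplus\bigoplus_{j\in J}\mathcal H_j$ that contradicts directness, by lifting $P$ through the $W$-components of those $r$ spaces. The lifting uses the bound $\dim W=(r-1)\ell$.

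Double counting then gives
\[
\sum_{j\ne i}\bigl(|\mathbb P(\mathcal H_i)|-|\mathbb P(U_j)|\bigr)\le (r-1)\,|\mathbb P(\mathcal H_i)| .
\]
Combining this with $q^{\ell}-q^{\ell-d}\ge (q-1)\,d\,q^{\ell-d}$, or more crudely with a per-$j$ bound of $d_j$ via a suitable weighting, should yield $\sum_j d_j\le (r-1)\frac{q^\ell-1}{q-1}$.

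I expect the main obstacle to be the precise incidence claim: pinning down the right point set and the right multiplicity so that each point is charged at most $r-1$ times, and so that each $j$ contributes at least $d_j$ incidences. The natural choice for this is the following. For each $j$ with $d_j\ge1$, the point set $\mathbb P(W\cap\mathcal H_j)$ contributes at least $d_j$ points, for instance by picking a basis. These are then mapped into $\mathbb P(\mathbb V/\mathcal H_i)$, which is isomorphic to $\mathbb P^{(r-1)\ell-1}$. By MDS, any $r-1$ of the $\mathcal H_j$ complement $\mathcal H_i$, which should make their images independent. It is this independence that should bound the multiplicity over each line class by $r-1$.

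If this route fails, I would instead adapt the argument of \cite{liu2026linear} for the projective counting bound, which handles $r=\ell=2$, and generalize its incidence multiplicity count.
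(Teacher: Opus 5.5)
The reduction to bounding $\sum_{j\ne i}\dim(W\cap\mathcal H_j)$ for a fixed $W\in\mathcal W_i$ is fine, as are the step $\gamma_i\ge\beta_i$ via $\nz(A)\ge\rank(A)$ and the passage to maxima over $i$. The gap is your central incidence claim $|J_P|\le r-1$. It is false, so no lifting argument can establish it.

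Take $r=\ell=2$. Suppose $d_j=d_{j'}=1$ and $U_j=U_{j'}$. Then $\pi(\mathcal H_j)=\pi(\mathcal H_{j'})$, so $W+\mathcal H_j=W+\mathcal H_{j'}$ would be a $3$-dimensional space containing $\mathcal H_j\oplus\mathcal H_{j'}=\mathbb V$. Hence the $U_j$ with $d_j=1$ are distinct points of $\mathbb P(\mathcal H_i)$, and every $P$ lies in at most one of them. As soon as three helpers meet $W$ in a line, as for $W_z$ in Lemma~\ref{lem:endpoint-short}, you get $|J_P|\ge 2>r-1$.

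Put differently, your double-counting inequality would give $q\cdot\#\{j\ne i:d_j\ge 1\}\le q+1$. That forces $\beta_i\ge 2n-4$, contradicting Proposition~\ref{prop:short-bw} for $4\le n\le N_{\BW}(q)$.

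Your fallback fails too. The space $\mathbb P(\mathbb V/\mathcal H_i)\cong\mathbb P(W)$ has $\frac{q^{(r-1)\ell}-1}{q-1}$ points. The only multiplicity information available there is that the $W\cap\mathcal H_j$ meet pairwise trivially, and that gives a bound with the wrong exponent.

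The missing idea is to dualize the incidence: count \emph{hyperplanes} of $\mathbb V/W$ containing $\pi(\mathcal H_j)$, rather than points outside it. Equivalently, count points $\langle m\rangle$ of $\mathbb P(W^{\perp})$, i.e.\ nonzero vectors of the row space of a repair matrix $M$ with $\ker M=W$, up to scalars.

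\begin{itemize}
\item Since $\pi(\mathcal H_j)$ has codimension $d_j$ in the $\ell$-dimensional space $\mathbb V/W$, exactly $\frac{q^{d_j}-1}{q-1}\ge d_j$ hyperplanes of $\mathbb V/W$ contain it.
\item Each such hyperplane pulls back to a hyperplane of $\mathbb V$ containing $W$.
\item A hyperplane of $\mathbb V$ contains at most $r-1$ of the $\mathcal H_j$, because any $r$ of them sum to $\mathbb V$.
\end{itemize}

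Double counting over the $\frac{q^\ell-1}{q-1}$ hyperplanes of $\mathbb V/W$ then gives
\[
\sum_{j\ne i}d_j\le\sum_{j\ne i}\frac{q^{d_j}-1}{q-1}\le (r-1)\,\frac{q^{\ell}-1}{q-1},
\]
which is exactly the required bound. It uses only $\dim W=(r-1)\ell$ and the MDS property among the helpers.
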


For $r=\ell=2$, Theorem~\ref{thm:incidence-multiplicity-bound} gives the lower bound $2n-q-3$. 

In~\cite{wu2026incidence}, Wu also presents constructions that attain the
incidence-multiplicity bound for a wide range of code lengths~$n$ for arbitrary $r$ and $\ell$.

\begin{theorem}[{\cite[Theorem 4.4]{wu2026incidence}}]\label{thm:incidence-multiplicity-bound-attainable}
  Assume that $\ell\ge 2$ and $r\ge 2$, and suppose that 
  $$(r-1)\mid(q-1)\qquad\text{and}\qquad\frac{q-1}{r-1}\ge 2.$$
  Then for every integer $n$ satisfying 
  $$2(r-1)\frac{q^{\ell}-1}{q-1}\le n\le q^{\ell}+1,$$
  there exists an $(n,n-r,\ell)$ MDS array code $\mathcal{C}$ over $\mathbb{F}_q$ such that
  $$\beta(\mathcal{C})=\gamma(\mathcal{C})=\ell(n-1)-(r-1)\frac{q^{\ell}-1}{q-1}.$$
\end{theorem}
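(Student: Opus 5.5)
The plan has three parts: read off the equality conditions of Theorem~\ref{thm:incidence-multiplicity-bound}, build a symmetric code in which every node has a repair subspace meeting them, and shorten that code to cover the stated range of $n$. Concretely, for a feasible $W\in\mathcal W_i$ write $\alpha_i(W)=\sum_{j\ne i}\dim(W\cap\mathcal H_j)$. I expect the proof of Theorem~\ref{thm:incidence-multiplicity-bound} to be a double count of the pairs $(P,j)$ with $P\in\mathbb P(W)$ a point and $P\in\mathbb P(\mathcal H_j)$, combined with $\dim U\le |\mathbb P(U)|$ and a bound of $r-1$ on how many $\mathcal H_j$ a single point can lie in. The MDS property gives that bound: any $r$ of the $\mathcal H_j$ form a direct sum, so a nonzero vector lies in at most $r-1$ of them. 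The goal is therefore a code in which, for every $i$, some $W\in\mathcal W_i$ has all nonzero intersections $W\cap\mathcal H_j$ one-dimensional. These should cover a distinguished $\ell$-dimensional part of $W$ (a copy of $\mathbb P^{\ell-1}$, of size $\frac{q^\ell-1}{q-1}$), with every point of that part lying in exactly $r-1$ of the $\mathcal H_j$. This gives $\alpha_i(W)=(r-1)\frac{q^\ell-1}{q-1}$. To get $\gamma=\beta$ as well, each such intersection line should be spanned by an actual column of $H_j$.

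\textbf{Step 1 (ambient code).} Put $\mathbb V=\F_{q^\ell}^r$, viewed as an $rq$-independent $\F_q$-space of dimension $r\ell$, and take $\mathcal H_j=\{x h_j: x\in\F_{q^\ell}\}$, where $h_1,\dots,h_n\in\F_{q^\ell}^r$ are the columns of a parity-check matrix of a doubly extended Reed--Solomon $[n,n-r]$ code over $\F_{q^\ell}$. Any $r$ of the $h_j$ are $\F_{q^\ell}$-independent, so any $r$ of the $\mathcal H_j$ form a direct sum over $\F_q$, and the code is MDS for every $n\le q^\ell+1$. The affine and projective groups acting on the evaluation points act transitively on the nodes and preserve the family $\{\mathcal H_j\}$. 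So it suffices to design one good repair subspace for one node and transport it by this symmetry.

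\textbf{Step 2 (the repair subspace).} This is the step I expect to be the main obstacle. I would first try $W$ of the form $\{v\in\mathbb V: \phi(v)\in \F_q\text{-subspace}\}$, built from $\F_{q^\ell}$-linear functionals vanishing on $r-1$ chosen evaluation points together with a trace map. The hypothesis $(r-1)\mid(q-1)$ suggests using the subgroup $\mu_{r-1}\le\F_q^\times$ to split $\F_{q^\ell}^\times/\F_q^\times$-orbits into $r-1$ interleaved families. The aim is to arrange that each point of the distinguished $\mathbb P^{\ell-1}\subseteq\mathbb P(W)$ lies on exactly $r-1$ of the $\mathcal H_j$, with these $j$ running over $r-1$ disjoint sets of $\frac{q^\ell-1}{q-1}$ nodes. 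The requirement $\frac{q-1}{r-1}\ge 2$ should be what guarantees that these families avoid node $i$, keeping $W\cap\mathcal H_i=\{0\}$, and avoid one another. The verification is a rank computation analogous to Lemma~\ref{lem:orbit-template-subspace}: express everything in a basis adapted to $x\mapsto\xi x$ and check which intersections are nonzero. I would then choose the column bases of the $H_j$ so that each one-dimensional intersection $W\cap\mathcal H_j$ is spanned by a column of $H_j$. Since the symmetry group is transitive on nodes, this only needs to be done compatibly for one $W$ up to the group action, and it gives $\lambda_i=\alpha_i$.

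\textbf{Step 3 (all lengths).} The code from Step~2 attains the bound for every node, so $\beta=\gamma=\ell(n-1)-(r-1)\frac{q^\ell-1}{q-1}$. Deleting nodes preserves the MDS property, just as in Proposition~\ref{prop:short-bw}. However, for each remaining node $i$ the full set of $(r-1)\frac{q^\ell-1}{q-1}$ incident nodes of its chosen $W$ must survive. I would therefore delete nodes in a symmetric pattern, or re-choose the incident families after deletion. The lower limit $n\ge 2(r-1)\frac{q^\ell-1}{q-1}$ is exactly what leaves enough room for every surviving node $i$ to have a full incident family among the other survivors. Checking this combinatorial covering condition uniformly in $n$ is the second delicate point.
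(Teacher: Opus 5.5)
\textbf{There is a genuine gap: your equality analysis is wrong for $r\ge 3$, and Step~2 is aimed at a configuration that cannot exist.} Any two node subspaces lie in some $r$-subset, so the MDS property gives $\mathcal H_j\cap\mathcal H_{j'}=\{0\}$ for $j\ne j'$. A nonzero vector therefore lies in at most \emph{one} $\mathcal H_j$, and no point of $W$ can lie on $r-1\ge 2$ of them. Moreover, $\mathbb P(W)$ has $\frac{q^{(r-1)\ell}-1}{q-1}$ points, so a point count inside $W$ cannot produce $(r-1)\frac{q^\ell-1}{q-1}$.

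The factor $r-1$ comes from hyperplanes instead. A hyperplane $Y\supseteq W$ contains at most $r-1$ of the $\mathcal H_j$, since any $r$ of them span $\mathbb V$. The number of hyperplanes through $W+\mathcal H_j$ is $\frac{q^{d_j}-1}{q-1}$, where $d_j=\dim(W\cap\mathcal H_j)$. Counting pairs $(Y,j)$ over the $\frac{q^\ell-1}{q-1}$ hyperplanes through $W$ gives the bound. Equality means every hyperplane through $W$ contains exactly $r-1$ helper subspaces, and every $d_j\le 1$. Your ``distinguished $\mathbb P^{\ell-1}$'' is this pencil of hyperplanes, i.e.\ the dual of $\mathbb P(\mathbb V/W)$, not a subspace of $W$.

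\textbf{Step~2, the only step with real content, is left as a guess, and the symmetry shortcuts do not rescue it.} Transporting one $W$ around by a node-transitive group yields many distinct repair subspaces. A helper would then in general need columns of $H_j$ on many different lines, while $H_j$ has only $\ell$ columns. After shortening, the transitivity is gone anyway.

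Here is what is missing, in your Reed--Solomon model with $h_j=(1,a_j,\dots,a_j^{r-1})$ and $h_\infty=(0,\dots,0,1)$. Write hyperplanes as kernels of $v\mapsto\Tr(u\cdot v)$ with $u\in\F_{q^\ell}^r$. Such a hyperplane contains $\mathcal H_j$ iff $g_u(a_j)=0$, where $g_u(X)=\sum_k u_kX^k$. You therefore need an $\ell$-dimensional $\F_q$-space $U$ of polynomials of degree $\le r-1$ with three properties:
\begin{itemize}
\item every nonzero member has exactly $r-1$ roots among the evaluation points;
\item none of these roots is $a_i$;
\item each evaluation point is a root of at most an $\F_q$-line of members.
\end{itemize}
Then $W$ is the common kernel of $U$.

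With $d=r-1$ and $\kappa\in\F_{q^\ell}^\times$, the space
\[
U_\kappa=\{\zeta X^{d}-\kappa^{d}\zeta^{q}:\ \zeta\in\F_{q^\ell}\}
\]
does this. Because $d\mid q-1$, a nonzero member has exactly the $d$ roots $\kappa\zeta^{(q-1)/d}\omega$ with $\omega^d=1$. These lie in the coset $\kappa G$, where $G$ is the subgroup of $\frac{q-1}{d}$-th powers in $\F_{q^\ell}^\times$, of order $(r-1)\frac{q^\ell-1}{q-1}$. Each $y\in\kappa G$ is a root for exactly an $\F_q$-line of $\zeta$'s, and $0$ and $\infty$ are never roots.

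Since $G$ has index $\frac{q-1}{r-1}\ge 2$, there are two disjoint cosets $G_1,G_2$. Take the evaluation set to contain $G_1\cup G_2$; this is exactly where $n\ge 2(r-1)\frac{q^\ell-1}{q-1}$ comes from. Repair nodes in $G_1$ with the space whose root set is $G_2$, and all other nodes with the one whose root set is $G_1$. Only two repair subspaces occur, and each helper meets at most one of them, in a line. You can therefore place a column of $H_j$ on that line, which gives $\gamma=\beta$.
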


For $r=\ell=2$, Theorem~\ref{thm:incidence-multiplicity-bound-attainable} shows that for every prime
power $q\ge 3$ and every integer $n$ with $2q+2\le n\le q^2+1$, there exists
an $(n,n-2,2)$ MDS array code $\mathcal{C}$ over $\mathbb{F}_q$ such that
\[
  \beta(\mathcal{C})=\gamma(\mathcal{C})=2n-q-3.
\]

In this section, we further expand the range of code lengths~$n$ for which the
incidence-multiplicity bound is attainable in the case $r=\ell=2$.

\subsection{Repair Bandwidth}

For any prime power \(q\), put
      \[
      \widetilde{N}_{\BW}(q)=\left\lceil\frac{4(q+1)}{3}\right\rceil=
      \begin{cases}
      \dfrac{4q}{3}+2,&q\equiv0\pmod3,\\[2mm]
      \dfrac{4q+2}{3}+1,&q\equiv1\pmod3,\\[2mm]
      \dfrac{4(q+1)}{3},&q\equiv2\pmod3.
      \end{cases}
      \]
      In this subsection we study the repair bandwidth in the range
      \[
        \widetilde{N}_{\BW}(q)\le n\le q^2+1.
      \]

\begin{lemma}\label{lem:cyclic-T}
  Let \(q\) be a prime power. Then there exists a matrix \(T\in\GL_2(\F_q)\)
  such that
  \begin{enumerate}[label=(\roman*)]
    \item the projective class \([T]\in\PGL_2(\F_q)\) has order \(q+1\);
    \item \([T-I]=[T]^2\);
    \item for every non-zero vector \(p_0\in\F_q^2\), the projective points
    \[
    \langle p_0\rangle,\langle Tp_0\rangle,\langle T^2p_0\rangle,\dots,\langle T^qp_0\rangle
    \]
    are pairwise distinct;
    \item \(T\) has irreducible characteristic polynomial over \(\F_q\). Hence the
\(\F_q\)-subalgebra of \(\F_q^{2\times2}\) generated by \(T\), i.e.,
\[
\F_q[T]:=\{f(T):f\in\F_q[x]\}\subseteq \F_q^{2\times2},
\]
is a field isomorphic to \(\F_{q^2}\).
  \end{enumerate}
  \end{lemma}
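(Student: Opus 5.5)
We need a matrix T with [T] of order q+1 and [T−I]=[T]^2, i.e. T−I = cT^2 for some c∈F_q^×. Equivalently T satisfies cT^2 − T + I = 0, so its minimal polynomial is x^2 − c^{-1}x + c^{-1}. The plan is to obtain T as a suitably chosen element of F_{q^2}, acting by multiplication on F_{q^2} ≅ F_q^2, exactly as the map A in Subsection~\ref{subsec:bw-short}.

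First choose η ∈ F_{q^2}^× whose image in F_{q^2}^×/F_q^× has order q+1, and try to rescale it: for λ∈F_q^×, the element θ=λη has the same projective order. We want θ^2 − (θ − 1)·(something) relation; concretely we need θ − 1 = cθ^2 with c∈F_q^×, i.e. (θ−1)/θ^2 ∈ F_q^×. Rather than rescaling, parametrize directly: we want θ ∉ F_q satisfying θ^2 − sθ + s = 0 with s=c^{-1}∈F_q^×, i.e. trace = norm = s. So the plan is: find s∈F_q^× such that x^2 − s x + s is irreducible and its root θ generates F_{q^2}^×/F_q^× (has projective order q+1). Equivalently, writing u=θ/θ^q (the image in the norm-one group μ_{q+1}, which is isomorphic to F_{q^2}^×/F_q^× via θ↦θ^{1−q}), we need u to be a generator of the cyclic group μ_{q+1}. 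Given a generator u of μ_{q+1} with u≠1, solve for θ with θ^{1−q}=u and Tr(θ)=N(θ): the set of θ with θ^{1-q}=u is a line λθ_0 (λ∈F_q^×), and Tr(λθ_0)=λTr(θ_0), N(λθ_0)=λ^2N(θ_0), so take λ=Tr(θ_0)/N(θ_0), which is valid provided Tr(θ_0)≠0. Tr(θ_0)=0 means θ_0^{q}=−θ_0, i.e. u=θ_0^{1−q}=−1, of order 2; since generators of μ_{q+1} have order q+1≥3, this is excluded. Then θ=λθ_0 has Tr=N=s≠0, and θ∉F_q since u≠1.

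Now let T be the matrix of multiplication by θ in an F_q-basis. Then (iv) holds since θ∉F_q gives irreducible characteristic polynomial x^2−sx+s and F_q[T]≅F_q(θ)=F_{q^2}. Relation T^2 = sT − sI gives T − I = s^{-1}T^2, proving (ii). For (i), [T^m] is trivial iff θ^m∈F_q iff u^m=1, so the order is exactly q+1. For (iii), ⟨T^ip_0⟩=⟨T^jp_0⟩ with i<j≤q means θ^{j−i} has eigenvector p_0 with eigenvalue in F_q, but in the field F_q[T] acting on F_q^2 every nonzero nonscalar element has no eigenvectors, so θ^{j−i} is a scalar, contradicting (i); this repeats the argument given for A earlier.

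The main obstacle is the existence step for s (simultaneously irreducible and projectively primitive with the trace=norm constraint); the norm-one parametrization above resolves it, and the only delicate point is excluding Tr(θ_0)=0, handled by u≠−1 when q+1>2.
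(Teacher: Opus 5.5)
Your proof is correct and is essentially the paper's argument: $T$ is multiplication by some $\theta\in\F_{q^2}\setminus\F_q$ with $\theta^{1-q}$ generating the norm-one group $\mu_{q+1}$ and $\Tr\theta=\operatorname{N}\theta$, and (i)--(iv) are then verified exactly as you do. The paper writes this element down explicitly as $\theta=\eta+1$ with $\eta$ of order $q+1$, so that $\theta^{1-q}=\eta$ and $\Tr\theta=\operatorname{N}\theta=\eta+\eta^{-1}+2$; this is precisely the element your rescaling by $\lambda=\Tr(\theta_0)/\operatorname{N}(\theta_0)$ produces, so the existence step you flag as the main obstacle is immediate (and, as a small aside, in characteristic $2$ the excluded case $\Tr\theta_0=0$ corresponds to $u=1$ rather than to an element of order $2$, though it is excluded all the same).
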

  
  \begin{proof}
  Choose an element \(\eta\in\F_{q^2}^{\times}\) of order \(q+1\), and set $\theta:=\eta+1$. Since \(\eta\notin\F_q\), we have \(\theta\notin\F_q\). Define
  \[
  c:=\frac{\theta^2}{\theta-1}
  =
  \frac{\theta^2}{\eta}
  \in\F_{q^2}^{\times}.
  \]
  We claim that \(c\in\F_q^\times\). Since \(\eta^q=\eta^{-1}\), one has
  \[
  \theta^q
  =
  \eta^q+1
  =
  \eta^{-1}+1
  =
  \frac{\eta+1}{\eta}
  =
  \frac{\theta}{\eta}.
  \]
  Therefore
  \[
  c^q
  =
  \left(\frac{\theta^2}{\eta}\right)^q
  =
  \frac{\theta^{2q}}{\eta^q}
  =
  \frac{(\theta/\eta)^2}{\eta^{-1}}
  =
  \frac{\theta^2}{\eta}
  =
  c.
  \]
  Thus \(c\in\F_q^\times\).
  
  Now $c(\theta-1)=\theta^2$, so $\theta^2-c\theta+c=0$. Let $T:\F_{q^2}\to\F_{q^2}$ be the \(\F_q\)-linear map given by multiplication by \(\theta\). After fixing
  an \(\F_q\)-linear identification \(\F_{q^2}\cong\F_q^2\), we regard this map
  as a matrix in \(\GL_2(\F_q)\). Then $T^2-cT+cI=0$. Equivalently, $T-I=c^{-1}T^2$. Passing to projective classes gives $[T-I]=[T]^2$.
  
  We next show that \([T]\) has order \(q+1\) in $\PGL_2(\F_q)$. Since
  \[
  \theta^{q+1}
  =
  \theta\theta^q
  =
  \theta\cdot\frac{\theta}{\eta}
  =
  \frac{\theta^2}{\eta}
  =
  c\in\F_q^\times,
  \]
  we have $[T]^{q+1}=1$ in $\PGL_2(\F_q)$. Conversely, if \([T]^m=1\), then \(\theta^m\in\F_q^\times\). Hence
  \[
  1
  =
  \frac{\theta^m}{(\theta^m)^q}
  =
  \left(\frac{\theta}{\theta^q}\right)^m
  =
  \left(\frac{\theta}{\theta/\eta}\right)^m
  =
  \eta^m.
  \]
  Since \(\eta\) has order \(q+1\), this implies
  \[
  (q+1)\mid m.
  \]
  Therefore \([T]\) has order exactly \(q+1\).
  
  We now prove the orbit statement. Let \( p_0\in\F_q^2\setminus\{0\}\), viewed as a
  non-zero element of \(\F_{q^2}\) under the fixed identification. Suppose that $\langle T^ip_0\rangle=\langle T^jp_0\rangle$ for some \(0\le i<j\le q\). Since \(T\) is multiplication by \(\theta\), this
  means that there exists \(\lambda\in\F_q^\times\) such that $\theta^j p_0=\lambda\theta^i p_0$. Since \(p_0\neq0\), we get $\theta^{j-i}=\lambda\in\F_q^\times$. Therefore $[T]^{j-i}=1$. But \(0<j-i<q+1\), contradicting the fact that \([T]\) has order \(q+1\).
  Hence the projective points
  \[
    \langle p_0\rangle,\langle Tp_0\rangle,\langle T^2p_0\rangle,\dots,\langle T^qp_0\rangle
  \]
  are pairwise distinct.
  
  Finally, because \(\theta\notin\F_q\), the characteristic polynomial of \(T\)
  is irreducible over \(\F_q\). Thus \(\F_q[T]\) is a field isomorphic to
  \(\F_{q^2}\).
  \end{proof}
  \begin{remark}\label{rem:FqT-projective-classes}
    By Lemma~\ref{lem:cyclic-T}(iv), the algebra \(\F_q[T]\) is a field
    isomorphic to \(\F_{q^2}\). Hence $\F_q[T]^\times/\F_q^\times$ has order
    \[
    \frac{q^2-1}{q-1}=q+1.
    \]
    This quotient is precisely the set of projective classes of non-zero elements
    of \(\F_q[T]\). Since \([T]\) has order \(q+1\) by
    Lemma~\ref{lem:cyclic-T}(i), it generates this quotient. Therefore, for every
    \(B\in\F_q[T]^\times\), there exists a unique
    \(i\in\mathbb Z/(q+1)\mathbb Z\) such that $[B]=[T]^i=\tau^i$.
    \end{remark}

We now fix a matrix $T$ satisfying the conditions of Lemma~\ref{lem:cyclic-T} and put $\tau=[T]$. Fix a projective point \(p_0\in\mathbb P^1(\F_q)\). By Lemma~\ref{lem:cyclic-T}(iii), the projective points
\[
p_0,\tau(p_0),\dots,\tau^q(p_0)
\]
are pairwise distinct and hence contain all points of \(\mathbb P^1(\F_q)\). We label these points by
\[
t\in\mathbb Z/(q+1)\mathbb Z,
\qquad
t\leftrightarrow \tau^t(p_0).
\]
With this labeling, the action of \(\tau\) is simply
\[
\tau(t)=t+1.
\]

We work in
\[
\mathbb V:=\F_q^2\oplus \F_q^2.
\]
For \(x,y\in\mathbb P^1(\F_q)\), choose non-zero representatives
\(p_x,p_y\in\F_q^2\), and define the following two-dimensional $\mathbb{F}_q$-subspaces
of \(\mathbb V\):
\[
L_{1}(x,y):=\Span_{\F_q}\{(p_x,0),(0,p_y)\},
\]
\[
L_{2}(x,y):=\Span_{\F_q}\{(p_x,0),(p_y,p_y)\},
\]
and
\[
L_{3}(x,y):=\Span_{\F_q}\{(0,p_x),(p_y,p_y)\},
\]
which are the template node subspaces. We also define four repair subspaces
\[
W_1:=\F_q^2\oplus0,\qquad
W_2:=0\oplus\F_q^2,\qquad
W_3:=\Gamma(I),\qquad
W_4:=\Gamma(T),
\]
where
\[
\Gamma(M):=\{(Mv,v):v\in\F_q^2\}.
\]
For brevity, we say that a repair subspace \(W\) \emph{hits} a node subspace \(L\) if $W\cap L\neq\{0\}$; otherwise, we say that \(W\) \emph{misses} \(L\).

\begin{lemma}\label{lem:graph-line-incidence}
  Let \(A\in\GL_2(\F_q)\) be such that \(A-I\in\GL_2(\F_q)\). For
  \(x,y\in\mathbb P^1(\F_q)\), the following equivalences hold:
  \[
  \Gamma(A)\cap L_{1}(x,y)\neq\{0\}
  \quad\Longleftrightarrow\quad
  [A](y)=x,
  \]
  \[
  \Gamma(A)\cap L_{2}(x,y)\neq\{0\}
  \quad\Longleftrightarrow\quad
  [A-I](y)=x,
  \]
  and
  \[
  \Gamma(A)\cap L_{3}(x,y)\neq\{0\}
  \quad\Longleftrightarrow\quad
  [A^{-1}-I](y)=x.
  \]
  \end{lemma}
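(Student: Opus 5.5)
Each equivalence reduces to a direct linear-algebra computation: take a general vector of $L_z(x,y)$, impose that it lies in $\Gamma(A)=\{(Av,v):v\in\F_q^2\}$, and read off when a non-zero solution exists. The key fact throughout is that $p_x,p_y$ are non-zero, and that a non-zero vector $(u,w)$ lies in $\Gamma(A)$ if and only if $u=Aw$. In particular, $w=0$ forces $u=0$, so every non-zero element of $\Gamma(A)$ has non-zero second component.

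For $L_1(x,y)$, a general element is $(\alpha p_x,\beta p_y)$. Membership in $\Gamma(A)$ means $\alpha p_x=\beta Ap_y$. If $\beta=0$ then $\alpha=0$, so a non-zero intersection forces $\beta\ne0$. Then $Ap_y$, which is non-zero since $A$ is invertible, is a multiple of $p_x$. This gives $\alpha\neq 0$ and $\langle Ap_y\rangle=\langle p_x\rangle$, i.e.\ $[A](y)=x$. Conversely, if $Ap_y=\lambda p_x$, the vector $(\lambda p_x,p_y)$ is a non-zero element of the intersection.

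For $L_2(x,y)$, a general element is $(\alpha p_x+\beta p_y,\beta p_y)$. The membership condition is $\alpha p_x+\beta p_y=\beta Ap_y$, that is, $\alpha p_x=\beta(A-I)p_y$. By the same argument, $\beta\neq0$ is forced, and the hypothesis $A-I\in\GL_2(\F_q)$ ensures $(A-I)p_y\neq0$. The condition therefore becomes $[A-I](y)=x$, and the converse is again witnessed explicitly.

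For $L_3(x,y)$, a general element is $(\beta p_y,\alpha p_x+\beta p_y)$. The condition is $\beta p_y=A(\alpha p_x+\beta p_y)$. Applying $A^{-1}$ gives $\alpha p_x=\beta(A^{-1}-I)p_y$, and again $\beta=0$ forces $\alpha=0$. The only extra point is invertibility of $A^{-1}-I=A^{-1}(I-A)$, which follows from the hypotheses. This yields $[A^{-1}-I](y)=x$.

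There is no real obstacle. The one subtlety is tracking which hypothesis keeps the relevant vector $(A-I)p_y$ or $(A^{-1}-I)p_y$ non-zero, so that the projective point is well defined. I would also note that the equivalence is independent of the choice of representatives $p_x,p_y$.
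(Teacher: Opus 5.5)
Your proof is correct and follows essentially the same route as the paper: you write a general vector of the intersection, reduce membership in $\Gamma(A)$ to a relation $\alpha p_x=\beta Mp_y$ with $M\in\{A,\,A-I,\,A^{-1}-I\}$, and use $p_x,p_y\neq0$ together with the invertibility of $M$ to force $\alpha,\beta\neq0$. The only cosmetic difference is in the $L_3$ case, where you apply $A^{-1}$ directly instead of introducing $w=Av$ as the paper does.
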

  
  \begin{proof}
  We first consider \(L_1\). By definition, $\Gamma(A)\cap L_1(x,y)\neq\{0\}$ if and only if there exist \(v\in\F_q^2\setminus\{0\}\) and \(a,b\in\F_q\) such that
  \[
  (Av,v)=a(p_x,0)+b(0,p_y)=(ap_x,bp_y).
  \]
  This is equivalent to $Av=ap_x$ and $v=bp_y$. Since \(v\neq0\), we have \(b\neq0\), and hence $v\in\langle p_y\rangle$. Since \(A\) is invertible, \(Av\neq0\), so \(a\neq0\), and therefore $Av\in\langle p_x\rangle$. Thus the above condition is equivalent to $[A](y)=x$. This proves the first equivalence.
  
  Next consider \(L_2\). By definition, $\Gamma(A)\cap L_2(x,y)\neq\{0\}$ if and only if there exist \(v\in\F_q^2\setminus\{0\}\) and \(a,b\in\F_q\) such that
  \[
  (Av,v)=a(p_x,0)+b(p_y,p_y)=(ap_x+bp_y,bp_y).
  \]
  This is equivalent to $Av=ap_x+bp_y$ and $v=bp_y$. Since \(v\neq0\), we have \(b\neq0\), so $v\in\langle p_y\rangle$. Subtracting the second equation from the first gives $(A-I)v=ap_x$. Because \(A-I\) is invertible and \(v\neq0\), we have \((A-I)v\neq0\), hence
  \(a\neq0\). Therefore $(A-I)v\in\langle p_x\rangle$. Thus the above condition is equivalent to $ [A-I](y)=x$. This proves the second equivalence.
  
  Finally consider \(L_3\). By definition, $\Gamma(A)\cap L_3(x,y)\neq\{0\}$ if and only if there exist \(v\in\F_q^2\setminus\{0\}\) and \(a,b\in\F_q\) such that
  \[
  (Av,v)=a(0,p_x)+b(p_y,p_y)=(bp_y,ap_x+bp_y).
  \]
  This is equivalent to $Av=bp_y$ and $v=ap_x+bp_y$. Since \(A\) is invertible and \(v\neq0\), we have \(Av\neq0\), so \(b\neq0\).
  Put $w:=Av$. Then $w\in\langle p_y\rangle$ and $v=A^{-1}w$. Using the second equation, we get $v-Av=ap_x$. Equivalently, $(A^{-1}-I)w=ap_x$. Since $A^{-1}-I=-A^{-1}(A-I)$ is invertible, and \(w\neq0\), we have \((A^{-1}-I)w\neq0\). Hence
  \(a\neq0\), and therefore $(A^{-1}-I)w\in\langle p_x\rangle$. Since \(w\in\langle p_y\rangle\), this is equivalent to $[A^{-1}-I](y)=x$. This proves the third equivalence.
  \end{proof}

\begin{lemma}\label{lem:standard-line-criteria}
The following intersection criteria hold.

\begin{enumerate}[label=(\alph*)]
\item Two subspaces of the same type satisfy
\[
L_{1}(x,y)\cap L_{1}(x',y')\neq\{0\}
\quad\Longleftrightarrow\quad
x=x'\ \text{or}\ y=y',
\]
and similarly for \(L_{2}\) and \(L_{3}\).

\item For mixed types, one has
\[
L_{1}(x,y)\cap L_{2}(x',y')\neq\{0\}
\quad\Longleftrightarrow\quad
x=x'\ \text{or}\ y=y',
\]
\[
L_{1}(x,y)\cap L_{3}(x',y')\neq\{0\}
\quad\Longleftrightarrow\quad
x=y'\ \text{or}\ y=x',
\]
and
\[
L_{2}(x,y)\cap L_{3}(x',y')\neq\{0\}
\quad\Longleftrightarrow\quad
x=x'\ \text{or}\ y=y'.
\]

\item The
incidence pattern between the repair subspaces \(W_i\) and the node subspaces
\(L_{1},L_{2},L_{3}\) is as follows:
\[
W_1 \text{ hits every } L_{1}(x,y),L_{2}(x,y),
\]
and
\[
W_1\text{ hits }L_{3}(x,y)
\quad\Longleftrightarrow\quad
x=y.
\]
Similarly,
\[
W_2 \text{ hits every } L_{1}(x,y),L_{3}(x,y),
\]
and
\[
W_2\text{ hits }L_{2}(x,y)
\quad\Longleftrightarrow\quad
x=y.
\]
Moreover,
\[
W_3\text{ hits every } L_{2}(x,y),L_{3}(x,y),
\]
and
\[
W_3\text{ hits }L_{1}(x,y)
\quad\Longleftrightarrow\quad
x=y.
\]
Finally,
\[
W_4\text{ hits }L_{1}(x,y)
\quad\Longleftrightarrow\quad
x=\tau(y)=y+1,
\]
\[
W_4\text{ hits }L_{2}(x,y)
\quad\Longleftrightarrow\quad
x=\tau^2(y)=y+2,
\]
and
\[
W_4\text{ hits }L_{3}(x,y)
\quad\Longleftrightarrow\quad
x=\tau(y)=y+1.
\]
\end{enumerate}
\end{lemma}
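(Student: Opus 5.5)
The plan is a direct linear-algebra verification in \(\mathbb V=\F_q^2\oplus\F_q^2\). Every statement reduces to deciding whether four explicit vectors, the generators of two 2-dimensional subspaces, are linearly dependent. Equivalently, we look for a non-zero vector lying in both spans. Throughout, we use that \(p_x,p_y\) are non-zero and that \(\langle p_x\rangle=\langle p_{x'}\rangle\) iff \(x=x'\).

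\textbf{Parts (a) and (b).} Write a general element of each subspace and equate.

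For \(L_1(x,y)\cap L_1(x',y')\), an equality \((ap_x,bp_y)=(a'p_{x'},b'p_{y'})\) with not all coefficients zero forces either a non-zero first component (so \(x=x'\)) or a non-zero second component (so \(y=y'\)). Conversely, \(x=x'\) gives the common vector \((p_x,0)\), and \(y=y'\) gives \((0,p_y)\).

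For \(L_2\), write an element as \((ap_x+bp_y,\,bp_y)\). Comparing second components, either \(b,b'\) are both zero or \(y=y'\). If \(b=b'=0\), a common non-zero vector forces \(x=x'\). If \(y=y'\), then \((p_y,p_y)\) itself is common. \(L_3\) is handled symmetrically.

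The mixed cases follow the same template:
\begin{itemize}
\item \(L_1\cap L_2\): the second components \(bp_y=b'p_{y'}\) give either \(y=y'\) (common vector: \((0,p_y)\) lies in \(L_1\), and in \(L_2\) as \((p_y,p_y)-(p_y,0)\)), or \(b=b'=0\), which forces \(x=x'\).
\item \(L_1\cap L_3\): \(L_3(x',y')\) contains \((p_{y'},0)\) as \((p_{y'},p_{y'})-(0,p_{y'})\) only if \(x'=y'\). More generally, one compares first components \(ap_x=b'p_{y'}\) and second components \(bp_y=a'p_{x'}+b'p_{y'}\), splitting on whether \(b'=0\), which yields \(y=x'\) or \(x=y'\).
\item \(L_2\cap L_3\): handled the same way.
\end{itemize}

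\textbf{Part (c).} For \(W_1\) and \(W_2\), we project to the second or first coordinate and ask whether the projection of the node subspace is 1-dimensional, i.e.\ whether the projection kills a non-zero vector.
\begin{itemize}
\item \(W_1=\F_q^2\oplus 0\) contains \((p_x,0)\), so it hits \(L_1\) and \(L_2\). An element \((bp_y,\,ap_x+bp_y)\) of \(L_3\) has zero second coordinate with \(b\ne 0\) only if \(p_x\parallel p_y\), i.e.\ \(x=y\).
\item \(W_2\) is the symmetric case.
\end{itemize}

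For \(W_3=\Gamma(I)\) and \(W_4=\Gamma(T)\), we cannot apply Lemma~\ref{lem:graph-line-incidence} to \(A=I\) because \(I-I\) is singular, so \(W_3\) is computed directly:
\begin{itemize}
\item \((p_y,p_y)\in W_3\), so \(W_3\) hits \(L_2\) and \(L_3\).
\item For \(L_1\), the condition \((ap_x,bp_y)\in\Gamma(I)\) forces \(ap_x=bp_y\ne 0\), i.e.\ \(x=y\).
\end{itemize}
For \(W_4\) we apply Lemma~\ref{lem:graph-line-incidence} with \(A=T\). This requires \(T-I\) to be invertible, which holds because \([T-I]=[T]^2\) by Lemma~\ref{lem:cyclic-T}(ii). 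The lemma then gives:
\begin{itemize}
\item \(L_1\): hit iff \(x=\tau(y)=y+1\).
\item \(L_2\): hit iff \(x=[T-I](y)=\tau^2(y)=y+2\).
\item \(L_3\): hit iff \(x=[T^{-1}-I](y)\). Since \(T^{-1}-I=-T^{-1}(T-I)\), we get \([T^{-1}-I]=\tau^{-1}\tau^2=\tau\), so the condition is \(x=y+1\).
\end{itemize}

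\textbf{Main obstacle.} The only delicate points are bookkeeping: the \(L_1\cap L_3\) case, where coordinates cross, and making sure no degenerate coefficient sub-case is missed. I expect to handle each by splitting on which of the coefficients \(a,b,a',b'\) vanish.
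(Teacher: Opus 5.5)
Your route is the paper's: you solve the two-component linear equations directly for (a), (b) and for \(W_1,W_2,W_3\), and you apply Lemma~\ref{lem:graph-line-incidence} with \(A=T\) for \(W_4\), using \([T-I]=\tau^2\) and \([T^{-1}-I]=\tau^{-1}\tau^2=\tau\). The necessity directions you write out, and all of part (c), are fine.

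One step, however, is false as written. For \(L_1(x,y)\cap L_2(x',y)\) you use the witness \((0,p_y)=(p_y,p_y)-(p_y,0)\in L_2(x',y)\). But \((p_y,0)\) lies in \(L_2(x',y)=\Span\{(p_{x'},0),(p_y,p_y)\}\) only when \(x'=y\), so in general \((0,p_y)\notin L_2(x',y)\). Two other sufficiency directions are left unaddressed: \(L_1(x,y)\cap L_3(x',y')\) with \(x=y'\), and \(L_2(x,y)\cap L_3(x',y')\) with \(x=x'\). In these cases, in general, no generator of either subspace lies in the other, so ``the same way'' does not produce a common vector.

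All three gaps close with one dimension count: two \(2\)-dimensional subspaces of a \(3\)-dimensional space meet non-trivially.
\begin{itemize}
\item If \(y=y'\), both \(L_1(x,y)\) and \(L_2(x',y)\) lie in \(\F_q^2\oplus\langle p_y\rangle\).
\item If \(x=y'\), both \(L_1(x,y)\) and \(L_3(x',x)\) lie in \(\langle p_x\rangle\oplus\F_q^2\).
\item If \(x=x'\), both \(L_2(x,y)\) and \(L_3(x,y')\) lie in \((\langle p_x\rangle\oplus\langle p_x\rangle)+W_3\). This space is \(3\)-dimensional because the two summands meet in \(\langle(p_x,p_x)\rangle\).
\end{itemize}
For \(L_2\cap L_3\) there is also an equivalent computation. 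Subtracting the two component equations gives \(ap_x=-a'p_{x'}\), which settles necessity. When \(x=x'\), the system collapses to the single equation \(ap_x+bp_y=b'p_{y'}\) in \(\F_q^2\), which always has a non-trivial solution, and that solution yields the non-zero common vector \((b'p_{y'},bp_y)\).

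With these witnesses supplied, your argument is complete.
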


\begin{proof}
  The intersection criteria among the node subspaces are obtained by solving the
  corresponding two-component linear equations in
  \(\F_q^2\oplus\F_q^2\). The incidence statements for
  \(W_1,W_2,W_3\) are immediate from the definitions. 
  
  It remains only to explain the incidence with $W_4=\Gamma(T)$. By Lemma~\ref{lem:graph-line-incidence}, we have, for instance,
  \[
  W_4\cap L_2(x,y)\neq\{0\}
  \quad\Longleftrightarrow\quad
  [T-I](y)=x.
  \]
  By Lemma~\ref{lem:cyclic-T}(ii), one has $[T-I]=[T]^2=\tau^2$. Therefore
  \[
  W_4\cap L_2(x,y)\neq\{0\}
  \quad\Longleftrightarrow\quad
  x=\tau^2(y)=y+2.
  \]
  The cases \(W_4\cap L_1(x,y)\) and \(W_4\cap L_3(x,y)\) are obtained in the
  same way from Lemma~\ref{lem:graph-line-incidence}, using respectively $[T]=\tau$ and
  \[
  [T^{-1}-I]
  =
  [T^{-1}]\,[T-I]
  =
  [T]^{-1}[T]^2
  =
  [T]
  =
  \tau.
  \]
  This proves the lemma.
  \end{proof}

Having fixed the cyclic input and the standard incidence criteria, we now give
the endpoint skeletons for the three residue classes of \(q+1\) modulo \(3\).

\begin{lemma}\label{lem:long-bw-endpoints}
Let \(q\) be a prime power. Write $q+1=3m+s$ with $m\in\mathbb{N}_+$ and $s\in\{0,1,2\}$. Then there exists a family \(\Omega\) of pairwise skew two-dimensional
subspaces of \(\mathbb V\) of size
\[
|\Omega|
=
\widetilde{N}_{\BW}(q)
=
\begin{cases}
4m,&s=0,\\
4m+2,&s=1,\\
4m+3,&s=2,
\end{cases}
\]
together with a designated repair subspace \(W(L)\in\{W_1,W_2,W_3,W_4\}\)
for every \(L\in\Omega\), such that $W(L)\cap L=\{0\}$ and \(W(L)\) hits exactly \(q+1\) members of
\(\Omega\setminus\{L\}\).
\end{lemma}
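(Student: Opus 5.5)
The plan is an explicit construction built from the template subspaces $L_1,L_2,L_3$, with indices in $\mathbb Z/(q+1)\mathbb Z$ under the labeling $t\leftrightarrow\tau^t(p_0)$. All verification reduces to the combinatorial criteria of Lemma~\ref{lem:standard-line-criteria}. I will take $\Omega=F_1\sqcup F_2\sqcup F_3$, where $F_z$ consists of subspaces $L_z(x,y)$. Each $F_z$ will be the graph of a partial map $y\mapsto y+c$ restricted to a union of residue classes modulo $3$, so that each pair $(x,y)$ is determined by $y$ and a small shift $c$.

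The repair subspaces are assigned as follows, as dictated by Lemma~\ref{lem:standard-line-criteria}(c):
\begin{itemize}
\item an off-diagonal $L_3(x,y)$ (with $x\neq y$) gets $W_1$;
\item an off-diagonal $L_2(x,y)$ gets $W_2$;
\item an off-diagonal $L_1(x,y)$ gets $W_3$;
\item diagonal members, or members whose shift avoids the $W_4$-incidences $x=y+1$ (for $L_1,L_3$) and $x=y+2$ (for $L_2$), get $W_4$.
\end{itemize}
With this assignment, $W(L)\cap L=\{0\}$ is immediate. I would choose the shifts from $\{0,1,2\}$ so that the number of diagonal members and the number of members on the lines $x=y+1$, $x=y+2$ are controlled.

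\textbf{Step 1 (skewness).} By Lemma~\ref{lem:standard-line-criteria}(a),(b), pairwise skewness means the following:
\begin{itemize}
\item inside each $F_z$, first coordinates are distinct and second coordinates are distinct;
\item between $F_1$ and $F_2$, and between $F_2$ and $F_3$, the first coordinates are distinct and the second coordinates are distinct;
\item between $F_1$ and $F_3$ the condition is twisted: $x\neq y'$ and $y\neq x'$.
\end{itemize}
In particular $|F_1|+|F_2|\le q+1$ and $|F_2|+|F_3|\le q+1$, while $F_1$ and $F_3$ interact only through the crossed condition. This suggests taking $F_2$ small and $F_1,F_3$ large. Concretely, I would take sizes roughly $(m+\ast,\,m,\,2m+\ast)$ or a symmetric variant, with the $y$-coordinates of $F_1$ and $F_2$ partitioning residue classes, and the $x$-coordinates of $F_3$ chosen to avoid the $y$-coordinates of $F_1$ (and vice versa) via the mod-$3$ structure. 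Each pair of classes is then checked on differences of indices, exactly as in Lemma~\ref{lem:endpoint-short}.

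\textbf{Step 2 (hit counts).} For each $L$, count the members of $\Omega\setminus\{L\}$ hit by $W(L)$, using Lemma~\ref{lem:standard-line-criteria}(c):
\begin{itemize}
\item $W_1$ hits $|F_1|+|F_2|+\#\{\text{diagonal members of }F_3\}$;
\item $W_2$ hits $|F_1|+|F_3|+\#\{\text{diagonal members of }F_2\}$;
\item $W_3$ hits $|F_2|+|F_3|+\#\{\text{diagonal members of }F_1\}$;
\item $W_4$ hits the members lying on its three shifted lines.
\end{itemize}
Each of these totals must be exactly $q+1$. Since the full available ranges have size $q+1$, I expect to meet the $W_1,W_2,W_3$ counts by filling out residue classes, with diagonal members padding any deficit. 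The $W_4$ count requires arranging that exactly $q+1$ members lie on the lines $x=y+1$ (types $1,3$) or $x=y+2$ (type $2$), which is what the shift choices $c\in\{1,2\}$ are for.

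\textbf{Step 3 (case analysis).} I would split into the three cases $s=0,1,2$. For $s=0$ the residue classes tile $\mathbb Z/(q+1)\mathbb Z$ exactly. For $s=1,2$ one or two leftover indices have to be absorbed by extra members, which accounts for the sizes $4m+2$ and $4m+3$.

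The main obstacle will be the simultaneous balancing in Step 2. The twisted $F_1$–$F_3$ condition, the requirement that every $W_4$-repaired node see exactly $q+1$ hits, and exact equality (not merely $\le q+1$) must all hold together. I would first find a working configuration for $s=0$ by hand with small $m$, and then adapt the boundary indices for $s=1,2$.
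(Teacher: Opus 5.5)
Your reduction is correct, and it is the same bookkeeping the paper uses. Lemma~\ref{lem:standard-line-criteria} turns skewness into disjointness of coordinate sets, and your repair assignments are the forced ones. Writing $d_z$ for the number of diagonal members of $F_z$, the hit counts are $h_1=|F_1|+|F_2|+d_3$, $h_2=|F_1|+|F_3|+d_2$ and $h_3=|F_2|+|F_3|+d_1$. The count $h_4$ is the number of members on the lines $x=y+1$ (types $1,3$) and $x=y+2$ (type $2$).

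The gap is that the lemma is an existence statement whose whole content is an explicit family, and you never exhibit one. Steps 1--3 only describe what you would try. The $s=0$ configuration is left to a search ``by hand with small $m$''. The cases $s=1,2$ are deferred to an unspecified ``adaptation of boundary indices''. These are exactly the cases where residues modulo $3$ do not tile $\mathbb Z/(q+1)\mathbb Z$, so every equality $h_z=q+1$ must survive the wrap-around. As it stands this is a plan rather than a proof.

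The heuristic meant to guide the search is also wrong. The crossed condition is not weaker than the others. Requiring $x\neq y'$ for all pairs says that the first coordinates of $F_1$ (an $|F_1|$-element set) avoid the second coordinates of $F_3$ (an $|F_3|$-element set). Hence $|F_1|+|F_3|\le q+1$ as well. With all three pairwise sums at most $q+1$, each $|F_z|\ge|\Omega|-(q+1)$, and this equals $m+1$ when $s\in\{1,2\}$. So any shape with $|F_2|=m$, such as your $(m+\ast,m,2m+\ast)$ or its mirror, is infeasible there. For $s=0$ the shape $(m,m,2m)$ can in fact be realized.

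The paper takes $F_1$ large and makes $|F_1|+|F_2|$ and $|F_1|+|F_3|$ both equal to $q+1$. For $s=0$, with $A_i$ the residue class of $i$ modulo $3$, it uses:
\begin{itemize}
\item $F_1=\{L_1(a+1,a)\}_{a\in A_1}\cup\{L_1(c,c)\}_{c\in A_0}$, repaired by $W_3$ and $W_4$ respectively;
\item $F_2=\{L_2(b+2,b)\}_{b\in A_2}$, repaired by $W_2$;
\item $F_3=\{L_3(a+1,a)\}_{a\in A_1}$, repaired by $W_1$.
\end{itemize}
For $s=1,2$ it replaces the residue classes by explicit sets $S,C,D$. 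This leaves a deficit of $2$ (resp.\ $1$) in both $h_1$ and $h_3$. The paper closes it with the extra members $L_2(1,0)$ and $L_2(3m,3m-1)$ (resp.\ $L_2(3m,3m-1)$). A type-$2$ member of shift $1$ is missed by both $W_2$ and $W_4$ but hit by $W_1$ and $W_3$. It therefore raises $h_1$ and $h_3$ without disturbing $h_2$ or $h_4$. This neutral padding, together with the explicit index sets and their disjointness checks, is what your proposal is missing.
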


\begin{proof}
We give the construction in the three cases \(s=0,1,2\). Note that we adopt the convention that a set of the form $\{a_i:0\le i\le -1\}$ is understood to be empty.

\medskip
\noindent
\textbf{Case 1: \(q+1=3m\).}
Define residue classes in \(\mathbb Z/(q+1)\mathbb Z\) by
\[
A_i:=\{i+3a:0\le a\le m-1\},
\qquad i=0,1,2.
\]
Each has size \(m\). Define
\[
\Omega_1:=\{L_{3}(a+1,a):a\in A_1\},
\]
\[
\Omega_2:=\{L_{2}(b+2,b):b\in A_2\},
\]
\[
\Omega_3:=\{L_{1}(a+1,a):a\in A_1\},
\]
and
\[
\Omega_4:=\{L_{1}(c,c):c\in A_0\}.
\]
Set $\Omega:=\Omega_1\cup\Omega_2\cup\Omega_3\cup\Omega_4$. Then the disjointness of the residue classes \(A_0,A_1,A_2\), together with
Lemma~\ref{lem:standard-line-criteria}(a)(b), shows that $|\Omega|=4m$ and that all subspaces in \(\Omega\)
are pairwise skew.

The intersection criteria from Lemma~\ref{lem:standard-line-criteria}(c) give
\[
W_1\text{ hits all subspaces in }\Omega_2\cup\Omega_3\cup\Omega_4\text{ and misses all subspaces in }\Omega_1,
\]
\[
W_2\text{ hits all subspaces in }\Omega_1\cup\Omega_3\cup\Omega_4\text{ and misses all subspaces in }\Omega_2,
\]
\[
W_3\text{ hits all subspaces in }\Omega_1\cup\Omega_2\cup\Omega_4\text{ and misses all subspaces in }\Omega_3,
\]
and
\[
W_4\text{ hits all subspaces in }\Omega_1\cup\Omega_2\cup\Omega_3\text{ and misses all subspaces in }\Omega_4.
\]
Each $W_i$ ($i\in[4]$) therefore hits exactly \(3m=q+1\) subspaces in $\Omega$. We designate
\[
\Omega_1\leadsto W_1,\qquad
\Omega_2\leadsto W_2,\qquad
\Omega_3\leadsto W_3,\qquad
\Omega_4\leadsto W_4.
\]

\medskip
\noindent
\textbf{Case 2: \(q+1=3m+1\).}
Define
\[
  S:=\{1+3a:0\le a\le m-1\}\cup\{3m\},
\]
\[
C:=\{3+3a:0\le a\le m-2\},
\]
and
\[
D:=\{2+3a:0\le a\le m-2\}.
\]
Then $|S|=m+1$, $|C|=|D|=m-1$ and $\mathbb Z/(q+1)\mathbb Z=S\sqcup C\sqcup D\sqcup\{0,3m-1\}$. Moreover, \(S\cap(S+1)=\varnothing\). Define
\[
\Omega_1:=\{L_{3}(a+1,a):a\in S\},
\]
\[
\Omega_2:=\{L_{2}(d+2,d):d\in D\},
\]
\[
\Omega_3:=\{L_{1}(a+1,a):a\in S\},
\]
and
\[
\Omega_4:=\{L_{1}(c,c):c\in C\}.
\]
Finally define two additional subspaces
\[
R_-:=L_{2}(1,0),
\qquad
R_+:=L_{2}(3m,3m-1).
\]
Set $\Omega:=\Omega_1\cup\Omega_2\cup\Omega_3\cup\Omega_4
\cup\{R_-,R_+\}$. Then
\[
|\Omega|
=
(m+1)+(m-1)+(m+1)+(m-1)+2
=
4m+2.
\]
Using Lemma~\ref{lem:standard-line-criteria}, the pairwise skewness reduces to
the elementary disjointness relations
\[
S\cap(S+1)=\varnothing,\qquad
C\cap S=C\cap(S+1)=\varnothing,
\]
\[
D\cap S=(D+1)\cap S=D\cap C=(D+2)\cap C=\varnothing,
\]
together with the defining choices of \(R_-\) and \(R_+\). Thus the subspaces
in \(\Omega\) are pairwise skew.

The intersection criteria from Lemma~\ref{lem:standard-line-criteria}(c) give
\[
W_1\text{ hits all subspaces in }\Omega_2\cup\Omega_3\cup\Omega_4\cup\{R_-,R_+\}
\text{ and misses all subspaces in }\Omega_1,
\]
\[
W_2\text{ hits all subspaces in }\Omega_1\cup\Omega_3\cup\Omega_4
\text{ and misses all subspaces in }\Omega_2\cup\{R_-,R_+\},
\]
\[
W_3\text{ hits all subspaces in }\Omega_1\cup\Omega_2\cup\Omega_4\cup\{R_-,R_+\}
\text{ and misses all subspaces in }\Omega_3,
\]
and
\[
W_4\text{ hits all subspaces in }\Omega_1\cup\Omega_2\cup\Omega_3
\text{ and misses all subspaces in }\Omega_4\cup\{R_-,R_+\}.
\]
The corresponding hit counts are all equal to $3m+1=q+1$. We designate
\[
\Omega_1\leadsto W_1,\qquad
\Omega_2\leadsto W_2,\qquad
\Omega_3\leadsto W_3,\qquad
\Omega_4\leadsto W_4,
\]
and
\[
R_-\leadsto W_2,\qquad R_+\leadsto W_2.
\]

\medskip
\noindent
\textbf{Case 3: \(q+1=3m+2\).}
Define
\[
S:=\{1+3a:0\le a\le m-1\}\cup\{3m\},
\]
\[
C:=\{3a:0\le a\le m-1\},
\]
and
\[
D:=\{2+3a:0\le a\le m-2\}\cup\{3m+1\}.
\]
Then $|S|=m+1$, $|C|=|D|=m$ and $\mathbb Z/(q+1)\mathbb Z=S\sqcup C\sqcup D\sqcup\{3m-1\}$. Define
\[
\Omega_1:=\{L_{3}(a+1,a):a\in S\},
\]
\[
\Omega_2:=\{L_{2}(d+2,d):d\in D\},
\]
\[
\Omega_3:=\{L_{1}(a+1,a):a\in S\},
\]
and
\[
\Omega_4:=\{L_{1}(c,c):c\in C\}.
\]
Finally define one additional subspace
\[
R_0:=L_{2}(3m,3m-1).
\]
Set $\Omega:=\Omega_1\cup\Omega_2\cup\Omega_3\cup\Omega_4\cup\{R_0\}$. Then
\[
|\Omega|
=
(m+1)+m+(m+1)+m+1
=
4m+3.
\]
Again, Lemma~\ref{lem:standard-line-criteria} reduces pairwise skewness to the
same elementary disjointness checks among \(S,C,D\) and their shifts; the
choice of \(R_0\) is disjoint from all four families. Hence \(\Omega\) is
pairwise skew.

The intersection criteria from Lemma~\ref{lem:standard-line-criteria}(c) give
\[
W_1\text{ hits all subspaces in }\Omega_2\cup\Omega_3\cup\Omega_4\cup\{R_0\}
\text{ and misses all subspaces in }\Omega_1,
\]
\[
W_2\text{ hits all subspaces in }\Omega_1\cup\Omega_3\cup\Omega_4
\text{ and misses all subspaces in }\Omega_2\cup\{R_0\},
\]
\[
W_3\text{ hits all subspaces in }\Omega_1\cup\Omega_2\cup\Omega_4\cup\{R_0\}
\text{ and misses all subspaces in }\Omega_3,
\]
and
\[
W_4\text{ hits all subspaces in }\Omega_1\cup\Omega_2\cup\Omega_3
\text{ and misses all subspaces in }\Omega_4\cup\{R_0\}.
\]
Each $W_i$ ($i\in[4]$) hits exactly $3m+2=q+1$ subspaces in $\Omega$. We designate
\[
\Omega_1\leadsto W_1,\qquad
\Omega_2\leadsto W_2,\qquad
\Omega_3\leadsto W_3,\qquad
\Omega_4\leadsto W_4,\qquad R_0\leadsto W_2.
\]

In all three cases, the constructed family \(\Omega\) has the required size and
is pairwise skew. Moreover, each node subspace \(L\in\Omega\) is assigned
a feasible repair subspace \(W(L)\), and this repair subspace hits
exactly \(q+1\) helper subspaces.
\end{proof}

\begin{remark}
  Note that all node subspaces in the endpoint skeletons are of the restricted forms
  \[
  L_1(y,y),\quad L_1(y+1,y),\quad L_2(y+2,y),\quad
  L_2(y+1,y),\quad L_3(y+1,y).
  \]
  Thus the hit/miss assertions in Lemma~\ref{lem:long-bw-endpoints} and the bad pair analysis in Lemma~\ref{lem:long-bw-extension} are only applied to these
selected shifts, not to arbitrary subspaces of type \(L_i(x,y)\).
  \end{remark}

\begin{lemma}\label{lem:long-bw-extension}
Let \(q\ge5\) be a prime power, and write $q+1=3m+s$ with $s\in\{0,1,2\}$. Let \(\Omega\) be the endpoint skeleton from
Lemma~\ref{lem:long-bw-endpoints}. Then for every integer $t$ with $0\le t\le 2q+2-|\Omega|$, there exists a family \(\Omega_t\) of pairwise skew two-dimensional subspaces
of \(\mathbb V\), containing \(\Omega\), with $|\Omega_t|=|\Omega|+t$, such that every node subspace \(L\in\Omega_t\) has a designated repair
subspace \(W(L)\) satisfying $W(L)\cap L=\{0\}$ and hitting exactly \(q+1\) helper subspaces.
\end{lemma}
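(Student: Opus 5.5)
The plan is to reduce the lemma to a pure partial-spread extension problem, using the following observation. Since the subspaces of \(\Omega\) are pairwise skew, every projective point of \(\mathbb V\) lies in at most one of them. A repair subspace \(W\) has exactly \(|\mathbb P(W)|=q+1\) points. So "\(W(L)\) hits exactly \(q+1\) helper subspaces" means that every point of \(\mathbb P(W(L))\) lies in some node of \(\Omega\) other than \(L\); in other words, \(W(L)\) is completely covered. Now let \(L'\) be any new two-dimensional subspace that is skew to every member of \(\Omega\). Then \(L'\) cannot contain a point of \(\mathbb P(W_i)\) for any \(W_i\) that is already covered by \(\Omega\), since that point already lies in an old node. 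Hence adding \(L'\) changes no hit count of the designated repair subspaces \(W_1,\dots,W_4\) (all four are used in every case of Lemma~\ref{lem:long-bw-endpoints}). Moreover, \(W_1\) misses \(L'\) and still hits exactly \(q+1\) helpers, so we may simply designate \(W(L'):=W_1\). It therefore suffices to find \(2q+2-|\Omega|\) further subspaces that are pairwise skew and skew to all of \(\Omega\); any subfamily of size \(t\) then gives \(\Omega_t\).

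To build these subspaces, I will use the Desarguesian spread attached to the field \(\F_q[T]\cong\F_{q^2}\) from Lemma~\ref{lem:cyclic-T}(iv), namely \(\{W_1,W_2\}\cup\{\Gamma(B):B\in\F_q[T]^\times\}\). Two graphs \(\Gamma(B),\Gamma(B')\) are skew precisely when \(B-B'\) is invertible, and in a field this holds whenever \(B\neq B'\). So any set of graphs \(\Gamma(B)\) with distinct \(B\) is automatically pairwise skew. I will take \(B\notin\{I,T\}\), so that \(W_3=\Gamma(I)\) and \(W_4=\Gamma(T)\) are avoided. For such \(B\), the matrix \(B-I\) is invertible, so Lemma~\ref{lem:graph-line-incidence} applies. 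It says that \(\Gamma(B)\) meets \(L_1(x,y)\), \(L_2(x,y)\), \(L_3(x,y)\) only when \(x-y\) equals the shift of \([B]\), of \([B-I]\), or of \([B^{-1}-I]\), respectively (shifts in the sense of Remark~\ref{rem:FqT-projective-classes}, since all of these lie in \(\F_q[T]^\times\)).

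By the remark following Lemma~\ref{lem:long-bw-endpoints}, the old nodes only use the following shifts: \(0\) and \(1\) for type \(L_1\), \(1\) and \(2\) for type \(L_2\), and \(1\) for type \(L_3\). So it suffices to choose \(B\) satisfying three conditions: \([B]=\tau^i\) with \(i\notin\{0,1\}\), \([B-I]\notin\{\tau,\tau^2\}\), and \([B^{-1}-I]\neq\tau\). The counting goes class by class. Fix a class \(\tau^i\) with \(i\notin\{0,1\}\) and a representative \(B_0\). Its \(q-1\) elements are \(\lambda B_0\) with \(\lambda\in\F_q^\times\). The class of \(\lambda B_0-I\), which is proportional to \(B_0-\lambda^{-1}I\), is different for different \(\lambda\), because \(B_0\notin\F_q I\). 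The same holds for \(\lambda^{-1}B_0^{-1}-I\). Hence each of the three forbidden conditions excludes at most one \(\lambda\), leaving at least \(q-4\) good elements in each of the \(q-1\) admissible classes. This gives at least \((q-1)(q-4)\) admissible graphs. They must be compared with the number needed, \(2q+2-|\Omega|\le 2q+2-\lceil 4(q+1)/3\rceil\le\lceil(2q+2)/3\rceil\).

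The main obstacle is this final count for the smallest \(q\), where it is tight. At \(q=5\) we need \(2\cdot 5+2-8=4\) new subspaces and have \((q-1)(q-4)=4\) candidates, so the count just barely suffices; for \(q\ge 7\) it is generous. I will double-check the \(q=5\) case by hand. I will also check carefully that the excluded shifts exactly match the shifts actually occurring among the \(L_j\) in the three cases of Lemma~\ref{lem:long-bw-endpoints}, including the extra nodes \(R_\pm\) and \(R_0\). If a shift had been overlooked, the fallback is to exploit that each old node blocks only one specific pair \((x,y)\), rather than a whole shift class, which frees more candidates.
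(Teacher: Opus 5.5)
Your proof is correct and follows essentially the paper's route: graphs $\Gamma(B)$ with $B\in\F_q[T]^\times$ are pairwise skew because $\F_q[T]$ is a field, Lemma~\ref{lem:graph-line-incidence} reduces incidence with the old nodes to shift conditions, and each such condition excludes at most one scalar multiple per class. The paper differs only in that it uses just the classes $\tau^2,\tau^3$, which gives $2(q-3)$ or $2(q-4)$ candidates (also tight only at $q=5$), and it checks skewness to $W_1,\dots,W_4$ directly, whereas you obtain this from the observation that each $\mathbb P(W_i)$ is already covered by $\Omega$. Your count is already rigorous at $q=5$, so no hand check is needed, and the fallback would not help in any case, since for $B\in\F_q[T]$ whether $\Gamma(B)$ meets $L_j(x,y)$ depends only on $x-y$.
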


\begin{proof}
For $d\in\{2,3\}$ and $\lambda\in\F_q^\times$, define $A_{d,\lambda}:=\lambda T^d$ and $R_{d,\lambda}:=\Gamma(A_{d,\lambda})$. We claim that if $(d,\lambda)\ne (d',\lambda')$, then $\lambda T^d-\lambda' T^{d'}$ is a non-zero element of \(\F_q[T]\). Indeed, if \(\lambda T^d=\lambda'T^{d'}\), then in \(\PGL_2(\F_q)\) we would
have $[T]^{d-d'}=1$. Since \(d,d'\in\{2,3\}\) and \([T]\) has order \(q+1\), this forces
\(d=d'\), and then \(\lambda=\lambda'\). Thus, if \((d,\lambda)\neq(d',\lambda')\), then the matrix \(\lambda T^d-\lambda'T^{d'}\) is non-zero. By Lemma~\ref{lem:cyclic-T}(iv), $\mathbb{F}_q[T]$ is a field. Hence if \((d,\lambda)\neq(d',\lambda')\), then \(\lambda T^d-\lambda'T^{d'}\) is invertible.

We then show that the subspaces \(R_{d,\lambda}\) are pairwise skew. Indeed, suppose that $R_{d,\lambda}\cap R_{d',\lambda'}\neq\{0\}$ with $(d,\lambda)\neq(d',\lambda')$. Then there exists a non-zero vector \(v\in\F_q^2\) such that $(\lambda T^d(v),v)=(\lambda' T^{d'}v,v)$, which implies that $(\lambda T^d-\lambda' T^{d'})v=0$. Since $\lambda T^d-\lambda' T^{d'}$ is invertible, one has $v=0$, a contradiction. Therefore $R_{d,\lambda}\cap R_{d',\lambda'}=\{0\}$ whenever \((d,\lambda)\neq(d',\lambda')\).

Each \(R_{d,\lambda}\) is skew to \(W_1\) and \(W_2\), because
\(\lambda T^d\) is invertible. It is also skew to \(W_3=\Gamma(I)\), because
\(\lambda T^d-I\in\F_q[T]\) is singular only if it is zero, which would imply
\(\tau^d=1\) in \(\PGL_2(\F_q)\). This is impossible for \(d=2,3\) because the order of $\tau$ in $\PGL_2(\F_q)$ is $q+1$. Similarly,
\(R_{d,\lambda}\) is skew to \(W_4=\Gamma(T)\), because
\(\lambda T^d-T=0\) would imply \(\tau^{d-1}=1\) in \(\PGL_2(\F_q)\), also impossible.

We call a pair \((d,\lambda)\), with \(d\in\{2,3\}\) and
\(\lambda\in\F_q^\times\), \emph{bad} if $R_{d,\lambda}\cap L\neq\{0\}$ for some node subspace \(L\) in the endpoint skeleton. Otherwise we
call it \emph{good}. Fix \(d\in\{2,3\}\). We now bound the number of bad values
of \(\lambda\). 

First consider the selected \(L_1\)-type node subspaces. By
Lemma~\ref{lem:graph-line-incidence},
\[
\Gamma(A_{d,\lambda})\cap L_1(x,y)\neq\{0\}
\quad\Longleftrightarrow\quad
[A_{d,\lambda}](y)=x.
\]
Since $[A_{d,\lambda}]=[T]^d=\tau^d$, this can occur only for \(L_1\)-type nodes of the form $L_1(y+d,y)$. However, in the endpoint skeletons, the selected \(L_1\)-type node subspaces have only $L_1(y,y)$ or $L_1(y+1,y)$. Since \(d\in\{2,3\}\), no selected \(L_1\)-type node is met by
\(R_{d,\lambda}\).

It remains to consider the selected \(L_2\)- and \(L_3\)-type node subspaces. Again by Lemma~\ref{lem:graph-line-incidence}, a selected node subspace of the form $L_2(y+e,y)$ is met by \(R_{d,\lambda}\) if and only if
\[
[A_{d,\lambda}-I](y)=y+e=\tau^e(y).
\]
Since \(A_{d,\lambda}-I\in\F_q[T]\) and \(A_{d,\lambda}-I\) is invertible
(as shown when proving that \(R_{d,\lambda}\) is skew to \(W_3=\Gamma(I)\)), Remark~\ref{rem:FqT-projective-classes} implies that $[A_{d,\lambda}-I]=\tau^i$ for a unique \(i\in\mathbb Z/(q+1)\mathbb Z\). It follows that $\tau^i(y)=\tau^e(y)$. By Lemma~\ref{lem:cyclic-T}(iii), the \(\tau\)-orbit of $y$ has length \(q+1\), so
\(i=e\) in $\mathbb{Z}/(q+1)\mathbb{Z}$. Therefore
\[
R_{d,\lambda}\cap L_2(y+e,y)\neq\{0\}
\quad\Longleftrightarrow\quad
[A_{d,\lambda}-I]=\tau^e.
\]
Similarly, a selected node subspace of the form $L_3(y+1,y)$ is met by \(R_{d,\lambda}\) if and only if $[A_{d,\lambda}^{-1}-I]=\tau$. Since $A_{d,\lambda}^{-1}-I=-A_{d,\lambda}^{-1}(A_{d,\lambda}-I)$, the last condition is equivalent to $[A_{d,\lambda}-I]=[A_{d,\lambda}]\tau=\tau^{d+1}$. 

Therefore, using the explicit forms of the selected node subspaces in the proof
of Lemma~\ref{lem:long-bw-endpoints}, all possible values of \(\lambda\in\F_q^\times\) such that $R_{d,\lambda}$ meets the endpoint skeleton are contained among the conditions
\begin{equation}\label{A-condition}
[A_{d,\lambda}-I]=\tau^e,
\end{equation}
where
\[
e\in
\begin{cases}
\{2,d+1\},&q+1\equiv0\pmod3,\\
\{1,2,d+1\},&q+1\equiv1\pmod3,\\
\{1,2,d+1\},&q+1\equiv2\pmod3.
\end{cases}
\]

For fixed \(d\in\{2,3\}\) and a fixed exponent \(e\), we claim that equation \eqref{A-condition} excludes at most one value of \(\lambda\in\F_q^\times\). Indeed, this equation
means that there exists \(\mu\in\F_q^\times\) such that $\lambda T^d-I=\mu T^e$. Suppose
two distinct values \(\lambda_1,\lambda_2\) satisfy the condition. Then there
exist \(\mu_1,\mu_2\in\F_q^\times\) such that
\[
\lambda_iT^d-I=\mu_iT^e
\qquad(i=1,2).
\]
Subtracting gives
\[
(\lambda_1-\lambda_2)T^d=(\mu_1-\mu_2)T^e.
\]
Since \(\lambda_1\ne\lambda_2\), this implies $T^d\in \F_q\cdot T^e$. If \(d\ne e\), then \(T^{d-e}\) is scalar, so $[T]^{d-e}=1$ in $\PGL_2(\F_q)$, contradicting the fact that \([T]\) has order \(q+1\). If \(d=e\), then the
original equation $\lambda T^d-I=\mu T^d$ would imply $I\in\F_q\cdot T^d$, so \(T^d\) is scalar, again contradicting the order of \([T]\). Therefore at
most one value of \(\lambda\) can satisfy equation \eqref{A-condition}.

It follows that, for each fixed \(d\in\{2,3\}\), at most two values of
\(\lambda\in\F_q^\times\) give bad pairs \((d,\lambda)\) when \(s=0\), and at
most three values of \(\lambda\in\F_q^\times\) give bad pairs \((d,\lambda)\)
when \(s=1\) or \(s=2\). Hence the total number of good pairs \((d,\lambda)\) is at
least
\[
B_q:=\begin{cases}
2(q-3),&s=0,\\
2(q-4),&s=1,2.
\end{cases}
\]

We now compare this with the number of additional subspaces needed to reach
length \(2q+2\). Since
\[
|\Omega|=
\begin{cases}
4m,&s=0,\\
4m+2,&s=1,\\
4m+3,&s=2,
\end{cases}
\]
and $q+1=3m+s$, we have
\[
2q+2-|\Omega|
=
\begin{cases}
2m,&s=0,1,\\
2m+1,&s=2.
\end{cases}
\]
For \(q\ge5\), we have $B_q\ge 2q+2-|\Omega|$. Thus, for any integer $t$ with $0\le t\le 2q+2-|\Omega|$, we may choose \(t\) distinct good pairs $(d_1,\lambda_1)$, $\dots$, $(d_t,\lambda_t)$. Set
\[
\Omega_t:=\Omega\cup\{R_{d_1,\lambda_1},\dots,R_{d_t,\lambda_t}\}.
\]
Then \(\Omega_t\) consists of pairwise skew two-dimensional subspaces
of \(\mathbb V\), contains $\Omega$, and has size $|\Omega_t|=|\Omega|+t$.

It remains to assign repair subspaces to the node subspaces in \(\Omega_t\).
For node subspaces already contained in \(\Omega\), we keep the same designated
repair subspaces as in the endpoint skeleton. Since every new node subspace is
skew to each of the four repair subspaces \(W_i\), \(i\in[4]\), these repair
subspaces hit exactly the same helper subspaces as before. Hence every old node
subspace still has a feasible repair subspace hitting exactly \(q+1\) helper
subspaces.

Now let $R_{d,\lambda}\in \Omega_t\setminus\Omega$ be a new node subspace. We assign \(W_1\) as its repair subspace. We have shown that $W_1\cap R_{d,\lambda}=\{0\}$, so \(W_1\) is feasible for repairing \(R_{d,\lambda}\). Moreover, from the
endpoint incidence pattern, \(W_1\) hits exactly \(q+1\) node subspaces in
\(\Omega\). It hits no node subspace in \(\Omega_t\setminus\Omega\), because
all newly added node subspaces are skew to \(W_1\). Therefore every new node
subspace also has a feasible repair subspace hitting exactly \(q+1\) helper
subspaces. This proves the lemma.
\end{proof}

\begin{proposition}\label{pro:long-bw}
Let \(q\ge5\) be a prime power. For every integer $n$ satisfying
\[
  \widetilde{N}_{\BW}(q)\le n\le q^2+1,
\]
there exists an \((n,n-2,2)\) MDS array code $\mathcal{C}$ over \(\F_q\) attaining the incidence-multiplicity bound for repair bandwidth. Equivalently,
for every \(i\in[n]\), one has $\alpha_i(\mathcal{C})=q+1$.
\end{proposition}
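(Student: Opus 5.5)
The plan is to split the range of $n$ at $2q+2$. For lengths up to $2q+2$ I will use the skeleton-plus-extension construction of Lemmas~\ref{lem:long-bw-endpoints} and~\ref{lem:long-bw-extension}. For lengths from $2q+2$ upward I will use Theorem~\ref{thm:incidence-multiplicity-bound-attainable}. In both ranges the matching upper bound $\alpha_i(\mathcal C)\le q+1$ comes from Theorem~\ref{thm:incidence-multiplicity-bound}. For $r=\ell=2$ that theorem gives $\beta_i(\mathcal C)\ge 2n-q-3$, which is equivalent to $\alpha_i(\mathcal C)=2(n-1)-\beta_i(\mathcal C)\le q+1$. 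So in each range it is enough to exhibit, for every node $i$, a feasible repair subspace with $\sum_{j\ne i}\dim(W\cap\mathcal H_j)\ge q+1$.

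\emph{Short part: $\widetilde N_{\BW}(q)\le n\le 2q+2$.} First I check that this interval is nonempty for $q\ge5$. Writing $q+1=3m+s$, Lemma~\ref{lem:long-bw-endpoints} gives $|\Omega|=\widetilde N_{\BW}(q)\le \tfrac{4(q+1)}{3}+1\le 2q+2$. Then I set $t:=n-|\Omega|$, so that $0\le t\le 2q+2-|\Omega|$, and apply Lemma~\ref{lem:long-bw-extension} to obtain a pairwise skew family $\Omega_t$ of $n$ two-dimensional subspaces of $\mathbb V=\F_q^4$. Pairwise skewness of two-dimensional subspaces of a four-dimensional space means exactly that any two of them span $\mathbb V$. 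By the subspace reformulation of Section~\ref{sec:prelim}, $\Omega_t$ therefore defines an $(n,n-2,2)$ MDS array code $\mathcal C$.

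For each node $L\in\Omega_t$, the designated $W(L)\in\{W_1,\dots,W_4\}$ is two-dimensional and satisfies $W(L)\cap L=\{0\}$, so $W(L)\in\mathcal W_i$. It also hits exactly $q+1$ helper subspaces. Each hit contributes intersection dimension at least $1$, so $\alpha_i(\mathcal C)\ge q+1$, and combined with the upper bound this gives equality.

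One small point is worth recording: no hit can contribute dimension $2$. That would force $W(L)=L'$ for some helper $L'\in\Omega_t$. But $L'$ is skew to every other member of $\Omega_t$, so $W(L)$ would hit at most one helper, contradicting that it hits $q+1\ge 6$ of them. The same equality also follows directly from the upper bound, so this remark is only a consistency check.

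\emph{Long part: $2q+2\le n\le q^2+1$.} Here I apply Theorem~\ref{thm:incidence-multiplicity-bound-attainable} with $r=\ell=2$. Its hypotheses hold: $(r-1)=1$ divides $q-1$, and $(q-1)/(r-1)=q-1\ge 2$ since $q\ge5$. The length condition $2(r-1)\frac{q^2-1}{q-1}=2q+2\le n\le q^2+1$ is exactly this range. The theorem gives a code with $\beta(\mathcal C)=2n-q-3$. Since each $\beta_i(\mathcal C)$ is at least $2n-q-3$ by Theorem~\ref{thm:incidence-multiplicity-bound} and at most $\beta(\mathcal C)=2n-q-3$, every $\beta_i(\mathcal C)$ equals $2n-q-3$, and hence $\alpha_i(\mathcal C)=q+1$ for all $i$.

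The step I expect to need the most care is the bookkeeping at the seam of the two ranges: checking that $\widetilde N_{\BW}(q)\le 2q+2$ so the short part is nonempty, and that the extension lemma really reaches all the way to $2q+2$ (which is why it assumes $q\ge5$), so that no length is left uncovered. The short-range argument itself is otherwise a direct assembly of the two earlier lemmas, and the long range is a direct citation.
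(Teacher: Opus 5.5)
Your proposal is correct and follows the paper's proof essentially verbatim. Like the paper, you split the range at $2q+2$ and cite Theorem~\ref{thm:incidence-multiplicity-bound-attainable} for the long range; for $\widetilde N_{\BW}(q)\le n\le 2q+2$ you combine the skeleton of Lemma~\ref{lem:long-bw-endpoints} with Lemma~\ref{lem:long-bw-extension}, and you close with the upper bound $\alpha_i(\mathcal C)\le q+1$ from Theorem~\ref{thm:incidence-multiplicity-bound}. Your additional checks (the short range is nonempty, the cited theorem's hypotheses hold, and no hit has intersection dimension $2$) are all correct but not strictly needed.
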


\begin{proof}
  For code lengths beyond \(2q+2\), Theorem~\ref{thm:incidence-multiplicity-bound-attainable} already gives the
attainability of the incidence-multiplicity bound for repair bandwidth. Hence we may assume that $\widetilde{N}_{\BW}(q)\le n\le 2q+2$. By Lemma~\ref{lem:long-bw-endpoints}, there is an endpoint skeleton
\(\Omega\) of size $\widetilde{N}_{\BW}(q)$. Given integer $n$ with $\widetilde{N}_{\BW}(q)\le n\le 2q+2$, put $t:=n-|\Omega|$. Then $0\le t\le 2q+2-|\Omega|$. By Lemma~\ref{lem:long-bw-extension}, there exists a pairwise skew family
\(\Omega_t\) of \(n\) two-dimensional subspaces of \(\mathbb V\) such that
every node subspace in $\Omega_t$ has a designated repair subspace hitting exactly \(q+1\) helper subspaces. By the subspace reformulation and Theorem~\ref{thm:incidence-multiplicity-bound}, $\Omega_t$ determines an \((n,n-2,2)\) MDS array code $\mathcal{C}$ over \(\F_q\) such that $\alpha_i(\mathcal{C})=q+1$ for any $i\in [n]$.
\end{proof}

Theorem~\ref{thm:incidence-multiplicity-bound} and Proposition~\ref{pro:long-bw} together give the following result:

\begin{corollary}\label{cor:long-optimal-bw}
  Let $q\ge 5$ be a prime power. Then for any integer $n$ with $\widetilde{N}_{\BW}(q)\le n\le q^2+1$, one has 
  $$\beta_{\opt}^{(q,n,2,2)}=2n-q-3.$$
\end{corollary}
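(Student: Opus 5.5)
The two inequalities will be proved separately, and each is already available from earlier results.

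For the lower bound, Theorem~\ref{thm:incidence-multiplicity-bound} with $r=\ell=2$ gives, for every $(n,n-2,2)$ MDS array code $\mathcal C$ over $\F_q$,
\[
\beta(\mathcal C)\ge 2(n-1)-(q+1)=2n-q-3 .
\]
Taking the minimum over all codes yields $\beta_{\opt}^{(q,n,2,2)}\ge 2n-q-3$ for every admissible $n$.

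For the upper bound, fix $q\ge 5$ and $n$ with $\widetilde N_{\BW}(q)\le n\le q^2+1$. Proposition~\ref{pro:long-bw} supplies an $(n,n-2,2)$ MDS array code $\mathcal C$ with $\alpha_i(\mathcal C)=q+1$ for every $i\in[n]$. Here $\alpha_i(\mathcal C)\ge q+1$ comes from the designated repair subspace hitting $q+1$ helpers, each intersection having dimension at least $1$. Using the reformulation $\beta_i(\mathcal C)=\ell(n-1)-\alpha_i(\mathcal C)$ from Section~\ref{sec:prelim}, we get
\[
\beta_i(\mathcal C)=2(n-1)-(q+1)=2n-q-3 \quad\text{for every } i,
\]
so $\beta(\mathcal C)=2n-q-3$. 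This gives $\beta_{\opt}^{(q,n,2,2)}\le 2n-q-3$.

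The only point needing care is that Proposition~\ref{pro:long-bw} asserts only that each designated repair subspace hits exactly $q+1$ helpers, with intersection dimension at least $1$. We do not need equality $\alpha_i=q+1$ for the upper bound. We only need $\alpha_i\ge q+1$, which gives $\beta_i\le 2n-q-3$, and this follows directly from the hit count. Combining the two bounds proves the corollary.
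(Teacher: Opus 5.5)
Your proposal is correct and is exactly the paper's argument: the lower bound $2n-q-3$ comes from Theorem~\ref{thm:incidence-multiplicity-bound} with $r=\ell=2$, and the matching upper bound comes from the construction in Proposition~\ref{pro:long-bw}. Your closing caveat is harmless but unnecessary, since Proposition~\ref{pro:long-bw} already states $\alpha_i(\mathcal C)=q+1$ outright (and for $n>2q+2$ it rests on Theorem~\ref{thm:incidence-multiplicity-bound-attainable}, not on a hit count).
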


\subsection{Repair I/O}

For any prime power \(q\), put
\[
\widetilde N_{\IO}(q):=
\begin{cases}
\dfrac{3(q+1)}{2},&q\text{ odd},\\[2mm]
\dfrac{3q}{2}+2,&q\text{ even}.
\end{cases}
\]
In this subsection we study attainability of the
incidence-multiplicity bound for repair I/O in the range
\[
\widetilde N_{\IO}(q)\le n\le 2q+2.
\]

We continue to use the notation from the repair-bandwidth subsection. In
particular,
\[
\mathbb V=\F_q^2\oplus\F_q^2,
\]
and \(L_1(x,y),L_2(x,y),L_3(x,y)\) are the three template node subspaces. We
also use the same terminology: a repair subspace \(W\) \emph{hits} a node
subspace \(L\) if \(W\cap L\neq\{0\}\), and otherwise \(W\) \emph{misses}
\(L\).

\begin{lemma}\label{lem:long-io-endpoint}
Let \(q\ge 3\) be a prime power. Then there exists a
family \(\Omega\) of pairwise skew two-dimensional subspaces of \(\mathbb V\)
of size $\widetilde N_{\IO}(q)$, together with projective column sets
\[
\mathcal X_L\subseteq \mathbb P(L),
\qquad |\mathcal X_L|=2
\qquad(L\in\Omega),
\]
such that every node subspace \(L\in\Omega\) has a designated repair subspace
\(W(L)\) satisfying $W(L)\cap L=\{0\}$ and such that \(W(L)\) hits exactly \(q+1\) helper subspaces and contains
exactly one chosen projective column point on each of those \(q+1\) helper subspaces.
\end{lemma}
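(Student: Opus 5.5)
The plan is to reuse the three template node subspaces \(L_1,L_2,L_3\) and the repair subspaces \(W_1=\F_q^2\oplus 0\), \(W_2=0\oplus\F_q^2\), \(W_3=\Gamma(I)\) from the repair-bandwidth subsection, with the natural column points
\[
\mathcal X_{L_1(x,y)}=\{\langle(p_x,0)\rangle,\langle(0,p_y)\rangle\},\quad
\mathcal X_{L_2(x,y)}=\{\langle(p_x,0)\rangle,\langle(p_y,p_y)\rangle\},\quad
\mathcal X_{L_3(x,y)}=\{\langle(0,p_x)\rangle,\langle(p_y,p_y)\rangle\}.
\]
We repair \(L_1\)-type nodes by \(W_3\), \(L_2\)-type nodes by \(W_2\), and \(L_3\)-type nodes by \(W_1\), always taking \(x\neq y\).

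The first step is the incidence check. By Lemma~\ref{lem:standard-line-criteria}(c), \(W_3\) misses \(L_1(x,y)\) when \(x\ne y\), \(W_2\) misses \(L_2(x,y)\), and \(W_1\) misses \(L_3(x,y)\). Moreover, each \(W_i\) hits every node of the other two types, and it contains exactly one of the chosen column points on each of them. For instance, \(W_1\) contains \((p_x,0)\) from \(L_1\) and from \(L_2\), but not the other point. Hence a node in a class of size \(g\) has exactly \(n-g\) hit helpers, each carrying exactly one chosen column. We therefore need all class sizes equal to \(n-(q+1)\).

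For \(q\) odd, write \(q+1=2m\) and \(n=3m\), so each class must have size \(m\). We choose
\[
\Omega_1=\{L_1(a+1,a):a\text{ even}\},\quad
\Omega_2=\{L_2(b+1,b):b\text{ odd}\},\quad
\Omega_3=\{L_3(c+1,c):c\text{ even}\}.
\]
Pairwise skewness then reduces, via Lemma~\ref{lem:standard-line-criteria}(a)(b), to parity checks. Within a class the first coordinates are distinct and the second coordinates are distinct. Between \(\Omega_1\) and \(\Omega_2\), the first coordinates have opposite parity and so do the second coordinates. Between \(\Omega_2\) and \(\Omega_3\) the same holds. Between \(\Omega_1\) and \(\Omega_3\), skewness needs \(x\ne y'\) and \(y\ne x'\): here \(x\) and \(x'\) are odd while \(y\) and \(y'\) are even, so both inequalities hold. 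Finally, \(x=y+1\neq y\) in every class.

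For \(q\) even the same ansatz fails by counting, and this case is the main obstacle. Write \(q+1=2m+1\), so \(n=3m+2\) and each class would need size \(m+1\). The first coordinates of classes \(1\) and \(2\) must then be disjoint sets of total size \(2m+2>q+1\), which is impossible. I would therefore first try to break symmetry with unequal class sizes. One option is a fourth class built from graph subspaces \(\Gamma(\lambda T^d)\), repaired by \(W_1\) or \(W_2\) and skew to \(W_1,W_2,W_3\) as in Lemma~\ref{lem:long-bw-extension}. Its column points should be chosen as \((\lambda T^dv,v)\)-type points lying in the repair subspaces of the other classes, or else on \(W_4=\Gamma(T)\). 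A second option is to let some \(L_i\)-type nodes with \(x=y\) be repaired by the otherwise unused \(W_4\), so as to absorb the one-element deficit. The target is that every repair subspace still hits exactly \(q+1\) helpers, each containing exactly one chosen column point. I would first check small cases \(q=4,8\) by hand to locate the right extra nodes, and then generalize them via the cyclic labelling \(t\mapsto t+1\).
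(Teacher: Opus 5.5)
Your odd-$q$ construction is correct and is an instance of the paper's. The paper splits $\mathbb P^1(\F_q)=A\sqcup B$ into two halves and uses arbitrary bijections $A\to B$, $B\to A$, $A\to B$, with the same repair assignment $L_1\leadsto W_3$, $L_2\leadsto W_2$, $L_3\leadsto W_1$ and the same column points $L\cap W_i$. You take $A$ to be the odd labels, $B$ the even ones, and all three bijections to be $a\mapsto a-1$.

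\textbf{The gap is the even case.} It is left unproved, and neither option you sketch can close it.
\begin{itemize}
\item In option~1, a node skew to $W_1,W_2,W_3$ does not enter the hit counts $g_2+g_3$, $g_1+g_3$, $g_1+g_2$ of $W_3,W_2,W_1$ at all. So your own obstruction $g_1=g_2=g_3=(q+1)/2$ survives unchanged. Moreover, such a node has no point on $W_1,W_2,W_3$, so placing its column points ``in the repair subspaces of the other classes'' is impossible.
\item In option~2, each $L_i(y,y)$ meets $W_1$, $W_2$, $W_3$ in three distinct points. As long as all three stay in use, two column points cannot supply exactly one point to each.
\end{itemize}
What is needed is to break the symmetry with a node that is missed by some repair subspace other than its own.

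\textbf{How the paper does it.} It discards $W_2$ and uses $W_4=\Gamma(T)$ instead, with $T$ normalized to $\begin{pmatrix}0&c\\1&c\end{pmatrix}$. Let $S=\{1,3,\dots,q-1\}$ and $C=\mathbb P^1(\F_q)\setminus S$. Then $|S|=q/2$, $|C|=q/2+1$, $S\cap(S+1)=\varnothing$, and the labels of $0$ and $\infty$ lie in $C$. The three classes are
\begin{itemize}
\item $X=\{L_1(a+1,a):a\in S\}\leadsto W_3$,
\item $Y_0=\{L_3(a+1,a):a\in S\}\leadsto W_1$,
\item $Z=\{L_2(b+1,b):b\in C\}\leadsto W_4$.
\end{itemize}
This uses that $W_4$ hits $L_1(y+1,y)$ and $L_3(y+1,y)$ but misses $L_2(y+1,y)$.

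One non-template node $R=\Gamma(A_s)$ with $A_s=\begin{pmatrix}0&s\\1&s\end{pmatrix}$ is then added to the $W_1$-class. $W_1$ and $W_3$ miss $R$, since $A_s$ and $A_s-I$ are invertible. $W_4$ hits $R$, since $A_s-T$ is singular. The class sizes are $q/2$, $q/2+1$, $q/2+1$, and every hit count equals $q+1$. For $W_3$ the count is $n-|X|-1$, precisely because $W_3$ also misses $R$.

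\textbf{The delicate step is the skewness of $R$.}
\begin{itemize}
\item Taking $s\neq c$ handles $X\cup Y_0$, because the maps induced by $A_s$ and $A_s^{-1}-I=s^{-1}A_s$ agree with $\tau$ only at $0$ and $\infty$.
\item Skewness to $Z$ requires $z^2+sz+c$ to have no root in $\F_q$, i.e.\ $\Tr_{\F_q/\F_2}(c/s^2)=1$. This has $q/2\ge 2$ solutions, so one with $s\neq c$ exists.
\end{itemize}
The column points are $\{L\cap W_1,L\cap W_4\}$ on $X$, $\{L\cap W_3,L\cap W_4\}$ on $Y_0$, $\{L\cap W_1,L\cap W_3\}$ on $Z$, and on $R$ the point $R\cap W_4$ together with any other point of $R$.
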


\begin{proof}
We split the construction according to the parity of \(q\).

\medskip
\noindent
\textbf{Case 1: \(q\) odd.}
Since \(|\mathbb P^1(\F_q)|=q+1\) is even, choose a partition
\[
\mathbb P^1(\F_q)=A\sqcup B,
\qquad
|A|=|B|=\frac{q+1}{2}.
\]
Choose bijections
\[
f:A\to B,\qquad g:B\to A,\qquad h:A\to B.
\]
Define three families of node subspaces:
\[
X:=\{L_1(a,f(a)):a\in A\},
\]
\[
Y:=\{L_2(b,g(b)):b\in B\},
\]
and
\[
Z:=\{L_3(a,h(a)):a\in A\}.
\]
Set
\[
\Omega:=X\cup Y\cup Z.
\]
Then Lemma~\ref{lem:standard-line-criteria}(a)(b) imply that
\[
|\Omega|=\frac{3(q+1)}{2}=\widetilde N_{\IO}(q)
\]
and the node subspaces in \(\Omega\) are pairwise skew.

We assign repair subspaces as follows:
\[
X\leadsto W_3,\qquad
Y\leadsto W_2,\qquad
Z\leadsto W_1.
\]
By Lemma~\ref{lem:standard-line-criteria}(c), \(W_3\) misses every node subspace in
\(X\) and hits all node subspaces in \(Y\cup Z\); \(W_2\) misses every node subspace in \(Y\) and
hits all node subspaces in \(X\cup Z\); and \(W_1\) misses every node subspace in \(Z\) and hits
all node subspaces in \(X\cup Y\). Each of these hit sets has size
\[
\frac{q+1}{2}+\frac{q+1}{2}=q+1.
\]

Now choose projective column sets. Set
\[
\mathcal X_L:=\begin{cases}
  \{L\cap W_1,\ L\cap W_2\},&L\in X,\\
  \{L\cap W_1,\ L\cap W_3\},&L\in Y,\\
  \{L\cap W_2,\ L\cap W_3\},&L\in Z.
\end{cases}
\]
These are two distinct projective points on \(L\), and they span \(L\).
Moreover, for each designated repair subspace, every helper it hits contains
exactly one chosen projective column point on that repair subspace.

\medskip
\noindent
\textbf{Case 2: \(q\) even.}
Recall from the proof of Lemma~\ref{lem:cyclic-T} that, when \(q\) is even,
the matrix \(T\) may be chosen so that its characteristic polynomial is \(x^2+cx+c\) for some \(c\in\F_q^\times\). If
\(v_1\in\mathbb{F}_q^2\setminus\{0\}\) and \(v_2:=Tv_1\), then \(v_1,v_2\) form a $\mathbb{F}_q$-basis of $\mathbb{F}_q^2$. In this basis, one has
\[
Tv_1=v_2,
\qquad
Tv_2=T^2v_1=cTv_1+cv_1=cv_2+cv_1,
\]
Hence we may assume that
\[
T=
\begin{pmatrix}
0&c\\
1&c
\end{pmatrix}
\qquad(c\in\F_q^\times).
\]
We identify \(\mathbb P^1(\F_q)\) with \(\F_q\cup\{\infty\}\) by
\[
z\longleftrightarrow \langle (z,1)^T\rangle,
\qquad
\infty\longleftrightarrow \langle (1,0)^T\rangle.
\]
Then \(\tau\) is given by
\[
\tau(z)=\frac{c}{z+c},\qquad \tau(\infty)=0.
\]

By Lemma~\ref{lem:cyclic-T}(iii), the projective points
$$0,\tau(0),\tau^2(0),\cdots,\tau^q(0)=\infty$$
in $\mathbb{P}^1(\F_q)=\F_q\cup\{\infty\}$ are pairwise distinct and hence contain all points of $\mathbb{P}^1(\F_q)$. Then we can  identify $\mathbb Z/(q+1)\mathbb Z$ with $\mathbb P^1(\F_q)$ via $t\leftrightarrow\tau^t(0)$, so that
that the action of $\tau$ is simply
\[
\tau(t)=t+1.
\]
Under this identification, the affine point \(0\in\F_q\cup\{\infty\}\) corresponds to
\(0\in\mathbb Z/(q+1)\mathbb Z\), and the infinity point
\(\infty\in\F_q\cup\{\infty\}\) corresponds to
\(q\in\mathbb Z/(q+1)\mathbb Z\). 

Define
\[
S:=\{1,3,5,\dots,q-1\}
\subseteq \mathbb Z/(q+1)\mathbb Z,
\]
and set $C:=\mathbb P^1(\F_q)\setminus S$. Then
\[
|S|=\frac q2,
\qquad
|C|=\frac q2+1.
\]
Moreover, the two points $0,\infty\in\F_q\cup\{\infty\}$ are not in $S$.

Define
\[
X:=\{L_1(a+1,a):a\in S\},
\]
\[
Y_0:=\{L_3(a+1,a):a\in S\},
\]
and
\[
Z:=\{L_2(b+1,b):b\in C\}.
\]
Since $S\cap C=\varnothing$ and $S\cap (S+1)=\varnothing$, Lemma~\ref{lem:standard-line-criteria}(a)(b) imply that all subspaces in
\(X\cup Y_0\cup Z\) are pairwise skew. In particular, there are $|S|+|S|+|C|=\frac{3q}{2}+1$ node subspaces in $X\cup Y_0\cup Z$.

We now add one extra node subspace. For \(s\in\F_q^\times\), define
\[
A_s:=
\begin{pmatrix}
0&s\\
1&s
\end{pmatrix},
\qquad
R_s:=\Gamma(A_s).
\]
The matrix \(A_s\) is invertible, and \(A_s-I\) is also invertible. Hence
\[
R_s\cap W_1=\{0\},
\qquad
R_s\cap W_2=\{0\},
\qquad
R_s\cap W_3=\{0\}.
\]
Moreover,
\[
A_s-T
=
\begin{pmatrix}
0&s-c\\
0&s-c
\end{pmatrix}
\]
is singular, so \(R_s\) is hit by \(W_4=\Gamma(T)\).

The map on \(\mathbb P^1(\F_q)=\F_q\cup\{\infty\}\) induced by \(A_s\) is
\[
g_s(z)=\frac{s}{z+s},
\qquad
g_s(\infty)=0.
\]
Moreover, $A_s^2+sA_s+sI=0$, and since \(s\neq0\), this implies
\[
A_s^{-1}-I=s^{-1}A_s.
\]
Hence \(A_s^{-1}-I\) induces the same projective map \(g_s\). A direct calculation gives
\[
g_s(z)=\tau(z)
\quad\Longleftrightarrow\quad
s=c\ \text{ or }\ z\in\{0,\infty\}.
\]
Thus, by Lemma~\ref{lem:graph-line-incidence}, if \(s\neq c\), then \(R_s\) is skew to all node subspaces in \(X\cup Y_0\).

It remains to make \(R_s\) skew to all node subspaces in \(Z\). The map induced
by \(A_s-I\) is
\[
h_s(z)=\frac{z+s}{z+s+1},
\qquad
g_s(\infty)=1.
\]
Then by Lemma~\ref{lem:graph-line-incidence}, it suffices to require that
\[
h_s(z)\neq \tau(z)
\qquad
\text{for every }z\in\mathbb P^1(\F_q).
\]
For \(z\in\F_q\), the equality \(h_s(z)=\tau(z)\) is equivalent to
\[
z^2+sz+c=0.
\]
This quadratic has no root in \(\F_q\) if and only if
\[
\operatorname{Tr}_{\F_q/\F_2}\left(\frac{c}{s^2}\right)=1.
\]
Since the map $s\mapsto \frac{c}{s^2}$ is a bijection of \(\F_q^\times\), and exactly half of the elements of
\(\F_q\) have trace \(1\), there are \(q/2\) values of
\(s\in\F_q^\times\) satisfying this trace condition. At most one of them is
\(c\), so since \(q\ge4\), we may choose $s\in\F_q^\times\setminus\{c\}$ so that
\[
\operatorname{Tr}_{\F_q/\F_2}\left(\frac{c}{s^2}\right)=1.
\]
For this choice of \(s\), the maps \(h_s\) and \(\tau\) have no agreement point
on \(\mathbb P^1(\F_q)\). Thus \(R_s\) is skew to all node subspaces in \(Z\).

Set
\[
R:=R_s,
\qquad
Y:=Y_0\cup\{R\},
\qquad
\Omega:=X\cup Y\cup Z.
\]
Then
\[
|X|=\frac q2,\qquad
|Y|=\frac q2+1,\qquad
|Z|=\frac q2+1,
\]
and hence
\[
|\Omega|=\frac{3q}{2}+2=\widetilde N_{\IO}(q).
\]
The construction above shows that all node subspaces \(\Omega\) are pairwise skew.

We assign repair subspaces as follows:
\[
X\leadsto W_3,\qquad
Y\leadsto W_1,\qquad
Z\leadsto W_4.
\]
By Lemma~\ref{lem:standard-line-criteria}(c), together with the facts that
\(A_s\) and \(A_s-I\) are invertible while \(A_s-T\) is singular, we obtain the
following incidence relations:
\[
W_3\text{ misses }X\cup\{R\}\text{ and hits exactly }Y_0\cup Z,
\]
\[
W_1\text{ misses }Y\text{ and hits exactly }X\cup Z,
\]
and
\[
W_4\text{ misses }Z\text{ and hits exactly }X\cup Y.
\]
The corresponding hit counts are
\[
|Y_0|+|Z|=\frac q2+\left(\frac q2+1\right)=q+1,
\]
\[
|X|+|Z|=\frac q2+\left(\frac q2+1\right)=q+1,
\]
and
\[
|X|+|Y|=\frac q2+\left(\frac q2+1\right)=q+1.
\]

Now choose projective column sets. Set
\[
\mathcal X_L:=\begin{cases}
  \{L\cap W_1,\ L\cap W_4\},&L\in X,\\
  \{L\cap W_3,\ L\cap W_4\},&L\in Y_0,\\
  \{L\cap W_1,\ L\cap W_3\},&L\in Z,\\
  \{R\cap W_4,\ Q_R\},&L=R,
\end{cases}
\]
where \(Q_R\) is any projective point on \(R\) distinct from \(R\cap W_4\).
Again, for each designated repair subspace, every helper it hits contains
exactly one chosen projective column point on that repair subspace. This proves the lemma.
\end{proof}

\begin{lemma}\label{lem:long-io-extension}
Assume that $q\ge 3$ is a prime power. Let \(\Omega\) be
the endpoint skeleton from Lemma~\ref{lem:long-io-endpoint}. Then for
every integer $t$ satisfying $0\le t\le 2q+1-|\Omega|$, there exists a family \(\Omega_t\) of pairwise skew two-dimensional subspaces
of \(\mathbb V\), containing \(\Omega\), with $|\Omega_t|=|\Omega|+t$, and projective column sets on all node subspaces in \(\Omega_t\), such that every node subspace \(L\in\Omega_t\) has a designated repair subspace
\(W(L)\) satisfying $W(L)\cap L=\{0\}$ and such that \(W(L)\) hits exactly \(q+1\) helper subspaces and contains
exactly one chosen projective column point on each of those \(q+1\) helper subspaces.
\end{lemma}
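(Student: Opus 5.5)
The plan is to mimic the extension step of Lemma~\ref{lem:long-bw-extension}. We adjoin new node subspaces of graph form $\Gamma(A)$ with $A$ taken from the field $\F_q[T]\cong\F_{q^2}$ of Lemma~\ref{lem:cyclic-T}(iv). Each new node is chosen skew to every repair subspace used in the skeleton of Lemma~\ref{lem:long-io-endpoint}, so that no old hit count changes. The number of new nodes needed is $2q+1-|\Omega|$, which equals $(q-1)/2$ for $q$ odd and $q/2-1$ for $q$ even.

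First, the skewness conditions. For $A\in\F_q[T]$, $\Gamma(A)$ misses $W_1$ and $W_2$ iff $A$ is invertible, misses $W_3=\Gamma(I)$ iff $A-I$ is invertible, and misses $W_4=\Gamma(T)$ iff $A-T$ is invertible. Since $\F_q[T]$ is a field, these conditions only require $A\notin\{0,I,T\}$. Likewise, two distinct $A,A'\in\F_q[T]$ give skew graphs because $A-A'$ is invertible.

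For skewness to the skeleton, I would use Lemma~\ref{lem:graph-line-incidence}. In the even case all skeleton nodes of types $L_1,L_2,L_3$ have the shifted form $L_i(y+1,y)$. In the odd case I would first fix the bijections $f,g,h$ in Lemma~\ref{lem:long-io-endpoint} to be shifts: for instance $A$ the even and $B$ the odd residues of $\mathbb Z/(q+1)\mathbb Z$, with $f,g,h$ equal to $y\mapsto y\pm1$. Such shifts exist because $q+1$ is even, and the skewness and incidence statements of that lemma are unaffected by this choice. By Remark~\ref{rem:FqT-projective-classes}, each of $[A]$, $[A-I]$ and $[A^{-1}-I]$ is a power $\tau^i$. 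As in Lemma~\ref{lem:long-bw-extension}, $\Gamma(A)$ meets a shifted skeleton node iff one of these three classes equals one of a bounded number of specific powers of $\tau$. Each forbidden condition of the form $[A]=\tau^e$, $[A-I]=\tau^e$ or $[A^{-1}-I]=\tau^e$ excludes at most $q-1$ elements of $\F_q[T]$. In the even case we must also avoid $R=\Gamma(A_s)$. Here $\det(A-A_s)=0$ is a norm-type quadratic condition in $A$, so it excludes at most $O(q)$ elements, and I would bound it crudely.

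Altogether at most $Cq$ elements of the $q^2$ elements of $\F_q[T]$ are excluded, for a small absolute constant $C$. This leaves at least $q^2-Cq$ admissible $A$, which exceeds $(q-1)/2$ once $q$ is moderately large.

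Next, the repair assignment. Old nodes keep their repair subspaces and column sets. For a new node $\Gamma(A)$, I assign $W_1$: it misses $\Gamma(A)$ and all other new nodes. In the skeleton it hits exactly $q+1$ nodes, namely $X\cup Y$ in the odd case and $X\cup Z$ in the even case. Each of these already carries the column point $L\cap W_1$ in its chosen $\mathcal X_L$, so $W_1$ contains exactly one column point on each helper it hits. The column set of a new node can be any two distinct projective points spanning it, since no repair subspace in use meets it.

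The main obstacle is the counting for small $q$, say $q\in\{3,4,5\}$. There the crude bound $q^2-Cq$ may fail even though only one or two new nodes are required ($1$ for $q=3$, $1$ for $q=4$, $2$ for $q=5$). For these values I would enumerate the forbidden exponents exactly, as in Lemma~\ref{lem:long-bw-extension}. This yields roughly $|\{\text{forbidden }e\}|$ excluded scalar multiples per condition, and I would exhibit suitable $A$ explicitly, for example $\lambda T^d$ with $d$ avoiding the few forbidden exponents.
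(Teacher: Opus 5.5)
Your plan is correct. For even $q$ it is essentially the paper's argument. The paper also adjoins graphs $\Gamma(uI+vT)$ of elements of $\F_q[T]$, and notes that $u\notin\{0,1\}$ already gives skewness to $W_1,W_3,W_4$ and to $X\cup Z$; these are exactly your conditions $[A]\ne\tau$ and $[A-I]\ne\tau$. For each such $u$ it then removes at most two values of $v$ for $Y_0$ and two for $R$. This gives $(q-2)(q-4)$ good pairs for $q\ge 8$, plus a direct check at $q=4$.

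For odd $q$ the paper takes a simpler route than yours: it uses only the scalar graphs $\Gamma(cI)$ with $c\in\F_q^\times\setminus\{1\}$. The classes $[cI]$, $[(c-1)I]$ and $[c^{-1}I-I]$ are all the identity of $\PGL_2(\F_q)$. So by Lemma~\ref{lem:graph-line-incidence}, $\Gamma(cI)$ could only meet a skeleton node $L_i(x,y)$ with $x=y$, and this never occurs because $f,g,h$ map between the disjoint parts $A$ and $B$. This works for every choice of partition and bijections. Hence there is no need to re-choose $f,g,h$ as shifts, which changes the skeleton rather than extending the given one; that is harmless for Proposition~\ref{pro:long-io}, but unnecessary. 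It also needs no counting and has no small-case issue, since $q-2\ge (q-1)/2$ for all $q\ge 3$.

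Your uniform general-$A$ count buys a single argument for both parities. The price is the shift specialization and explicit witnesses for $q=3,4$, which you defer. Both are easy:
\begin{itemize}
\item[(i)] For $q=3$, take $A=-I$, or any $cI$ with $c\ne 0,1$.
\item[(ii)] For $q=4$, the trace condition forces $s=1$, and $A=cI+T=cT^{-1}$ works. Indeed $A-T=cI$ is invertible and $[A]=\tau^{-1}\ne\tau$. Neither $A-I=(c+1)I+T$ nor $A^{-1}-I=c^{-1}T-I$ is a multiple of $T$. Finally $\det(A-A_1)=c\ne 0$.
\end{itemize}
One minor slip: $\Gamma(A)$ misses $W_1=\F_q^2\oplus 0$ for every $A$, not only for invertible $A$. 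Only the condition for $W_2$ requires invertibility.
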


\begin{proof}
We treat the two parities in parallel.

\medskip
\noindent
\textbf{Case 1: \(q\) odd.}
Use the notation from the odd case in the proof of
Lemma~\ref{lem:long-io-endpoint}. For $c\in\F_q^\times\setminus\{1\}$, define $R_c:=\Gamma(cI)$. The subspaces \(R_c\) are pairwise skew. They are skew to \(W_1,W_2,W_3\), and
they are skew to all node subspaces in the endpoint skeleton \(\Omega\). Indeed,
the possible intersections would force equality of two projective labels lying
in different parts of the partition
\[
\mathbb P^1(\F_q)=A\sqcup B.
\]
Thus every \(R_c\) can be added without changing the repair incidence of the
old nodes.

There are $q-2$ available choices of such \(c\). Since
\[
2q+1-|\Omega|
=
2q+1-\frac{3(q+1)}2
=
\frac{q-1}{2}
\le q-2
\qquad(q\ge 3),
\]
for any integer $t$ satisfying $0\le t\le 2q+1-|\Omega|$, we can choose $t$ distinct $R_c$ ($c\in\mathbb{F}_q^{\times}\setminus\{1\}$) and add them to \(\Omega\) to form $\Omega_t$. For each added
subspace \(R_c\), choose any two distinct projective points spanning \(R_c\),
and assign \(W_1\) as its repair subspace. The repair subspace \(W_1\) hits
exactly the old helper subspaces in \(X\cup Y\), whose total number is \(q+1\),
and it misses every $R_c$. Moreover, each node subspace in
\(X\cup Y\) has exactly one chosen projective column point on \(W_1\). Hence $\Omega_t$ satisfies the required conditions.

\medskip
\noindent
\textbf{Case 2: \(q\) even.}
Use the notation from the even case in the proof of
Lemma~\ref{lem:long-io-endpoint}. Thus
\[
\Omega=X\cup Y_0\cup Z\cup\{R_0\},
\qquad
R_0=\Gamma(B)
\]
for some \(B\in\GL_2(\F_q)\), and the designated repair subspaces are
\(W_1,W_3,W_4\).

For \((u,v)\in\F_q^2\), define
\[
A_{u,v}:=uI+vT,
\qquad
R_{u,v}:=\Gamma(A_{u,v}).
\]
Since \(T\) has irreducible characteristic polynomial, every non-zero
\(\F_q\)-linear combination of \(I\) and \(T\) is invertible. Hence distinct
subspaces \(R_{u,v}\) are pairwise skew.

We call a pair \((u,v)\) \emph{good} if \(R_{u,v}\) is skew to every node subspace in
\(\Omega\) and skew to each of \(W_1,W_3,W_4\). We now show that for any integer $t$ satisfying $0\le t\le 2q+1-|\Omega|$, we may find $t$ distinct good pairs.

Fix $u\in\F_q\setminus\{0,1\}$. For such \(u\), the subspace \(R_{u,v}\) is skew to \(W_1,W_3,W_4\), and it is
automatically skew to all node subspaces in \(X\cup Z\). It remains only to
avoid \(Y_0\) and \(R_0\). By Lemma~\ref{lem:graph-line-incidence}, the
condition that \(R_{u,v}\) meet \(Y_0\) is given by a quadratic equation in
\(v\), and hence excludes at most two values of \(v\). Similarly,
\[
R_{u,v}\cap R_0\neq\{0\}
\quad\Longleftrightarrow\quad
\det(A_{u,v}-B)=0,
\]
which is again a quadratic condition in \(v\), and hence excludes at most two
values of \(v\).

Therefore, for each fixed \(u\in\F_q\setminus\{0,1\}\), at least \(q-4\)
values of \(v\) give good pairs. Since there are \(q-2\) choices of \(u\), the
number of good pairs is at least
\[
(q-2)(q-4).
\]
For \(q\ge8\), one has
\[
(q-2)(q-4)\ge \frac q2-1
=
2q+1-\left(\frac{3q}{2}+2\right)
=
2q+1-|\Omega|.
\]
Thus in this case we have enough good pairs.

It remains only to handle \(q=4\). In this case
\[
\F_4^\times=\{1,c,c+1\},
\qquad
c^2+c+1=0.
\]
In the endpoint skeleton, the extra line may be chosen with \(s=1\).
A direct substitution into the conditions above shows that the following four
pairs are good:
\[
(c,1),\qquad (c,c),\qquad (c,c+1),\qquad (c+1,c).
\]
Thus there are at least four good pairs when \(q=4\). Since in
this case
\[
2q+2-|\Omega|
=
10-8
=
2,
\]
we again have enough good pairs.

For any integer $t$ satisfying $0\le t\le 2q+1-|\Omega|$, choose any $t$ distinct good pairs and add the corresponding subspaces $R_{u,v}$ to $\Omega$ to form $\Omega_t$. For every newly added subspace \(R_{u,v}\), choose any two distinct projective
points spanning it, and assign \(W_1\) as its repair subspace. Since the new
subspace is skew to \(W_1\), this repair subspace is feasible. It hits exactly
the old helper subspaces in $X\cup Z$, whose number is
\[
|X|+|Z|=\frac q2+\left(\frac q2+1\right)=q+1,
\]
and it misses all newly added subspaces. Also, each node subspace in
\(X\cup Z\) has exactly one chosen projective column point on \(W_1\). Hence $\Omega_t$ satisfies the required conditions.
\end{proof}

\begin{proposition}\label{pro:long-io}
Let $q\ge 3$ be a prime power. For every integer $n$ satisfying
\[
  \widetilde N_{\IO}(q)\le n\le q^2+1,
\]
there exists an \((n,n-2,2)\) MDS array code \(\mathcal C\) over \(\F_q\) attaining the incidence-multiplicity bound simultaneously for repair bandwidth and repair I/O. Equivalently,
for every \(i\in[n]\), one has $\alpha_i(\mathcal{C})=\lambda_i(\mathcal{C})=q+1$.
\end{proposition}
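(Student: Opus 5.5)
The plan mirrors the proof of Proposition~\ref{pro:long-bw}: split the range at $2q+2$ and cite the earlier results in each piece.

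\textbf{Long lengths.} For $2q+2\le n\le q^2+1$ we invoke Theorem~\ref{thm:incidence-multiplicity-bound-attainable} with $r=\ell=2$. Its hypotheses $(r-1)\mid(q-1)$ and $(q-1)/(r-1)\ge 2$ hold because $q\ge 3$. It yields a code with $\beta(\mathcal C)=\gamma(\mathcal C)=2n-q-3$.

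\textbf{Middle range, the construction.} For $\widetilde N_{\IO}(q)\le n\le 2q+1$, set $t:=n-\widetilde N_{\IO}(q)$, so that $0\le t\le 2q+1-|\Omega|$. Lemma~\ref{lem:long-io-extension} then provides a pairwise skew family $\Omega_t$ of $n$ two-dimensional subspaces of $\mathbb V$, together with projective column sets $\mathcal X_L$ of size $2$. Pairwise skewness of two-dimensional subspaces in the four-dimensional space $\mathbb V$ means every pair has direct sum $\mathbb V$. So, by the subspace reformulation, $\Omega_t$ is an $(n,n-2,2)$ MDS array code. Each block $H_L$ is taken with columns representing the two points of $\mathcal X_L$; these two distinct points span $L$, so $\Col(H_L)=L$.

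\textbf{Middle range, the repair values.} For a node $L$ we use $W(L)$, which is feasible since $W(L)\cap L=\{0\}$. It hits exactly $q+1$ helpers. Each hit helper meets $W(L)$ in dimension exactly $1$, because skew two-dimensional subspaces cannot coincide. Hence
\[
\sum_{L'\neq L}\dim\bigl(W(L)\cap L'\bigr)=q+1,
\]
and since $W(L)$ contains exactly one chosen column point on each hit helper,
\[
\sum_{L'\neq L}\bigl|\mathcal X_{L'}\cap\mathbb P(W(L))\bigr|=q+1.
\]
This gives $\alpha_i(\mathcal C)\ge q+1$ and $\lambda_i(\mathcal C)\ge q+1$. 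For the reverse inequalities, recall $\lambda_i\le\alpha_i$, since chosen points of $L'$ lying in $W$ span a subspace of $W\cap L'$. Theorem~\ref{thm:incidence-multiplicity-bound} gives $\beta_i\ge 2(n-1)-(q+1)$, i.e.\ $\alpha_i\le q+1$. Thus $\alpha_i=\lambda_i=q+1$, which is the same as $\beta_i=\gamma_i=2n-q-3$.

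\textbf{Remaining length and main obstacle.} The case $n=2q+2$ is covered by the long-length theorem. The only delicate point is that the cited theorem gives the optimal value of $\beta,\gamma$ as a maximum over nodes, not node-wise. To get $\alpha_i=\lambda_i=q+1$ for every $i$, note that $\beta(\mathcal C)=\max_i\beta_i=2n-q-3$ together with the node-wise lower bound of Theorem~\ref{thm:incidence-multiplicity-bound} forces $\beta_i=2n-q-3$ for every $i$, and likewise $\gamma_i=2n-q-3$ for every $i$. Apart from this bookkeeping step, the argument is direct assembly of the earlier results.
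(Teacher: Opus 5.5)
Your proposal is correct and follows the paper's proof. You use Theorem~\ref{thm:incidence-multiplicity-bound-attainable} for $2q+2\le n\le q^2+1$, and for $\widetilde N_{\IO}(q)\le n\le 2q+1$ you take the endpoint skeleton of Lemma~\ref{lem:long-io-endpoint}, extend it by Lemma~\ref{lem:long-io-extension}, and close with the node-wise bound of Theorem~\ref{thm:incidence-multiplicity-bound}. Your extra bookkeeping ($\lambda_i\le\alpha_i$, and node-wise equality from $\max_i\beta_i=2n-q-3$) only makes explicit what the paper leaves implicit.

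One small remark: you only need each hit helper to contribute at least $1$ to $\sum_{L'\ne L}\dim(W(L)\cap L')$, so the ``exactly $1$'' claim is unnecessary. Its stated reason is also off, since $W(L)$ is not a member of $\Omega_t$; the correct reason is that $W(L)=L'$ would force $W(L)$ to hit only one helper.
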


\begin{proof}
  For code lengths beyond \(2q+1\), Theorem~\ref{thm:incidence-multiplicity-bound-attainable} already gives the simultaneous attainability of the incidence-multiplicity bound for both repair bandwidth and repair I/O. Hence we may assume that $\widetilde{N}_{\IO}(q)\le n\le 2q+1$. By Lemma~\ref{lem:long-io-endpoint}, there is an endpoint skeleton
\(\Omega\) of size $\widetilde{N}_{\IO}(q)$. Given integer $n$ with $\widetilde{N}_{\IO}(q)\le n\le 2q+1$, put $t:=n-|\Omega|$. Then $0\le t\le 2q+1-|\Omega|$. By Lemma~\ref{lem:long-io-extension}, there exists a pairwise skew family
\(\Omega_t\) of \(n\) two-dimensional subspaces of \(\mathbb V\), containing
\(\Omega\), together with projective column sets
\[
\mathcal X_L\subseteq\mathbb P(L),
\qquad |\mathcal X_L|=2
\qquad (L\in\Omega_t),
\]
such that every node subspace \(L\in\Omega_t\) has a designated repair subspace
\(W(L)\) satisfying
\[
W(L)\cap L=\{0\},
\]
and this repair subspace hits exactly \(q+1\) helper subspaces and contains
exactly one chosen projective column point on each of those helpers.

By the subspace reformulation and Theorem~\ref{thm:incidence-multiplicity-bound}, the family \(\Omega_t\), together with these
projective column sets, determines an \((n,n-2,2)\) MDS array code
\(\mathcal C\) over \(\F_q\) such that 
\[
\alpha_i(\mathcal C)=\lambda_i(\mathcal C)=q+1
\qquad(i\in[n]).
\]
\end{proof}

Theorem~\ref{thm:incidence-multiplicity-bound} and Proposition~\ref{pro:long-io} together give the following result:

\begin{corollary}\label{cor:long-optimal-io}
  Let $q\ge 3$ be a prime power. Then for any integer $n$ with $\widetilde{N}_{\IO}(q)\le n\le q^2+1$, one has 
  $$\gamma_{\opt}^{(q,n,2,2)}=2n-q-3.$$
\end{corollary}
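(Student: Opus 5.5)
The corollary follows by pairing the lower bound of Theorem~\ref{thm:incidence-multiplicity-bound} with the construction of Proposition~\ref{pro:long-io}. For the lower bound, Theorem~\ref{thm:incidence-multiplicity-bound} with $r=\ell=2$ says that every $(n,n-2,2)$ MDS array code $\mathcal C$ over $\F_q$ satisfies
\[
\gamma(\mathcal C)\ge \beta(\mathcal C)\ge 2(n-1)-(q+1)=2n-q-3.
\]
Taking the minimum over $\mathfrak C_{q,n,2,2}$ then gives $\gamma_{\opt}^{(q,n,2,2)}\ge 2n-q-3$ for every admissible $n$.

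For the matching upper bound, fix $n$ with $\widetilde N_{\IO}(q)\le n\le q^2+1$ and $q\ge 3$. Proposition~\ref{pro:long-io} supplies a code $\mathcal C$ with $\lambda_i(\mathcal C)=q+1$ for every $i\in[n]$. The identity $\gamma_i(\mathcal C)=\ell(n-1)-\lambda_i(\mathcal C)$ from Section~\ref{sec:prelim} then gives $\gamma_i(\mathcal C)=2(n-1)-(q+1)=2n-q-3$ for all $i$. Hence $\gamma(\mathcal C)=2n-q-3$, and so $\gamma_{\opt}^{(q,n,2,2)}\le 2n-q-3$.

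The two bounds coincide, which proves the equality. Almost all of the work is already done in Proposition~\ref{pro:long-io}, so the only thing to watch is that its hypotheses ($q\ge 3$ and the length range) are exactly those of the corollary. They are, so no case is left uncovered. That proposition relies on Lemma~\ref{lem:long-io-extension} for lengths up to $2q+1$ and on Theorem~\ref{thm:incidence-multiplicity-bound-attainable} for lengths $n\ge 2q+2$.
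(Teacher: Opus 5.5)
Your proposal is correct and is exactly the paper's argument. The paper obtains this corollary by combining the lower bound $\gamma(\mathcal C)\ge\beta(\mathcal C)\ge 2n-q-3$ from Theorem~\ref{thm:incidence-multiplicity-bound} with the codes of Proposition~\ref{pro:long-io}, which satisfy $\lambda_i(\mathcal C)=q+1$ for all $i$ under precisely the hypotheses $q\ge 3$ and $\widetilde N_{\IO}(q)\le n\le q^2+1$.
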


\section{Proofs of the Main Theorems}\label{sec:mainproof}

We first prove the main theorem for repair I/O, since it is comparatively simpler.

\begin{proof}[Proof of Theorem~\ref{thm:main-2}]
  Recall that the short-length I/O threshold is
\[
N_{\IO}(q)
=
\begin{cases}
\dfrac{3(q+1)}{2},&q\text{ odd},\\[2mm]
\dfrac{3q}{2}+1,&q\text{ even},
\end{cases}
\]
while the long-length I/O threshold is
\[
\widetilde N_{\IO}(q)
=
\begin{cases}
\dfrac{3(q+1)}{2},&q\text{ odd},\\[2mm]
\dfrac{3q}{2}+2,&q\text{ even}.
\end{cases}
\]
Thus, if \(q\) is odd, then $N_{\IO}(q)=\widetilde N_{\IO}(q)$, whereas if \(q\) is even, then $\widetilde N_{\IO}(q)=N_{\IO}(q)+1$.

By Corollary~\ref{cor:short-optimal-io}, for every integer \(n\) with $3\le n\le N_{\IO}(q)$ and \(n\neq4\), one has
\[
\gamma_{\opt}^{(q,n,2,2)}
=
\left\lceil\frac{4n-6}{3}\right\rceil.
\]
By Corollary~\ref{cor:long-optimal-io}, if $q\ge 3$, for every integer \(n\) with $\widetilde N_{\IO}(q)\le n\le q^2+1$, one has
\[
\gamma_{\opt}^{(q,n,2,2)}
=
2n-q-3.
\]
Finally, by Proposition~\ref{prop:gamma-opt-4-2-2}, for every prime power
\(q\),
\[
\gamma_{\opt}^{(q,4,2,2)}=3.
\]
Since the two ranges 
$$3\le n\le N_{\IO}(q)\qquad\text{and}\qquad\widetilde{N}_{\IO}(q)\le n\le q^2+1$$
together cover precisely the range of code lengths stated in Theorem~\ref{thm:main-2}, the theorem follows for $q\ge 3$.

It remains to handle \(q=2\). In this case the allowed code lengths are $3\le n\le 2^2+1=5$. Since
\[
N_{\IO}(2)=\frac{3\cdot2}{2}+1=4,
\]
Corollary~\ref{cor:short-optimal-io} gives the result for \(q=2,n=3\). Since
Proposition~\ref{prop:gamma-opt-4-2-2} holds for any prime power $q$, it gives the result for \(q=2,n=4\).

For \(n=5\), it is easy to see that the full Desarguesian spread construction in \cite{liu2026linear} attains the
incidence-multiplicity bound for repair I/O. Thus the stated formula also holds for \(q=2,n=5\). This completes the proof.
\end{proof}

Next, we prove the main theorem for repair bandwidth. Recall that the short-length bandwidth threshold is
\[
  N_{\BW}(q)=
  \begin{cases}
  \dfrac{4q}{3},&q\equiv0\pmod3,\\[2mm]
  \dfrac{4q+2}{3},&q\equiv1\pmod3,\\[2mm]
  \dfrac{4(q+1)}{3},&q\equiv2\pmod3,
  \end{cases}
\]
while the long-length bandwidth threshold is
\[
  \widetilde{N}_{\BW}(q)=
  \begin{cases}
  \dfrac{4q}{3}+2,&q\equiv0\pmod3,\\[2mm]
  \dfrac{4q+2}{3}+1,&q\equiv1\pmod3,\\[2mm]
  \dfrac{4(q+1)}{3},&q\equiv2\pmod3.
  \end{cases}
\]

For the cases $q\equiv 1,2\pmod3$, the two ranges
$$3\le n\le N_{\BW}(q)\qquad\text{and}\qquad\widetilde{N}_{\BW}(q)\le n\le q^2+1$$
together cover precisely the range of code lengths stated in Theorem~\ref{thm:main-1}. However, when $q\equiv 0\pmod3$, there remains one gap between these two ranges, namely 
$$n=\frac{4q}{3}+1.$$
We first handle this exceptional length separately.

\begin{lemma}\label{lem:short-bw-extra-mod0}
  Let $q$ be a prime power with \(q\equiv0\pmod3\). Then there exists an $(\frac{4q}{3}+1,\frac{4q}{3}-1,2)$ MDS array code \(\mathcal C\) over \(\F_q\) such that
  \[
  \beta(\mathcal C)
  \le
  \left\lceil\frac{5n-8}{4}\right\rceil,
  \qquad
  n=\frac{4q}{3}+1.
  \]
  \end{lemma}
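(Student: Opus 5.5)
The plan is to obtain the code at length $n=\frac{4q}{3}+1$ by deleting one node from the long-length endpoint skeleton of Lemma~\ref{lem:long-bw-endpoints}, not by enlarging the short-length orbit construction. The skeleton has length one more than we need, and deleting a node costs each repair at most one unit of bandwidth, which is exactly the slack between $2n-q-3$ and $\lceil(5n-8)/4\rceil$ at this length.

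First, fix the numerology. Write $q=3r$ with $r\ge1$, so $n=4r+1$. Then
\[
\left\lceil\frac{5n-8}{4}\right\rceil=\left\lceil\frac{20r-3}{4}\right\rceil=5r,
\]
so it suffices to give every node a repair of bandwidth at most $5r$. Since $\ell=2$, the bandwidth of a repair subspace $W$ for node $\mathcal H$ is $2(n-1)-\sum_{\mathcal H'\ne\mathcal H}\dim(W\cap\mathcal H')$. This is at most $8r-h$, where $h$ is the number of helpers hit by $W$. It therefore suffices for each designated $W$ to hit at least $3r=q$ of the remaining nodes.

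The obvious alternative is to enlarge one class of the orbit construction of Lemma~\ref{lem:endpoint-short} to size $r+1$, which would also give the target. Checking the shift tables of Lemma~\ref{lem:orbit-template-subspace}(b), each natural extra index ($0$ added to $S_3$, $1$ added to $S_4$, $-1$ added to $S_1$) lands on an exceptional shift. I would therefore avoid that route.

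Here $q+1=3r+1$, so we are in Case~2 of Lemma~\ref{lem:long-bw-endpoints} with $m=r\ge1$. That case gives a pairwise skew family $\Omega$ of $4r+2=n+1$ two-dimensional subspaces of $\mathbb V$. Each $L\in\Omega$ has a designated $W(L)\in\{W_1,\dots,W_4\}$ with $W(L)\cap L=\{0\}$, and $W(L)$ hits exactly $q+1$ members of $\Omega\setminus\{L\}$.

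Remove any one member $L_0$, say $R_+$, and set $\Omega':=\Omega\setminus\{L_0\}$, so that $|\Omega'|=n$. Pairwise skewness is inherited, so by the subspace reformulation $\Omega'$ defines an $(n,n-2,2)$ MDS array code $\mathcal C$. For each $L\in\Omega'$, keep the same repair subspace $W(L)$; it remains feasible. The number of members of $\Omega'\setminus\{L\}$ it hits is at least $(q+1)-1=q=3r$, each contributing intersection dimension at least $1$. Hence every repair has bandwidth at most $2(4r)-3r=5r$, which gives $\beta(\mathcal C)\le5r=\lceil(5n-8)/4\rceil$.

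The only step needing care is confirming that Lemma~\ref{lem:long-bw-endpoints} applies at this $q$, including $q=3$, where $m=1$ and the sets $C$ and $D$ are empty. Its hypothesis is only $m\in\mathbb N_+$, with no requirement $q\ge5$; that requirement appears only in the extension Lemma~\ref{lem:long-bw-extension}, which we do not use. So I expect no real obstacle beyond this check and the arithmetic identity above.
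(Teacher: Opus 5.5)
Your proposal is correct and follows the paper's proof: delete one node from the size-$\frac{4q}{3}+2$ endpoint skeleton of Lemma~\ref{lem:long-bw-endpoints}, keep each remaining node's designated repair subspace, note that it still hits at least $q$ helpers, and conclude $\beta(\mathcal C)\le 2(n-1)-q=\frac{5q}{3}=\lceil\frac{5n-8}{4}\rceil$. Your remark that Lemma~\ref{lem:long-bw-endpoints} needs no $q\ge 5$ hypothesis (so $q=3$ with $m=1$ is covered) is accurate.
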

  
  \begin{proof}
  By Lemma~\ref{lem:long-bw-endpoints}, there exists a
  pairwise skew family \(\Omega\) of two-dimensional node subspaces of size
  \[
  |\Omega|=\widetilde{N}_{\BW}(q)=\frac{4q}{3}+2=4m+2
  \]
  such that every node subspace in $\Omega$ has a designated feasible repair subspace hitting
  exactly \(q+1\) helper subspaces.
  
  Delete an arbitrary node subspace from \(\Omega\), and denote the remaining
  family by \(\Omega'\). Then
  \[
  |\Omega'|=\frac{4q}{3}+1.
  \]
  Since the node spaces in \(\Omega\) are pairwise skew, the node spaces in
  \(\Omega'\) remain pairwise skew. Therefore, \(\Omega'\) determines a $(\frac{4q}{3}+1,\frac{4q}{3}-1,2)$ MDS array code \(\mathcal C\) over \(\F_q\).
  
  For any remaining node subspace, keep its original designated repair subspace.
  This repair subspace is still feasible. Moreover, it originally hit exactly
  \(q+1\) helper subspaces in \(\Omega\); after deleting one node, it hits at least
  \(q\) helper subspaces in \(\Omega'\). Hence the repair bandwidth $\beta(\mathcal{C})$ is at most
  \[
  2(|\Omega'|-1)-q
  =\frac{5q}{3}=\left\lceil\frac{5q}{3}-\frac{3}{4}\right\rceil=\left\lceil\frac{5n-8}{4}\right\rceil.
  \]
  \end{proof}

  \begin{proof}[Proof of Theorem~\ref{thm:main-1}]
    Note that
    \[
      \widetilde{N}_{\BW}(q)-N_{\BW}(q)=
      \begin{cases}
      2,&q\equiv0\pmod3,\\[2mm]
      1,&q\equiv1\pmod3,\\[2mm]
      0,&q\equiv2\pmod3.
      \end{cases}
    \]
 By Corollary~\ref{cor:short-optimal-bw}, for any integer $n$ with $3\le n\le N_{\BW}(q)$ and $n\not\in\{5,6,9,10\}$, one has 
    $$\beta_{\opt}^{(q,n,2,2)}=\left\lceil \frac{5n-8}{4}\right\rceil.$$
    By Proposition~\ref{pro:long-bw}, if \(q\ge5\), then for every integer $n$ with $\widetilde{N}_{\BW}(q)\le n\le q^2+1$, one has 
\[
\beta_{\opt}^{(q,n,2,2)}
=
2n-q-3.
\]

When $q\equiv 1,2\pmod3$, the two ranges
\[
3\le n\le N_{\BW}(q)
\qquad\text{and}\qquad
\widetilde{N}_{\BW}(q)\le n\le q^2+1
\]
together cover precisely the range of code lengths stated in Theorem~\ref{thm:main-1}. Therefore, Theorem~\ref{thm:main-1} holds for
$q\ge5$, $q\equiv 1,2\pmod3$, and $n\not\in\{5,6,9,10\}$.
When $q\equiv 0\pmod3$, there remains exactly one gap, namely
\[
n=\frac{4q}{3}+1.
\]
However, Lemma~\ref{lem:short-bw-extra-mod0} fills this gap. Therefore, Theorem~\ref{thm:main-1} also holds for
$q\ge5$, $q\equiv 0\pmod3$, and $n\not\in\{5,6,9,10\}$.

It remains to consider the cases where $q=2,3,4$ or $n\in\{5,6,9,10\}$.
For $n=6$, the assertion follows from Propositions~\ref{prop:beta-n6},
\ref{prop:beta-n5,6-5}, and~\ref{prop:beta-n6-q-3-4}. For $n=5$, the
assertion follows from Proposition~\ref{prop:beta-n5}.
For $n=9$ or $10$, the assertion in even characteristic follows from
Propositions~\ref{prop:beta-even-n9-n10}, \ref{prop:no-beta10-q98-n10},
and~\ref{prop:no-beta9-q8-n9}; the assertion in characteristic \(3\)
follows from Propositions~\ref{prop:beta-char3-n9-n10},
\ref{prop:beta-q9-n9}, and~\ref{prop:no-beta10-q98-n10}; and the
assertion in all remaining characteristics follows from
Proposition~\ref{prop:beta-n9-n10-odd-not-char3}.

For $q=4$, we have $N_{\BW}(q)=\frac{4\cdot 4+2}{3}=6$. Hence Corollary~\ref{cor:short-optimal-bw} proves the assertion for code
lengths $3$ and $4$, Proposition~\ref{prop:beta-n5} proves the assertion
for code length $5$, and Proposition~\ref{prop:beta-n6-q-3-4} proves the
assertion for code length $6$.
For code length $7$, Lemma~\ref{lem:long-bw-endpoints} gives a construction
attaining the incidence-multiplicity bound, equivalently the
projective counting bound. Hence the assertion follows.
For code length $8$, Theorem~\ref{thm:main-2} shows that the
incidence-multiplicity bound, equivalently the projective counting bound,
is attainable for repair I/O. Since attaining this bound for repair I/O
also implies attaining it for repair bandwidth, the assertion follows.
Moreover, \cite[Theorem~1.3]{liu2026linear} proves the assertion for code
lengths
\[
9\le n\le 4^2+1=17.
\]

For $q=3$, we have $N_{\BW}(q)=\frac{4\cdot 3}{3}=4$. Hence
Corollary~\ref{cor:short-optimal-bw} proves the assertion for code
lengths $3$ and $4$, Proposition~\ref{prop:beta-n5} proves the assertion
for code length $5$, and Proposition~\ref{prop:beta-n6-q-3-4} proves the
assertion for code length $6$.
Moreover, \cite[Theorem~1.3]{liu2026linear} proves the assertion for code
lengths
\[
7\le n\le 3^2+1=10.
\]

It remains to consider \(q=2\). For \(n=4,5\), Theorem~\ref{thm:main-2}
shows that the incidence-multiplicity bound, equivalently the
projective counting bound, is attainable for repair I/O. Since attaining
this bound for repair I/O also implies attaining it for repair bandwidth,
the assertion follows. For \(n=3\), Theorem~\ref{thm:main-2} gives $\gamma_{\opt}^{(2,3,2,2)}=2$. Hence $\beta_{\opt}^{(2,3,2,2)}\le \gamma_{\opt}^{(2,3,2,2)}\le 2$. On the other hand, Theorem~\ref{thm:ZLH-lb} gives $\beta_{\opt}^{(2,3,2,2)}\ge 2$. Therefore \(\beta_{\opt}^{(2,3,2,2)}=2\), as stated in the theorem. This completes the proof.
  \end{proof}

\clearpage
\appendix
\section*{Appendix}
\phantomsection
\addcontentsline{toc}{section}{Appendix}
\setcounter{subsection}{0}
\renewcommand{\thesubsection}{\Alph{subsection}}
\makeatletter
\gdef\theHsection{appendix}%
\makeatother
\numberwithin{theorem}{subsection}

\refstepcounter{subsection}\label{app:beta-opt-exc}
\subsection*{\thesubsection\quad Optimal Repair Bandwidth for Small-Parameter Exceptional Cases}
\addcontentsline{toc}{appsubsection}{\protect\numberline{\thesubsection}Optimal Repair Bandwidth for Small-Parameter Exceptional Cases}

We handle the code lengths \(n=5,6,9,10\), which are not covered by Corollary~\ref{cor:short-optimal-bw}. By Theorem~\ref{thm:ZLH-lb} and Proposition~\ref{prop:short-bw}, when
\(3\le n\le N_{\BW}(q)\), we have
\[
  \beta_{\opt}^{(q,n,2,2)}
  \in
  \left\{
    \left\lceil \frac{5n-10}{4}\right\rceil,
    \left\lceil \frac{5n-8}{4}\right\rceil
  \right\}.
\]
For \(n=5,6,9,10\), the two quantities above differ by exactly \(1\).
Therefore, to prove $\beta_{\opt}^{(q,n,2,2)}=\left\lceil \frac{5n-10}{4}\right\rceil$, it suffices to construct an explicit \((n,n-2,2)\) MDS array code
\(\mathcal C\) over \(\F_q\) such that $\beta(\mathcal C)=\left\lceil \frac{5n-10}{4}\right\rceil$. We will show that, for these values of \(n\), such a construction always exists provided that \(q\) is sufficiently large.

To prove $\beta_{\opt}^{(q,n,2,2)}
=\left\lceil \frac{5n-8}{4}\right\rceil$, one has to rule out the existence of an \((n,n-2,2)\) MDS array code
\(\mathcal C\) satisfying $\beta(\mathcal C)=\left\lceil \frac{5n-10}{4}\right\rceil$. This is often considerably more difficult. However, as mentioned above, for the values \(n=5,6,9,10\) under consideration,
there are only finitely many, and in fact very few, values of \(q\) for which
this situation can occur. We may therefore use an exhaustive computer
search to handle these remaining cases.

To make the exhaustive search feasible, we first quotient out by several
natural equivalences. We may apply an invertible row operation to the whole
parity-check matrix \(H\), and we may also apply an invertible column operation
inside each parity-check block \(H_i\). More precisely, we may replace each $H_i$ by
\[
  U H_i V_i,
  \qquad
  U\in\GL_4(\F_q),\quad V_i\in\GL_2(\F_q).
\]
These operations preserve the MDS property. Indeed, for \(i\neq j\),
\[
  [U H_i V_i,\; U H_j V_j]
  =
  U[H_i,\;H_j]
  \begin{bmatrix}
    V_i&0\\
    0&V_j
  \end{bmatrix},
\]
so \([H_i,H_j]\) is invertible if and only if
\([U H_i V_i,U H_j V_j]\) is invertible. They also preserve all repair ranks appearing in the repair-bandwidth computation. Indeed, given any repair matrix \(M\), replace it by \(M U^{-1}\). Then, for
each \(j\),
\[
  (M U^{-1})(U H_j V_j)=M H_j V_j.
\]
In particular,
\[
  M H_i\in\GL_2(\F_q)
  \quad\Longleftrightarrow\quad
  (M U^{-1})(U H_i V_i)=M H_i V_i\in\GL_2(\F_q),
\]
and
\[
  \sum_{j\ne i}
  \rank\bigl((M U^{-1})(U H_j V_j)\bigr)
  =
  \sum_{j\ne i}\rank(MH_jV_j)
  =
  \sum_{j\ne i}\rank(MH_j),
\]
because right multiplication by an invertible matrix does not change rank.
Thus these equivalences do not change the MDS property or any value of
\(\beta_i\).

In addition, when computing \(\beta_i\) for a fixed \(i\), we may left-multiply
the repair matrix by any invertible \(2\times2\) matrix without changing the
ranks \(\rank(MH_j)\). Indeed, for \(A\in\GL_2(\F_q)\),
\[
  \rank(AMH_j)=\rank(MH_j)
  \qquad\text{for all }j.
\]
In particular, whenever \(MH_i\in\GL_2(\F_q)\), replacing \(M\) by
\((MH_i)^{-1}M\) gives the normalization
\[
  MH_i=I_2.
\]
Hence, in the minimization defining \(\beta_i\), it suffices to enumerate
repair matrices satisfying \(MH_i=I_2\). This normalization substantially reduces the number of free variables in the exhaustive search.

As a final preparation, we prove the following lemma, which allows us to reduce
the number of cases to be considered.

\begin{lemma}\label{lem:reduction-by-puncturing}
Let \(q\) be a prime power. Then the following implications hold:
\begin{enumerate}[label=(\alph*)]
  \item If \(\beta_{\opt}^{(q,6,2,2)}=5\), then $\beta_{\opt}^{(q,5,2,2)}=4$.
  \item If \(\beta_{\opt}^{(q,5,2,2)}=5\), then $\beta_{\opt}^{(q,6,2,2)}=6$.
  \item If \(\beta_{\opt}^{(q,10,2,2)}=10\), then $\beta_{\opt}^{(q,9,2,2)}=9$.
  \item If \(\beta_{\opt}^{(q,9,2,2)}=10\), then $\beta_{\opt}^{(q,10,2,2)}=11$.
\end{enumerate}
\end{lemma}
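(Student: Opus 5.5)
Each implication follows from puncturing: deleting one node of a longer code raises the worst-case bandwidth by at most one. Combined with the two-valued range of $\beta_{\opt}$ recorded just before the lemma, this gives all four statements.

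\emph{Puncturing estimate.} Let $\mathcal C$ be an $(n,n-2,2)$ MDS array code with node subspaces $\mathcal H_1,\dots,\mathcal H_n\le\mathbb V=\F_q^4$. Delete node $n$. Any two of the remaining subspaces still have direct sum $\mathbb V$, so $\mathcal H_1,\dots,\mathcal H_{n-1}$ determine an $(n-1,n-3,2)$ MDS array code $\mathcal C'$. Fix $i\le n-1$ and a feasible $W\in\mathcal W_i$ attaining $\alpha_i(\mathcal C)$. The same $W$ is feasible for $\mathcal C'$, and since $\dim(W\cap\mathcal H_n)\le 2$,
\[
\alpha_i(\mathcal C')\ge \alpha_i(\mathcal C)-2 .
\]
Using $\beta_i=2(n-1)-\alpha_i$ for $\mathcal C$ and $\beta_i=2(n-2)-\alpha_i$ for $\mathcal C'$, this gives $\beta_i(\mathcal C')\le\beta_i(\mathcal C)$. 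That is too weak, so I refine it with the structure from Subsection~\ref{subsec:repair-bd-lb}. By \cite[Lemma~6]{zhang2025optimal}, an optimal repair matrix for node $i$ may be chosen with $M_iH_j\ne0$ for all $j$. Then $\dim(W\cap\mathcal H_n)\le1$, hence $\alpha_i(\mathcal C')\ge\alpha_i(\mathcal C)-1$ and
\[
\beta_i(\mathcal C')\le\beta_i(\mathcal C)-1 .
\]
Therefore $\beta_{\opt}^{(q,n-1,2,2)}\le\beta_{\opt}^{(q,n,2,2)}-1$.

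\emph{Deriving (a)--(d).} I combine this monotonicity with the dichotomy
\[
\beta_{\opt}^{(q,n,2,2)}\in\Bigl\{\Bigl\lceil\tfrac{5n-10}{4}\Bigr\rceil,\ \Bigl\lceil\tfrac{5n-8}{4}\Bigr\rceil\Bigr\}.
\]
The lower end comes from Theorem~\ref{thm:ZLH-lb}. For the upper end I need a construction achieving $\lceil(5n-8)/4\rceil$. Proposition~\ref{prop:short-bw} supplies it for $n\le N_{\BW}(q)$. Lengths $n\in\{5,6,9,10\}$ beyond that range are where this becomes delicate. There I would argue directly: when $2n-q-3$ dominates, (a)--(d) only concern the lower value, and I would check case by case that the dichotomy still holds.

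Given the dichotomy, (a) and (c) follow at once. If $\beta_{\opt}^{(q,6,2,2)}=5$, then $\beta_{\opt}^{(q,5,2,2)}\le4$, and the lower bound $\lceil15/4\rceil=4$ forces equality. Likewise, if $\beta_{\opt}^{(q,10,2,2)}=10$, then $\beta_{\opt}^{(q,9,2,2)}\le 9=\lceil35/4\rceil$, so equality holds.

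Parts (b) and (d) are contrapositives. If $\beta_{\opt}^{(q,6,2,2)}<6$, the dichotomy gives $\beta_{\opt}^{(q,6,2,2)}=5$, and then (a) gives $\beta_{\opt}^{(q,5,2,2)}=4\neq5$. Similarly, if $\beta_{\opt}^{(q,10,2,2)}\ne11$, it equals $10$, and (c) gives $\beta_{\opt}^{(q,9,2,2)}=9\neq10$. For (b) and (d) the conclusion must equal the upper value, so I must also rule out any value above $\lceil(5n-8)/4\rceil$. This again needs the upper bound. If no construction is available, I would obtain it by puncturing a longer optimal code or use the Proposition~\ref{prop:short-bw} construction.

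\emph{Main obstacle.} The hard step is justifying the sharpened puncturing inequality $\beta_i(\mathcal C')\le\beta_i(\mathcal C)-1$. It requires an optimal $W$ for node $i$ that meets $\mathcal H_n$ in at most one dimension. Equivalently, $W\ne\mathcal H_n$ for some optimal $W$. If every optimal $W$ equals $\mathcal H_n$, I would first try choosing which node to delete, namely one that is not the repair subspace of any node. In $\mathbb V$ a given $2$-dimensional subspace $W$ can equal at most one $\mathcal H_j$, and only nodes whose optimal repair is forced to be a node subspace cause trouble. If that choice fails, I fall back on \cite[Lemma~6]{zhang2025optimal} as already invoked in Subsection~\ref{subsec:repair-bd-lb}.
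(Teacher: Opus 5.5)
Your argument for (a) and (c) is the paper's. You puncture one node, note that each surviving repair matrix downloads at least one symbol from the deleted node, and close with the lower bound of Theorem~\ref{thm:ZLH-lb}. The only difference is how you secure \(M_iH_n\neq 0\). You cite \cite[Lemma~6]{zhang2025optimal}, which gives the general monotonicity \(\beta_{\opt}^{(q,n-1,2,2)}\le\beta_{\opt}^{(q,n,2,2)}-1\). The paper instead derives it directly from the hypothesis: if \(M_iH_j=0\) for some \(j\neq i\), the MDS property forces \(\rank(M_iH_k)=2\) on the other \(n-2\) helpers. The bandwidth is then at least \(2(n-2)\), i.e.\ \(8>5\) or \(16>10\). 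So the ``main obstacle'' you describe does not arise here, and no special choice of the deleted node is needed.

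The genuine problem is in (b) and (d). In your write-up, as in the paper's, these need the upper bound \(\beta_{\opt}^{(q,n,2,2)}\le\lceil(5n-8)/4\rceil\) at the longer length. Your plan to verify the two-valued range ``case by case'' beyond \(N_{\BW}(q)\), or to get the upper bound by puncturing a longer code, cannot work, because the range is false there. For \(q=5\), Theorem~\ref{thm:incidence-multiplicity-bound} gives \(\beta_{\opt}^{(5,10,2,2)}\ge 2\cdot 10-5-3=12\). Meanwhile \(\beta_{\opt}^{(5,9,2,2)}=10\) by Proposition~\ref{prop:beta-n9-n10-odd-not-char3}. So the hypothesis of (d) holds at \(q=5\) but its conclusion fails: (d) as literally stated is false. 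Similarly, for \(q=2\) the quantity \(\beta_{\opt}^{(2,6,2,2)}\) in (b) is a minimum over an empty family, since \(6>q^2+1\).

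The correct versions of (b) and (d) must assume the two-valued range at the longer length. For \(n=6\) this holds for every \(q\ge3\): by Proposition~\ref{prop:short-bw} for \(q\ge4\), and by the code from \cite{liu2026linear} for \(q=3\). For \(n=10\) it holds for \(q\ge7\), where \(N_{\BW}(q)\ge10\), and fails for \(q\in\{3,4,5\}\). Under that assumption your contrapositive argument is complete. This is exactly what the paper's appeal to the ``known bounds'' tacitly uses. It also covers every application of the lemma in the paper, which invokes (b) only at \(q=5\) and never uses (d).
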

\begin{proof}
  We prove (a). Suppose that $\beta_{\opt}^{(q,6,2,2)}=5$. Let \(\mathcal C\) be a \((6,4,2)\) MDS array code over \(\F_q\) with
\(\beta(\mathcal C)=5\), and let
\[
  H=[\,H_1\ H_2\ H_3\ H_4\ H_5\ H_6\,]
\]
be a parity-check matrix of \(\mathcal C\), where each
\(H_i\in\F_q^{4\times 2}\). Since \(\beta(\mathcal C)=5\), there exist
repair matrices \(M_i\in\F_q^{2\times 4}\), \(i\in[6]\), such that
\[
  M_iH_i\in\GL_2(\F_q)
\]
and
\[
  \sum_{j\in[6]\setminus\{i\}}\rank(M_iH_j)\le 5
\]
for every \(i\in[6]\).

We claim that $M_iH_j\neq 0$ for all $i\neq j$. Suppose, to the contrary, that \(M_iH_j=0\) for some \(i_0\neq j_0\).
Since \(M_{i_0}H_{i_0}\in\GL_2(\F_q)\), we have \(\rank(M_{i_0})=2\).
Now take any \(i\in[6]\setminus\{i_0,j_0\}\). By the MDS property,
\[
  [\,H_{j_0}\ \ H_{i}\,]\in\GL_4(\F_q).
\]
Hence
\[
  \rank\bigl(M_{i_0}[\,H_{j_0}\ \ H_{i}\,]\bigr)=\rank(M_{i_0})=2.
\]
On the other hand,
\[
  M_{i_0}[\,H_{j_0}\ \ H_i\,]=[\,M_{i_0}H_{j_0}\ \ M_{i_0}H_i\,]=[\,0\ \ M_{i_0}H_i\,].
\]
Therefore
\[
  \rank(M_{i_0}H_i)=2.
\]
Thus
\[
  \sum_{j\in[6]\setminus\{i_0\}}\rank(M_{i_0}H_j)
  \ge
  \sum_{i\in[6]\setminus\{i_0,j_0\}}\rank(M_{i_0}H_i)
  =
  4\cdot 2
  =
  8,
\]
contradicting
\[
  \sum_{j\in[6]\setminus\{i_0\}}\rank(M_{i_0}H_j)\le 5.
\]
This proves the claim.

In particular, for every \(i\in[5]\), we have $\rank(M_iH_6)\ge 1$. Let
\[
  H'=[\,H_1\ H_2\ H_3\ H_4\ H_5\,].
\]
Since \(\mathcal C\) is MDS, \(H'\) defines a \((5,3,2)\) MDS array code
\(\mathcal C'\) over \(\F_q\). We use the same repair matrices \(M_i\),
\(i\in[5]\), for \(\mathcal C'\). Then for every \(i\in[5]\), $M_iH_i\in\GL_2(\F_q)$, and
\[
  \sum_{j\in[5]\setminus\{i\}}\rank(M_iH_j)=
  \sum_{j\in[6]\setminus\{i\}}\rank(M_iH_j)-
  \rank(M_iH_6)\le 5-1=4.
\]
Thus $\beta(\mathcal C')\le 4$.

On the other hand, by Theorem~\ref{thm:ZLH-lb},
\[
  \beta(\mathcal C')
  \ge
  \left\lceil\frac{5\cdot 5-10}{4}\right\rceil
  =4.
\]
Therefore $\beta(\mathcal C')=4$, and hence $\beta_{\opt}^{(q,5,2,2)}=4$. This proves (a).

Part (b) follows from the contrapositive of (a), together with the known
bounds
\[
  \beta_{\opt}^{(q,5,2,2)}\in\{4,5\},
  \qquad
  \beta_{\opt}^{(q,6,2,2)}\in\{5,6\}.
\]
Indeed, if \(\beta_{\opt}^{(q,5,2,2)}=5\), then
\(\beta_{\opt}^{(q,5,2,2)}\neq4\), so by the contrapositive of (a) we have
\(\beta_{\opt}^{(q,6,2,2)}\neq5\). Hence
\[
  \beta_{\opt}^{(q,6,2,2)}=6.
\]
The proofs of (c) and (d) are analogous.
\end{proof}

\refstepcounter{subsubsection}\label{app:beta-opt-6-2-2}
\subsubsection*{\thesubsubsection\quad The Exceptional Case \texorpdfstring{$n=5,6$}{n=5,6}}
\addcontentsline{toc}{appsubsubsection}{\protect\numberline{\thesubsubsection}The Exceptional Case \texorpdfstring{$n=5,6$}{n=5,6}}

\begin{proposition}\label{prop:beta-n6}
  Let \(q\ge 7\) be a prime power. Then
  \[
  \beta_{\opt}^{(q,6,2,2)}=5.
  \]
  \end{proposition}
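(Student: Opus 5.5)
The lower bound is already available: by Theorem~\ref{thm:ZLH-lb}, every $(6,4,2)$ MDS array code satisfies $\beta(\mathcal C)\ge\lceil(5\cdot6-10)/4\rceil=5$. So it suffices to construct, for every prime power $q\ge7$, a $(6,4,2)$ MDS array code over $\F_q$ with $\beta(\mathcal C)\le5$.

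In the subspace language of Section~\ref{sec:prelim}, we need six pairwise skew planes $\mathcal H_1,\dots,\mathcal H_6\le\F_q^4$ with the following property. For each $i$ there should be a plane $W_i$ such that $W_i\cap\mathcal H_i=\{0\}$ and $\dim(W_i\cap\mathcal H_j)=1$ for all five $j\ne i$. This gives $\BW_i=2\cdot5-5=5$. Note that $\dim(W_i\cap\mathcal H_j)=2$ is impossible unless $W_i=\mathcal H_j$, and then $W_i$ would be skew to the other helpers. So the requirement is exactly that $W_i$ is a ``transversal'' plane: it meets every helper in a line and misses the failed node.

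I would construct this from the systematic template $\mathcal H_5=\Col\begin{bmatrix}I\\0\end{bmatrix}$, $\mathcal H_6=\Col\begin{bmatrix}0\\I\end{bmatrix}$ and $\mathcal H_j=\Col\begin{bmatrix}I\\X_j\end{bmatrix}$ for $j\in[4]$, with invertible $X_j$ and pairwise invertible differences $X_j-X_{j'}$ (this is the MDS condition). The repair plane for node $6$ will be $W_6=\F_q^2\oplus\langle w\rangle$. It automatically meets $\mathcal H_5$ in a line and misses $\mathcal H_6$, and it meets $\mathcal H_j$ in a line exactly when $\langle w\rangle$ lies in the image of $X_j$ restricted to $\F_q^2$ modulo $\ker$. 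Unwinding this, $W=\{(u,v): \phi(v)=0\}$ for a functional $\phi$, and the condition becomes that $\phi X_j$ is nonzero for every $j$, which holds automatically.

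For the remaining five repairs the condition ``$W$ meets $\mathcal H_j$'' becomes a determinantal equation. Following the rank-one mechanism of Lemma~\ref{lem:mod-1-geometry} and Lemma~\ref{lem:graph-line-incidence}, I would take $X_j=\mathrm{diag}(x_j,y_j)$ and repair planes of graph type $\Gamma(A)$ or $\{(u,-Bu)\}$. Each incidence condition then turns into a single bilinear equation of the form $1+\alpha x_j+\beta y_j+\delta x_jy_j=0$. The whole design therefore reduces to finding four points $(x_j,y_j)\in(\F_q^\times)^2$, with pairwise distinct $x$'s and pairwise distinct $y$'s, together with six conics or lines of this shape. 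Each conic must pass through all the prescribed points except the one belonging to the failed node, which it must avoid. The skewness of $W_i$ to $\mathcal H_i$ is then an inequation.

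To cut down the work, I would impose a symmetry. Concretely, I would choose the configuration invariant under a group permuting the six nodes transitively, for instance an element of $\PGL_4(\F_q)$ realized through $\F_q[T]$ as in Lemma~\ref{lem:cyclic-T}. Then one verified repair transports to all six nodes.

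The main obstacle is solving this incidence system so that the MDS and skewness inequations are simultaneously satisfied for \emph{every} $q\ge7$, not just generically. My approach is to first fix the free parameters so that all equalities hold identically, e.g.\ $(x_j,y_j)=(\lambda_j,\lambda_j^{-1}c)$ lying on a common hyperbola. The remaining finitely many inequations are each polynomial of bounded degree in one free parameter $\lambda$. A count of excluded values then shows that a good $\lambda$ exists once $q$ exceeds a small threshold. The few prime powers between $7$ and that threshold would be dispatched by an explicit example or by the exhaustive search set up in this appendix. I expect the threshold to land at $q=7$, consistent with the failure at $q=5$ signalled in Theorem~\ref{thm:main-1}(b).
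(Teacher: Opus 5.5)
The lower bound and your reduction to ``transversal'' repair planes are correct. However, the construction you propose is impossible for every $q$, so no count of excluded parameter values can complete it.

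Keep your normalization $\mathcal H_5=\Col\begin{bmatrix}I\\0\end{bmatrix}$, $\mathcal H_6=\Col\begin{bmatrix}0\\I\end{bmatrix}$, $\mathcal H_j=\Col\begin{bmatrix}I\\X_j\end{bmatrix}$ with $X_j=\mathrm{diag}(x_j,y_j)$.
\begin{itemize}
\item A plane skew to $\mathcal H_5$ is a graph $W_5=\{(Au,u):u\in\F_q^2\}$. It meets $\mathcal H_6$ in a line iff $\rank A=1$, and then it meets $\mathcal H_j$ iff $\det(I-X_jA)=1-a_{11}x_j-a_{22}y_j=0$.
\item A plane skew to $\mathcal H_6$ is $W_6=\{(u,Du):u\in\F_q^2\}$. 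It meets $\mathcal H_5$ in a line iff $\rank D=1$, and then it meets $\mathcal H_j$ iff $\det(X_j-D)=x_jy_j-d_{22}x_j-d_{11}y_j=0$.
\end{itemize}
So the four points $(x_j,y_j)$ must lie both on the line $a_{11}x+a_{22}y=1$ and on the conic $xy=d_{22}x+d_{11}y$. The MDS condition makes the $x_j$ pairwise distinct and the $y_j$ pairwise distinct, which forces $a_{11}a_{22}\neq0$. Substituting the line into the conic then gives a quadratic in $x$ with leading coefficient $-a_{11}/a_{22}\neq0$, so at most two of the four points satisfy both conditions.

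This is exactly the line-versus-conic argument in the proof of Lemma~\ref{lem:mod-1-geometry}. The mechanism you cite as a construction tool is in fact an obstruction: under your ansatz, nodes $5$ and $6$ can never both be repaired with bandwidth $5$. Your hyperbola choice $(x_j,y_j)=(\lambda_j,c\lambda_j^{-1})$ does not make anything hold identically either. The two conditions become $a_{11}\lambda^2-\lambda+a_{22}c=0$ and $d_{22}\lambda^2-c\lambda+d_{11}c=0$, each with at most two roots.

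There are further problems.
\begin{itemize}
\item The spaces $\F_q^2\oplus\langle w\rangle$ and $\{(u,v):\phi(v)=0\}$ you propose for node $6$ are three-dimensional and contain $\mathcal H_5$, so they are not repair planes. The correct form is the graph $\{(u,Du)\}$ above, which is precisely where the conic constraint comes from.
\item The transitive symmetry is never specified. Nothing in Lemma~\ref{lem:cyclic-T} yields an element of $\PGL_4(\F_q)$ permuting six suitable planes.
\item Ultimately no code and no repair matrix is written down.
\end{itemize}

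The missing idea is to give up simultaneous diagonalization of the front blocks. The paper uses the family $W(b,c)=\begin{bmatrix}1&b\\c&1+b+bc\end{bmatrix}$. For it, $\det W(b,c)=1+b$ and $\det\bigl(W(b,c)-W(b',c')\bigr)=-(b-b')(c-c')$, so the MDS property reduces to $b_i\neq-1$ plus distinctness of the $b_i$ and of the $c_i$. Two fixed matrices $M_5,M_6$ have rank $1$ on every block $\begin{bmatrix}I_2\\W(b,c)\end{bmatrix}$, which handles nodes $5$ and $6$ at once. Nodes $1$--$4$ are repaired by matrices $M(u,v)$ whose incidence condition is $(u+b)(c-v)+(1+b)=0$. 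Explicit $(b_i,c_i)$ and $(u_i,v_i)$ in terms of three parameters $s,A,B$ work whenever an explicit product of factors is non-zero. At most six values are excluded at each step, so this holds uniformly for all $q\ge7$ with no computer search.
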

  
  \begin{proof}
  By Theorem~\ref{thm:ZLH-lb}, it suffices to construct a \((6,4,2)\) MDS array code over \(\F_q\)
  with $\beta(\mathcal C)=5$. Choose elements \(s,A,B\in\F_q\) such that
  \[
 s(s-1)AB(A-B)(A+1)(B+1)(A+s)(B+s)(S-A)(S-B)(S-AB-A-B)\neq0,
  \]
  where $S:=s^2-2s$. Such a choice is possible for every \(q\ge 7\). Indeed, first choose
  \(s\notin\{0,1\}\). Then choose
  \[
  A\notin \{0,-1,-s,S\}.
  \]
  Finally choose
  \[
  B\notin
  \left\{
  0,-1,-s,S,A,\frac{S-A}{A+1}
  \right\}.
  \]
  At each step at most \(6\) elements are excluded, so this is possible since
  \(q\ge 7\).
  
  Put
  \[
  \Delta:=AB+A+B-S.
  \]
  For \(b,c\in\F_q\), define
  \[
  W(b,c):=
  \begin{bmatrix}
  1&b\\
  c&1+b+bc
  \end{bmatrix}.
  \]
  Set
  \[
  (b_1,c_1):=(0,0),
  \]
  \[
  (b_2,c_2):=
  \left(
  A,\,
  -\frac{A(s-1)}{s(A+s)}
  \right),
  \]
  \[
  (b_3,c_3):=
  \left(
  B,\,
  -\frac{B(s-1)}{s(B+s)}
  \right),
  \]
  and
  \[
  (b_4,c_4):=
  \left(
  \frac{\Delta}{(s-1)^2},\,
  -\frac{\Delta}{(A+s)(B+s)}
  \right).
  \]
  For \(1\le i\le 4\), let
  \[
  W_i:=W(b_i,c_i).
  \]
  Define the six parity-check blocks by
  \[
  H_i=
  \begin{bmatrix}
  I_2\\
  W_i
  \end{bmatrix}
  \qquad(1\le i\le 4),\qquad H_5=
  \begin{bmatrix}
  I_2\\
  0
  \end{bmatrix},
  \qquad
  H_6=
  \begin{bmatrix}
  0\\
  I_2
  \end{bmatrix}.
  \]
  
  We first verify the MDS property. For any \(b,c,b',c'\in\F_q\), direct
  calculation gives
  \[
  \det\bigl( W(b,c)\bigr)=1+b
  \]
  and
  \[
  \det\bigl(W(b,c)-W(b',c')\bigr)=-(b-b')(c-c').
  \]
  The choice of \(s,A,B\) implies that \(b_i\neq -1\) for every \(i\in[4]\), that
the \(b_i\)'s are pairwise distinct, and that the \(c_i\)'s are pairwise
distinct. To see this, we record the following elementary simplifications. For
the \(b_i\)'s, one has
\[
b_4=0
\iff
\Delta=0
\iff
S-AB-A-B=0,
\]
\[
b_4=-1
\iff
(A+1)(B+1)=0,
\]
and
\[
b_4=A
\iff
(A+1)(S-B)=0,
\qquad
b_4=B
\iff
(B+1)(S-A)=0.
\]
For the \(c_i\)'s, one computes
\[
c_2-c_3
=
-\frac{(A-B)(s-1)}{(A+s)(B+s)},
\]
\[
c_4-c_2
=
\frac{S-B}{s(B+s)},
\qquad
c_4-c_3
=
\frac{S-A}{s(A+s)}.
\]
All numerators and denominators appearing above are non-zero by the choice of
\(s,A,B\). Hence \(b_i\neq -1\) for every \(i\), the \(b_i\)'s are pairwise
distinct, and the \(c_i\)'s are pairwise distinct, which implies that \(W_i\in\GL_2(\F_q)\) for any \(i\in[4]\), and
  \[
  W_i-W_j\in\GL_2(\F_q)
  \qquad(1\le i<j\le 4).
  \]
  It follows that
  \[
  [H_i,H_j]\in \GL_4(\F_q)
  \qquad(1\le i<j\le 6).
  \]
  Hence the parity-check matrix \(H=[\,H_1\ H_2\ H_3\ H_4\ H_5\ H_6\,]\) defines a \((6,4,2)\)
  MDS array code over \(\F_q\).
  
  It remains to construct repair matrices. For \(u,v\in\F_q\), define
  \[
  M(u,v):=
  \begin{bmatrix}
  1&-u&0&0\\
  0&0&v&-1
  \end{bmatrix}.
  \]
  A direct calculation gives
  \[
  \det\!\left(
  M(u,v)
  \begin{bmatrix}
  I_2\\
  W(b,c)
  \end{bmatrix}
  \right)
  =
  -\Phi_{u,v}(b,c),
  \]
  where
  \[
  \Phi_{u,v}(b,c):=(u+b)(c-v)+(1+b).
  \]
  Moreover, whenever \(\Phi_{u,v}(b,c)=0\), the matrix
  \[
  M(u,v)
  \begin{bmatrix}
  I_2\\
  W(b,c)
  \end{bmatrix}
  \]
  has rank \(1\), since its first row is \((1,-u)\).
  
  Now set
  \[
  (u_1,v_1):=
  \left(
  \frac{AB+A+B+s}{s-1},\,
  \frac{(s-1)(s^2-AB)}{s(A+s)(B+s)}
  \right),
  \]
  \[
  (u_2,v_2):=
  \left(
  \frac{B+s}{s-1},\,
  \frac{s-1}{B+s}
  \right),
  \]
  \[
  (u_3,v_3):=
  \left(
  \frac{A+s}{s-1},\,
  \frac{s-1}{A+s}
  \right),
  \]
  and
  \[
  (u_4,v_4):=
  \left(
  s,\,
  \frac1s
  \right).
  \]
  For \(1\le i\le 4\), define
\[
M_i:=M(u_i,v_i).
\]
A direct substitution gives
\[
\Phi_{u_i,v_i}(b_j,c_j)=0
\qquad(1\le i,j\le4,\ i\neq j),
\]
while
\[
\Phi_{u_1,v_1}(b_1,c_1)
=
\frac{AB\Delta}{s(A+s)(B+s)},
\]
\[
\Phi_{u_2,v_2}(b_2,c_2)
=
-\frac{A(A-B)(S-B)}{s(A+s)(B+s)},
\]
\[
\Phi_{u_3,v_3}(b_3,c_3)
=
\frac{B(A-B)(S-A)}{s(A+s)(B+s)},
\]
and
\[
\Phi_{u_4,v_4}(b_4,c_4)
=
\frac{(S-A)(S-B)(S-AB-A-B)}
     {s(A+s)(B+s)(s-1)^2}.
\]
By our choice of \(s,A,B\), all four displayed quantities are non-zero.
Therefore, $M_iH_i\in\GL_2(\F_q)$ for any $i\in [4]$ and
\[
\rank(M_iH_j)=1
\qquad(1\le i,j\le4,\ j\neq i).
\]

It remains to check the two remaining helper nodes. For every \(u,v\in\F_q\),
one has
\[
M(u,v)H_5
=
\begin{bmatrix}
1&-u\\
0&0
\end{bmatrix},
\qquad
M(u,v)H_6
=
\begin{bmatrix}
0&0\\
v&-1
\end{bmatrix}.
\]
Hence
\[
\rank(M(u,v)H_5)=\rank(M(u,v)H_6)=1.
\]
In particular,
\[
\rank(M_iH_5)=\rank(M_iH_6)=1
\qquad(i\in [4]).
\]
Thus each of the first four nodes can be repaired by downloading one symbol
from each of the other five nodes.
  
  For the last two nodes, define
  \[
  M_5:=
  \begin{bmatrix}
  0&1&0&0\\
  1&0&-1&0
  \end{bmatrix},
  \qquad
  M_6:=
  \begin{bmatrix}
  0&0&1&0\\
  1&-1&0&1
  \end{bmatrix}.
  \]
  Then
  \[
  M_5H_5=
  \begin{bmatrix}
  0&1\\
  1&0
  \end{bmatrix}
  \in\GL_2(\F_q),
  \]
  and, for every $i\in [4]$,
  \[
  M_5H_i=
  \begin{bmatrix}
  0&1\\
  0&-b_i
  \end{bmatrix},
  \qquad
  M_5H_6=
  \begin{bmatrix}
  0&0\\
  -1&0
  \end{bmatrix}.
  \]
  Hence
  \[
  \rank(M_5H_j)=1
  \qquad(j\neq 5).
  \]
  Similarly,
  \[
  M_6H_6=I_2\in\GL_2(\F_q),
  \]
  and, for every \(1\le i\le 4\),
  \[
  M_6H_i=
  \begin{bmatrix}
  1&b_i\\
  c_i+1&b_i(c_i+1)
  \end{bmatrix},
  \qquad
  M_6H_5=
  \begin{bmatrix}
  0&0\\
  1&-1
  \end{bmatrix}.
  \]
  Thus
  \[
  \rank(M_6H_j)=1
  \qquad(j\neq 6).
  \]
  
  Consequently every node can be repaired by downloading exactly one symbol from
  each of the other five nodes. Therefore
  \[
  \beta(\mathcal C)=5.
  \]
  \end{proof}

  \begin{proposition}\label{prop:beta-n6-q-3-4}
    One has
    \[
    \beta_{\opt}^{(3,6,2,2)}=6,\qquad
    \beta_{\opt}^{(4,6,2,2)}=5.
    \]
    \end{proposition}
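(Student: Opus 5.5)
The statement concerns the two small fields $q=3,4$ at $n=6$. By Theorem~\ref{thm:ZLH-lb} and the appendix discussion, we already know $\beta_{\opt}^{(q,6,2,2)}\in\{5,6\}$. Hence the proof splits into two independent tasks. For $q=4$ we exhibit an explicit $(6,4,2)$ MDS array code over $\F_4$ with $\beta(\mathcal C)=5$. For $q=3$ we show by exhaustive search that no $(6,4,2)$ MDS code over $\F_3$ has $\beta(\mathcal C)=5$.

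\textbf{The case $q=4$.} I would reuse the template of Proposition~\ref{prop:beta-n6}, in systematic form
\[
H_i=\begin{bmatrix}I_2\\ W_i\end{bmatrix}\ (i\in[4]),\qquad H_5=\begin{bmatrix}I_2\\0\end{bmatrix},\qquad H_6=\begin{bmatrix}0\\I_2\end{bmatrix}.
\]
The parametrized choice with $W(b,c)$ needs $q\ge7$ for the generic avoidance argument. Over $\F_4$ I would instead search directly for parameters $s,A,B\in\F_4$ satisfying the nonvanishing product condition, since the exclusion counts there are only worst-case bounds.

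If no such triple exists, I would search over quadruples $W_1,\dots,W_4\in\GL_2(\F_4)$ with pairwise invertible differences, normalized by the equivalences $UH_iV_i$ described above. For each node I would look for a repair matrix $M_i$ with $M_iH_i=I_2$ and $\rank(M_iH_j)=1$ for all $j\ne i$.

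The final certificate would be an explicit matrix $H$ and six repair matrices. Checking it is a finite verification: every pair $[H_i,H_j]$ is invertible, and every cross rank equals $1$.

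\textbf{The case $q=3$.} Here the goal is nonexistence, and I would use a computer search reduced as in the appendix preamble. By the claim in the proof of Lemma~\ref{lem:reduction-by-puncturing}(a), $\beta(\mathcal C)=5$ forces $\rank(M_iH_j)=1$ exactly for all $j\ne i$.

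Equivalently, in the subspace language, each node $i$ needs a $2$-dimensional $W$ with $W\cap\mathcal H_i=0$ that meets all five other node subspaces nontrivially. The search proceeds in three steps:
\begin{itemize}
\item fix $\mathcal H_5,\mathcal H_6$ as the coordinate planes;
\item enumerate the unordered $4$-sets of graphs $\{(v,Wv)\}$ with $W\in\GL_2(\F_3)$ and $W_i-W_j$ invertible, modulo the residual $\GL_2\times\GL_2$ action fixing the two coordinate planes;
\item for each such configuration, test every node for a qualifying $W$.
\end{itemize}
The search space is tiny: $|\GL_2(\F_3)|=48$, and there are only $130$ planes in $\F_3^4$, so this is cheap.

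Alternatively, Lemma~\ref{lem:reduction-by-puncturing}(b) gives a shortcut. If the $n=5$, $q=3$ case has already been shown to give $\beta_{\opt}^{(3,5,2,2)}=5$ (Proposition~\ref{prop:beta-n5} is cited for $q=3$ with value $4$, so this route does not apply), we could conclude directly. Since it does not apply, a direct search at $n=6$ is needed.

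The main obstacle is making the $q=3$ nonexistence argument rigorous and reproducible, with a clear symmetry reduction and a justification that the reduction is exhaustive. I would try first a conceptual shortcut. The spread-like configuration of six pairwise skew lines in $\mathrm{PG}(3,3)$ needs, for each line, a line skew to it meeting the other five. Counting transversals via reguli may rule this out by hand. If it does not, I would fall back on the enumeration.
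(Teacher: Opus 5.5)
\textbf{The gap is in the case $q=3$: you never prove $\beta_{\opt}^{(3,6,2,2)}\le 6$.} You take the two-value window $\beta_{\opt}^{(q,6,2,2)}\in\{5,6\}$ from the appendix preamble. But that window gets its upper end from Proposition~\ref{prop:short-bw}, which requires $n\le N_{\BW}(q)$, and $N_{\BW}(3)=4<6$. So even a complete nonexistence search for $\beta=5$ would only give $\beta_{\opt}^{(3,6,2,2)}\ge 6$.

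What is missing is an explicit $(6,4,2)$ MDS array code over $\F_3$ with $\beta(\mathcal C)\le 6$. That means six pairwise skew lines of $\mathrm{PG}(3,3)$, each admitting a skew repair line that meets four of the other five. Such a code attains the incidence-multiplicity bound, so it is not automatic. The paper imports it from \cite[Proposition~B.1]{liu2026linear}. Alternatively, Lemma~\ref{lem:long-bw-endpoints} applied with $q=3$ (where $q+1=3\cdot 1+1$) already gives a pairwise skew family of size $\widetilde N_{\BW}(3)=6$. In that family every node has a designated repair subspace hitting exactly $q+1=4$ helpers, hence $\beta\le 2\cdot 5-4=6$.

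\textbf{The lower bound for $q=3$, which you call the main obstacle, is a one-liner.} In your own reformulation, $\beta_i=5$ requires a plane $W$ meeting each of the five other node planes in a line. The five one-dimensional intersections $W\cap\mathcal H_j$ are pairwise distinct because the $\mathcal H_j$ are pairwise skew. However, $\mathbb P(W)$ has only $q+1=4$ points. This pigeonhole is exactly Theorem~\ref{thm:incidence-multiplicity-bound}, giving $\beta(\mathcal C)\ge 2n-q-3=6$ directly. That is how the paper argues; no search or regulus count is needed.

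\textbf{For $q=4$, your fallback plan is fine, but the first attempt cannot succeed.} In characteristic $2$ one has $S=s^2-2s=s^2$. So for either admissible $s\in\F_4\setminus\{0,1\}$, the forbidden set $\{0,1,s,s^2\}$ for $A$ is all of $\F_4$, and the $(s,A,B)$ family of Proposition~\ref{prop:beta-n6} is empty. The case therefore rests entirely on an explicit certificate. The paper supplies one: concrete blocks $H_1,\dots,H_6$ and repair matrices $M_1,\dots,M_6$ over $\F_4$ with each $M_iH_i$ invertible and all cross ranks equal to $1$.
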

    
    \begin{proof}
    We treat the three cases separately.
    
    \smallskip
    \noindent\textbf{The case \(q=3\).}
    By Theorem~\ref{thm:incidence-multiplicity-bound},
    \[
    \beta(\mathcal C)\ge 2n-q-3=12-3-3=6
    \]
    for every \((6,4,2)\) MDS array code $\mathcal C$ over \(\F_3\). On the other hand, Proposition B.1 in \cite{liu2026linear} shows the existence of a \((6,4,2)\) MDS array code $\mathcal{C}_0$ over \(\F_3\)
    with $\beta(\mathcal{C}_0)=6$. Hence
    \[
    \beta_{\opt}^{(3,6,2,2)}=6.
    \]
    
    \smallskip
    \noindent\textbf{The case \(q=4\).}
    By Theorem~\ref{thm:ZLH-lb},
    $$\beta(\mathcal{C})\ge\left\lceil \frac{5\cdot 6-10}{4}\right\rceil=5$$
    for every \((6,4,2)\) MDS array code $\mathcal C$ over \(\F_4\). Hence it suffices to construct a \((6,4,2)\) MDS array code $\mathcal C$ over \(\F_4\) with $\beta(\mathcal{C})=5$. Let
    \[
    \F_4=\{0,1,\omega,\omega+1\},
    \qquad
    \omega^2+\omega+1=0.
    \]
    Define
    \[
    H=[\,H_1\ H_2\ H_3\ H_4\ H_5\ H_6\,],
    \]
    where
    \[
    H_1=
    \begin{bmatrix}
    0&1\\
    1&0\\
    1&0\\
    0&1
    \end{bmatrix},\qquad
    H_2=
    \begin{bmatrix}
    1&0\\
    1&1\\
    0&1\\
    0&1
    \end{bmatrix},\qquad H_3=
    \begin{bmatrix}
    \omega+1&0\\
    \omega&1\\
    0&1\\
    0&\omega
    \end{bmatrix},
    \]
    \[
    H_4=
    \begin{bmatrix}
    \omega&0\\
    \omega+1&1\\
    0&1\\
    0&\omega+1
    \end{bmatrix},\qquad H_5=
    \begin{bmatrix}
    0&0\\
    1&0\\
    0&0\\
    0&1
    \end{bmatrix},\qquad  H_6=
    \begin{bmatrix}
    1&0\\
    0&0\\
    0&1\\
    0&0
    \end{bmatrix}.
    \]
    A direct calculation shows that
    \[
    [H_i,H_j]\in\GL_4(\F_4)
    \qquad(i\ne j).
    \]
    Thus \(H\) defines a \((6,4,2)\) MDS array code over \(\F_4\).
    
    Now take
    \[
    M_1=
    \begin{bmatrix}
    0&0&1&0\\
    0&0&0&1
    \end{bmatrix},\qquad
    M_2=
    \begin{bmatrix}
    1&0&1&0\\
    0&1&0&1
    \end{bmatrix},\qquad
    M_3=
    \begin{bmatrix}
    \omega+1&0&1&0\\
    0&\omega&0&1
    \end{bmatrix},
    \]
    \[
    M_4=
    \begin{bmatrix}
    \omega&0&1&0\\
    0&\omega+1&0&1
    \end{bmatrix},\qquad 
    M_5=
    \begin{bmatrix}
    0&1&0&0\\
    1&0&0&1
    \end{bmatrix},\qquad
    M_6=
    \begin{bmatrix}
    1&0&0&0\\
    0&1&1&0
    \end{bmatrix}.
    \]
    Again, direct calculation gives
    \[
    M_iH_i\in\GL_2(\F_4)
    \qquad(i\in [6]),
    \]
    and
    \[
    \rank(M_iH_j)=1
    \qquad(i\ne j).
    \]
    Therefore each failed node can be repaired by downloading one symbol from each
    of the other five nodes, and thus $\beta(\mathcal C)=5$. It follows that
    \[
    \beta_{\opt}^{(4,6,2,2)}=5.
    \]
\end{proof}

  \begin{proposition}\label{prop:beta-n5}
    One has
    \[
    \beta_{\opt}^{(q,5,2,2)}
    =
    \begin{cases}
    5,& q=2,\\
    4,& q=3,\ q=4,\text{ or } q\ge 7.
    \end{cases}
    \]
    \end{proposition}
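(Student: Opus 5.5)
The proposition splits into three parts: the values $q=4$ and $q\ge 7$, which follow from the $n=6$ results by puncturing; $q=2$, where $n=5=q^2+1$ and the projective counting bound governs; and $q=3$, which needs its own construction. Throughout, Theorem~\ref{thm:ZLH-lb} gives the lower bound $\beta(\mathcal C)\ge\lceil (5\cdot5-10)/4\rceil=4$ for every $(5,3,2)$ MDS array code.

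\emph{The cases $q=4$ and $q\ge 7$.} Proposition~\ref{prop:beta-n6} gives $\beta_{\opt}^{(q,6,2,2)}=5$ for $q\ge 7$. Proposition~\ref{prop:beta-n6-q-3-4} gives $\beta_{\opt}^{(4,6,2,2)}=5$. Lemma~\ref{lem:reduction-by-puncturing}(a) then yields $\beta_{\opt}^{(q,5,2,2)}=4$ in both situations. Concretely, one deletes node $6$. The same repair matrices still work, and each loses at least one symbol of download because $M_iH_6\neq0$.

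\emph{The case $q=2$.} Here $n=5=q^2+1$, so Theorem~\ref{thm:incidence-multiplicity-bound} gives $\beta(\mathcal C)\ge 2n-q-3=5$ for every code, which beats the bound $4$. For the upper bound I would use the already established $q=2$, $n=5$ case of Theorem~\ref{thm:main-2}, namely the Desarguesian spread code of \cite{liu2026linear}. It has $\gamma=2n-q-3=5$, so $\beta\le\gamma=5$.

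\emph{The case $q=3$, the main obstacle.} Here $N_{\BW}(3)=4<5$, so no earlier construction applies. The $n=6$ route is also closed, since $\beta_{\opt}^{(3,6,2,2)}=6$. The counting bound is $2\cdot5-3-3=4$, so it does not forbid $\beta=4$. What is needed is an explicit $(5,3,2)$ code over $\F_3$ in which every node is repaired by downloading exactly one symbol from each of the other four nodes. Geometrically, this means five pairwise skew lines $\mathcal H_1,\dots,\mathcal H_5$ in $\mathrm{PG}(3,3)$ such that, for each $i$, some line $W_i$ is skew to $\mathcal H_i$ and meets the other four.

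The natural family $W(b,c)$ from Proposition~\ref{prop:beta-n6} cannot be used directly. The MDS conditions there force the $b_i$ to be distinct and different from $-1$, which leaves only two admissible values in $\F_3$. Taking five lines of a Desarguesian spread also seems to fail. A transversal of four spread lines exists only when those four lie in a common regulus of the spread, and three spread lines determine a unique such regulus, so the regulus constraints cannot all be satisfied simultaneously.

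I would therefore find the configuration by a small computer search. I would put $H$ in the systematic form $H_j=[I_2;W_j]$ for $j\in[3]$, with $H_4=[I_2;0]$ and $H_5=[0;I_2]$. This requires $W_j$, $W_j-W_{j'}\in\GL_2(\F_3)$, and the remaining freedom in the equivalences from the appendix preamble can be used to normalize, for example $W_1=I_2$. For each candidate code I would enumerate repair matrices normalized by $MH_i=I_2$, which leaves four free entries over $\F_3$ per node. I would accept the code if every $i$ admits an $M$ with $\rank(MH_j)=1$ for all $j\ne i$. The search space is tiny. I would then record the resulting $H$ and $M_1,\dots,M_5$ explicitly, as in the $q=4$ part of Proposition~\ref{prop:beta-n6-q-3-4}, so that the certificate can be checked by direct calculation.
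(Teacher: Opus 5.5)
Your decomposition coincides with the paper's proof. It uses the lower bound $4$ from Theorem~\ref{thm:ZLH-lb}, Lemma~\ref{lem:reduction-by-puncturing}(a) applied to Propositions~\ref{prop:beta-n6} and~\ref{prop:beta-n6-q-3-4} for $q\ge 7$ and $q=4$, the incidence-multiplicity bound plus the Desarguesian spread for $q=2$, and an explicit certificate over $\F_3$ for $q=3$. The one thing you have not yet supplied is that $\F_3$ certificate, since a search plan alone does not prove existence. The paper's witness is $H_1=[0;I_2]$, $H_2=[I_2;0]$ and $H_j=[I_2;W_j]$ for $j=3,4,5$, with $W_3=\left[\begin{smallmatrix}0&1\\1&0\end{smallmatrix}\right]$, $W_4=\left[\begin{smallmatrix}0&2\\2&1\end{smallmatrix}\right]$ and $W_5=\left[\begin{smallmatrix}1&1\\0&2\end{smallmatrix}\right]$; for this code your $3^4$-candidate search per node does find repair matrices with $\rank(M_iH_j)=1$ for all $j\neq i$.
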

    
    \begin{proof}
    Theorem~\ref{thm:ZLH-lb} implies that for any prime power $q$,
    \[
    \beta(\mathcal C)\ge\left\lceil\frac{5\cdot 5-10}{4}\right\rceil=4
    \]
    for any $(5,3,2)$ MDS array code $\mathcal C$ over $\F_q$.
    
    \smallskip
    \noindent\textbf{The case \(q=2\).}
    It is easy to see that the full Desarguesian spread construction in \cite{liu2026linear} attains the
    incidence-multiplicity bound simultaneously for repair bandwidth and repair I/O. Hence
    \[
    \beta_{\opt}^{(2,5,2,2)}=2\cdot 5-2-3=5.
    \]
    
    \smallskip
    \noindent\textbf{The case \(q=3\).}
    We give an explicit construction attaining the lower bound \(4\).
    Over \(\F_3\), let
    \[
    H=[\,H_1\ H_2\ H_3\ H_4\ H_5\,],
    \]
    where
    \[
    H_1=
    \begin{bmatrix}
    0&0\\
    0&0\\
    1&0\\
    0&1
    \end{bmatrix},\quad
    H_2=
    \begin{bmatrix}
    1&0\\
    0&1\\
    0&0\\
    0&0
    \end{bmatrix},\quad H_3=
    \begin{bmatrix}
    1&0\\
    0&1\\
    0&1\\
    1&0
    \end{bmatrix},\quad
    H_4=
    \begin{bmatrix}
    1&0\\
    0&1\\
    0&2\\
    2&1
    \end{bmatrix},\quad  H_5=
    \begin{bmatrix}
    1&0\\
    0&1\\
    1&1\\
    0&2
    \end{bmatrix}.
    \]
    A direct calculation gives
    \[
    [H_i,H_j]\in\GL_4(\F_3)
    \qquad(i\ne j).
    \]
    Thus \(H\) defines a \((5,3,2)\) MDS array code over \(\F_3\).
    
    Now take
    \[
    M_1=
    \begin{bmatrix}
    0&0&0&1\\
    1&1&2&0
    \end{bmatrix},\qquad
    M_2=
    \begin{bmatrix}
    0&1&0&1\\
    1&0&0&2
    \end{bmatrix},\qquad M_3=
    \begin{bmatrix}
    0&0&1&1\\
    1&0&0&0
    \end{bmatrix},
    \]
    \[
    M_4=
    \begin{bmatrix}
    0&0&1&2\\
    1&2&0&0
    \end{bmatrix},\qquad M_5=
    \begin{bmatrix}
    0&0&1&0\\
    0&1&0&0
    \end{bmatrix}.
    \]
    Again, direct calculation gives
    \[
    M_iH_i\in\GL_2(\F_3)
    \qquad(i\in [5]),\qquad \rank(M_iH_j)=1
    \qquad(i\ne j).
    \]
    Therefore each node can be repaired by downloading exactly one symbol from each
    of the other four nodes, and hence
    \[
    \beta_{\opt}^{(3,5,2,2)}=4.
    \]
    
    \smallskip
    \noindent\textbf{The cases \(q=4\) and \(q\ge 7\).}
    These follow directly from Propositions~\ref{prop:beta-n6-q-3-4} and
    \ref{prop:beta-n6}, respectively, together with
    Lemma~\ref{lem:reduction-by-puncturing}(a).
    \end{proof}

    \begin{proposition}\label{prop:beta-n5,6-5}
      One has
      \[
        \beta_{\opt}^{(5,5,2,2)}=5,\qquad \beta_{\opt}^{(5,6,2,2)}=6.
      \]
      \end{proposition}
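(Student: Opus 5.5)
By Lemma~\ref{lem:reduction-by-puncturing}(b), the second equality $\beta_{\opt}^{(5,6,2,2)}=6$ follows from the first. So the real content is to show that no $(5,3,2)$ MDS array code over $\F_5$ has $\beta(\mathcal C)=4$. The lower bound $\beta_{\opt}^{(5,5,2,2)}\ge 4$ comes from Theorem~\ref{thm:ZLH-lb}. The matching upper bound $5$ is attained by the short-length construction Proposition~\ref{prop:short-bw}: here $N_{\BW}(5)=8\ge 5$ and $\lceil(5\cdot5-8)/4\rceil=5$. Hence $\beta_{\opt}^{(5,5,2,2)}\in\{4,5\}$, and it suffices to rule out the value $4$.

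Suppose $\beta(\mathcal C)=4$. Then every node $i$ has a repair matrix $M_i$ with $\rank(M_iH_j)=1$ for all $j\ne i$. Indeed, each rank is at least $1$ by the argument in the proof of Lemma~\ref{lem:reduction-by-puncturing}(a): a zero block would force the other three ranks to be $2$.

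In subspace language, each node has a $2$-dimensional $W_i$ with $W_i\cap\mathcal H_i=0$ that meets every other $\mathcal H_j$ in a line. The counting is still consistent with the incidence bound $2n-q-3=2$, so a genuine structural argument or a search is needed.

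I would normalize as in the appendix preamble: $H_4=\begin{bmatrix}I_2\\0\end{bmatrix}$, $H_5=\begin{bmatrix}0\\I_2\end{bmatrix}$, $H_j=\begin{bmatrix}I_2\\W_j\end{bmatrix}$ for $j\le 3$. The conditions are then $W_j\in\GL_2(\F_5)$ and $W_j-W_{j'}\in\GL_2(\F_5)$. The column-operation freedom $W_j\mapsto PW_jP'$, which is simultaneous conjugation-type equivalence preserving $H_4,H_5$, lets us set $W_1=I$. A further normalization of $W_2$ to a rational canonical form then leaves a small number of candidates for $(W_2,W_3)$.

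For repairing node $5$, we need $M_5=[U\ V]$ with $U$ singular of rank $1$, $V$ invertible, and $\det(U+VW_j)=0$ for $j=1,2,3$. Node $4$ is symmetric. For nodes $j\le3$, we need $M=[U\ V]$ with $\rank U=\rank V=1$, $\det(U+VW_k)=0$ for the other two $k$, and $\det(U+VW_j)\ne0$. Normalizing $MH_i=I_2$ as in the appendix reduces each feasibility test to finitely many checks over $\F_5$.

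The main obstacle is that a clean conceptual contradiction seems hard, since $q=5$ is exactly where the generic construction of Proposition~\ref{prop:beta-n6} fails. I would first try to adapt the structure of Proposition~\ref{prop:beta-n6}, namely the forms $W(b,c)$ and $M(u,v)$. I would show that any rank-one-everywhere configuration of length $5$ can be brought to that shape. The required nondegeneracy conditions on $(s,A,B)$ then become polynomial constraints with no solution in $\F_5$.

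If that reduction is not provable cleanly, I would fall back on the exhaustive computer search the appendix already announces. Concretely, I would enumerate the normalized triples $(W_1,W_2,W_3)$ up to the equivalences above. For each triple and each $i$, I would search all $M$ with $MH_i=I_2$, which gives $q^4=625$ possibilities each, and check that some node fails. I would report that no code achieves $4$, which gives $\beta_{\opt}^{(5,5,2,2)}=5$ and then $\beta_{\opt}^{(5,6,2,2)}=6$ via Lemma~\ref{lem:reduction-by-puncturing}(b).
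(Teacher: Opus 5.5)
Your proposal follows the paper's proof. You reduce to $n=5$ via Lemma~\ref{lem:reduction-by-puncturing}(b), get $\beta_{\opt}^{(5,5,2,2)}\in\{4,5\}$ from Theorem~\ref{thm:ZLH-lb} and Proposition~\ref{prop:short-bw}, normalize to $H_4=\begin{bmatrix}I_2\\ 0\end{bmatrix}$, $H_5=\begin{bmatrix}0\\ I_2\end{bmatrix}$, $H_1=\begin{bmatrix}I_2\\ I_2\end{bmatrix}$ (your $W_1=I$), and exhaustively search the repair matrices with $MH_i=I_2$ over all admissible pairs $(W_2,W_3)$. Your tentative reduction to the shape of Proposition~\ref{prop:beta-n6} is unproven, but you don't need it: the paper rests entirely on the computer search, exactly as your fallback does.
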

 \begin{proof}
  By Lemma~\ref{lem:reduction-by-puncturing}(b), we only need to prove that $\beta_{\opt}^{(5,5,2,2)}=5$.
  By Theorem~\ref{thm:ZLH-lb} and Proposition~\ref{prop:short-bw}, it suffices to prove that no \((5,3,2)\) MDS array code $\mathcal{C}$ over \(\F_5\) satisfies $\beta(\mathcal C)=4$. Suppose, for contradiction, that such a code exists. Choose a parity-check matrix
  \[
  H=[\,H_1\ H_2\ H_3\ H_{4}\ H_{5}\,],
  \qquad
  H_i\in \F_5^{4\times 2}.
  \]
  Since the code is MDS, each \(H_i\) has rank \(2\), and
  \[
  [H_i\ H_j]\in\GL_4(\F_5)
  \qquad(i\neq j).
  \]
  
  Since \([H_{4}\ H_5]\in\GL_4(\F_5)\), after an invertible row operation we may
  assume
  \[
  H_{4}=
  \begin{bmatrix}
  I_2\\
  0
  \end{bmatrix},
  \qquad
  H_5=
  \begin{bmatrix}
  0\\
  I_2
  \end{bmatrix}.
  \]
  Write
  \[
  H_1=
  \begin{bmatrix}
  P\\
  Q
  \end{bmatrix}.
  \]
  The invertibility of \([H_1\ H_{4}]\) and \([H_1\ H_5]\) implies $P,Q\in\GL_2(\F_5)$. Apply the invertible row operation
  \[
  \begin{bmatrix}
  P^{-1}&0\\
  0&Q^{-1}
  \end{bmatrix}
  \]
  to the whole parity-check matrix $H$, which sends \(H_1\) to
  \[
  \begin{bmatrix}
  I_2\\
  I_2
  \end{bmatrix},
  \]
  and sends \(H_{4}\) and \(H_5\) to
  \[
  \begin{bmatrix}
  P^{-1}\\
  0
  \end{bmatrix},
  \qquad
  \begin{bmatrix}
  0\\
  Q^{-1}
  \end{bmatrix},
  \]
  respectively. Applying the block-wise invertible column operations \(H_{4}\mapsto H_{4}P\) and
  \(H_5\mapsto H_5Q\), we may therefore assume simultaneously that
  \[
  H_{4}=
  \begin{bmatrix}
  I_2\\
  0
  \end{bmatrix},
  \qquad
  H_5=
  \begin{bmatrix}
  0\\
  I_2
  \end{bmatrix},
  \qquad
  H_1=
  \begin{bmatrix}
  I_2\\
  I_2
  \end{bmatrix}.
  \]
  
  Next write
  \[
  H_2=
  \begin{bmatrix}
  P_2\\
  Q_2
  \end{bmatrix},\qquad
  H_3=
  \begin{bmatrix}
  P_3\\
  Q_3
  \end{bmatrix}
  \]
  For $i=2,3$, since \([H_i\ H_5]\) is invertible, \(P_i\) is invertible. Applying the block-wise invertible column operations
  \(H_2\mapsto H_2P_2^{-1}\) and \(H_3\mapsto H_3P_3^{-1}\), we may write
  \[
  H_2=
  \begin{bmatrix}
  I_2\\
  A
  \end{bmatrix},\qquad
  H_3=
  \begin{bmatrix}
  I_2\\
  B
  \end{bmatrix}
  \]
  with each \(A,B\in \F_5^{2\times 2}\). The remaining MDS conditions are exactly
  \[
  A,\quad B,\quad A-I_2,\quad B-I_2,\quad A-B
  \in\GL_2(\F_5).
  \]
  A \(2\)-subset $\{A,B\}\subseteq\F_5^{2\times2}$ satisfying the above conditions is called \emph{admissible}.

  For any admissible subset $\{A,B\}$, let $\mathcal{C}_{A,B}$ be the $(5,3,2)$ MDS array code over $\F_5$ with parity-check matrix $H=[\,H_1\ H_2\ H_3\ H_{4}\ H_5\,]$, where
\[
H_1=
\begin{bmatrix}
I_2\\
I_2
\end{bmatrix},
\qquad
H_2=
\begin{bmatrix}
I_2\\
A
\end{bmatrix},
\qquad
H_3=
\begin{bmatrix}
I_2\\
B
\end{bmatrix},
\qquad
H_{4}=
\begin{bmatrix}
I_2\\
0
\end{bmatrix},
\qquad
H_5=
\begin{bmatrix}
0\\
I_2
\end{bmatrix}.
\]
Then any counterexample $\mathcal C$ with \(\beta(\mathcal C)=4\) gives an admissible subset
\(\{A,B\}\) such that $\beta(\mathcal C_{A,B})=4$.

We can perform an exhaustive finite check over \(\F_5\). We enumerate all admissible subsets $\{A,B\}$ of $\F_5^{2\times2}$. By the
repair-matrix normalization discussed above, when computing
\(\beta_i(\mathcal C_{A,B})\), the condition \(MH_i\in\GL_2(\F_5)\) may
be replaced by \(MH_i=I_2\). Thus, for each admissible subset \(\{A,B\}\)
and each \(i\in[5]\), we compute
\[
\beta_i(\mathcal C_{A,B})
=
\min_{\substack{M\in\F_5^{2\times4}\\ MH_i=I_2}}
\sum_{j\neq i}\rank(MH_j),
\]
and then
\[
\beta(\mathcal C_{A,B})
=
\max_{i\in[5]}\beta_i(\mathcal C_{A,B}).
\]
The enumeration gives $\beta(\mathcal C_{A,B})\ge 5$ for every admissible subset \(\{A,B\}\), contradicting the existence of the
counterexample. Therefore \[ \beta_{\opt}^{(5,5,2,2)}=5,\qquad\beta_{\opt}^{(5,6,2,2)}=6.\]
 \end{proof}

\refstepcounter{subsubsection}\label{app:beta-opt-9-10-2-2}
\subsubsection*{\thesubsubsection\quad The Exceptional Cases \texorpdfstring{$n=9,10$}{n=9,10}}
\addcontentsline{toc}{appsubsubsection}{\protect\numberline{\thesubsubsection}The Exceptional Cases \texorpdfstring{$n=9,10$}{n=9,10}}

We first record a general template that will be used repeatedly below. The
template reduces the construction of the desired \((10,8,2)\) and
\((9,7,2)\) codes to verifying explicit conditions on five \(2\times2\)
matrices \(C,T_1,T_2,T_3,T_4\).

\begin{lemma}\label{lem:template-10-node-construction}
Let \(q\) be a prime power and let
\[
  C,T_1,T_2,T_3,T_4\in \F_q^{2\times 2}
\]
be invertible matrices. For each \(a\in[4]\), define
\[
  f_a(x):=\det(I_2+xCT_a).
\]
Suppose that the following conditions hold.

\begin{enumerate}[label=(\roman*)]
  \item For each \(a\in[4]\), the polynomial \(f_a(x)\) has two distinct
  non-zero roots in \(\F_q\). Denote its root set by $\Lambda_a\subseteq\F_q^\times$.
  \item For every \(1\le a<b\le4\), let
  \[
    G_{ab}(x,y):=\det(yT_b-xT_a)
  \]
  and define the iterated resultant
  \[
    R_{ab}:=
    \operatorname{Res}_x
    \left(
      f_a(x),
      \operatorname{Res}_y\bigl(f_b(y),G_{ab}(x,y)\bigr)
    \right).
  \]
  Then
  \[
    R_{ab}\neq0
    \qquad(1\le a<b\le4).
  \]

  \item There exist non-zero row vectors
  \[
    u_1,\dots,u_4,\quad v_1,\dots,v_4\in\F_q^{1\times2}
  \]
  such that, for every \(a,b\in[4]\),
  \[
    v_aT_b\in\langle u_a\rangle
    \qquad(b\neq a),
  \]
  while
  \[
    v_aT_a\notin\langle u_a\rangle.
  \]
\end{enumerate}
Then 
\[
  \beta_{\opt}^{(q,10,2,2)}=10,\qquad\qquad \beta_{\opt}^{(q,9,2,2)}=9.
\]
\end{lemma}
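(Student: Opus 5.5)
The plan is to build a single explicit $(10,8,2)$ MDS array code with $\beta(\mathcal C)\le 10$. Theorem~\ref{thm:ZLH-lb} gives $\beta\ge\lceil(5\cdot10-10)/4\rceil=10$, so this yields $\beta_{\opt}^{(q,10,2,2)}=10$. Lemma~\ref{lem:reduction-by-puncturing}(c) then gives $\beta_{\opt}^{(q,9,2,2)}=9$. The code follows the four-class structure that appeared in the equality analysis of Proposition~\ref{prop:beta-mod-2}: eight non-systematic nodes in four classes of two, plus two systematic nodes.

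\textbf{The code.} For each $a\in[4]$ and each $\lambda\in\Lambda_a$, take a node with block $H_{a,\lambda}=\begin{bmatrix}I_2\\ \lambda T_a\end{bmatrix}$. This gives $8$ nodes by (i). Add $H_9=\begin{bmatrix}I_2\\0\end{bmatrix}$ and $H_{10}=\begin{bmatrix}0\\I_2\end{bmatrix}$. The MDS property reduces to three facts.
\begin{itemize}
\item Each $\lambda T_a$ is invertible, because $\lambda\ne0$ and $T_a$ is invertible.
\item Within a class, $(\lambda-\lambda')T_a$ is invertible, because the two roots in $\Lambda_a$ are distinct.
\item Across classes $a<b$, we need $\det(\mu T_b-\lambda T_a)=G_{ab}(\lambda,\mu)\ne0$ for all $\lambda\in\Lambda_a$, $\mu\in\Lambda_b$.
\end{itemize}
For the last point, the leading coefficients of $f_a$ and $f_b$ are $\det(CT_a)\neq0$ and $\det(CT_b)\ne0$. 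Hence $\operatorname{Res}_y(f_b,G_{ab})$ vanishes at $x=\lambda$ exactly when $G_{ab}(\lambda,\mu)=0$ for some root $\mu$ of $f_b$. Taking the outer resultant, $R_{ab}=0$ exactly when such a common root pair exists. So (ii) gives the required invertibility. This resultant bookkeeping, including the nonvanishing of the leading coefficients, is the only delicate point, and I expect it to be the main (mild) obstacle.

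\textbf{Repair of nodes $9$ and $10$.} To repair node $9$, use $M=[\,I_2\ \ C\,]$. Then $MH_9=I_2$ and $MH_{10}=C$, which has rank $2$. For each of the other eight nodes, $MH_{a,\lambda}=I_2+\lambda CT_a$ has determinant $f_a(\lambda)=0$. This matrix is nonzero, since otherwise its determinant would be $1$, so it has rank $1$. The total is $8+2=10$. To repair node $10$, use $M=[\,C^{-1}\ \ I_2\,]$. Then $\det(C^{-1}+\lambda T_a)=\det(C)^{-1}f_a(\lambda)=0$, and symmetrically we get bandwidth $10$.

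\textbf{Repair of the eight class nodes.} For a node $(a,\lambda)$, use the block-diagonal matrix $M=\begin{bmatrix}u_a&0\\0&v_a\end{bmatrix}$ from (iii). For any node $(b,\mu)$ we have $MH_{b,\mu}=\begin{bmatrix}u_a\\ \mu v_aT_b\end{bmatrix}$. When $b\ne a$, this has rank $1$ because $v_aT_b\in\langle u_a\rangle$. When $b=a$, it has rank $2$ because $v_aT_a\notin\langle u_a\rangle$. In particular, $MH_{a,\lambda}$ is invertible, so $M$ is a repair matrix, and the partner node in class $a$ contributes $2$. Finally, $MH_9=\begin{bmatrix}u_a\\0\end{bmatrix}$ and $MH_{10}=\begin{bmatrix}0\\v_a\end{bmatrix}$ have rank $1$. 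The bandwidth is $6\cdot1+2+1+1=10$. Hence $\beta(\mathcal C)\le10$, and the conclusions follow as described at the start.
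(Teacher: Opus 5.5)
Your proposal is correct and follows the paper's proof essentially step for step: the same ten blocks, the resultant product formula for the cross-class MDS condition, the repair matrices $\begin{bmatrix}u_a&0\\0&v_a\end{bmatrix}$, $[\,I_2\ \ C\,]$ and $[\,C^{-1}\ \ I_2\,]$, and the lower bound from Theorem~\ref{thm:ZLH-lb} followed by Lemma~\ref{lem:reduction-by-puncturing}(c); part (c) is indeed the right one to cite, and the paper's reference to (a) at that point is a typo. One small slip does not affect the argument: a zero matrix $I_2+\lambda CT_a$ would have determinant $0$, not $1$, so your stated reason for its nonvanishing is wrong (the correct reason is that $CT_a=-\lambda^{-1}I_2$ would give $f_a(x)=(1-x/\lambda)^2$, which has a double root), but only $\rank(I_2+\lambda CT_a)\le 1$ is needed for $\beta(\mathcal C)\le 10$ anyway.
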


\begin{proof}
  For \(a\in[4]\) and \(\lambda\in\Lambda_a\), define
  \[
    H_{a,\lambda}:=
    \begin{bmatrix}
    I_2\\
    \lambda T_a
    \end{bmatrix}.
  \]
  Also set
  \[
    H_9=
    \begin{bmatrix}
    I_2\\
    0
    \end{bmatrix},
    \qquad
    H_{10}=
    \begin{bmatrix}
    0\\
    I_2
    \end{bmatrix}.
  \]
  
  We first verify the MDS property. If \(a=b\) and
  \(\lambda,\mu\in\Lambda_a\) are distinct, then
  \[
    \det[H_{a,\lambda},H_{a,\mu}]
    =
    \det((\mu-\lambda)T_a)\neq0.
  \]
  If \(a\neq b\), then
  \[
    \det[H_{a,\lambda},H_{b,\mu}]
    =
    \det(\mu T_b-\lambda T_a).
  \]
  We claim that this determinant is non-zero for every
  \(\lambda\in\Lambda_a\) and \(\mu\in\Lambda_b\). Indeed, set
  \[
    Q_{ab}(x):=\operatorname{Res}_y\bigl(f_b(y),G_{ab}(x,y)\bigr).
  \]
  Since \(f_b\) splits into two distinct roots in \(\F_q\), say
  \[
    f_b(y)=\tau_b\prod_{\mu\in\Lambda_b}(y-\mu),
    \qquad \tau_b\in\F_q^\times,
  \]
  the basic product formula for resultants gives
  \[
    Q_{ab}(x)
    =
    \gamma_b
    \prod_{\mu\in\Lambda_b}
    G_{ab}(x,\mu)
  \]
  for some \(\gamma_b\in\F_q^\times\). Similarly, since
  \[
    f_a(X)=\tau_a\prod_{\lambda\in\Lambda_a}(X-\lambda),
    \qquad \tau_a\in\F_q^\times,
  \]
  we have
  \[
    R_{ab}
    =
    \operatorname{Res}_x(f_a(X),Q_{ab}(X))
    =
    \kappa_{ab}
    \prod_{\lambda\in\Lambda_a}
    \prod_{\mu\in\Lambda_b}
    G_{ab}(\lambda,\mu)
  \]
  for some \(\kappa_{ab}\in\F_q^\times\). Since
  \[
    G_{ab}(\lambda,\mu)=\det(\mu T_b-\lambda T_a),
  \]
  we obtain
  \[
    R_{ab}
    =
    \kappa_{ab}
    \prod_{\lambda\in\Lambda_a}
    \prod_{\mu\in\Lambda_b}
    \det(\mu T_b-\lambda T_a).
  \]
  Therefore condition \emph{(ii)} implies
  that every factor in the product is non-zero. Moreover, for every \(a\in[4]\) and \(\lambda\in\Lambda_a\),
  \[
    \det[H_{a,\lambda},H_9]=\det(\lambda T_a)\neq0,
  \]
  and
  \[
    \det[H_{a,\lambda},H_{10}]=1.
  \]
  Finally, $[H_9,H_{10}]=I_4$. Thus these ten parity-check blocks define a \((10,8,2)\) MDS array code $\mathcal{C}$.
  
  For \(a\in[4]\), define
  \[
    M^{(a)}:=
    \begin{bmatrix}
    u_a&0\\
    0&v_a
    \end{bmatrix}.
  \]
  Then for \(b\in[4]\) and \(\lambda\in\Lambda_b\),
  \[
    M^{(a)}H_{b,\lambda}
    =
    \begin{bmatrix}
    u_a\\
    \lambda v_aT_b
    \end{bmatrix}.
  \]
  Since \(\lambda\neq0\), condition \emph{(iii)} gives
  \[
    \rank(M^{(a)}H_{b,\lambda})
    =
    \begin{cases}
    2,& b=a,\\
    1,& b\neq a.
    \end{cases}
  \]
  Also,
  \[
    \rank(M^{(a)}H_9)=1,
    \qquad
    \rank(M^{(a)}H_{10})=1.
  \]
  Therefore, for each fixed \(a\in[4]\), either of the two nodes corresponding to
  the blocks
  \[
    H_{a,\lambda},\qquad \lambda\in\Lambda_a,
  \]
  can be repaired using \(M^{(a)}\) with bandwidth
  \[
    2+6+1+1=10.
  \]
  
  For the last two nodes, take
  \[
    M_9=\begin{bmatrix}I_2&C\end{bmatrix},
    \qquad
    M_{10}=\begin{bmatrix}C^{-1}&I_2\end{bmatrix}.
  \]
  For \(a\in[4]\) and \(\lambda\in\Lambda_a\), since \(\lambda\) is a root of
  \(f_a(x)=\det(I_2+xCT_a)\), we have
  \[
    \det(I_2+\lambda CT_a)=0.
  \]
  Moreover, \(I_2+\lambda CT_a\neq0\). Otherwise \(CT_a=-\lambda^{-1}I_2\), and
  then \(f_a(X)\) would be the square of a linear polynomial, contradicting that
  \(f_a\) has two distinct roots. Hence
  \[
    \rank(I_2+\lambda CT_a)=1.
  \]
  Thus
  \[
    \rank(M_9H_{a,\lambda})=1.
  \]
  Similarly,
  \[
    M_{10}H_{a,\lambda}
    =
    C^{-1}(I_2+\lambda CT_a),
  \]
  so
  \[
    \rank(M_{10}H_{a,\lambda})=1.
  \]
  Finally,
  \[
    M_9H_9=I_2,\qquad M_9H_{10}=C,
  \]
  and
  \[
    M_{10}H_{10}=I_2,\qquad M_{10}H_9=C^{-1}.
  \]
  Thus the last two nodes can be repaired with bandwidth
  \[
    8+2=10.
  \]
  
  Therefore the constructed code $\mathcal{C}$ satisfies
  \[
    \beta(\mathcal C)\le10.
  \]
  By Theorem~\ref{thm:ZLH-lb},
  \[
    \beta(\mathcal C)\ge
    \left\lceil\frac{5\cdot10-10}{4}\right\rceil=10.
  \]
  Hence $\beta(\mathcal C)=10$ and consequently
  \[
    \beta_{\opt}^{(q,10,2,2)}=10.
  \]
  By Lemma~\ref{lem:reduction-by-puncturing}(a), one has $\beta_{\opt}^{(q,9,2,2)}=9$.
  \end{proof}

We begin by handling the case where \(q\) is an even prime power.

\begin{lemma}\label{lem:even-one-parameter-construction}
  Let \(q\) be an even prime power. Suppose that there exists \(s\in\F_q\) such that
  \[
    s(s+1)(s^3+s+1)\neq0
  \]
  and
  \[
    \Tr_{\F_q/\F_2}\left(\frac1s\right)=0,
    \qquad
    \Tr_{\F_q/\F_2}\left(\frac1{s(s+1)}\right)=0,
  \]
  \[
    \Tr_{\F_q/\F_2}\left(\frac s{(s+1)^3}\right)=0,
    \qquad
    \Tr_{\F_q/\F_2}\left(\frac{s(s+1)}{s^3+s+1}\right)=0.
  \]
  Then 
  $$\beta_{\opt}^{(q,10,2,2)}=10,\qquad\beta_{\opt}^{(q,9,2,2)}=9.$$
  \end{lemma}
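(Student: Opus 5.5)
The plan is to apply Lemma~\ref{lem:template-10-node-construction}. We will exhibit explicit invertible matrices $C,T_1,\dots,T_4\in\F_q^{2\times2}$, depending on the parameter $s$, and verify conditions (i)--(iii). The natural choice is a diagonal-friendly normalization that makes condition (iii) automatic. Take $u_a,v_a\in\{e_1,e_2,\ldots\}$, that is, four pairs of projective points $(\langle u_a\rangle,\langle v_a\rangle)$. Condition (iii) asks that each $T_b$ ($b\ne a$) map $\langle v_a\rangle$ into $\langle u_a\rangle$, while $T_a$ does not. Concretely, we will take $\langle v_a\rangle$ to be four distinct projective points and $\langle u_a\rangle$ four distinct points. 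Each $T_b$ is then a projectivity sending the three points $v_a$ ($a\ne b$) to the corresponding $u_a$. This determines $[T_b]\in\PGL_2(\F_q)$ uniquely by sharp $3$-transitivity, and the remaining requirement is $[T_b](v_b)\ne u_b$. Following Lemma~\ref{lem:three-witness} and the cross-ratio philosophy, we will choose the points $\langle v_a\rangle=0,\infty,1,s$ and $\langle u_a\rangle=0,\infty,1,\text{(something in }s)$ in $\F_q\cup\{\infty\}$. We then write $T_1,\dots,T_4$ explicitly as $2\times 2$ matrices with entries polynomial in $s$. The inequality $[T_b](v_b)\ne u_b$ reduces to nonvanishing of simple factors, which we expect to be among $s,s+1,s^3+s+1$.

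Next, we will choose $C$ (likely a fixed simple matrix such as an anti-diagonal or $\begin{bmatrix}0&1\\1&0\end{bmatrix}$-type matrix, possibly scaled by $s$) and compute
\[
f_a(x)=\det(I_2+xCT_a)=1+x\Tr(CT_a)+x^2\det(CT_a).
\]
In characteristic $2$, a quadratic $\alpha x^2+\beta x+1$ with $\beta\ne0$ has two distinct roots in $\F_q$ iff $\Tr_{\F_q/\F_2}(\alpha/\beta^2)=0$. The roots are automatically non-zero since the constant term is $1$. Thus condition (i) for $a=1,\dots,4$ should become exactly the four trace conditions in the hypothesis: $\det(CT_a)/\Tr(CT_a)^2$ should equal $1/s$, $1/(s(s+1))$, $s/(s+1)^3$, and $s(s+1)/(s^3+s+1)$. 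We will use this as a guide to reverse-engineer $C$ and the scalings of the $T_a$. Scaling $T_a$ by nonzero constants does not affect (iii) and rescales the roots harmlessly. We also need $\Tr(CT_a)\ne0$, which again should reduce to the nonvanishing hypotheses.

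The main obstacle is condition (ii): $\det(\mu T_b-\lambda T_a)\ne0$ for all roots $\lambda\in\Lambda_a$ and $\mu\in\Lambda_b$. Equivalently, $\mu/\lambda$ must not be an eigenvalue ratio of $T_a^{-1}T_b$. We will compute $R_{ab}$ symbolically as a polynomial in $s$, for instance by eliminating via the relations $f_a(\lambda)=f_b(\mu)=0$. We hope it factors as a nonzero constant times powers of $s$, $s+1$, and $s^3+s+1$. Then the hypothesis $s(s+1)(s^3+s+1)\ne0$ gives $R_{ab}\ne0$ over every field of characteristic $2$. If some $R_{ab}$ has an extra factor, we will adjust the choice of $C$ (which has free parameters) until all six resultants factor only through these polynomials. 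Once (i)--(iii) hold, Lemma~\ref{lem:template-10-node-construction} yields both $\beta_{\opt}^{(q,10,2,2)}=10$ and $\beta_{\opt}^{(q,9,2,2)}=9$.
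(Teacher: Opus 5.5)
Your route is the paper's: feed explicit data into Lemma~\ref{lem:template-10-node-construction}, place $\langle v_a\rangle$ at $0,\infty,1,s$, get condition~(i) from the characteristic-$2$ trace criterion, and get condition~(ii) by factoring the six resultants. The gap is that the proposal stops exactly where the proof begins. The lemma's content is a specific choice of $C,T_1,\dots,T_4$ with two properties. First, its invariants $\det(CT_a)/\Tr(CT_a)^2$ must reproduce the four prescribed trace conditions. Second, its resultants must involve only factors excluded by $s(s+1)(s^3+s+1)\neq0$. You never exhibit such a choice, and phrases like ``we hope it factors'' and ``we will adjust $C$'' leave its existence unproved. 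Your repair mechanism also fails for this statement, because the hypotheses are fixed. Perturbing $C$ to remove a stray factor of some $R_{ab}$ changes the invariants, and hence the trace conditions you would have to assume. Rescaling $T_a$ gives no freedom at all: $\det(cCT_a)/\Tr(cCT_a)^2=\det(CT_a)/\Tr(CT_a)^2$, and the blocks $\lambda T_a$ (with $\lambda$ running over the roots of $f_a$) are unchanged as well.

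The missing data, supplied in the paper, is $C=\begin{bmatrix}0&1\\1&s\end{bmatrix}$ with $u_1,u_2,u_3,u_4=(0,1),(1,0),(1,1),(1,s+1)$. Sharp $3$-transitivity then forces, up to scalars,
\[
T_1=\begin{bmatrix}1&0\\ s^{-2}&1+s^{-2}\end{bmatrix},\quad T_2=\begin{bmatrix}1&(s+1)^{-1}\\ 0&s(s+1)^{-1}\end{bmatrix},\quad T_3=\begin{bmatrix}1&0\\ 0&(s+1)s^{-1}\end{bmatrix},\quad T_4=I_2.
\]
One then finds $f_4=1+sx+x^2$, $f_3=1+(s+1)x+\frac{s+1}{s}x^2$, $f_2=1+(s+1)x+\frac{s}{s+1}x^2$ and $f_1=1+\frac{s^3+s+1}{s^2}x+\frac{(s+1)^2}{s^2}x^2$. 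Their invariants are $s^{-2}$, $\frac{1}{s(s+1)}$, $\frac{s}{(s+1)^3}$ and $\bigl(\frac{s(s+1)}{s^3+s+1}\bigr)^2$. So they match the hypotheses only up to Frobenius, not with the exact equality you anticipated; this is harmless because $\Tr(z^2)=\Tr(z)$. The factor $s^3+s+1$ is needed only to keep the linear coefficient of $f_1$ non-zero, since otherwise $f_1$ has a double root. The six iterated resultants turn out, up to non-zero scalars, to be powers of $s$ and $s+1$ alone. Writing down this choice and carrying out these checks, plus the routine verification of~(iii), is what turns your plan into a proof. One small slip: $\det(\mu T_b-\lambda T_a)=0$ means that $\lambda/\mu$ is an eigenvalue of $T_a^{-1}T_b$, not an ``eigenvalue ratio''.
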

  
  \begin{proof}
    Put
    \[
      C=
      \begin{bmatrix}
      0&1\\
      1&s
      \end{bmatrix}
    \]
    and
    \[
      T_1=
      \begin{bmatrix}
      1&0\\[1mm]
      \dfrac1{s^2}&1+\dfrac1{s^2}
      \end{bmatrix},
      \qquad
      T_2=
      \begin{bmatrix}
      1&\dfrac1{s+1}\\
      0&\dfrac s{s+1}
      \end{bmatrix},
      \qquad 
      T_3=
      \begin{bmatrix}
      1&0\\
      0&\dfrac{s+1}{s}
      \end{bmatrix},
      \qquad
      T_4=I_2.
    \]
    Since \(s(s+1)\neq0\), the matrices \(C,T_1,T_2,T_3,T_4\) are invertible.
    
    For \(a\in[4]\), define
    \[
      f_a(x):=\det(I_2+xCT_a)\in\F_q[x].
    \]
    A direct calculation gives
    \[
      f_1(x)
      =
      1+
      \frac{s^3+s+1}{s^2}x
      +
      \frac{(s+1)^2}{s^2}x^2,\qquad  f_2(x)=1+(s+1)x+\frac s{s+1}x^2,
    \]
    and
    \[
      f_3(x)=1+(s+1)x+\frac{s+1}{s}x^2,\qquad f_4(x)=1+sx+x^2.
    \]
    
    Recall that over a field of characteristic \(2\), a quadratic polynomial
    \[
      1+Ax+Bx^2
    \]
    with \(A,B\in\F_q^\times\) has two distinct roots in \(\F_q^\times\) if and
    only if
    \[
      \Tr_{\F_q/\F_2}\left(\frac{B}{A^2}\right)=0.
    \]
    Using also
    \[
      \Tr_{\F_q/\F_2}(z^2)=\Tr_{\F_q/\F_2}(z)
      \qquad(z\in\F_q),
    \]
    the four trace assumptions imply that each \(f_a\) has two distinct non-zero
    roots in \(\F_q\). Let \(\Lambda_a\) be the root set of \(f_a\).
    
    We next verify the resultant condition in
    Lemma~\ref{lem:template-10-node-construction}. Since the only denominators
    appearing in the matrices \(T_a\) are powers of \(s\) and \(s+1\), clearing
    denominators is legitimate under the assumption \(s(s+1)\neq0\). For
    \(1\le a<b\le4\), the iterated resultants from
    Lemma~\ref{lem:template-10-node-construction} are, up to non-zero scalar factors,
    as follows:
    \[
    \begin{array}{c|c}
    (a,b) & R_{ab}\\ \hline
    (1,2) & s^{10}(s+1)^8\\
    (1,3) & s^4(s+1)^8\\
    (1,4) & s^6(s+1)^4\\
    (2,3) & s^4(s+1)^4\\
    (2,4) & s^4(s+1)^2\\
    (3,4) & s^2(s+1)^2
    \end{array}
    \]
    All these resultants are non-zero because \(s(s+1)\neq0\).
    
    Finally, define
    \[
      v_1=(0,1),\quad v_2=(1,0),\quad v_3=(1,1),\quad v_4=(1,s),
    \]
    and
    \[
      u_1=(0,1),\quad u_2=(1,0),\quad u_3=(1,1),\quad u_4=(1,s+1).
    \]
    A direct calculation gives
    \[
    \begin{array}{c|c|c}
    a & v_aT_b\in\langle u_a\rangle\text{ for }b\neq a
    & v_aT_a\notin\langle u_a\rangle \\ \hline
    1 &
    v_1T_2,v_1T_3,v_1T_4\in\langle(0,1)\rangle
    &
    v_1T_1=\left(\frac1{s^2},1+\frac1{s^2}\right)\\[1mm]
    2 &
    v_2T_1,v_2T_3,v_2T_4\in\langle(1,0)\rangle
    &
    v_2T_2=\left(1,\frac1{s+1}\right)\\[1mm]
    3 &
    v_3T_1,v_3T_2,v_3T_4\in\langle(1,1)\rangle
    &
    v_3T_3=\left(1,\frac{s+1}{s}\right)\\[1mm]
    4 &
    v_4T_1,v_4T_2,v_4T_3\in\langle(1,s+1)\rangle
    &
    v_4T_4=(1,s)
    \end{array}
    \]
    Since \(s(s+1)\neq0\), the vectors in the last column are not proportional to
    the corresponding \(u_a\).
    
    Thus all hypotheses of Lemma~\ref{lem:template-10-node-construction} are
    satisfied. Therefore
    \[
      \beta_{\opt}^{(q,10,2,2)}=10,
      \qquad
      \beta_{\opt}^{(q,9,2,2)}=9.
    \]
    \end{proof}

  \begin{lemma}\label{lem:even-parameter-exists-large}
    Let \(q=2^m\) with \(m\ge 4\) and $m\ne 5$. Then there exists
    \(s\in\F_q\) satisfying the hypotheses of
    Lemma~\ref{lem:even-one-parameter-construction}.
    \end{lemma}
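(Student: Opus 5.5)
We need to show that for $q=2^m$ with $m\ge4$, $m\neq5$, some $s\in\F_q$ satisfies $s(s+1)(s^3+s+1)\neq0$ together with the four trace conditions of Lemma~\ref{lem:even-one-parameter-construction}. The plan is a character-sum count: compare the number of good $s$ with $q/16$ minus an error term controlled by the Weil bound, and settle the remaining small fields by hand.

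\textbf{Counting function.} Write $\psi(z)=(-1)^{\Tr_{\F_q/\F_2}(z)}$ and set
\[
g_1=\frac1s,\qquad g_2=\frac1{s(s+1)},\qquad g_3=\frac{s}{(s+1)^3},\qquad g_4=\frac{s(s+1)}{s^3+s+1}.
\]
Over the set $U$ of $s$ with $s(s+1)(s^3+s+1)\neq0$, the number of good $s$ is
\[
N=\frac1{16}\sum_{s\in U}\prod_{i=1}^4\bigl(1+\psi(g_i(s))\bigr)=\frac1{16}\sum_{J\subseteq[4]}\sum_{s\in U}\psi\Bigl(\sum_{i\in J}g_i(s)\Bigr).
\]
The term $J=\varnothing$ contributes $|U|\ge q-5$.

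\textbf{Bounding the fifteen nontrivial sums.} For each nonempty $J$ we estimate the sum via the Weil bound for additive characters of rational functions:
\[
\Bigl|\sum\psi(f(s))\Bigr|\le\bigl(\deg f_\infty+\textstyle\sum_P \deg(\text{pole at }P)-2\bigr)\sqrt q+\text{(number of excluded points)}.
\]
This bound applies only when $f=\sum_{i\in J}g_i$ is not of the form $h^2+h+c$ with $h\in\overline{\F_q}(s)$. To rule that out, we use that in characteristic $2$ it suffices to exhibit a pole of $f$ of odd order.
\begin{itemize}
\item $g_3$ has a pole of order $3$ at $s=1$, while $g_2$ has only a simple pole there, so whenever $3\in J$ there is an odd pole at $1$.
\item $g_4$ has simple poles at the three roots of $s^3+s+1$, which are separable and distinct from $0,1$, and no other $g_i$ has poles there.
\item For $J\subseteq\{1,2\}$, the pole at $0$ of $g_1+g_2$ needs checking: since $1/(s(s+1))=1/s+1/(s+1)$, we get $g_1+g_2=1/(s+1)$, which still has a simple pole at $1$.
\end{itemize}
The total pole degree is at most $1+2+3+3=9$, together with a possible pole at $\infty$ of small order. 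So each sum is at most about $8\sqrt q+5$, which gives
\[
16N\ge q-5-15(8\sqrt q+5).
\]
This is positive once $q\ge 2^{14}$ or so. The careful step is to compute the exact pole divisor for each $J$ (for instance, $g_4\to1$ at infinity and $g_3\to0$) to sharpen the constant. I expect this to bring the threshold down to around $m\ge 12$.

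\textbf{Small fields.} For the remaining $m\in\{4,6,7,\dots,11\}$ I will exhibit an explicit $s$ for each field, found by direct search (for example $s$ equal to a power of a primitive element with a fixed minimal polynomial), and verify the four traces directly. Alternatively one can cite a computer check, since each field has at most $2048$ elements. I expect this to be the main obstacle, because a short bound is not available for small $m$ and $m=5$ genuinely fails. If the Weil constants are sharp enough, only $m=4,6,7,8$ need to be listed.
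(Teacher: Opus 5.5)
Your strategy is the paper's: expand $N=\frac1{16}\sum_{s\in U}\prod_i\bigl(1+\psi(g_i(s))\bigr)$, rule out Artin--Schreier degeneracy of each $\sum_{i\in J}g_i$ by exhibiting an odd-order pole, apply the Weil bound, and settle the small fields by a finite search. Your non-degeneracy case analysis is correct and complete.

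The problem is the Weil bookkeeping, and it changes which fields must be searched. For $y^2+y=f$ the constant is $C_J=\sum_P(d_P+1)\deg P-2$, where $P$ runs over the poles. Your count (``total pole degree at most $1+2+3+3=9$, so about $8\sqrt q$'') goes wrong in two ways:
\begin{itemize}
\item It adds the pole divisors of the individual summands. The pole divisor of the sum has degree at most $1+3+3=7$, coming from $s=0$, from $s=1$, and from the zero divisor of $s^3+s+1$.
\item It drops the extra $\deg P$ contributed by each pole.
\end{itemize}
Computed correctly, $C_J=10$ for $J=\{1,3,4\}$ and $J=\{2,3,4\}$. So the uniform estimate $8\sqrt q+5$ is false. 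The correct uniform estimate $10\sqrt q+5$ gives $16N\ge q-150\sqrt q-80$, which is positive only from $q=2^{15}$, not $2^{14}$.

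Carrying out the exact computation you propose does not bring the threshold down to $m\ge 12$. The fifteen constants sum to $\sum_J C_J=74$, so $16N\ge q-74\sqrt q-80$. This is negative at $q=2^{12}$, since $74\cdot 64=4736>4096$, and positive from $q=2^{13}$ on. Hence the analytic argument covers exactly $m\ge 13$.

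As written, your plan leaves $m=12$ uncovered. The finite verification must run over $m\in\{4,6,7,\dots,12\}$, i.e.\ fields with up to $4096$ elements, which is what the paper does. For the same reason, your hope that sharp constants would leave only $m\le 8$ to check cannot be realized by this method.

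A minor point: $g_4=s(s+1)/(s^3+s+1)\to 0$ at $\infty$, not $1$. No $\sum_{i\in J}g_i$ has a pole at $\infty$.
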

    
    \begin{proof}
    Define
    \[
      \phi_1(s)=\frac1s,
      \qquad
      \phi_2(s)=\frac1{s(s+1)},\qquad\phi_3(s)=\frac s{(s+1)^3},
      \qquad
      \phi_4(s)=\frac{s(s+1)}{s^3+s+1}.
    \]
    We need to find \(s\in\F_q\) such that $s(s+1)(s^3+s+1)\neq0$ and $\Tr_{\F_q/\F_2}(\phi_i(s))=0$ for any $i\in [4]$.
    
    Let
    \[
      U=\{s\in\F_q:s(s+1)(s^3+s+1)\neq0\}.
    \]
    Then $|U|\ge q-5$. Let
    \[
      \psi(z)=(-1)^{\Tr(z)}
    \]
    be the canonical additive character of \(\F_q\). The number \(N\) of elements
    \(s\in U\) satisfying the four trace conditions is
    \[
      N=
      \frac1{16}
      \sum_{s\in U}
      \prod_{i=1}^4(1+\psi(\phi_i(s))).
    \]
    Expanding this product gives
    \[
      N=
      \frac{|U|}{16}
      +
      \frac1{16}
      \sum_{\varnothing\neq S\subseteq[4]}
      \sum_{s\in U}
      \psi(\Phi_S(s)),
    \]
    where
    \[
      \Phi_S(s):=\sum_{i\in S}\phi_i(s).
    \]
    
    We claim that, for every non-empty \(S\subseteq[4]\), the rational function
    \(\Phi_S\) is not of the Artin--Schreier trivial form
    \[
      g(s)^2+g(s)+c
      \qquad(g\in\overline{\F_q}(s),\ c\in\overline{\F_q}).
    \]
    Indeed, if \(4\in S\), then \(\Phi_S\) has a simple pole at each root of
    \(h(s):=s^3+s+1\). Here \(h\) is squarefree: in characteristic \(2\),
    \[
      h'(s)=s^2+1=(s+1)^2,
    \]
    and \(h(1)=1\neq0\). Moreover, \(h(0)=h(1)=1\), so the numerator
    \(s(s+1)\) of \(\phi_4(s)\) does not vanish at any root of \(h\). If \(4\notin S\) but \(3\in S\), then \(\Phi_S\) has a pole of
    order \(3\) at \(s=1\). Finally, if \(S\subseteq\{1,2\}\), then \(\Phi_S\) is
    one of
    \[
      \frac1s,\qquad
      \frac1{s(s+1)},\qquad
      \frac1{s+1},
    \]
    and hence has a simple pole. On the other hand, a rational function of the
    form \(g^2+g+c\) has only even-order poles. This proves the claim.
    
    We now apply the Weil bound to the Artin--Schreier curves
    \[
      y^2+y=\Phi_S(s).
    \]
    More precisely, write the pole divisor of \(\Phi_S\) on
    \(\mathbb P^1_{\F_q}\) as
    \[
      \sum_P d_P P,
    \]
    where \(P\) runs over the places at which \(\Phi_S\) has a pole and \(d_P\)
    is the corresponding pole order. Since \(\Phi_S\) is not of the form
    \(g^2+g+c\), the Artin--Schreier curve \(y^2+y=\Phi_S(s)\) is non-trivial, and
    the Hasse--Weil bound (see, e.g., \cite[Theorem 6.2.34]{MullenPanario2013}) gives
    \[
      \left|
      \sum_{\substack{s\in\F_q\\ \Phi_S(s)\text{ finite}}}
      \psi(\Phi_S(s))
      \right|
      \le
      C_S\sqrt q,\qquad C_S=\sum_P(d_P+1)\deg(P)-2,
    \]
    where \(\deg(P)\) denotes the degree of the place \(P\).
    
The pole data for the fifteen non-empty subsets \(S\subseteq[4]\) are as follows. Here \(h(s)=s^3+s+1\). When the table says that \(\Phi_S\) has a pole at
\(h\), it means that \(\Phi_S\) has a pole along the zero divisor of \(h\); this divisor has degree \(3\).
    \[
    \begin{array}{c|c|c}
    S & \text{pole divisor of }\Phi_S & C_S \\ \hline
    \{1\} & (s=0)\text{ of order }1 & 0\\
    \{2\} & (s=0),(s=1)\text{ of order }1 & 2\\
    \{3\} & (s=1)\text{ of order }3 & 2\\
    \{4\} & (h=0)\text{ of order }1 & 4\\
    \{1,2\} & (s=1)\text{ of order }1 & 0\\
    \{1,3\} & (s=0)\text{ of order }1,\ (s=1)\text{ of order }3 & 4\\
    \{1,4\} & (s=0)\text{ of order }1,\ (h=0)\text{ of order }1 & 6\\
    \{2,3\} & (s=0)\text{ of order }1,\ (s=1)\text{ of order }3 & 4\\
    \{2,4\} & (s=0),(s=1)\text{ of order }1,\ (h=0)\text{ of order }1 & 8\\
    \{3,4\} & (s=1)\text{ of order }3,\ (h=0)\text{ of order }1 & 8\\
    \{1,2,3\} & (s=1)\text{ of order }3 & 2\\
    \{1,2,4\} & (s=1)\text{ of order }1,\ (h=0)\text{ of order }1 & 6\\
    \{1,3,4\} & (s=0)\text{ of order }1,\ (s=1)\text{ of order }3,\ (h=0)\text{ of order }1 & 10\\
    \{2,3,4\} & (s=0)\text{ of order }1,\ (s=1)\text{ of order }3,\ (h=0)\text{ of order }1 & 10\\
    \{1,2,3,4\} & (s=1)\text{ of order }3,\ (h=0)\text{ of order }1 & 8
    \end{array}
    \]
    Thus
    \[
      \sum_{\varnothing\neq S\subseteq[4]} C_S=74.
    \]
    
    The character sums in the expression for \(N\) are over \(U\), whereas the
    Weil bound above is over the complement of the pole set of \(\Phi_S\). Passing
    to \(U\) changes each such sum by at most \(5\), since at most the five points
    excluded by
    \[
      s(s+1)(s^3+s+1)=0
    \]
    are involved. Therefore
    \[
      \left|
      \sum_{s\in U}\psi(\Phi_S(s))
      \right|
      \le C_S\sqrt q+5.
    \]
    Consequently,
    \[
      N
      \ge
      \frac{q-5-74\sqrt q-75}{16}.
    \]
    For \(q\ge2^{13}\), the right-hand side is positive. Hence \(N>0\), and there
    exists \(s\in U\) satisfying all four trace conditions.

    For these remaining values of \(m\), the existence of an element \(s\)
satisfying the hypotheses of Lemma~\ref{lem:even-one-parameter-construction}
was verified by a finite computer search over \(\F_{2^m}\). This completes the
proof.
    \end{proof}

    \begin{lemma}\label{lem:even-m5-special-witness}
      One has
      \[
        \beta_{\opt}^{(32,10,2,2)}=10,
        \qquad
        \beta_{\opt}^{(32,9,2,2)}=9.
      \]
      \end{lemma}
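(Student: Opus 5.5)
The plan is to reduce the statement to Lemma~\ref{lem:template-10-node-construction}. Since $q=32$ is not covered by Lemma~\ref{lem:even-parameter-exists-large}, presumably because no $s\in\F_{32}$ satisfies all four trace conditions of Lemma~\ref{lem:even-one-parameter-construction}, we will not insist on the specific one-parameter family of that lemma. Instead we exhibit an explicit witness $(C,T_1,T_2,T_3,T_4)$ over $\F_{32}=\F_2[\omega]/(\omega^5+\omega^2+1)$.

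First, we keep the rigid shape that makes condition (iii) automatic. We take $u_a,v_a$ among $(0,1),(1,0),(1,1),(1,\cdot)$ as in the even case, and $T_1,\dots,T_4$ of the triangular/diagonal forms used there, possibly with a second free parameter. For instance, we can replace the scalar $1/(s+1)$ in $T_2$ by an independent $r$, or allow $C=\begin{bmatrix}0&1\\ 1&s\end{bmatrix}$ with $s$ decoupled from the $T_a$. Condition (iii) then reduces to a few nonvanishing conditions on the parameters.

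Second, condition (i) reads, for each $a$, as the requirement that $f_a(x)=1+A_ax+B_ax^2$ have $A_aB_a\ne0$ and $\Tr_{\F_{32}/\F_2}(B_a/A_a^2)=0$.

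Third, condition (ii) requires the six iterated resultants $R_{ab}$ to be nonzero. Each $R_{ab}$ is an explicit polynomial in the parameters; in the one-parameter family these were just powers of $s$ and $s+1$.

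With two parameters there are about $32^2$ candidates, and four trace conditions each hold heuristically with probability $1/2$. A short finite search should therefore produce a witness. We would then record the explicit pair (say $(s,r)=(\omega^i,\omega^j)$) together with the roots $\Lambda_a$, so the reader can check conditions (i)--(iii) by hand.

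Once the witness is verified, Lemma~\ref{lem:template-10-node-construction} gives directly $\beta_{\opt}^{(32,10,2,2)}=10$ and $\beta_{\opt}^{(32,9,2,2)}=9$. The latter also follows from Lemma~\ref{lem:reduction-by-puncturing}(c). The main obstacle is the search itself: finding a deformation of the template for which all four trace conditions hold simultaneously in $\F_{32}$. If the two-parameter family fails, the fallback is to skip the template and search directly for ten $2\times2$ blocks in the normalized form $H_i=\begin{bmatrix} I_2\\ W_i\end{bmatrix}$, using the repair-matrix normalization $MH_i=I_2$ from the appendix, and present the resulting explicit parity-check matrix and repair matrices.
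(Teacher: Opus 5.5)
Your strategy is the paper's: apply Lemma~\ref{lem:template-10-node-construction} with the vectors $u_a,v_a$ and matrices $T_a$ of Lemma~\ref{lem:even-one-parameter-construction}, and decouple the lower-right entry of $C$ from $s$. However, the proposal stops where the proof has to begin. For a statement about the single field $\F_{32}$, the whole content is an explicit parameter choice together with a check of conditions (i) and (ii). Saying ``a short finite search should produce a witness'', backed by a density-$1/16$ heuristic, is not a proof.

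Whatever witness you find, you must also redo (ii) for it. The $f_a$ depend on $C$, so the resultants are no longer the pure powers of $s$ and $s+1$ tabulated in Lemma~\ref{lem:even-one-parameter-construction}. The paper takes $\F_{32}=\F_2(\alpha)$ with $\alpha^5+\alpha^2+1=0$, $s=\alpha$ in the $T_a$, and $C=\begin{bmatrix}0&1\\1&\alpha^3\end{bmatrix}$. It lists the root sets $\Lambda_a$ and checks that the six products $\Delta_{ab}=\prod_{\lambda\in\Lambda_a,\mu\in\Lambda_b}\det(\lambda T_a+\mu T_b)$ are non-zero.

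Your first suggested deformation is not admissible. Condition (iii) determines each $T_b$ up to a scalar, because the three incidences $v_aT_b\in\langle u_a\rangle$ $(a\neq b)$ prescribe the images of three distinct points of $\mathbb P^1(\F_q)$. Concretely, for $T_2=\begin{bmatrix}1&r\\0&s/(s+1)\end{bmatrix}$ the requirement $v_3T_2=(1,\,r+s/(s+1))\in\langle(1,1)\rangle$ forces $r=1/(s+1)$. So (iii) becomes an equation in $r$, not ``a few nonvanishing conditions''.

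With $u_a,v_a$ fixed, the only genuine freedom is in $C$, or in moving $\langle u_4\rangle,\langle v_4\rangle$, which changes $T_1,T_2,T_3$. For $C=\begin{bmatrix}0&1\\1&w\end{bmatrix}$, condition (iii) is untouched and (i) becomes
\[
\Tr\Bigl(\frac1w\Bigr)=\Tr\Bigl(\frac{s}{w^2(s+1)}\Bigr)=\Tr\Bigl(\frac{s(s+1)}{(1+ws)^2}\Bigr)=\Tr\Bigl(\frac{s(s+1)}{1+w(s+1)^2}\Bigr)=0,\qquad w(1+ws)\bigl(1+w(s+1)^2\bigr)\neq0 .
\]
At $w=s$ these reduce to the four trace conditions of Lemma~\ref{lem:even-one-parameter-construction}. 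The pair $(s,w)=(\alpha,\alpha^3)$ satisfies them, and yields exactly the root sets recorded in the paper. Once that witness is written down and (ii) is verified, your concluding appeal to the template lemma, or to Lemma~\ref{lem:reduction-by-puncturing}(c) for $n=9$, goes through.
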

      
      \begin{proof}
      Let
      \[
        \F_{32}=\F_2(\alpha),
        \qquad
        \alpha^5+\alpha^2+1=0.
      \]
      Set $s=\alpha$ and take
      \[
        C=
        \begin{bmatrix}
        0&1\\
        1&\alpha^3
        \end{bmatrix}.
      \]
      Define \(T_1,T_2,T_3,T_4\) by the same formulas as in
      Lemma~\ref{lem:even-one-parameter-construction}, using this value of \(s\).
      Explicitly,
      \[
        T_1=
        \begin{bmatrix}
        1&0\\[1mm]
        \dfrac1{s^2}&1+\dfrac1{s^2}
        \end{bmatrix},
        \qquad
        T_2=
        \begin{bmatrix}
        1&\dfrac1{s+1}\\
        0&\dfrac s{s+1}
        \end{bmatrix},
        \qquad
        T_3=
        \begin{bmatrix}
        1&0\\
        0&\dfrac{s+1}{s}
        \end{bmatrix},
        \qquad
        T_4=I_2.
      \]
      Since \(s(s+1)\neq0\), the matrices \(T_1,T_2,T_3,T_4\) are invertible.
      Also \(\det(C)=1\), so \(C\in\GL_2(\F_{32})\).
      
      For \(a\in[4]\), define
      \[
        f_a(x):=\det(I_2+xCT_a).
      \]
      A direct calculation gives the following root sets:
      \[
      \begin{array}{c|c}
      a & \Lambda_a \\ \hline
      1 &
      \{\alpha^3+\alpha,\ \alpha^4+\alpha^2+1\}\\[1mm]
      2 &
      \{\alpha^3+\alpha^2,\ \alpha^4+\alpha^2+\alpha\}\\[1mm]
      3 &
      \{\alpha+1,\ \alpha^3+\alpha+1\}\\[1mm]
      4 &
      \{\alpha^2+\alpha,\ \alpha^3+\alpha^2+\alpha\}
      \end{array}
      \]
      Thus each \(f_a\) has two distinct non-zero roots in \(\F_{32}\), verifying
      condition \emph{(i)} of Lemma~\ref{lem:template-10-node-construction}.
      
      Next we verify condition \emph{(ii)}. Since the characteristic is \(2\),
      \[
        \det(\mu T_b-\lambda T_a)=\det(\lambda T_a+\mu T_b).
      \]
      For \(a<b\), put
      \[
        \Delta_{ab}
        :=
        \prod_{\lambda\in\Lambda_a,\ \mu\in\Lambda_b}
        \det(\lambda T_a+\mu T_b).
      \]
      A direct calculation gives
      \[
      \begin{array}{c|c}
      (a,b)&\Delta_{ab}\\ \hline
      (1,2)&\alpha^4+\alpha^2+1\\
      (1,3)&\alpha^4+\alpha^3+\alpha+1\\
      (1,4)&\alpha^3+\alpha^2+\alpha+1\\
      (2,3)&\alpha^4+\alpha^3+\alpha+1\\
      (2,4)&\alpha^4+\alpha\\
      (3,4)&\alpha^4+\alpha^3+\alpha^2+\alpha
      \end{array}
      \]
      All entries in the right column are non-zero. By the product formula for
      resultants used in Lemma~\ref{lem:template-10-node-construction}, this implies
      that the corresponding iterated resultants \(R_{ab}\) are non-zero.
      
      Finally, define
      \[
        v_1=(0,1),\quad v_2=(1,0),\quad v_3=(1,1),\quad v_4=(1,s),
      \]
      and
      \[
        u_1=(0,1),\quad u_2=(1,0),\quad u_3=(1,1),\quad u_4=(1,s+1).
      \]
      The proportionality relations
      \[
        v_aT_b\in\langle u_a\rangle\quad(b\neq a),
        \qquad
        v_aT_a\notin\langle u_a\rangle
      \]
      depend only on \(s\) and the matrices \(T_a\), and are exactly the same as
      those verified in Lemma~\ref{lem:even-one-parameter-construction}. Hence
      condition \emph{(iii)} of Lemma~\ref{lem:template-10-node-construction} holds.
      
      Therefore all hypotheses of Lemma~\ref{lem:template-10-node-construction} are
      satisfied, and the desired equalities follow:
      \[
        \beta_{\opt}^{(32,10,2,2)}=10,
        \qquad
        \beta_{\opt}^{(32,9,2,2)}=9.
      \]
      \end{proof}

\begin{proposition}\label{prop:beta-even-n9-n10}
  One has 
  $$\beta_{\opt}^{(2^m,10,2,2)}=\begin{cases}
    10,&m\ge 4,\\
    13,&m=2,
  \end{cases}$$
  and 
  $$
  \beta_{\opt}^{(2^m,9,2,2)}=\begin{cases}
    9,&m\ge 4,\\
    11,&m=2.
  \end{cases}
  $$
\end{proposition}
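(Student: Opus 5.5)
The proof splits into the large-field part $m\ge4$ and the special field $q=4$ ($m=2$).

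\textbf{Case $m\ge 4$.} For $m\ge4$ with $m\neq5$, Lemma~\ref{lem:even-parameter-exists-large} provides an element $s\in\F_{2^m}$ satisfying the hypotheses of Lemma~\ref{lem:even-one-parameter-construction}. That lemma then gives directly
\[
\beta_{\opt}^{(2^m,10,2,2)}=10
\qquad\text{and}\qquad
\beta_{\opt}^{(2^m,9,2,2)}=9.
\]
For $m=5$, Lemma~\ref{lem:even-m5-special-witness} supplies the same two equalities. No further work is needed in this range.

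\textbf{Case $m=2$, i.e.\ $q=4$.} Here $n=9,10$ exceed $N_{\BW}(4)=6$, so the relevant lower bound is the incidence-multiplicity bound of Theorem~\ref{thm:incidence-multiplicity-bound}:
\[
\beta\ge 2n-q-3,
\]
which equals $11$ for $n=9$ and $13$ for $n=10$. These values exceed $\lceil(5n-8)/4\rceil$, which is $10$ for $n=9$ and $11$ for $n=10$. So it only remains to show the bound is attained. Proposition~\ref{pro:long-bw} requires $q\ge5$ and therefore does not apply. Instead I will use \cite[Theorem~1.3]{liu2026linear}, which the proof of Theorem~\ref{thm:main-1} already invokes for $q=4$ and $9\le n\le17$, and which asserts attainability of $2n-q-3$ there.

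If a self-contained argument is preferred, the fallback is to construct codes directly. First, Theorem~\ref{thm:main-2} gives length-$8$ codes over $\F_4$ attaining $2n-q-3$ for repair I/O, and hence for repair bandwidth. Second, the endpoint constructions can be extended by adding graph subspaces $\Gamma(A)$, with $A\in\F_4[T]$, that are skew to all existing nodes and to the repair subspaces, so that each designated $W$ still hits exactly $q+1=5$ helpers. Since $\beta_i=2(n-1)-\alpha_i$ and $\alpha_i\le q+1$, this yields exactly $2n-q-3$.

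\textbf{Main obstacle.} The only delicate point is the $m=2$ attainability, because the skewness count that works for generic $q$ leaves very few good pairs when $q=4$. I will lean on the cited result from \cite{liu2026linear}; if that is unavailable, I would do a small finite search over $\F_4$, which is feasible after the normalizations used in the appendix.
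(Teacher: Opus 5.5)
Your proof is correct and is the paper's argument: for $m\ge 4$ you combine Lemmas~\ref{lem:even-one-parameter-construction}, \ref{lem:even-parameter-exists-large} and \ref{lem:even-m5-special-witness}, and for $q=4$ you pair the incidence-multiplicity lower bound $2n-q-3$ with the attainability result of \cite{liu2026linear} (at this point the paper cites Theorem~1.4 of that work, and it cites Theorem~1.3 for the same range in the proof of Theorem~\ref{thm:main-1}). Your fallback is unnecessary, and the paper already settles your worry about too few good pairs at $q=4$: Corollary~\ref{cor:long-optimal-io} holds for $q\ge 3$ and $n\ge\widetilde N_{\IO}(4)=8$, so $\beta_{\opt}^{(4,n,2,2)}\le\gamma_{\opt}^{(4,n,2,2)}=2n-7$ can be quoted directly for $n=9,10$ instead of re-running the extension from length $8$.
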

\begin{proof}
  The conclusion for \(m\ge 4\) follows from
Lemma~\ref{lem:even-one-parameter-construction},
Lemma~\ref{lem:even-parameter-exists-large}, and
Lemma~\ref{lem:even-m5-special-witness}. For \(m=2\), i.e., \(q=2^m=4\), \cite[Theorem~1.4]{liu2026linear} proves that, for \(n=9,10\), the incidence-multiplicity bound, equivalently the projective counting bound, is attained. Hence
\[
  \beta_{\opt}^{(4,10,2,2)}=2\cdot 10-4-3=13,\qquad\qquad \beta_{\opt}^{(4,9,2,2)}=2\cdot 9-4-3=11.
\]
\end{proof}

Next, we consider the case of characteristic \(3\).

\begin{lemma}\label{lem:char3-one-parameter-construction}
  Let \(q=3^m\). Suppose that there exists \(s\in\F_q\) such that
  \[
    s(s-1)(s+1)\neq0,\qquad s^3+s^2-1\neq0,\qquad
    s^3+s^2-s+1\neq0,
  \]
  \[
    s^2-s-1\neq0,\qquad
    s^2+s-1\neq0,
  \]
  and
  \[
    -(s-1)(s^3+s^2-s+1)\in(\F_q^\times)^2,
    \qquad
    s^2-s-1\in(\F_q^\times)^2.
  \]
  Then
  \[
    \beta_{\opt}^{(q,10,2,2)}=10,
    \qquad
    \beta_{\opt}^{(q,9,2,2)}=9.
  \]
  \end{lemma}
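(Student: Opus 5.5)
The plan is to verify the three hypotheses of Lemma~\ref{lem:template-10-node-construction} for an explicit one-parameter family of matrices $C,T_1,\dots,T_4\in\GL_2(\F_q)$, in exact analogy with Lemma~\ref{lem:even-one-parameter-construction}. The natural first attempt is to keep the same combinatorial skeleton: $T_4=I_2$, and $T_1,T_2,T_3$ triangular or diagonal with entries rational in $s$. We pick $v_a,u_a$ to be the four ``coordinate-like'' row vectors $(0,1),(1,0),(1,1),(1,s)$ and their images. We then tune the $T_a$ so that $v_aT_b\in\langle u_a\rangle$ for $b\neq a$ and $v_aT_a\notin\langle u_a\rangle$. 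Condition~(iii) then becomes a finite list of linear proportionality checks. Since it does not involve $C$, I would fix these $T_a$ first, adjusting the even-characteristic signs (replace $s+1$ by $s-1$ or $s+1$ as appropriate). Non-proportionality in the last column reduces to the non-vanishing of $s$, $s\pm1$, which is part of the hypothesis.

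Next I would choose $C=\begin{bmatrix}0&1\\1&\kappa(s)\end{bmatrix}$, with $\kappa$ possibly different from $s$ as in the $q=32$ witness, and compute $f_a(x)=\det(I_2+xCT_a)=1+\Tr(CT_a)x+\det(CT_a)x^2$. In odd characteristic, $f_a$ has two distinct non-zero roots in $\F_q$ iff $\det(CT_a)\neq0$ and its discriminant $\Tr(CT_a)^2-4\det(CT_a)$ is a non-zero square. I expect the four discriminants, after removing square factors coming from denominators, to collapse to a few polynomials in $s$. The aim is to arrange $C$ and $T_a$ so that two of them are automatically squares (perfect squares of rational functions, or squares times $\det$'s that are already squares). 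The remaining two should then be $-(s-1)(s^3+s^2-s+1)$ and $s^2-s-1$ up to square factors, which is exactly how the two quadratic-residue hypotheses should arise. The non-vanishing hypotheses $s^3+s^2-1\neq0$, $s^3+s^2-s+1\neq0$, $s^2\pm s-1\neq0$ should then account for the discriminants being non-zero and for the non-vanishing of $\Tr(CT_a)$ or $\det(CT_a)$ where needed. Characteristic $3$ is used to simplify coefficients, e.g.\ $4=1$ and $-2=1$, which is why such low-degree discriminants appear.

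For condition~(ii), rather than computing iterated resultants symbolically, I would use the product formula from the proof of Lemma~\ref{lem:template-10-node-construction}. $R_{ab}$ is, up to a unit, $\prod_{\lambda\in\Lambda_a,\mu\in\Lambda_b}\det(\mu T_b-\lambda T_a)$. Because this product is symmetric in the roots of $f_a$ and of $f_b$, it is a polynomial in the coefficients of $f_a,f_b$ and in the entries of the $T$'s. I would compute it with a computer algebra system as a rational function of $s$ over $\F_3$. I then factor it and check that every irreducible factor is among $s$, $s\pm1$, $s^3+s^2-1$, $s^3+s^2-s+1$, $s^2\pm s-1$. This is the main obstacle: the required shape of $C$ and the $T_a$ is not given, so one has to search, most likely by computer over $\F_3(s)$, for parametrized $T_a$ and $C$ whose six resultants factor only into the listed polynomials. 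Once such a family is found, the rest is routine. Lemma~\ref{lem:template-10-node-construction} then yields $\beta_{\opt}^{(q,10,2,2)}=10$, and via Lemma~\ref{lem:reduction-by-puncturing}(c) also $\beta_{\opt}^{(q,9,2,2)}=9$.
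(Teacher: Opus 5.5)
Your framework is the right one: reduce to Lemma~\ref{lem:template-10-node-construction}, check (i) via discriminants, (ii) via the product formula for the iterated resultants, and (iii) via proportionality. But the proposal stops exactly where the proof has to start, so there is a genuine gap. The lemma is an existence statement whose whole content is one explicit choice of $C,T_1,\dots,T_4,u_a,v_a$. Its hypotheses are precisely the non-vanishing and square conditions produced by that particular choice.

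Deferring the choice to a computer search over $\F_3(s)$ proves nothing. Nor is there any guarantee that a family found that way would have validity conditions implied by \emph{these} hypotheses rather than by some other list. In fact, the shape you guess for $C$ leads elsewhere. Take $C=\begin{bmatrix}0&1\\1&s+1\end{bmatrix}$, which gives $\Delta_4=s^2-s-1$, and $T_3=\mathrm{diag}\bigl(1,(s+1)/s\bigr)$. In characteristic $3$ the discriminant of $f_3$ is then $(s+1)\bigl((s+1)^3+4s\bigr)/s^2=(s^2-1)(s^2+s-1)/s^2$. The hypotheses say nothing about whether this is a square.

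The missing ingredient is the paper's choice $C=\begin{bmatrix}0&1\\(s+1)^2&1\end{bmatrix}$. Its off-diagonal entries are unequal, so it is not a scalar multiple of any matrix of your shape; rescaling $C$ would only rescale the $\Lambda_a$. The row vectors are $u_a=v_a$ equal to $(0,1),(1,0),(1,1)$ for $a=1,2,3$, together with $v_4=(1,s)$ and $u_4=(1,s+1)$. Up to scalars absorbed into $\Lambda_a$, these force
$T_1=\begin{bmatrix}1&0\\-1/s^2&1-1/s^2\end{bmatrix}$, $T_2=\begin{bmatrix}1&-1/(s-1)\\0&s/(s-1)\end{bmatrix}$, $T_3=\mathrm{diag}\bigl(1,(s+1)/s\bigr)$ and $T_4=I_2$.

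With this $C$, characteristic $3$ does three things:
\begin{itemize}
\item $f_1=-s^{-2}\bigl(x(s-1)-1\bigr)\bigl((s^3+1)x+s^2\bigr)$ splits outright, with two distinct non-zero roots once $s(s^2-1)(s^3+s^2-1)\neq0$.
\item The discriminant of $f_3$ becomes $(s+1)^2\bigl(1+4s(s+1)\bigr)/s^2=(s-1)^2(s+1)^2/s^2$, a square for free.
\item $f_2$ and $f_4$ are left with discriminants $-(s-1)(s^3+s^2-s+1)$ and $s^2-s-1$ up to non-zero square factors, which is exactly where the two residue hypotheses come from.
\end{itemize}
The six resultants then factor into powers of $s$, $s\pm1$ and $s^2\pm s-1$ only. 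Exhibiting these data and carrying out these checks is what turns your outline into a proof.
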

  
  \begin{proof}
  Put
  \[
    C=
    \begin{bmatrix}
    0&1\\
    (s+1)^2&1
    \end{bmatrix}.
  \]
  Define
  \[
    T_1=
    \begin{bmatrix}
    1&0\\[1mm]
    -\dfrac1{s^2}&1-\dfrac1{s^2}
    \end{bmatrix},
    \qquad
    T_2=
    \begin{bmatrix}
    1&-\dfrac1{s-1}\\
    0&\dfrac s{s-1}
    \end{bmatrix},
    \qquad 
    T_3=
    \begin{bmatrix}
    1&0\\
    0&\dfrac{s+1}{s}
    \end{bmatrix},
    \qquad
    T_4=I_2.
  \]
  The assumptions \(s(s-1)(s+1)\neq0\) imply that
  \(C,T_1,T_2,T_3,T_4\) are invertible.
  
  For \(a\in[4]\), define
  \[
    f_a(x):=\det(I_2+xCT_a).
  \]
  A direct calculation gives
  \[
    f_1(x)
    =
    -\frac{\bigl(x(s-1)-1\bigr)\bigl((s^3+1)x+s^2\bigr)}{s^2},\qquad  f_2(x)
    =
    1+(1-s)x-\frac{s(s+1)^2}{s-1}x^2,
  \]
  and
  \[
    f_3(x)
    =
    1+\frac{s+1}{s}x-\frac{(s+1)^3}{s}x^2,\qquad f_4(x)
    =
    1+x-(s+1)^2x^2.
  \]
  
  The polynomial \(f_1\) splits into two linear factors. Its two roots are
  distinct and non-zero under the assumptions
  \[
    s(s-1)(s+1)(s^3+s^2-1)\neq0.
  \]
  After clearing denominators by non-zero scalar factors, the discriminants of \(f_2,f_3,f_4\) are respectively 
  \[
    \Delta_2=-(s-1)(s^3+s^2-s+1),\qquad \Delta_3=(s-1)^2(s+1)^2,\qquad   \Delta_4=s^2-s-1.
  \]
  Thus the assumptions on \(s\) imply that each \(f_a\) has two distinct
  non-zero roots in \(\F_q\). Let \(\Lambda_a\) be the root set of \(f_a\).
  
  We next verify the resultant condition in
  Lemma~\ref{lem:template-10-node-construction}. For \(1\le a<b\le4\), the
  iterated resultants \(R_{ab}\) from
  Lemma~\ref{lem:template-10-node-construction} are, up to non-zero scalar
  factors, as follows:
  \[
  \begin{array}{c|c}
  (a,b)&R_{ab}\\ \hline
  (1,2)&
  s^8(s-1)^6(s+1)^6(s^2-s-1)^4(s^2+s-1)^2\\[1mm]
  (1,3)&
  s^4(s-1)^2(s+1)^{10}(s^2-s-1)^2(s^2+s-1)^2\\[1mm]
  (1,4)&
  s^4(s-1)^2(s+1)^6(s^2-s-1)^2\\[1mm]
  (2,3)&
  s^4(s-1)^2(s+1)^{10}(s^2-s-1)^2(s^2+s-1)^2\\[1mm]
  (2,4)&
  s^2(s-1)^2(s+1)^8(s^2-s-1)^2\\[1mm]
  (3,4)&
  s^2(s+1)^{10}
  \end{array}
  \]
  By the assumptions on \(s\), all these resultants are non-zero.
  
  Finally, define
  \[
    v_1=(0,1),\quad v_2=(1,0),\quad v_3=(1,1),\quad v_4=(1,s),
  \]
  and
  \[
    u_1=(0,1),\quad u_2=(1,0),\quad u_3=(1,1),\quad u_4=(1,s+1).
  \]
  A direct calculation gives
  \[
  \begin{array}{c|c|c}
  a & v_aT_b\in\langle u_a\rangle\text{ for }b\neq a
  & v_aT_a\notin\langle u_a\rangle\\ \hline
  1 &
  v_1T_2,v_1T_3,v_1T_4\in\langle(0,1)\rangle
  &
  v_1T_1=\left(-\dfrac1{s^2},1-\dfrac1{s^2}\right)\\[2mm]
  2 &
  v_2T_1,v_2T_3,v_2T_4\in\langle(1,0)\rangle
  &
  v_2T_2=\left(1,-\dfrac1{s-1}\right)\\[2mm]
  3 &
  v_3T_1,v_3T_2,v_3T_4\in\langle(1,1)\rangle
  &
  v_3T_3=\left(1,\dfrac{s+1}{s}\right)\\[2mm]
  4 &
  v_4T_1,v_4T_2,v_4T_3\in\langle(1,s+1)\rangle
  &
  v_4T_4=(1,s)
  \end{array}
  \]
  The vectors in the last column are not proportional to the corresponding
  \(u_a\) under the assumptions on \(s\).
  
  Thus all hypotheses of Lemma~\ref{lem:template-10-node-construction} are
  satisfied. Therefore
  \[
    \beta_{\opt}^{(q,10,2,2)}=10,
    \qquad
    \beta_{\opt}^{(q,9,2,2)}=9.
  \]
  \end{proof}

  \begin{lemma}\label{lem:char3-parameter-exists-large}
    Let \(q=3^m\) with \(m\ge 3\). Then there exists \(s\in\F_q\) satisfying the
    hypotheses of Lemma~\ref{lem:char3-one-parameter-construction}.
    \end{lemma}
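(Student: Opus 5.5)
The plan mirrors the even-characteristic argument in Lemma~\ref{lem:even-parameter-exists-large}. I will replace additive characters by the quadratic character, and then finish with a finite computer check for the small values of \(m\).

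Let \(\eta\) be the quadratic character of \(\F_q\), with \(\eta(0)=0\). Put
\[
g_1(s):=-(s-1)(s^3+s^2-s+1),\qquad g_2(s):=s^2-s-1.
\]
Let \(U\) be the set of \(s\in\F_q\) at which none of the nine listed polynomial factors vanishes. These factors are \(s\), \(s-1\), \(s+1\), the three cubics, and the two quadratics (with \(s^2-s-1\) counted once). Hence \(|U|\ge q-13\).

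The number of admissible \(s\) is
\[
N=\frac14\sum_{s\in U}\bigl(1+\eta(g_1(s))\bigr)\bigl(1+\eta(g_2(s))\bigr),
\]
since on \(U\) both \(g_1(s)\) and \(g_2(s)\) are non-zero. Expanding, we get
\[
N=\frac{|U|}{4}+\frac14\sum_{s\in U}\bigl(\eta(g_1)+\eta(g_2)+\eta(g_1g_2)\bigr).
\]

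The key step is to check that none of \(g_1\), \(g_2\), \(g_1g_2\) is a constant times a square in \(\overline{\F_q}[s]\). Once this holds, Weil's bound for multiplicative character sums gives
\[
\Bigl|\sum_{s\in\F_q}\eta(g(s))\Bigr|\le(\deg_{\mathrm{rad}}g-1)\sqrt q,
\]
and the radical degrees here are \(4\), \(2\), \(6\). Restricting the sums from \(\F_q\) to \(U\) changes each by at most \(13\). This yields
\[
N\ge\frac{q-13-(3+1+5)\sqrt q-39}{4},
\]
which is positive for \(q\) beyond a modest threshold; \(q\ge 3^5=243\) already suffices, since \(243-52-9\cdot 15.6>0\).

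For the non-square check in characteristic~\(3\), I would compute directly. First, \(s^2-s-1\) has discriminant \(1+4=5\equiv 2\neq 0\), so it is separable. Next, \(s^3+s^2-s+1\) has derivative \(3s^2+2s-1=2s-1\), whose root is \(s=2\); evaluating gives \(8+4-2+1=11\equiv 2\neq 0\), so the cubic is squarefree. Finally I would verify that the cubic shares no root with \(s-1\) (its value at \(1\) is \(2\)) or with \(s^2-s-1\); the latter is a resultant computation. With these facts, each \(g\) has at least one simple root, so it is not a square times a constant.

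The remaining cases \(m=3,4\) (\(q=27,81\)) lie below the threshold. For them I would exhibit an explicit witness \(s\) found by a finite search over \(\F_{3^m}\), in the same way as the small cases of Lemma~\ref{lem:even-parameter-exists-large}; \(m\le 2\) is excluded by hypothesis. The main obstacle is the non-degeneracy check, namely confirming that no unexpected common factor makes \(g_1g_2\) a square. That part is an explicit finite computation, and the small-\(q\) verification is routine.
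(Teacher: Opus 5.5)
Your argument is correct and is essentially the paper's proof. Both use the same quadratic-character sum for $D_2=-(s-1)(s^3+s^2-s+1)$ and $D_4=s^2-s-1$, the same Weil bounds $3\sqrt q$, $\sqrt q$, $5\sqrt q$, and get positivity for $m\ge 5$. For $q=27,81$ the paper gives explicit witnesses in place of your search: $s=\alpha$ with $\alpha^3+2\alpha^2+1=0$, and $s=1+\alpha^2$ with $\alpha^4+\alpha^3+\alpha^2+1=0$. Your version also sums over $U$ directly and verifies the squarefreeness that the paper only asserts. One harmless slip: the excluded polynomials are seven, with only two cubics ($s^3+s^2-1$ and $s^3+s^2-s+1$), although your total degree count $13$ is correct.
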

    
    \begin{proof}
    Let \(\eta\) be the quadratic character of \(\F_q\), extended by
    \(\eta(0)=0\). Define
    \[
      D_2(s)=-(s-1)(s^3+s^2-s+1),
      \qquad
      D_4(s)=s^2-s-1.
    \]
    We count elements \(s\in\F_q\) for which both \(D_2(s)\) and \(D_4(s)\) are
    non-zero squares. Consider
    \[
      N_0=
      \frac14
      \sum_{s\in\F_q}
      (1+\eta(D_2(s)))(1+\eta(D_4(s))).
    \]
    Expanding gives
    \[
      N_0=
      \frac q4
      +
      \frac14\sum_{s\in\F_q}\eta(D_2(s))
      +
      \frac14\sum_{s\in\F_q}\eta(D_4(s))
      +
      \frac14\sum_{s\in\F_q}\eta(D_2(s)D_4(s)).
    \]
    In characteristic \(3\), the polynomials \(D_2\), \(D_4\), and \(D_2D_4\)
are not squares in \(\F_q[s]\), and their square-free parts have degrees
\(4\), \(2\), and \(6\), respectively. Hence the Weil bound for quadratic
character sums (see, e.g., \cite[Theorem 6.2.28]{MullenPanario2013}) gives
    \[
      \left|\sum_{s\in\F_q}\eta(D_2(s))\right|\le 3\sqrt q,\qquad \left|\sum_{s\in\F_q}\eta(D_4(s))\right|\le \sqrt q,\qquad  \left|\sum_{s\in\F_q}\eta(D_2(s)D_4(s))\right|\le 5\sqrt q.
    \]
    Therefore
    \[
      N_0\ge \frac{q-9\sqrt q}{4}.
    \]
    
    The expression \(N_0\) may include contributions from zeros of \(D_2D_4\).
There are at most \(6\) such zeros, so the number of \(s\) for which
\(D_2(s)\) and \(D_4(s)\) are both non-zero squares is at least
\[
  \frac{q-9\sqrt q}{4}-6.
\]
For such an \(s\), the conditions
\[
  s-1\neq0,\qquad s^3+s^2-s+1\neq0,\qquad s^2-s-1\neq0
\]
already hold, since these factors occur in \(D_2D_4\). It remains only to
exclude the roots of
\[
  s(s+1)(s^3+s^2-1)(s^2+s-1),
\]
which accounts for at most \(7\) additional values. Thus the number of
admissible \(s\)'s is at least
\[
  \frac{q-9\sqrt q}{4}-13.
\]
For \(q=3^m\) with \(m\ge5\), this lower bound is positive. Therefore such an \(s\) exists.

It remains to handle the cases \(m=3,4\). Indeed, with
\[
  \F_{27}\cong \F_3[\alpha]/(\alpha^3+2\alpha^2+1),
\]
one may take \(s=\alpha\), while with
\[
  \F_{81}\cong \F_3[\alpha]/(\alpha^4+\alpha^3+\alpha^2+1),
\]
one may take \(s=1+\alpha^2\). This completes the proof.
    \end{proof}

    \begin{proposition}\label{prop:beta-char3-n9-n10}
      One has 
      $$\beta_{\opt}^{(3^m,10,2,2)}=\begin{cases}
        10,&m\ge 3,\\
        14,&m=1,
      \end{cases}$$
      and 
      $$
      \beta_{\opt}^{(3^m,9,2,2)}=\begin{cases}
        9,&m\ge 3,\\
        12,&m=1.
      \end{cases}
      $$
    \end{proposition}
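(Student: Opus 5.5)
The plan splits into the two ranges $m\ge 3$ and $m=1$.

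For $m\ge 3$, the statement follows by combining Lemma~\ref{lem:char3-parameter-exists-large} with Lemma~\ref{lem:char3-one-parameter-construction}. The first lemma supplies, for every $q=3^m$ with $m\ge 3$, an element $s\in\F_q$ satisfying all the non-vanishing and quadratic-residue hypotheses. The second lemma then feeds the explicit matrices $C,T_1,\dots,T_4$ into the template Lemma~\ref{lem:template-10-node-construction}. This yields $\beta_{\opt}^{(q,10,2,2)}=10$, and, through Lemma~\ref{lem:reduction-by-puncturing}(a), also $\beta_{\opt}^{(q,9,2,2)}=9$. No further work is needed here beyond citing these results. Note that $q=9$ ($m=2$) is deliberately excluded from the statement; it is handled elsewhere (Propositions~\ref{prop:beta-q9-n9} and \ref{prop:no-beta10-q98-n10}).

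For $m=1$, i.e.\ $q=3$, I would use the projective counting / incidence-multiplicity lower bound of Theorem~\ref{thm:incidence-multiplicity-bound}. It gives
\[
\beta(\mathcal C)\ge 2n-q-3,
\]
which equals $20-6=14$ for $n=10$ and $18-6=12$ for $n=9$.

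For the matching upper bound I would invoke \cite[Theorem~1.3]{liu2026linear}. That result (already used in the proof of Theorem~\ref{thm:main-1} for $q=3$ and $7\le n\le 10$) establishes that the projective counting bound is attained over $\F_3$ for these lengths. Concretely, $n=10=q^2+1$ corresponds to the full Desarguesian spread, and $n=9$ is obtained by removing one spread element. Removing a node lowers the count of hit helpers by at most one, so the bound $2n-q-3$ is still met.

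The only delicate point is the puncturing step at $n=9$. If the cited theorem only covers $n=10$ directly, one must check that deleting a node keeps the repair bandwidth at $2\cdot 8-\alpha_i$ with $\alpha_i=q+1$ still attained. The reason is that in the spread, each feasible repair subspace $W$ meets exactly $q+1$ spread lines other than the failed one. After deleting a node, I would choose $W$ to avoid the deleted line where possible, or otherwise verify that the attainment statement in \cite{liu2026linear} already covers $n=9$. I expect citing \cite[Theorem~1.3]{liu2026linear} to suffice.
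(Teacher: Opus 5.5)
Your argument matches the paper's proof. For $m\ge 3$ you use exactly Lemmas~\ref{lem:char3-one-parameter-construction} and~\ref{lem:char3-parameter-exists-large}; the $n=9$ step is Lemma~\ref{lem:reduction-by-puncturing}(c), not (a). For $q=3$ you likewise pair the lower bound $2n-q-3$ with an attainment result. The paper cites Theorem~\ref{thm:incidence-multiplicity-bound-attainable}, whose range $2q+2=8\le n\le 10$ already contains both $n=9$ and $n=10$, so no puncturing is needed.

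Your puncturing aside is wrong as stated. Losing one hit gives bandwidth $2n-q-2$, not $2n-q-3$. What you actually need is a non-spread repair subspace skew to both the failed and the deleted spread element; such subspaces exist (40 of them when $q=3$), and each still meets exactly $q+1$ of the remaining spread elements.
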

    \begin{proof}
      The conclusion for \(m\ge 3\) follows from
    Lemma~\ref{lem:char3-one-parameter-construction} and
    Lemma~\ref{lem:char3-parameter-exists-large}. For \(m=1\), i.e., \(q=3^m=3\), Theorem~\ref{thm:incidence-multiplicity-bound-attainable} proves that, for \(n=9,10\), the incidence-multiplicity bound, equivalently the projective counting bound, is attained. Hence
    \[
      \beta_{\opt}^{(3,10,2,2)}=2\cdot 10-3-3=14,\qquad\qquad \beta_{\opt}^{(3,9,2,2)}=2\cdot 9-3-3=12.
    \]
    \end{proof}

    Finally, we handle the case where the characteristic is neither \(2\) nor \(3\).

    \begin{lemma}\label{lem:odd-template-construction}
      Let \(q\) be an odd prime power with \(\operatorname{char}(\F_q)\neq3\).
      Suppose that there exists \(c\in\F_q\) such that
      \[
        12c+1,\qquad 3(8c+3),\qquad 4c+1
        \in(\F_q^\times)^2
      \]
      and
      \[
        c\notin
        \left\{
        0,\,-2,\,2,\,\frac23
        \right\}.
      \]
      Then
      \[
        \beta_{\opt}^{(q,10,2,2)}=10,
        \qquad
        \beta_{\opt}^{(q,9,2,2)}=9.
      \]
      \end{lemma}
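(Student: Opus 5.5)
The plan is to apply Lemma~\ref{lem:template-10-node-construction} with explicit matrices depending on the single parameter $c$, in the same way as Lemmas~\ref{lem:even-one-parameter-construction} and~\ref{lem:char3-one-parameter-construction}. I would keep the combinatorial skeleton unchanged:
\[
v_1=u_1=(0,1),\qquad v_2=u_2=(1,0),\qquad v_3=u_3=(1,1).
\]
I would also keep the shapes of $T_1,T_2,T_3$: lower triangular with first column $(1,*)^T$, upper triangular with first row $(1,*)$, and diagonal $\mathrm{diag}(1,*)$. Finally I would take $T_4=I_2$, $v_4=(1,s)$ and $u_4=(1,s')$. This makes condition~(iii) automatic: $v_aT_b$ is forced into $\langle u_a\rangle$ for $b\ne a$ by the triangular shapes, and the non-proportionality of $v_aT_a$ reduces to a few entries being non-zero. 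The free entries of the $T_a$ are then tied together by the requirement $v_4T_b\in\langle u_4\rangle$ for $b=1,2,3$. I would choose them as rational functions in $c$ and replace $C$ by a companion-type matrix $\begin{bmatrix}0&1\\ \kappa&1\end{bmatrix}$, with $\kappa$ also a function of $c$.

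The point of that choice is to make the four polynomials $f_a(x)=\det(I_2+xCT_a)$ have discriminants equal, up to non-zero square factors, to $12c+1$, $3(8c+3)$ and $4c+1$. Possibly one $f_a$ should split outright, or have a discriminant that is a perfect square, as happened with $f_1$ and $f_3$ in characteristic~$3$. The three square hypotheses would then give condition~(i) directly. The roots are automatically non-zero because $f_a(0)=1$, and they are distinct because each discriminant is a non-zero square.

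Next comes condition~(ii). I would compute the six iterated resultants $R_{ab}$ symbolically in $c$ and factor them. The expectation is that, after removing non-zero constants (using $\operatorname{char}\F_q\ne2,3$ to divide by $2$ and $3$), every factor is either a power of one of $c$, $c\pm2$, $3c-2$, or a power of one of the discriminant expressions $12c+1$, $8c+3$, $4c+1$. All of these are non-zero by hypothesis. The same excluded values of $c$ should also guarantee that $C$ and every $T_a$ are invertible and that the non-proportionality entries in~(iii) are non-zero.

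Once (i)–(iii) hold, Lemma~\ref{lem:template-10-node-construction} yields $\beta_{\opt}^{(q,10,2,2)}=10$ and $\beta_{\opt}^{(q,9,2,2)}=9$. The main obstacle is the first step: finding a parametrization for which all three discriminants and all resultant factors collapse to exactly the listed expressions. To find it, I would start from the characteristic-$3$ matrices, lift them to general characteristic by sign changes, and then solve the single remaining compatibility relation for $\kappa$ in terms of $c$. After that, I would check the resultant table by direct computation.
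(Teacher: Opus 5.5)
\textbf{There is a genuine gap: the proposal never fixes the matrices.} Your framework is the right one and matches the paper's: apply Lemma~\ref{lem:template-10-node-construction} with $u_a=v_a$ for $a\le 3$, triangular/diagonal $T_1,T_2,T_3$, $T_4=I_2$, $v_4=(1,s)$, $u_4=(1,s')$, and a companion-type $C$. But the proposal stops exactly where the proof begins. You never fix $C,T_1,\dots,T_4,s,s'$, so none of (i)--(iii) is verified, and the factorization of the $R_{ab}$ is only hoped for. The lemma's hypotheses are tailored to one concrete choice of matrices: the three square conditions $12c+1$, $3(8c+3)$, $4c+1$, and the four excluded values. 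So the proof consists of exhibiting that choice and doing the computation. A search strategy ("find a parametrization for which everything collapses") does not establish the statement.

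The missing data is simpler than you expect. Condition (iii) involves only the $T_b$ and $(s,s')$, never $C$. Once $s,s'$ are fixed, (iii) determines each $T_b$ up to a scalar. So there is no compatibility relation to solve for $\kappa$, and $\kappa$ is the natural home for the free parameter. The paper takes $s=2$, $s'=3$, so the $T_a$ are constants. Your lifting heuristic is apt: these are precisely the formulas of Lemma~\ref{lem:char3-one-parameter-construction} evaluated at $s=2$. The paper then puts $\kappa=c$:
\[
C=\begin{bmatrix}0&1\\ c&1\end{bmatrix},\qquad
T_1=\begin{bmatrix}1&0\\ -\frac14&\frac34\end{bmatrix},\qquad
T_2=\begin{bmatrix}1&-1\\ 0&2\end{bmatrix},\qquad
T_3=\begin{bmatrix}1&0\\ 0&\frac32\end{bmatrix},\qquad
T_4=I_2.
\]
With these choices the verification goes as follows.
\begin{itemize}
\item The polynomials $f_1=1+\frac12x-\frac{3c}{4}x^2$, $f_3=1+\frac32x-\frac{3c}{2}x^2$ and $f_4=1+x-cx^2$ have discriminants $\frac{12c+1}{4}$, $\frac{3(8c+3)}{4}$ and $4c+1$.
\item The polynomial $f_2=-(2x+1)(cx-1)$ splits, with roots $-\frac12$ and $\frac1c$; these are distinct exactly when $c\neq-2$.
\item The six resultants are non-zero constants (built from powers of $2$ and $3$) times powers of $c$, $c-2$ and $3c-2$ only. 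The discriminant expressions do not occur, and this is where $c\neq 2,\frac23$ is used.
\item Invertibility of the $T_a$ and condition (iii) need only $\operatorname{char}\F_q\neq 2,3$.
\end{itemize}
Supplying these explicit verifications turns your outline into the paper's proof.
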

      
      \begin{proof}
      Put
      \[
        C=
        \begin{bmatrix}
        0&1\\
        c&1
        \end{bmatrix}.
      \]
      Define
      \[
        T_1=
        \begin{bmatrix}
        1&0\\[1mm]
        -\frac14&\frac34
        \end{bmatrix},
        \qquad
        T_2=
        \begin{bmatrix}
        1&-1\\
        0&2
        \end{bmatrix},
        \qquad 
        T_3=
        \begin{bmatrix}
        1&0\\
        0&\frac32
        \end{bmatrix},
        \qquad
        T_4=I_2.
      \]
      Since \(\operatorname{char}(\F_q)\neq2,3\) and \(c\neq0\), the matrices $C,T_1,T_2,T_3,T_4$ are invertible.
      
      For \(a\in[4]\), define
      \[
        f_a(x):=\det(I_2+xCT_a).
      \]
      A direct calculation gives
      \[
        f_1(x)=1+\frac12x-\frac{3c}{4}x^2,
        \qquad
        f_2(x)=-(2x+1)(cx-1),
      \]
      and
      \[
        f_3(x)=1+\frac32x-\frac{3c}{2}x^2,
        \qquad
        f_4(x)=1+x-cx^2.
      \]
      The discriminants of \(f_1,f_3,f_4\) are respectively
      \[
        \Delta_1=\frac{12c+1}{4},
        \qquad
        \Delta_3=\frac{3(8c+3)}{4},
        \qquad
        \Delta_4=4c+1.
      \]
      By assumption, these are non-zero squares in \(\F_q\). Since \(c\neq0\),
      the quadratic terms of \(f_1,f_3,f_4\) are non-zero, and since their constant
      terms are \(1\), all their roots are non-zero. Hence \(f_1,f_3,f_4\) each have
      two distinct non-zero roots in \(\F_q\). Moreover, \(f_2(x)=-(2x+1)(cx-1)\). Since \(c\neq0\) and \(c\neq -2\), the two
      roots of \(f_2\) are distinct and non-zero. Thus condition \emph{(i)} of
      Lemma~\ref{lem:template-10-node-construction} holds.
      
      We next verify condition \emph{(ii)} of
      Lemma~\ref{lem:template-10-node-construction}. For \(1\le a<b\le4\), the
      iterated resultants \(R_{ab}\) appearing in
      Lemma~\ref{lem:template-10-node-construction} are
      \[
      \renewcommand{\arraystretch}{1.8}
      \begin{array}{c|c}
      (a,b)&R_{ab}\\ \hline
      (1,2)&\dfrac{9}{64}c^2(c-2)^2(3c-2)^2\\[1mm]
      (1,3)&\dfrac{729}{16384}c^2(3c-2)^2\\[1mm]
      (1,4)&\dfrac{9}{4096}c^2(c-2)^2\\[1mm]
      (2,3)&\dfrac94c^4(3c-2)^2\\[1mm]
      (2,4)&4c^4(c-2)^2\\[1mm]
      (3,4)&\dfrac9{64}c^4
      \end{array}
      \]
      Since
      \[
        c\notin\left\{0,2,\frac23\right\},
      \]
      all these resultants are non-zero.
      
      Finally, define
      \[
        v_1=(0,1),\quad v_2=(1,0),\quad v_3=(1,1),\quad v_4=(1,2),
      \]
      and
      \[
        u_1=(0,1),\quad u_2=(1,0),\quad u_3=(1,1),\quad u_4=(1,3).
      \]
      A direct calculation gives
      \[
      \begin{array}{c|c|c}
      a & v_aT_b\in\langle u_a\rangle\text{ for }b\neq a
      & v_aT_a\notin\langle u_a\rangle\\ \hline
      1 &
      v_1T_2,v_1T_3,v_1T_4\in\langle(0,1)\rangle
      &
      v_1T_1=\left(-\frac14,\frac34\right)\\[1mm]
      2 &
      v_2T_1,v_2T_3,v_2T_4\in\langle(1,0)\rangle
      &
      v_2T_2=(1,-1)\\[1mm]
      3 &
      v_3T_1,v_3T_2,v_3T_4\in\langle(1,1)\rangle
      &
      v_3T_3=\left(1,\frac32\right)\\[1mm]
      4 &
      v_4T_1,v_4T_2,v_4T_3\in\langle(1,3)\rangle
      &
      v_4T_4=(1,2)
      \end{array}
      \]
      Since \(\operatorname{char}(\F_q)\neq2,3\), the vectors in the last column are
      not proportional to the corresponding \(u_a\). Hence condition \emph{(iii)} of
      Lemma~\ref{lem:template-10-node-construction} holds.
      
      Therefore all hypotheses of Lemma~\ref{lem:template-10-node-construction} are
      satisfied. Hence
      \[
        \beta_{\opt}^{(q,10,2,2)}=10,
        \qquad
        \beta_{\opt}^{(q,9,2,2)}=9.
      \]
      \end{proof}

      \begin{lemma}\label{lem:odd-template-parameter-exists}
        Let \(q\) be an odd prime power with
        \[
          q\ge 13,\qquad \operatorname{char}(\F_q)\neq3,\qquad q\ne 25.
        \]
        Then there exists \(c\in\F_q\) satisfying the hypotheses of
        Lemma~\ref{lem:odd-template-construction}.
        \end{lemma}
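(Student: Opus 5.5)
We will count $c\in\F_q$ for which the three quantities $12c+1$, $3(8c+3)$, $4c+1$ are simultaneously non-zero squares, using quadratic character sums exactly as in Lemma~\ref{lem:char3-parameter-exists-large}. Let $\eta$ be the quadratic character of $\F_q$ with $\eta(0)=0$. Put
\[
L_1(c)=12c+1,\qquad L_2(c)=3(8c+3),\qquad L_3(c)=4c+1,
\]
and
\[
N_0=\frac18\sum_{c\in\F_q}\prod_{i=1}^3\bigl(1+\eta(L_i(c))\bigr).
\]
Expanding the product gives $q/8$ plus $\frac18\sum_{\varnothing\ne S\subseteq[3]}\sum_c\eta\bigl(\prod_{i\in S}L_i(c)\bigr)$.

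Since $\operatorname{char}(\F_q)\ne2,3$, each $L_i$ has degree one. Their roots are $-1/12$, $-3/8$ and $-1/4$, and these are pairwise distinct: $-1/12=-3/8$ iff $8=36$, i.e.\ $28=0$; $-1/12=-1/4$ iff $4=12$, i.e.\ $8=0$; and $-3/8=-1/4$ iff $12=8$, i.e.\ $4=0$. So the only possible coincidence occurs in characteristic $7$, for the first pair.

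In the generic case each $\prod_{i\in S}L_i$ is squarefree of degree $|S|$. For a linear polynomial the character sum is $0$. For degree two the sum equals $-\eta(\text{leading coefficient})$, so its absolute value is $1$. For degree three, the Weil bound \cite[Theorem 6.2.28]{MullenPanario2013} gives absolute value at most $2\sqrt q$. Therefore
\[
N_0\ge\frac{q-3-2\sqrt q}{8}.
\]

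The sum $N_0$ can contain contributions from the three zeros of the $L_i$, each weighted by at most $1/2$; we discard these. We must also exclude $c\in\{0,-2,2,2/3\}$, which removes at most $4$ further values. The number of admissible $c$ is then at least
\[
\frac{q-3-2\sqrt q}{8}-6,
\]
which is positive once $q$ is moderately large, roughly $q\ge 80$.

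In characteristic $7$ the factors $L_1$ and $L_2$ are proportional: $L_2=24c+9=2(12c+1)+7\equiv 2L_1$. The joint condition then reduces to requiring that $2$ be a square, together with conditions on $L_1$ and $L_3$. Since $2=3^2$ in $\F_7$, $2$ is a square in every extension of $\F_7$. A two-character count for $L_1,L_3$ then gives roughly $q/4$ solutions minus a constant, so we still get existence once $q$ is large.

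The main obstacle is closing the finite gap of small $q$. We have to check directly, by hand or by a small search, the remaining odd prime powers $13\le q<80$ with $3\nmid q$ and $q\ne25$: namely $13,17,19,23,29,31,37,41,43,47,49,53,59,61,67,71,73,79$. For each one we list an explicit witness $c$. For example, for a prime $q$ we scan $c=1,3,4,\dots$ and test the Legendre symbols of the three values. The exclusion of $q=25$ (and of $q<13$) indicates that a witness genuinely fails to exist there, so the theorem's hypotheses are sharp. We would give a short table of witnesses rather than argue further.
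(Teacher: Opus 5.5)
Your proposal is correct and is essentially the paper's argument: the same expansion of $\frac18\sum_c\prod_i(1+\eta(L_i(c)))$ with the Weil bound, the same observation that in characteristic $7$ one has $L_2=2L_1$ with $2$ a square, and a finite search for the remaining small $q$. Your exact evaluation of the degree-two sums (absolute value $1$ instead of the paper's bound $\sqrt q$) only shrinks the range left to the search from $q<109$ to $q\le 67$. As in the paper, that search still has to be carried out; for instance, $c=4$ works for $q=13$.
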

        
        \begin{proof}
        Let \(\chi\) be the quadratic character of \(\F_q\), extended by
        \(\chi(0)=0\). Put
        \[
          L_1(c)=12c+1,\qquad
          L_2(c)=3(8c+3),\qquad
          L_3(c)=4c+1.
        \]
        Consider
        \[
          N_0=
          \frac18
          \sum_{c\in\F_q}
          \prod_{i=1}^3(1+\chi(L_i(c))).
        \]
        If \(L_1(c),L_2(c),L_3(c)\) are all non-zero squares, then \(c\) contributes
        \(1\) to \(N_0\). Conversely, the only extra contributions can occur when at
        least one \(L_i(c)\) is zero. Thus the number of \(c\in\F_q\) for which all
        three \(L_i(c)\) are non-zero squares is at least
        \[
          N_0-3.
        \]
        
        Expanding \(N_0\), we obtain character sums of the form
        \[
          \sum_{c\in\F_q}\chi\left(\prod_{i\in S}L_i(c)\right),
          \qquad \emptyset\neq S\subseteq[3].
        \]
        For \(|S|=1\), these sums are zero. For \(|S|=2\), the squarefree part of
        \(\prod_{i\in S}L_i(c)\) has degree at most \(2\), and the Weil bound gives an
        absolute value at most \(\sqrt q\). For \(|S|=3\), the corresponding bound is
        at most \(2\sqrt q\). The only possible coincidence among the linear factors occurs in characteristic
        \(7\), where \(L_2=2L_1\). Since \(2\) is a square in \(\F_7\), and hence in
        every extension of \(\F_7\), the corresponding pair sum
        \[
          \sum_{c\in\F_q}\chi(L_1(c)L_2(c))
        \]
        is equal to \(q-1\), hence is non-negative. Therefore this term cannot decrease
        the lower bound. Therefore
        \[
          N_0\ge \frac{q-5\sqrt q}{8}.
        \]
        
        Now exclude the finite set
        \[
          \left\{
          0,\,-2,\,2,\,\frac23
          \right\}.
        \]
        Together with the possible zeros of \(L_1,L_2,L_3\), this removes at most
        \(7\) values. Hence the number of \(c\)'s satisfying all requirements in
        Lemma~\ref{lem:odd-template-construction} is at least
        \[
          \frac{q-5\sqrt q}{8}-7.
        \]
        For \(q\ge109\), this lower bound is positive. Hence such a \(c\) exists.
        
        It remains to consider the finite range
        \[
          13\le q<109,\qquad \operatorname{char}(\F_q)\neq3.
        \]
        For these fields, excluding \(q=25\), the existence of an element
        \(c\in\F_q\) satisfying the hypotheses of
        Lemma~\ref{lem:odd-template-construction} was verified by a finite computer
        search over \(\F_q\). This completes the proof.
        \end{proof}

        \begin{lemma}\label{lem:odd-exceptional-small-witnesses}
          One has
          \[
            \beta_{\opt}^{(q,10,2,2)}=10,
            \qquad
            \beta_{\opt}^{(q,9,2,2)}=9
          \]
          for \(q=7,11,25\).
          \end{lemma}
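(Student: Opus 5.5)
The plan is to verify, for each $q\in\{7,11,25\}$, the three hypotheses of Lemma~\ref{lem:template-10-node-construction} for an explicit choice of $C,T_1,\dots,T_4$. That lemma already turns such data into a $(10,8,2)$ code with $\beta=10$, and then, via Lemma~\ref{lem:reduction-by-puncturing}(a), gives $\beta_{\opt}^{(q,9,2,2)}=9$. The natural first attempt is the one-parameter template of Lemma~\ref{lem:odd-template-construction}: for $q=7,11,25$ we search $c\in\F_q\setminus\{0,\pm2,\tfrac23\}$ with $12c+1$, $3(8c+3)$, $4c+1$ all non-zero squares.

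For $q=7,11$ the field is prime and the check is a finite table of quadratic residues. Over $\F_7$ the squares are $\{1,2,4\}$; over $\F_{11}$ they are $\{1,3,4,5,9\}$. If some $c$ works, Lemma~\ref{lem:odd-template-construction} applies verbatim. For $q=25$ we realize $\F_{25}=\F_5(\iota)$ with $\iota^2=2$ and again test the $25$ values of $c$. The failure of the Weil-bound count in Lemma~\ref{lem:odd-template-parameter-exists} only shows the bound is too weak at these sizes, not that no $c$ exists. Note, however, that $q=25$ is explicitly excluded there, which suggests the template genuinely fails for it, and possibly also for $q=7,11$.

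If the fixed template admits no admissible $c$, I would modify it while keeping the structural part that makes condition (iii) automatic. Concretely, I keep the pattern $T_4=I_2$, $T_3=\mathrm{diag}(1,\rho)$, $T_2$ upper triangular with first row $(1,\ast)$, and $T_1$ lower triangular, with the $v_a,u_a$ chosen as in the earlier lemmas, so that the proportionality relations $v_aT_b\in\langle u_a\rangle$ hold by shape. I would then vary the free entries together with $C=\begin{bmatrix}0&1\\ c&d\end{bmatrix}$, using the char-$3$ version in Lemma~\ref{lem:char3-one-parameter-construction} as a model. For each candidate we check three things:
\begin{enumerate}[label=(\roman*)]
\item the splitting of each $f_a(x)=\det(I_2+xCT_a)$ into distinct non-zero roots, via discriminants being non-zero squares;
\item the non-vanishing of $\det(\mu T_b-\lambda T_a)$ over all root pairs, which by the product formula in the proof of Lemma~\ref{lem:template-10-node-construction} is equivalent to $R_{ab}\ne0$;
\item the non-proportionality $v_aT_a\notin\langle u_a\rangle$.
\end{enumerate}

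The main obstacle is finding a working witness in the smallest field, $q=7$. There each quadratic $f_a$ must split, which happens for only about half the parameter values, and the eight roots $\Lambda_1\cup\dots\cup\Lambda_4$ must avoid all $\binom{8}{2}$ collision conditions, all inside $\F_7^\times$ of size $6$. So I expect to need a small computer search over the free entries of $C$ and the $T_a$, exactly as was done for $\F_{32}$ in Lemma~\ref{lem:even-m5-special-witness}. I would then present each found witness as an explicit table of root sets $\Lambda_a$ and products $\Delta_{ab}$, which the reader can verify by hand. The same presentation would be used for $q=11$ and $q=25$, with $\F_{25}$ elements written in the basis $\{1,\iota\}$.
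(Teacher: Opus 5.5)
\textbf{Gap: no witnesses, and the proposed fallback search cannot find one for $q=7$.}

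Your framework is the paper's: feed explicit $C,T_1,\dots,T_4$ into Lemma~\ref{lem:template-10-node-construction}, then puncture via Lemma~\ref{lem:reduction-by-puncturing}(a). But the proposal never produces the data. Since no counting argument works at these sizes, the explicit witnesses are the whole content of the lemma.

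Your first attempt can be settled immediately. The one-parameter template of Lemma~\ref{lem:odd-template-construction} has no admissible $c$ in any of the three fields. For example, over $\F_7$ the allowed values $c\in\{1,4,6\}$ give $12c+1\in\{6,0,3\}$, never a non-zero square.

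The relations $v_aT_b\in\langle u_a\rangle$ for $a=3,4$ also do not follow from the triangular/diagonal shapes. Take $v_a=u_a\in\{(0,1),(1,0),(1,1)\}$ for $a\le 3$, and $\langle v_4\rangle=\langle(1,\gamma)\rangle$, $\langle u_4\rangle=\langle(1,\delta)\rangle$. By sharp $3$-transitivity these relations force exactly the two-parameter family of Proposition~\ref{prop:no-beta10-q98-n10}: $T_4=I_2$, $T_3=\operatorname{diag}(1,\delta/\gamma)$, $T_2=\begin{bmatrix}1&1-\sigma\\0&\sigma\end{bmatrix}$ and $T_1=\begin{bmatrix}1&0\\r&1+r\end{bmatrix}$, where $\sigma=\frac{\delta-1}{\gamma-1}$ and $r=\frac{\delta-\gamma}{\gamma(1-\delta)}$.

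The real problem is that your fallback family, $C=\begin{bmatrix}0&1\\c&d\end{bmatrix}$ with these $T_a$, contains no witness for $q=7$.
\begin{itemize}
\item Every pair $\{a,b\}\neq\{1,2\}$ of the $T_a$ shares its first row or its first column. So $\det(\mu T_b-\lambda T_a)$ is divisible by $\mu-\lambda$, and $\Lambda_3$, $\Lambda_4$, $\Lambda_1\cup\Lambda_2$ must be pairwise disjoint. In $\F_7^\times$ this forces $\Lambda_1=\Lambda_2$ and $\F_7^\times=\Lambda_1\sqcup\Lambda_3\sqcup\Lambda_4$.
\item Then $f_1=f_2$ gives $\det T_1=\det T_2$, i.e.\ $\gamma\delta=1$. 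Hence $T_1-T_2=-(1-\sigma)\begin{bmatrix}0&1\\1&0\end{bmatrix}$, and $\operatorname{tr}(CT_1)=\operatorname{tr}(CT_2)$ forces $c=-1$, i.e.\ $\det C=1$.
\item Compare the product and sum of the six roots with $\prod\F_7^\times=-1$ and $\sum\F_7^\times=0$. This gives $\gamma^3=1$ and $3d=-(1+\gamma)$, so $(\gamma,d)\in\{(2,6),(4,3)\}$.
\item For $(2,6)$, $f_3=1+5x+2x^2$ has discriminant $3$. For $(4,3)$, $f_4=1+3x+x^2$ has discriminant $5$. Both are non-squares in $\F_7$.
\end{itemize}
So your search comes back empty in exactly the case you identified as hardest.

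The fix is to let $C$ range over all of $\GL_2(\F_q)$ modulo scalars, together with both parameters $(\gamma,\delta)$. The paper's witness for $q=7$ is $(\gamma,\delta)=(3,5)$ with $C=\begin{bmatrix}6&3\\4&3\end{bmatrix}$, which has $C_{11}\neq 0$. The paper likewise supplies explicit data for $q=11$ and $q=25$, again with $C_{11}\neq 0$.
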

          
          \begin{proof}
          We use Lemma~\ref{lem:template-10-node-construction}. For parameters
          \(c,d\in\F_q\), define
          \[
            T_1=
            \begin{bmatrix}
            1&0\\[1mm]
            \dfrac{d-c}{c(1-d)}&
            \dfrac{d(1-c)}{c(1-d)}
            \end{bmatrix},
            \qquad 
            T_2=
            \begin{bmatrix}
            1&\dfrac{c-d}{c-1}\\
            0&\dfrac{d-1}{c-1}
            \end{bmatrix},
            \qquad 
            T_3=
            \begin{bmatrix}
            1&0\\
            0&\dfrac{d}{c}
            \end{bmatrix},
            \qquad
            T_4=I_2.
          \]
          In all cases below, the denominators are non-zero and \(c\neq d\), so
          \(T_1,T_2,T_3,T_4\) are invertible.
          
          For \(q=7\), take
          \[
            c=3,\qquad d=5,
            \qquad
            C=
            \begin{bmatrix}
            6&3\\
            4&3
            \end{bmatrix}.
          \]
          For \(q=11\), take
          \[
            c=7,\qquad d=5,
            \qquad
            C=
            \begin{bmatrix}
            5&4\\
            7&4
            \end{bmatrix}.
          \]
          For \(q=25\), let
          \[
            \F_{25}=\F_5(\alpha),
            \qquad
            \alpha^2+2=0,
          \]
          and take
          \[
            c=2+4\alpha,\qquad d=3,\qquad C=
            \begin{bmatrix}
            1+3\alpha&3+3\alpha\\
            3+\alpha&3+4\alpha
            \end{bmatrix}.
          \]
          A direct calculation gives \(\det(C)\neq0\) in each case.
          
          For \(a\in[4]\), put
          \[
            f_a(x)=\det(I_2+xCT_a).
          \]
          The root sets of these four polynomials are as follows. For \(q=7\),
          \[
          \begin{array}{c|c}
          a & \Lambda_a \\ \hline
          1 & \{1,3\}\\
          2 & \{1,3\}\\
          3 & \{2,6\}\\
          4 & \{4,5\}
          \end{array}
          \]
          For \(q=11\),
          \[
          \begin{array}{c|c}
          a & \Lambda_a \\ \hline
          1 & \{2,7\}\\
          2 & \{8,9\}\\
          3 & \{1,10\}\\
          4 & \{3,5\}
          \end{array}
          \]
          For \(q=25\),
          \[
          \begin{array}{c|c}
          a & \Lambda_a \\ \hline
          1 & \{1,\ 1+2\alpha\}\\[1mm]
          2 & \{3,\ 3+\alpha\}\\[1mm]
          3 & \{2+4\alpha,\ 4+\alpha\}\\[1mm]
          4 & \{2\alpha,\ 4+2\alpha\}
          \end{array}
          \]
          Thus condition \emph{(i)} of Lemma~\ref{lem:template-10-node-construction}
          holds in all three cases.
          
          Next, for \(a<b\), define
          \[
            \Delta_{ab}
            :=
            \prod_{\lambda\in\Lambda_a,\ \mu\in\Lambda_b}
            \det(\mu T_b-\lambda T_a).
          \]
          By the product formula for resultants used in
          Lemma~\ref{lem:template-10-node-construction}, it is enough to verify that all
          \(\Delta_{ab}\) are non-zero. For \(q=7\),
          \[
          \begin{array}{c|c}
          (a,b)&\Delta_{ab}\\ \hline
          (1,2)&1\\
          (1,3)&1\\
          (1,4)&1\\
          (2,3)&1\\
          (2,4)&1\\
          (3,4)&1
          \end{array}
          \]
          For \(q=11\),
          \[
          \begin{array}{c|c}
          (a,b)&\Delta_{ab}\\ \hline
          (1,2)&4\\
          (1,3)&4\\
          (1,4)&1\\
          (2,3)&4\\
          (2,4)&4\\
          (3,4)&4
          \end{array}
          \]
          For \(q=25\),
          \[
          \begin{array}{c|c}
          (a,b)&\Delta_{ab}\\ \hline
          (1,2)&3+4\alpha\\
          (1,3)&4\\
          (1,4)&2+4\alpha\\
          (2,3)&3\\
          (2,4)&3\\
          (3,4)&1
          \end{array}
          \]
          All these elements are non-zero. Hence condition \emph{(ii)} of
          Lemma~\ref{lem:template-10-node-construction} holds.
          
          Finally, define
          \[
            v_1=(0,1),\quad v_2=(1,0),\quad v_3=(1,1),\quad v_4=(1,c),
          \]
          and
          \[
            u_1=(0,1),\quad u_2=(1,0),\quad u_3=(1,1),\quad u_4=(1,d).
          \]
          From the definitions of \(T_1,T_2,T_3,T_4\), one checks directly that
          \[
            v_aT_b\in\langle u_a\rangle
            \qquad(b\neq a),
          \]
          whereas
          \[
            v_aT_a\notin\langle u_a\rangle.
          \]
          Indeed, the latter non-proportionality follows from \(c\neq d\), together with
          the non-vanishing of the denominators in the definition of \(T_a\). Thus
          condition \emph{(iii)} of Lemma~\ref{lem:template-10-node-construction} holds.
          
          Therefore all hypotheses of Lemma~\ref{lem:template-10-node-construction} are
          satisfied for \(q=7,11,25\). Hence
          \[
            \beta_{\opt}^{(q,10,2,2)}=10,
            \qquad
            \beta_{\opt}^{(q,9,2,2)}=9
          \]
          for these three values of \(q\).
          \end{proof}

          \begin{proposition}\label{prop:beta-n9-n10-odd-not-char3}
            Let \(q\) be an odd prime power with 
            \(\operatorname{char}(\F_q)\neq3\). Then
            \[
              \beta_{\opt}^{(q,10,2,2)}=\begin{cases}
                10,&q\ge 7,\\
                12,& q=5,
              \end{cases}
            \]
            and
            \[
              \beta_{\opt}^{(q,9,2,2)}=\begin{cases}
                9,&q\ge 7,\\
                10,& q=5.
              \end{cases}
            \]
            \end{proposition}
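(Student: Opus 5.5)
The plan is to split according to whether $q\ge 7$ or $q=5$; these are the only odd prime powers with $\operatorname{char}(\F_q)\neq 3$ that the statement covers, since $q=3$ and its powers are excluded.

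\textbf{The case $q\ge 7$.} If $q\ge 13$ and $q\neq 25$, Lemma~\ref{lem:odd-template-parameter-exists} provides a parameter $c$ satisfying the hypotheses of Lemma~\ref{lem:odd-template-construction}. That lemma then gives $\beta_{\opt}^{(q,10,2,2)}=10$ and $\beta_{\opt}^{(q,9,2,2)}=9$. The remaining odd prime powers $q\ge 7$ with characteristic different from $3$ are $q=7$, $q=11$ and $q=25$, and these are handled directly by the explicit witnesses in Lemma~\ref{lem:odd-exceptional-small-witnesses}. Since $q=13,17,19,23,29,\dots$ are all at least $13$, this covers every $q\ge 7$ in the statement.

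\textbf{The case $q=5$.} The incidence-multiplicity bound (Theorem~\ref{thm:incidence-multiplicity-bound}) gives
\[
\beta_{\opt}^{(5,10,2,2)}\ge 2\cdot 10-5-3=12,\qquad \beta_{\opt}^{(5,9,2,2)}\ge 2\cdot 9-5-3=10.
\]
The $n$-only bounds are weaker here: $\lceil (5n-8)/4\rceil$ equals $11$ for $n=10$ and $10$ for $n=9$. For the matching upper bounds I would use Theorem~\ref{thm:incidence-multiplicity-bound-attainable} with $r=\ell=2$. It applies because $(r-1)\mid(q-1)$ and $(q-1)/(r-1)=4\ge 2$, and it gives codes attaining $2n-q-3$ for every $n$ with $2q+2=12\le n\le 26$. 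That range does not include $n=9$ or $n=10$. Instead, Proposition~\ref{pro:long-bw} applies for $q\ge 5$ and covers all $n$ with $\widetilde N_{\BW}(5)=8\le n\le 26$, producing codes with $\beta(\mathcal C)=2n-q-3$. This yields exactly $12$ for $n=10$ and $10$ for $n=9$.

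The only point that needs care is confirming $\widetilde N_{\BW}(5)=\lceil 4\cdot 6/3\rceil=8$, so that both $n=9$ and $n=10$ fall inside the range of Proposition~\ref{pro:long-bw}; this is immediate. There is no real obstacle: the proof is an assembly of earlier lemmas, and the main work is checking that the listed cases exhaust all odd prime powers with characteristic not equal to $3$.
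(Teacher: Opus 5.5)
Your proposal is correct and matches the paper's proof: the $q\ge 7$ case comes from Lemmas~\ref{lem:odd-template-construction}, \ref{lem:odd-template-parameter-exists} and~\ref{lem:odd-exceptional-small-witnesses}, and the $q=5$ case comes from Theorem~\ref{thm:incidence-multiplicity-bound} together with Proposition~\ref{pro:long-bw}, using $\widetilde N_{\BW}(5)=8$ (the paper cites these two results as Corollary~\ref{cor:long-optimal-bw}). One harmless slip: for $n=9$ the value $\lceil(5n-8)/4\rceil=10$ equals $2n-q-3$ rather than being weaker, and the sharpened $n$-only bound is not available for $n\in\{9,10\}$ anyway, though your argument never uses it.
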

            
            \begin{proof}
              The conclusion for \(q\ge 7\) follows from
            Lemma~\ref{lem:odd-template-construction}, Lemma~\ref{lem:odd-template-parameter-exists} and
            Lemma~\ref{lem:odd-exceptional-small-witnesses}. For \(q=5\), since $\widetilde{N}_{\BW}(5)=8$, Corollary~\ref{cor:long-optimal-bw} implies that, for \(n=9,10\), the incidence-multiplicity bound, equivalently the projective counting bound, is attained. Hence
            \[
              \beta_{\opt}^{(5,10,2,2)}=2\cdot 10-5-3=12,\qquad\qquad \beta_{\opt}^{(5,9,2,2)}=2\cdot 9-5-3=10.
            \]
            \end{proof}

            According to Propositions~\ref{prop:beta-even-n9-n10},
            \ref{prop:beta-char3-n9-n10}, and
            \ref{prop:beta-n9-n10-odd-not-char3}, it remains only to consider
            the two cases \(q=8,9\).

            \begin{proposition}\label{prop:beta-q9-n9}
              One has
              \[
              \beta_{\opt}^{(9,9,2,2)}=9.
              \]
              \end{proposition}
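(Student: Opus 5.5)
We need $\beta_{\opt}^{(9,9,2,2)}=9$. Since $n=9$, the lower bound from Theorem~\ref{thm:ZLH-lb} is $\lceil (45-10)/4\rceil = 9$, and the incidence bound $2n-q-3=6$ is weaker. So it suffices to exhibit a single $(9,7,2)$ MDS array code over $\F_9$ with $\beta(\mathcal C)\le 9$. Note that Lemma~\ref{lem:reduction-by-puncturing}(c) is not available here: the case $q=9$, $n=10$ is excluded from the theorem, where the value is presumably $11$. Hence we cannot use the ten-node template of Lemma~\ref{lem:template-10-node-construction} directly, and we need a nine-node variant.

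\textbf{Nine-node template.} The plan is to adapt Lemma~\ref{lem:template-10-node-construction} by using four classes of sizes $2,2,2,1$ instead of $2,2,2,2$. Concretely:
\begin{itemize}
\item take $H_9=\begin{bmatrix}I_2\\0\end{bmatrix}$ and $H_{10}=\begin{bmatrix}0\\I_2\end{bmatrix}$;
\item take $H_{a,\lambda}=\begin{bmatrix}I_2\\ \lambda T_a\end{bmatrix}$ with $\lambda\in\Lambda_a$, where $|\Lambda_1|=|\Lambda_2|=|\Lambda_3|=2$ and $|\Lambda_4|=1$.
\end{itemize}
This gives $6+1+2=9$ nodes.

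For a node in class $a$, repair with $M^{(a)}=\operatorname{diag}(u_a,v_a)$. Its bandwidth is $2$ from the sibling node in the same class, plus $1$ from each of the remaining nodes. For $a\le 3$ this gives $2+5\cdot 1=7+2=9$. For class $4$, which has no sibling, it gives at most $8$.

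For nodes $9$ and $10$, use $M_9=[I_2\ C]$ and $M_{10}=[C^{-1}\ I_2]$. Each of the $7$ nodes $H_{a,\lambda}$ contributes rank $1$, and the other systematic node contributes $2$, for a total of $9$.

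\textbf{Relaxed conditions.} This relaxes the template's hypotheses. Each $f_a(x)=\det(I_2+xCT_a)$ for $a\le 3$ must have two distinct nonzero roots in $\F_9$. For $a=4$ we only need one root $\lambda$ with $\rank(I+\lambda CT_4)=1$. This is automatic for any root unless $CT_4$ is scalar, so we may even allow $f_4$ to be irreducible-free with a single chosen root. The cross-determinants $\det(\mu T_b-\lambda T_a)$ must be nonzero for the chosen roots. Condition (iii) on $u_a,v_a$ is unchanged. I would first verify this modified template lemma; its proof is verbatim that of Lemma~\ref{lem:template-10-node-construction}.

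\textbf{Choosing parameters over $\F_9$.} Keep the characteristic-$3$ family of Lemma~\ref{lem:char3-one-parameter-construction}: the matrices $T_1,\dots,T_4$, the vectors $u_a,v_a$, and $C$, depending on $s$. That lemma fails at $q=9$ only because no $s\in\F_9$ makes both discriminants squares and all resultants nonzero. With one class allowed to have a single root, fewer constraints are needed.

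\begin{itemize}
\item The preferred option is to drop a square condition. For instance, require only $\Delta_2=-(s-1)(s^3+s^2-s+1)$ to be a nonzero square, and let class $4$ use a single root of $f_4$, which then needs to have some root.
\item Alternatively, permute which $T_a$ plays the singleton role.
\end{itemize}

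Then check the relevant resultants (pair values $\Delta_{ab}$) for the nine explicit $s\in\F_9=\F_3(i)$, $i^2=-1$. If the one-parameter family fails for every $s$, I would fall back to the two-parameter $(c,d)$ family of Lemma~\ref{lem:odd-exceptional-small-witnesses} with a freely chosen $C$, found by a small computer search, and present the explicit witness with its root sets and $\Delta_{ab}$ table.

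\textbf{Main obstacle.} The main obstacle is locating an explicit witness over $\F_9$ that satisfies all MDS cross-conditions with only nine distinct nodes. The fallback is an exhaustive search over $C\in\GL_2(\F_9)$ and $(c,d)$, reporting the resulting data as in Lemma~\ref{lem:odd-exceptional-small-witnesses}.
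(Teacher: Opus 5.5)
\textbf{Gap: the proposal never exhibits a witness over $\F_9$.} Your reduction is sound, and it has exactly the shape of the paper's proof.
\begin{itemize}
\item The lower bound $9$ comes from Theorem~\ref{thm:ZLH-lb}.
\item The paper's code over $\F_9$ is a nine-node template with front classes $\{1,2\},\{3,4\},\{5,6\},\{7\}$. Here $W_2=(2+\omega)W_1$, $W_4=(1+2\omega)W_3$, $W_6=2W_5$, $W_7=(1+\omega)I_2$.
\item The front nodes use block-diagonal repair matrices $\operatorname{diag}(u_a,v_a)$, and the two systematic nodes use $M_8=[\,I_2\ C\,]$ and $M_9=[\,C^{-1}\ I_2\,]$.
\end{itemize}
However, the proposition is nothing more than the existence of such a witness, and you do not supply one. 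Your nine-node template lemma is routine. What needs proof is that its relaxed conditions are satisfiable over $\F_9$, and this is not automatic: the ten-node conditions are \emph{not} satisfiable there, since $\beta_{\opt}^{(9,10,2,2)}=11$. As written, the proposal is a search plan, not a proof.

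\textbf{Your preferred route fails for every admissible $s$.} Write $\F_9=\F_3(\omega)$ with $\omega^2=-1$ (your $i$). The only $s$ with $s(s^2-1)\neq0$ are $\pm\omega$, $1\pm\omega$, $-1\pm\omega$.
\begin{itemize}
\item For $s=\pm\omega$, both $\Delta_2$ and $\Delta_4$ are non-squares. For $s=1\pm\omega$, $\Delta_4$ is a non-square. So some class is empty and you get at most six front nodes, whichever class you make the singleton.
\item For $s=-1\pm\omega$, one has $s^2-s-1=0$, so $f_4(x)=(x-1)^2$ and class $4$ is forced to be the singleton $\{1\}$.
\item But then $s^3+s^2+1=0$ as well, so $f_1(1)=0$ and $1\in\Lambda_1$.
\item Since $T_1$ and $T_4=I_2$ share the first row $(1,0)$, $\det(T_1-T_4)=0$, and the nodes $H_{1,1},H_{4,1}$ violate the MDS property.
\end{itemize}

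\textbf{So only your fallback remains, and it must actually be carried out.} The paper's code is such a witness in your two-parameter normal form.
\begin{itemize}
\item The parameters are $(c,d)=(2,\omega)$, visible in $M_7=\operatorname{diag}((1,\omega),(1,2))$, and $C=\begin{bmatrix}2&2+\omega\\2+\omega&\omega\end{bmatrix}$.
\item Relative to the normalized $T_1,T_2,T_3$, the root sets are $\Lambda_1=\{1,2+\omega\}$, $\Lambda_2=\{2,2+\omega\}$, $\Lambda_3=\{\omega,2\omega\}$.
\item The singleton is $\lambda=1+\omega$ for $T_4=I_2$, and it is a double root of $\det(I_2+xC)$. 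This is precisely the degenerate singleton you anticipated.
\end{itemize}
To complete the proof you must exhibit such data and verify all pairwise determinants $\det[H_i,H_j]$ and the rank conditions, as the paper does.
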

              
              \begin{proof}
                By Theorem~\ref{thm:ZLH-lb}, any \((9,7,2)\) MDS array code $\mathcal{C}$ over \(\F_9\) satisfies $\beta(\mathcal{C})\ge\left\lceil\frac{5\cdot 9-10}{4}\right\rceil=9$. Hence it suffices to construct a \((9,7,2)\) MDS array code $\mathcal{C}_0$ over \(\F_9\) with \(\beta(\mathcal{C}_0)\le 9\).
              Let
              \[
              \F_9=\F_3(\omega),
              \qquad
              \omega^2+1=0.
              \]
              
              Let
              \[
              H_i=
              \begin{bmatrix}
              I_2\\
              W_i
              \end{bmatrix}
              \qquad(i\in [7]),
              \]
              with
              \[
              W_1=
              \begin{bmatrix}
              1&0\\
              2\omega&1+2\omega
              \end{bmatrix},
              \qquad
              W_2=
              \begin{bmatrix}
              2+\omega&0\\
              1+\omega&2\omega
              \end{bmatrix},\qquad
              W_3=
              \begin{bmatrix}
              2&1+\omega\\
              0&1+2\omega
              \end{bmatrix},
              \]
              \[
              W_4=
              \begin{bmatrix}
              2+\omega&2\\
              0&\omega
              \end{bmatrix},\quad W_5=
              \begin{bmatrix}
              \omega&0\\
              0&1
              \end{bmatrix},\quad W_6=
              \begin{bmatrix}
              2\omega&0\\
              0&2
              \end{bmatrix},\quad
              W_7=
              \begin{bmatrix}
              1+\omega&0\\
              0&1+\omega
              \end{bmatrix}.
              \]
              Finally, set
              \[
              H_8=
              \begin{bmatrix}
              I_2\\
              0
              \end{bmatrix},
              \qquad
              H_9=
              \begin{bmatrix}
              0\\
              I_2
              \end{bmatrix}.
              \]
              A direct calculation shows that
              \[
\renewcommand{\arraystretch}{1.2}
\setlength{\arraycolsep}{5pt}
\begin{array}{c|ccccccccc}
\det[H_i,H_j] & 1 & 2 & 3 & 4 & 5 & 6 & 7 & 8 & 9\\
\hline
1 & 0 & 2+2\omega & 1+2\omega & 2\omega & 2+2\omega & \omega & 1 & 1+2\omega & 1\\
2 & 2+2\omega & 0 & \omega & 2+2\omega & 1+\omega & 1 & 2+\omega & 1+\omega & 1\\
3 & 1+2\omega & \omega & 0 & 1+2\omega & 2+\omega & 2 & 1+\omega & 2+\omega & 1\\
4 & 2\omega & 2+2\omega & 1+2\omega & 0 & 1+2\omega & \omega & 2 & 2+2\omega & 1\\
5 & 2+2\omega & 1+\omega & 2+\omega & 1+2\omega & 0 & \omega & \omega & \omega & 1\\
6 & \omega & 1 & 2 & \omega & \omega & 0 & 2\omega & \omega & 1\\
7 & 1 & 2+\omega & 1+\omega & 2 & \omega & 2\omega & 0 & 2\omega & 1\\
8 & 1+2\omega & 1+\omega & 2+\omega & 2+2\omega & \omega & \omega & 2\omega & 0 & 1\\
9 & 1 & 1 & 1 & 1 & 1 & 1 & 1 & 1 & 0
\end{array}
\]
which implies that
              \[
              [H_i,H_j]\in \GL_4(\F_9)
              \qquad(i\ne j).
              \]
              Hence \(H=[\,H_1\ H_2\ \cdots\ H_9\,]\) defines a \((9,7,2)\) MDS array code $\mathcal{C}_0$ over \(\F_9\).
              
              We now give repair matrices. Let
              \[
              M_1=M_2=
              \begin{bmatrix}
              0&1&0&0\\
              0&0&0&1
              \end{bmatrix},\qquad M_3=M_4=
              \begin{bmatrix}
              1&0&0&0\\
              0&0&1&0
              \end{bmatrix},
              \]
              \[
              M_5=M_6=
              \begin{bmatrix}
              1&1&0&0\\
              0&0&1&1
              \end{bmatrix},\qquad M_7=
              \begin{bmatrix}
              1&\omega&0&0\\
              0&0&1&2
              \end{bmatrix},
              \]
              and
              \[
              M_8=
              \begin{bmatrix}
              1&0&2&2+\omega\\
              0&1&2+\omega&\omega
              \end{bmatrix},
              \qquad
              M_9=
              \begin{bmatrix}
              1&2+2\omega&1&0\\
              2+2\omega&\omega&0&1
              \end{bmatrix}.
              \]
              Again by direct calculation, the ranks of the matrices \(M_iH_j\) are as follows:
              \[
              \begin{array}{c|ccccccccc}
               &H_1&H_2&H_3&H_4&H_5&H_6&H_7&H_8&H_9\\ \hline
              M_1&2&2&1&1&1&1&1&1&1\\
              M_2&2&2&1&1&1&1&1&1&1\\
              M_3&1&1&2&2&1&1&1&1&1\\
              M_4&1&1&2&2&1&1&1&1&1\\
              M_5&1&1&1&1&2&2&1&1&1\\
              M_6&1&1&1&1&2&2&1&1&1\\
              M_7&1&1&1&1&1&1&2&1&1\\
              M_8&1&1&1&1&1&1&1&2&2\\
              M_9&1&1&1&1&1&1&1&2&2
              \end{array}
              \]
              Thus, for every \(i\in [9]\), one has
              \[
              \sum_{j\ne i}\rank(M_iH_j)\le 9,
              \]
              which implies that $\beta(\mathcal C_0)\le 9$.
              \end{proof}

              \begin{proposition}\label{prop:no-beta10-q98-n10}
                One has
                \[
                  \beta_{\opt}^{(9,10,2,2)}=11,
                  \qquad
                  \beta_{\opt}^{(8,10,2,2)}=11.
                \]
                \end{proposition}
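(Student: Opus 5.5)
The statement has two halves. Each needs a lower bound of $11$ and a matching upper bound. For the upper bound I would use the sharpened bound: for $n=10$, $\lceil(5n-8)/4\rceil=11$, while the projective counting bound gives $2n-q-3=9$ for $q=8$ and $8$ for $q=9$. So I need $(10,8,2)$ MDS codes over $\F_8$ and over $\F_9$ with $\beta\le 11$.

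\textbf{Upper bound.} For $q=9$ I would take the $(9,7,2)$ code $\mathcal C_0$ of Proposition~\ref{prop:beta-q9-n9} and try to append a tenth block $H_{10}=[I_2;W_{10}]$. Here $W_{10}$ should be chosen so that:
\begin{itemize}
\item $W_{10}$ and $W_{10}-W_j$ are invertible for all $j\in[7]$;
\item each existing repair matrix $M_i$ gets $\rank(M_iH_{10})=1$ where possible, and otherwise each old node's total stays at most $11$;
\item a repair matrix for the new node itself has bandwidth at most $11$.
\end{itemize}
Since the old nodes already have bandwidth $9$, each can absorb one rank-$2$ helper. This makes the extension easy, and a small computer search over $\F_9^{2\times2}$ should produce such a $W_{10}$. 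For $q=8$, Lemma~\ref{lem:short-bw-extra-mod0} does not apply, and $N_{\BW}(8)=12\ge 10$. So Proposition~\ref{prop:short-bw} gives a $(10,8,2)$ code over $\F_8$ with $\beta\le\lceil(5\cdot10-8)/4\rceil=11$ directly. Likewise $N_{\BW}(9)=12$, so the same proposition already settles the upper bound for $q=9$ without any extension. This reduces the statement to the lower bound.

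\textbf{Lower bound.} By Theorem~\ref{thm:ZLH-lb}, $\beta\ge 10$, so I must rule out a $(10,8,2)$ MDS code over $\F_q$, $q\in\{8,9\}$, with $\beta(\mathcal C)=10$. I would follow the normalization used in Proposition~\ref{prop:beta-n5,6-5}:
\begin{itemize}
\item bring $H_9=[I_2;0]$, $H_{10}=[0;I_2]$ and $H_1=[I_2;I_2]$ into standard form;
\item write the remaining blocks as $[I_2;W_j]$, where $W_j$, $W_j-I$ and $W_j-W_{j'}$ are all invertible.
\end{itemize}
A naive search over $7$-subsets of admissible matrices is too large. So I would first use the structural information from Section~\ref{sec:lb}. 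With $n=10=4\cdot2+2$, the averaging argument in the proof of Proposition~\ref{prop:beta-mod-2} still works when $m=2$. It forces $|\mathcal R_i|=2$ for all $i$, $t=n-2$, and a partition of $[8]$ into four blocks $P_a$ of size $2$ with $\mathcal R_i=P_a$. Lemma~\ref{lem:three-witness} then makes the two $W_j$ in each block projectively equal.

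Only the final counting step of that proof fails at $m=2$: it has $3m+1=7\le 8$, so there is no contradiction. Hence the extremal configuration is rigid. It consists of four projective classes $T_a$, each used with two scalars $\lambda$, plus the two systematic nodes. This is precisely the shape of the template in Lemma~\ref{lem:template-10-node-construction}. Repair of node $n-1$ (respectively $n$) must then have exactly $8$ of these $\lambda$'s as roots of $\det(U+\lambda VT_a)$, which matches the polynomial $f_a$ there.

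So the search reduces to tuples $(C,T_1,\dots,T_4)$ up to the residual symmetries: simultaneous conjugations, and the vectors $u_a,v_a$ normalized as in Lemma~\ref{lem:template-10-node-construction}(iii). In practice this leaves one or two free parameters, as in Lemmas~\ref{lem:even-one-parameter-construction} and~\ref{lem:char3-one-parameter-construction}. I would then show by exhaustive computation over $\F_8$ and $\F_9$ that conditions (i)–(iii) of that lemma can never hold simultaneously. In the analysis above this is equivalent to $\beta=10$.

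\textbf{Main obstacle.} The hard step is justifying that every code with $\beta=10$ must have this template form. In particular I need:
\begin{itemize}
\item that the repair matrices $M_9$ and $M_{10}$ can be taken to be $[I_2\ C]$ and $[C^{-1}\ I_2]$;
\item that the pairs $(u_a,v_a)$ for the four blocks can be put in the normalized positions.
\end{itemize}
This needs a careful reworking of the $m=2$ equality case using the cited lemmas from~\cite{zhang2025optimal}. If that reduction proves too delicate, the fallback is a direct exhaustive search using the block structure alone, which is still far smaller than the unrestricted search.
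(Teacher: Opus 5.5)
Your upper bound is the same as the paper's. Since $N_{\BW}(8)=N_{\BW}(9)=12\ge 10$, Proposition~\ref{prop:short-bw} gives $\beta\le 11$ over both fields, so the extension of $\mathcal C_0$ is unnecessary. Your lower-bound reduction also follows the paper up to the point where the eight front blocks are $\bigl[\begin{smallmatrix}I_2\\ \lambda T_a\end{smallmatrix}\bigr]$ with $\lambda\in\Lambda_a$ and $|\Lambda_a|=2$.

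The step you flag is a real gap as written. The counting in Proposition~\ref{prop:beta-mod-2} only shows that at least $7$ of the $8$ front indices lie outside $\mathcal R(M_9)$, not $8$. It also does not give $M_9=[\,I_2\ \ C\,]$ with $C$ invertible. It closes with a one-line refinement of that degree count:
\begin{itemize}
\item Write $M_9=[\,U\ \ V\,]$ with $U$ invertible.
\item If $V$ were singular, $\det(U+\lambda VT_a)$ would have $\lambda^2$-coefficient $\det(V)\det(T_a)=0$ and constant term $\det U\neq 0$. It would then have at most one root per block, so at most $4<7$ front indices would lie outside $\mathcal R(M_9)$.
\item Hence $V$ is invertible and $\mathcal R(M_9)=\{9,10\}$. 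Setting $C:=U^{-1}V$ gives $\det(I_2+\lambda CT_a)=0$ for all $\lambda\in\Lambda_a$.
\item Since $f_a(0)=1$ and $\deg f_a=2$, the set $\Lambda_a$ is exactly the root set of $f_a$. This is condition (i).
\item Condition (ii) is equivalent to the MDS property by the resultant product formula. Condition (iii) is Lemma~9 of \cite{zhang2025optimal}, used exactly as in the paper.
\end{itemize}
Your other two worries are not real obstacles. Conditions (i)--(iii) do not involve node $10$ at all; in any case $C^{-1}M_9=[\,C^{-1}\ \ I_2\,]$ has the same kernel as $M_9$. The normalization of $(\mathbf u_a,\mathbf v_a)$ is needed only to shrink the search, and the paper already supplies it via sharp $3$-transitivity, leaving $T_1,\dots,T_4$ as explicit functions of $c,d$. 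After that, $C$ is still free modulo scalars, since rescaling $C$ rescales all $\Lambda_a$ together. So ``one or two free parameters'' undercounts, but a search over $(c,d,[C])$ is still tiny.

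Compared with the paper, the routes then genuinely differ. The paper does not determine the shape of the repair matrices of nodes $9$ and $10$ beyond the normalizations $[\,I_2\ \ {-A}\,]$ and $[\,{-D}\ \ I_2\,]$. It enumerates all $(c,d)$, all $4$-tuples of $2$-subsets $\Lambda_a\subseteq\F_q^\times$, and all $A,D\in\F_q^{2\times 2}$ of arbitrary rank, checking the MDS condition and the two rank-sum inequalities. That search is larger, but it requires no structural claim about how nodes $9$ and $10$ are repaired. Your route needs the degree argument above. In exchange, the $\Lambda_a$ are forced by $C$, the computation shrinks, and you get a stronger statement: Lemma~\ref{lem:template-10-node-construction} becomes an exact criterion. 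That is, $\beta_{\opt}^{(q,10,2,2)}=10$ holds if and only if some invertible $C,T_1,\dots,T_4$ over $\F_q$ satisfy (i)--(iii).
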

                
                \begin{proof}
                By Theorem~\ref{thm:ZLH-lb} and Proposition~\ref{prop:short-bw}, it suffices
                to show that for \(q=8,9\), there does not exist a \((10,8,2)\) MDS array code
                \(\mathcal C\) over \(\F_q\) with $\beta(\mathcal C)=10$. Suppose, for contradiction, that such a code exists. Choose a parity-check
                matrix
                \[
                  H=[\,H_1\ H_2\ \cdots\ H_{10}\,],
                  \qquad H_i\in\F_q^{4\times2}.
                \]
                As in Subsection~\ref{subsec:repair-bd-lb}, we may assume without loss of
                generality that
                \[
                  H_j=
                  \begin{bmatrix}
                  I_2\\
                  W_j
                  \end{bmatrix}
                  \quad(j\in[8]),
                  \qquad
                  H_9=
                  \begin{bmatrix}
                  I_2\\
                  0
                  \end{bmatrix},
                  \qquad
                  H_{10}=
                  \begin{bmatrix}
                  0\\
                  I_2
                  \end{bmatrix},
                \]
                with \(W_j\in\GL_2(\F_q)\).
                
                The assumption \(n\ge14\) in Proposition~\ref{prop:beta-mod-2} is used only in
                the final step of its proof. Hence the preceding equality-case arguments apply
                also to \(n=10\). Thus there is a partition
                \[
                  [8]=P_1\sqcup P_2\sqcup P_3\sqcup P_4,
                  \qquad |P_a|=2,
                \]
                such that for every \(i\in P_a\),
                \[
                  \mathcal R_i=P_a,
                \]
                where \(\mathcal R_i\) has the same meaning as in
                Subsection~\ref{subsec:repair-bd-lb}. Moreover, for each \(a\in[4]\), there is
                a matrix \(T_a\in\GL_2(\F_q)\) such that, for every \(j\in P_a\),
                \[
                  W_j=\lambda_jT_a
                \]
                for some \(\lambda_j\in\F_q^\times\). Hence the first eight blocks can be
                written as
                \[
                  H_{a,\lambda}
                  =
                  \begin{bmatrix}
                  I_2\\
                  \lambda T_a
                  \end{bmatrix},
                  \qquad
                  a=1,2,3,4,\quad \lambda\in\Lambda_a,
                \]
                where
                \[
                  \Lambda_a\subseteq\F_q^\times,
                  \qquad
                  |\Lambda_a|=2.
                \]
                
                We next put the four matrices \(T_1,\dots,T_4\) into a normal form. For each
\(a\in[4]\), choose one index \(i_a\in P_a\) and fix a repair matrix
\(M^{(a)}\) for the node \(i_a\). By the definition of \(\mathcal R_{i_a}\) and the condition
\(\mathcal R_{i_a}=P_a\), the corresponding repair matrix \(M^{(a)}\) satisfies
\[
  \rank(M^{(a)}H_{b,\lambda})=
  \begin{cases}
  2,& b=a,\\
  1,& b\neq a,
  \end{cases}
  \qquad
  (b\in[4],\ \lambda\in\Lambda_b).
\]

By the structural description of repair matrices in
\cite[Lemma~9]{zhang2025optimal}, after an invertible row operation on
\(M^{(a)}\), we may write
\[
  M^{(a)}
  =
  \begin{bmatrix}
  \mathbf u_a&0\\
  0&\mathbf v_a
  \end{bmatrix},
\]
where \(\mathbf u_a,\mathbf v_a\in\F_q^{1\times2}\) are non-zero row vectors.
Such a row operation does not change any repair rank. For \(b\in[4]\) and \(\lambda\in\Lambda_b\), we have
\[
  M^{(a)}H_{b,\lambda}
  =
  \begin{bmatrix}
  \mathbf u_a\\
  \lambda \mathbf v_aT_b
  \end{bmatrix}.
\]
Since \(\lambda\neq0\), this matrix has rank \(1\) if and only if
\[
  \mathbf v_aT_b\in\langle \mathbf u_a\rangle.
\]
Thus the rank pattern encoded by \(\mathcal R_{i_a}=P_a\) gives
\[
  \mathbf v_aT_b\in\langle \mathbf u_a\rangle
  \qquad(b\neq a),
\]
whereas
\[
  \mathbf v_aT_a\notin\langle \mathbf u_a\rangle.
\]

We claim that the projective classes
\[
  \langle\mathbf v_1\rangle,\dots,\langle\mathbf v_4\rangle
\]
are pairwise distinct, and likewise
\[
  \langle\mathbf u_1\rangle,\dots,\langle\mathbf u_4\rangle
\]
are pairwise distinct. Indeed, suppose that
\(\langle\mathbf v_a\rangle=\langle\mathbf v_b\rangle\) for some \(a\neq b\). Choose
\(j\in[4]\setminus\{a,b\}\). Since
\[
  \mathbf v_aT_j\in\langle\mathbf u_a\rangle,
  \qquad
  \mathbf v_bT_j\in\langle\mathbf u_b\rangle,
\]
we get \(\langle\mathbf u_a\rangle=\langle\mathbf u_b\rangle\). But then
\[
  \mathbf v_bT_b\in\langle\mathbf u_a\rangle=\langle\mathbf u_b\rangle,
\]
because \(\langle\mathbf v_b\rangle=\langle\mathbf v_a\rangle\) and
\(\mathbf v_aT_b\in\langle\mathbf u_a\rangle\), contradicting
\[
  \mathbf v_bT_b\notin\langle\mathbf u_b\rangle.
\]
Thus the classes \(\langle\mathbf v_a\rangle\) are pairwise distinct. The same argument,
using the invertibility of the \(T_j\)'s, shows that the classes
\(\langle\mathbf u_a\rangle\) are pairwise distinct.

Since \(\PGL_2(\F_q)\) acts sharply triply transitively on
\(\mathbb P^1(\F_q)\), there exist \(X,Y\in\GL_2(\F_q)\) such that, after
relabeling,
\[
  \langle\mathbf v_1X\rangle=\langle(0,1)\rangle,\qquad
  \langle\mathbf v_2X\rangle=\langle(1,0)\rangle,\qquad
  \langle\mathbf v_3X\rangle=\langle(1,1)\rangle,
\]
and
\[
  \langle\mathbf u_1Y\rangle=\langle(0,1)\rangle,\qquad
  \langle\mathbf u_2Y\rangle=\langle(1,0)\rangle,\qquad
  \langle\mathbf u_3Y\rangle=\langle(1,1)\rangle.
\]
Since the fourth classes are distinct from the first three, we may write
\[
  \langle\mathbf v_4X\rangle=\langle(1,c)\rangle,
  \qquad
  \langle\mathbf u_4Y\rangle=\langle(1,d)\rangle,
\]
where
\[
  c,d\in\F_q\setminus\{0,1\}.
\]

Replacing \(T_b\) by \(X^{-1}T_bY\), while replacing
\(\mathbf v_a\) by \(\mathbf v_aX\) and \(\mathbf u_a\) by
\(\mathbf u_aY\), preserves the incidence conditions, since
\[
  (\mathbf v_aX)(X^{-1}T_bY)=\mathbf v_aT_bY.
\]
Thus
\[
  \mathbf v_aT_b\in\langle \mathbf u_a\rangle
  \quad\Longleftrightarrow\quad
  (\mathbf v_aX)(X^{-1}T_bY)\in\langle \mathbf u_aY\rangle.
\]
At the level of parity-check blocks, this is obtained by applying the row
operation
\[
  \begin{bmatrix}
  Y^{-1}&0\\
  0&X^{-1}
  \end{bmatrix}
\]
to the whole parity-check matrix and then right-multiplying each front block by
\(Y\), which restores the form
\[
  \begin{bmatrix}
  I_2\\
  \lambda T_b
  \end{bmatrix}
  \quad\text{as}\quad
  \begin{bmatrix}
  I_2\\
  \lambda X^{-1}T_bY
  \end{bmatrix}.
\]
Similarly, \(H_9\) and \(H_{10}\) are restored to their standard forms by
right-multiplying their blocks by \(Y\) and \(X\), respectively. These
operations preserve the MDS property and all repair ranks.

Therefore, without loss of generality, we may assume
\[
  \langle\mathbf v_1\rangle=\langle(0,1)\rangle,\qquad
  \langle\mathbf v_2\rangle=\langle(1,0)\rangle,\qquad
  \langle\mathbf v_3\rangle=\langle(1,1)\rangle,\qquad
  \langle\mathbf v_4\rangle=\langle(1,c)\rangle,
\]
and
\[
  \langle\mathbf u_1\rangle=\langle(0,1)\rangle,\qquad
  \langle\mathbf u_2\rangle=\langle(1,0)\rangle,\qquad
  \langle\mathbf u_3\rangle=\langle(1,1)\rangle,\qquad
  \langle\mathbf u_4\rangle=\langle(1,d)\rangle,
\]
where $c,d\in\F_q\setminus\{0,1\}$.
          
                For each \(b\in[4]\), choosing the displayed row vectors as representatives of these one-dimensional
                subspaces, the conditions
                \[
                  \mathbf v_aT_b\in\langle \mathbf u_a\rangle
                  \qquad(a\neq b)
                \]
                prescribe the images of three pairwise distinct points of \(\mathbb P^1(\F_q)\).
By the sharp \(3\)-transitivity of \(\PGL_2(\F_q)\)
(see, e.g., \cite[Theorem~11]{zhang2025optimal}), these conditions determine \(T_b\) up to
a non-zero scalar factor. Absorbing this scalar factor into \(\Lambda_b\), we
may take
\[
  T_1=
  \begin{bmatrix}
  1&0\\[1mm]
  \dfrac{d-c}{c(1-d)}
  &
  1+\dfrac{d-c}{c(1-d)}
  \end{bmatrix},\qquad T_2=
  \begin{bmatrix}
  1&\dfrac{c-d}{c-1}\\
  0&\dfrac{d-1}{c-1}
  \end{bmatrix},\qquad  T_3=
  \begin{bmatrix}
  1&0\\
  0&d/c
  \end{bmatrix},\qquad T_4=I_2.
\]
Since \(T_4=I_2\), the condition
\[
  \mathbf v_4T_4\notin\langle\mathbf u_4\rangle
\]
forces $c\neq d$.

                Thus any equality case over \(\F_q\), \(q=8,9\), gives data
                \[
                  c,d\in\F_q\setminus\{0,1\},\qquad c\neq d,
                \]
                and two-element subsets
                \[
                  \Lambda_1,\Lambda_2,\Lambda_3,\Lambda_4\subseteq\F_q^\times
                \]
                satisfying the following necessary conditions.
                
                First, the MDS property requires
                \[
                  \det(\mu T_b-\lambda T_a)\neq0
                \]
                for every \(a\neq b\), \(\lambda\in\Lambda_a\), and
                \(\mu\in\Lambda_b\). Second, since \(\beta(\mathcal C)=10\), the last two nodes must also have repair bandwidth at most \(10\). Let us express this
                condition in terms of repair kernels. A repair matrix for \(H_9\) can be
                normalized so that \(M H_9=I_2\); hence it has the form
                \[
                  M=[\,I_2\ \ {-A}\,]
                \]
                for some \(A\in \F_q^{2\times 2}\). Its kernel is
                \[
                  K_A=\{(Az,z):z\in\F_q^2\}.
                \]
                For a front node \(H_{a,\lambda}\), its rank contribution is
                \[
                  \rank(I_2-A\lambda T_a),
                \]
                and the contribution from \(H_{10}\) is \(\rank(A)\). Therefore the $9$-th node has
                repair bandwidth at most \(10\) only if there exists \(A\in \F_q^{2\times 2}\) such
                that
                \[
                  \sum_{a=1}^4\sum_{\lambda\in\Lambda_a}
                  \rank(I_2-A\lambda T_a)+\rank(A)\le10.
                \]
                
                Similarly, a repair matrix for \(H_{10}\) can be normalized so that
                \(M H_{10}=I_2\), and hence has the form
                \[
                  M=[\,{-D}\ \ I_2\,]
                \]
                for some \(D\in \F_q^{2\times 2}\). Its kernel is
                \[
                  K_D=\{(z,Dz):z\in\F_q^2\}.
                \]
                For a front node \(H_{a,\lambda}\), its rank contribution is
                \[
                  \rank(\lambda T_a-D),
                \]
                and the contribution from \(H_9\) is \(\rank(D)\). Therefore the last node has
                repair bandwidth at most \(10\) only if there exists \(D\in \F_q^{2\times 2}\) such
                that
                \[
                  \sum_{a=1}^4\sum_{\lambda\in\Lambda_a}
                  \rank(\lambda T_a-D)+\rank(D)\le10.
                \]
                
                We now carry out a finite computer verification over \(\F_8\) and \(\F_9\).
                For each \(q\in\{8,9\}\), the verification enumerates all choices
                \[
                  c,d\in\F_q\setminus\{0,1\},\qquad c\neq d,
                \]
                all four-tuples
                \[
                  (\Lambda_1,\Lambda_2,\Lambda_3,\Lambda_4),
                  \qquad \Lambda_a\in\binom{\F_q^\times}{2},
                \]
                and all matrices
                \[
                  A,D\in \F_q^{2\times 2}.
                \]
                For every such choice, it checks the MDS condition
                \[
                  \det(\mu T_b-\lambda T_a)\neq0
                \]
                for all \(a\neq b\), \(\lambda\in\Lambda_a\), \(\mu\in\Lambda_b\), and the two
                rank-sum inequalities
                \[
                  \sum_{a=1}^4\sum_{\lambda\in\Lambda_a}
                  \rank(I_2-A\lambda T_a)+\rank(A)\le10,
                \]
                \[
                  \sum_{a=1}^4\sum_{\lambda\in\Lambda_a}
                  \rank(\lambda T_a-D)+\rank(D)\le10.
                \]
                The verification finds no choice satisfying all these necessary conditions.
                
                This contradicts the existence of a \((10,8,2)\) MDS array code $\mathcal{C}$ over
                \(\F_q\), \(q=8,9\), with \(\beta(\mathcal{C})=10\). Hence
                \[
                  \beta_{\opt}^{(9,10,2,2)}=11,
                  \qquad
                  \beta_{\opt}^{(8,10,2,2)}=11.
                \]
                \end{proof}

                \begin{proposition}\label{prop:no-beta9-q8-n9}
                  One has
                  \[
                    \beta_{\opt}^{(8,9,2,2)}=10.
                  \]
                  \end{proposition}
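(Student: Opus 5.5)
By Theorem~\ref{thm:ZLH-lb} and Proposition~\ref{prop:short-bw}, it suffices to show that no \((9,7,2)\) MDS array code \(\mathcal C\) over \(\F_8\) satisfies \(\beta(\mathcal C)=9\). Here we use \(N_{\BW}(8)=12\ge 9\), so Proposition~\ref{prop:short-bw} gives \(\beta_{\opt}^{(8,9,2,2)}\le\lceil(45-8)/4\rceil=10\). The plan is to mirror the proof of Proposition~\ref{prop:no-beta10-q98-n10}. First we reduce a hypothetical equality case to a rigid normal form via the structural lemmas of Subsection~\ref{subsec:repair-bd-lb}. Then we finish with a finite exhaustive search over \(\F_8\).

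For the structural step, write \(n=9=4m+5\) with \(m=1\) and assume \(\beta(\mathcal C)=5m+4=9\). Lemma~\ref{lem:mod-1-case2} needs \(m\ge 2\), so the case \(t=n-2=7\) has to be handled directly. I would redo its proof with \(m=1\). In that case \(d=4\), three blocks have size \(2\), one block \(P_b\) has size \(1\), and \(|E_b|\le1\). The \(\det\)-degree argument only bounds the complement of \(\mathcal R(M_8)\) by \(m+7=8\ge 3m+3=6\), so it gives no contradiction. In this sub-case I would therefore aim only for a normal form: the three full blocks, minus \(E_b\), have \(W_j=\lambda_jT_a\), and one or two leftover nodes are free.

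In the case \(t<n-2\), Lemma~\ref{lem:mod-1-case1-structure} (valid for \(m\ge0\)) gives \(t=6\), three blocks \(P_1,P_2,P_3\) of size \(2\), and a block \(Q=\{7\}\) with \(\mathcal R_7=\mathcal R_8=\{7,8\}\). The conic/line argument of Lemma~\ref{lem:mod-1-geometry} needs \(|P_a|\ge3\), so it does not apply. It still yields useful constraints: in coordinates \(W_j=A_a\operatorname{diag}(x_j,y_j)B_a\), each pair \((x_j,y_j)\), \(j\in P_a\), lies on a line and on a conic.

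Next, as in Proposition~\ref{prop:no-beta10-q98-n10}, I would use sharp \(3\)-transitivity of \(\PGL_2(\F_8)\) together with the vectors \(\mathbf u_i,\mathbf v_i\) to normalize the three block matrices \(T_a\) to explicit forms depending on one or two parameters. This leaves a bounded list of free data: the scalars \(\Lambda_a\), the remaining node matrix or matrices, and the repair matrices for nodes \(8,9\) (and \(7\)), each normalized to \(MH_i=I_2\). The finite search then enumerates these data over \(\F_8\). It checks the MDS conditions \(\det(W_j-W_{j'})\ne0\), and checks that each of the non-block nodes admits a repair matrix with rank sum at most \(9\). The expected outcome is that no configuration survives.

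The main obstacle is keeping the search feasible in the \(t=7\) sub-case, because of the extra free node from \(E_b\). I would first try to show that \(E_b=\varnothing\) using the ordering argument of Lemma~\ref{lem:mod-1-case2}, since \(S_b\prec S_{b-1}\) comparisons still apply. If that fails, I would enumerate the free \(W_j\in\GL_2(\F_8)\) modulo the residual normalizer symmetry.
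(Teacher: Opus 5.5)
Your overall plan is the paper's: get the upper bound \(10\) from Proposition~\ref{prop:short-bw} (since \(N_{\BW}(8)=12\)), reduce to a normal form, then run an exhaustive search over \(\F_8\). However, the normalization step as you state it fails in the case \(t<n-2\). There the distinct sets among \(\mathcal R_1,\dots,\mathcal R_6\) are only \(P_1,P_2,P_3\). So for a block \(P_a\), just two of them avoid \(P_a\), and Lemma~\ref{lem:three-witness} does not apply. The two nodes of \(P_a\) need not lie in a common projective class, and there is no block matrix \(T_a\) to normalize. Write \(\mathbf u_b,\mathbf v_b\) for the vectors of a fixed \(i_b\in P_b\). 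The three block repairs impose on \(W_j\), \(j\in P_a\), only the two incidences \(\mathbf v_bW_j\in\langle\mathbf u_b\rangle\) and \(\mathbf v_cW_j\in\langle\mathbf u_c\rangle\); this is exactly your \(A_a\,\mathrm{diag}(x_j,y_j)\,B_a\) description. So each such \(W_j\) still carries two parameters. A search over ``the scalars \(\Lambda_a\)'' times fixed normalized \(T_a\), modelled on Proposition~\ref{prop:no-beta10-q98-n10}, would therefore not be exhaustive in this case. Every block node has to be parametrized individually.

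The paper's key idea is that the leftover node is not free data, and this handles both cases at once. In both cases the remaining index \(7\) lies outside \(\mathcal R_{i_a}=P_a\) for each of the three size-\(2\) blocks. Hence \(\mathbf v_aW_7\in\langle\mathbf u_a\rangle\) for \(a=1,2,3\). Since the \(\langle\mathbf u_a\rangle\) are pairwise distinct, as are the \(\langle\mathbf v_a\rangle\), sharp \(3\)-transitivity pins \(W_7\) down up to a scalar.

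The paper then proceeds as follows.
\begin{enumerate}
\item It normalizes \(H_7=[\,I_2\ \ I_2\,]^T\), which forces \(\langle\mathbf u_a\rangle=\langle\mathbf v_a\rangle=:L_a\).
\item It moves \(L_1,L_2,L_3\) to \(\langle(0,1)\rangle,\langle(1,0)\rangle,\langle(1,1)\rangle\). Every node of \(P_a\) then fixes the other two \(L\)'s and moves \(L_a\), giving the explicit two-parameter families \(\mathcal F_1,\mathcal F_2,\mathcal F_3\).
\end{enumerate}
This treats \(t<7\) and \(t=7\) simultaneously. For \(t=7\), node \(7\) is the singleton block, and the possible node of \(E_b\) is just another member of its family. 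No use is needed of Lemma~\ref{lem:three-witness}, the \(E_b\) bookkeeping, or the line/conic constraints. The repairs of nodes \(7,8,9\) are encoded by enumerating normalized matrices \(B,A,D\).

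Two further remarks on your \(t=7\) sub-case. First, your hope of proving \(E_b=\varnothing\) via the ordering argument of Lemma~\ref{lem:mod-1-case2} will not work: for \(m=1\) it only confines \(E_b\) to \(P_{b-1}\), which is compatible with \(S_{b-1}\prec S_b\). Second, the singleton node of \(P_b\) is also pinned up to a scalar, by the three rank-one conditions coming from the full blocks. So your fallback of enumerating two free matrices in \(\GL_2(\F_8)\) overcounts; at most the node of \(E_b\) carries extra freedom.
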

                  
                  \begin{proof}
                  By Theorem~\ref{thm:ZLH-lb} and Proposition~\ref{prop:short-bw}, it suffices
                  to show that there is no \((9,7,2)\) MDS array code \(\mathcal C\) over
                  \(\F_8\) with $\beta(\mathcal C)=9$. Suppose, for contradiction, that such a code exists. Choose a parity-check
                  matrix
                  \[
                    H=[\,H_1\ H_2\ \cdots\ H_9\,],
                    \qquad H_i\in\F_8^{4\times2}.
                  \]
                  As in Subsection~\ref{subsec:repair-bd-lb}, we may assume without loss of
                  generality that
                  \[
                    H_j=
                    \begin{bmatrix}
                    I_2\\
                    W_j
                    \end{bmatrix}
                    \quad(j\in[7]),
                    \qquad
                    H_8=
                    \begin{bmatrix}
                    I_2\\
                    0
                    \end{bmatrix},
                    \qquad
                    H_9=
                    \begin{bmatrix}
                    0\\
                    I_2
                    \end{bmatrix},
                  \]
                  with \(W_j\in\GL_2(\F_8)\).
                  
                  Since
                  \[
                    \beta_i(\mathcal C)=n-2+|\mathcal R_i|
                  \]
                  and \(\beta(\mathcal C)=9\), we have
                  \[
                    |\mathcal R_i|\le 9-(9-2)=2
                    \qquad(i\in[9]).
                  \]
                  
                  We first extract three two-node groups among the first seven nodes. If
                  \(t<n-2\), then Lemma~\ref{lem:mod-1-case1-structure}, applied with
                  \(m=1\), gives a partition
                  \[
                    [7]=P_1\sqcup P_2\sqcup P_3\sqcup Q,
                    \qquad
                    |P_1|=|P_2|=|P_3|=2,\quad |Q|=1,
                  \]
                  such that
                  \[
                    \mathcal R_i=P_a
                    \qquad(i\in P_a,\ a=1,2,3).
                  \]
                  
                  It remains to consider the case \(t=n-2=7\). By
                  \cite[Lemma~12]{zhang2025optimal}, the distinct sets among
                  \(\mathcal R_1,\dots,\mathcal R_7\) are at most four. Let these distinct sets
                  be
                  \[
                    S_1,\dots,S_d
                    \qquad(d\le4),
                  \]
                  and put
                  \[
                    P_a:=\{\,i\in[7]:\mathcal R_i=S_a\,\},
                    \qquad
                    s_a:=|P_a|.
                  \]
                  Since \(i\in\mathcal R_i\), we have \(P_a\subseteq S_a\). Hence
                  \[
                    s_a\le |S_a|\le2.
                  \]
                  Since
                  \[
                    \sum_{a=1}^d s_a=7
                  \]
                  and \(d\le4\), exactly three of the \(s_a\)'s are equal to \(2\), and the
                  remaining one is equal to \(1\). For each class with \(s_a=2\), the inclusion
                  \(P_a\subseteq S_a\) and the bound \(|S_a|\le2\) imply
                  \[
                    P_a=S_a.
                  \]
                  Thus, after relabeling, in either case there exist disjoint two-element
                  subsets
                  \[
                    P_1,P_2,P_3\subseteq[7]
                  \]
                  such that
                  \[
                    \mathcal R_i=P_a
                    \qquad(i\in P_a,\ a=1,2,3).
                  \]
                  Let the remaining index in \([7]\setminus(P_1\cup P_2\cup P_3)\) be \(7\).
                  
                  Since \(H_7=[\,I_2\; W_7\,]^T\) with \(W_7\in\GL_2(\F_8)\), write
                  \[
                    H_7=
                    \begin{bmatrix}
                    I_2\\
                    T_4
                    \end{bmatrix}
                  \]
                  for some \(T_4\in\GL_2(\F_8)\). Applying the admissible row operation
                  \[
                    \begin{bmatrix}
                    I_2&0\\
                    0&T_4^{-1}
                    \end{bmatrix}
                  \]
                  to the whole parity-check matrix and then restoring \(H_9\) by an invertible
                  column operation inside its block, we may assume
                  \[
                    T_4=I_2.
                  \]
                  Thus
                  \[
                    H_7=
                    \begin{bmatrix}
                    I_2\\
                    I_2
                    \end{bmatrix},
                    \qquad
                    H_8=
                    \begin{bmatrix}
                    I_2\\
                    0
                    \end{bmatrix},
                    \qquad
                    H_9=
                    \begin{bmatrix}
                    0\\
                    I_2
                    \end{bmatrix}.
                  \]
                  
                  We now put the first six blocks into a normal form. For \(a=1,2,3\), choose
                  one index \(i_a\in P_a\) and a repair matrix \(M^{(a)}\) for the node \(i_a\).
                  By the definition of \(\mathcal R_{i_a}\) and the condition
                  \(\mathcal R_{i_a}=P_a\), the corresponding repair matrix has rank \(2\) on
                  the two nodes in \(P_a\), rank \(1\) on the nodes in \(P_b\) for \(b\neq a\),
                  and rank \(1\) on \(H_7\).
                  
                  By \cite[Lemma~9]{zhang2025optimal}, after an invertible row operation on
                  \(M^{(a)}\), we may write
                  \[
                    M^{(a)}
                    =
                    \begin{bmatrix}
                    \mathbf u_a&0\\
                    0&\mathbf v_a
                    \end{bmatrix},
                  \]
                  where \(\mathbf u_a,\mathbf v_a\in\F_8^{1\times2}\) are non-zero row vectors.
                  Since
                  \[
                    M^{(a)}H_7=
                    \begin{bmatrix}
                    \mathbf u_a\\
                    \mathbf v_a
                    \end{bmatrix}
                  \]
                  has rank \(1\), we have
                  \[
                    \langle\mathbf u_a\rangle=\langle\mathbf v_a\rangle.
                  \]
                  Denote this common projective class by \(L_a\).
                  
                  The three projective classes \(L_1,L_2,L_3\) are pairwise distinct. Indeed, if
                  \(L_a=L_b\) for \(a\neq b\), choose any node \(j\in P_b\). Since
                  \(j\in P_b\), the repair matrix \(M^{(b)}\) has rank \(2\) on \(H_j\), so
                  \[
                    L_b W_j\not\subseteq L_b.
                  \]
                  On the other hand, since \(j\notin P_a=\mathcal R_{i_a}\), the repair matrix
                  \(M^{(a)}\) has rank \(1\) on \(H_j\), and hence
                  \[
                    L_a W_j\subseteq L_a.
                  \]
                  This contradicts \(L_a=L_b\).
                  
                  Using the sharp \(3\)-transitivity of \(\PGL_2(\F_8)\) on
                  \(\mathbb P^1(\F_8)\), and conjugating all \(W_j\)'s accordingly, we may assume
                  that
                  \[
                    L_1=\langle(0,1)\rangle,\qquad
                    L_2=\langle(1,0)\rangle,\qquad
                    L_3=\langle(1,1)\rangle.
                  \]
                  The admissible equivalence is the same as in
                  Proposition~\ref{prop:no-beta10-q98-n10}: if \(X\in\GL_2(\F_8)\) realizes the
                  projective change of coordinates, then the row operation
                  \[
                    \begin{bmatrix}
                    X^{-1}&0\\
                    0&X^{-1}
                    \end{bmatrix}
                  \]
                  followed by right multiplication of each block by \(X\) replaces every
                  \(W_j\) by \(X^{-1}W_jX\), while preserving the standard forms of
                  \(H_7,H_8,H_9\), the MDS property, and all repair ranks.
                  
                  Now let \(W\) be a block belonging to \(P_1\). Since the repair matrices
                  corresponding to \(P_2\) and \(P_3\) have rank \(1\) on this block, \(W\) fixes
                  the projective classes \(L_2\) and \(L_3\). Since the repair matrix
                  corresponding to \(P_1\) has rank \(2\) on this block, \(W\) does not fix
                  \(L_1\). Therefore \(W\) has the form
                  \[
                    W=\lambda
                    \begin{bmatrix}
                    1&0\\
                    x&1+x
                    \end{bmatrix},
                    \qquad
                    \lambda\in\F_8^\times,\quad x\in\F_8\setminus\{0,1\}.
                  \]
                  Similarly, every block belonging to \(P_2\) has the form
                  \[
                    W=\lambda
                    \begin{bmatrix}
                    1&y\\
                    0&1+y
                    \end{bmatrix},
                    \qquad
                    \lambda\in\F_8^\times,\quad y\in\F_8\setminus\{0,1\},
                  \]
                  and every block belonging to \(P_3\) has the form
                  \[
                    W=\lambda
                    \begin{bmatrix}
                    1&0\\
                    0&z
                    \end{bmatrix},
                    \qquad
                    \lambda\in\F_8^\times,\quad z\in\F_8^\times\setminus\{1\}.
                  \]
                  
                  Thus any equality case over \(\F_8\) gives two matrices from each of the three
                  families
                  \[
                    \mathcal F_1=
                    \left\{
                    \lambda
                    \begin{bmatrix}
                    1&0\\
                    x&1+x
                    \end{bmatrix}
                    :
                    \lambda\in\F_8^\times,\ x\in\F_8\setminus\{0,1\}
                    \right\},
                  \]
                  \[
                    \mathcal F_2=
                    \left\{
                    \lambda
                    \begin{bmatrix}
                    1&y\\
                    0&1+y
                    \end{bmatrix}
                    :
                    \lambda\in\F_8^\times,\ y\in\F_8\setminus\{0,1\}
                    \right\},
                  \]
                  and
                  \[
                    \mathcal F_3=
                    \left\{
                    \lambda
                    \begin{bmatrix}
                    1&0\\
                    0&z
                    \end{bmatrix}
                    :
                    \lambda\in\F_8^\times,\ z\in\F_8^\times\setminus\{1\}
                    \right\}.
                  \]
                  Let $\mathcal W$ be the resulting set of six matrices, together with \(I_2\) and \(0\), which
                  correspond to \(H_7\) and \(H_8\). The MDS property requires
                  \[
                    \det(W-W')\neq0
                    \qquad
                    \text{for all distinct }W,W'\in\mathcal W.
                  \]
                  
                  Since \(\beta(\mathcal C)=9\), the three nodes \(H_7,H_8,H_9\) must all have
                  repair bandwidth at most \(9\). We express these conditions by rank sums. For \(H_7\), a repair matrix can be normalized so that \(MH_7=I_2\), and hence
                  has the form
                  \[
                    M=[\,B\ \ I_2-B\,]
                  \]
                  for some \(B\in\F_8^{2\times2}\). Thus there must exist
                  \(B\in\F_8^{2\times2}\) such that
                  \[
                    \sum_{W\in\mathcal W\setminus\{0,I_2\}}
                    \rank\bigl(B+(I_2-B)W\bigr)
                    +
                    \rank(B)+\rank(I_2-B)
                    \le9.
                  \]
                  For \(H_8\), a repair matrix can be normalized so that \(MH_8=I_2\), and hence
                  has the form
                  \[
                    M=[\,I_2\ \ {-A}\,]
                  \]
                  for some \(A\in\F_8^{2\times2}\). Thus there must exist
                  \(A\in\F_8^{2\times2}\) such that
                  \[
                    \sum_{W\in\mathcal W\setminus\{0,I_2\}}
                    \rank(I_2-AW)
                    +
                    \rank(I_2-A)+\rank(A)
                    \le9.
                  \]
                  For \(H_9\), a repair matrix can be normalized so that \(MH_9=I_2\), and hence
                  has the form
                  \[
                    M=[\,{-D}\ \ I_2\,]
                  \]
                  for some \(D\in\F_8^{2\times2}\). Thus there must exist
                  \(D\in\F_8^{2\times2}\) such that
                  \[
                    \sum_{W\in\mathcal W}\rank(W-D)\le9.
                  \]
                  
                  We now carry out a finite computer verification over \(\F_8\). The verification
                  enumerates all unordered pairs of matrices from each of the three families
                  \(\mathcal F_1,\mathcal F_2,\mathcal F_3\), checks the MDS condition
                  \[
                    \det(W-W')\neq0
                    \qquad(W\neq W',\ W,W'\in\mathcal W),
                  \]
                  and then checks whether there exist matrices
                  \[
                    A,B,D\in\F_8^{2\times2}
                  \]
                  satisfying the three rank-sum inequalities above. The verification finds no
                  choice satisfying all these necessary conditions.
                  
                  This contradicts the existence of a \((9,7,2)\) MDS array code $\mathcal{C}$ over \(\F_8\)
                  with repair bandwidth \(\beta(\mathcal{C})=9\). Therefore
                  \[
                    \beta_{\opt}^{(8,9,2,2)}=10.
                  \]
                  \end{proof}

                  \refstepcounter{subsection}\label{app:gamma-opt-exc}
                  \subsection*{\thesubsection\quad Optimal Repair I/O for Small-Parameter Exceptional Cases}
                  \addcontentsline{toc}{appsubsection}{\protect\numberline{\thesubsection}Optimal Repair I/O for Small-Parameter Exceptional Cases}
                  
                  We handle the code length \(n=4\), which is not covered by Corollary~\ref{cor:short-optimal-io}.
                  
                  \begin{proposition}\label{prop:gamma-opt-4-2-2}
                    For any prime power $q$, one has $\gamma_{\opt}^{(q,4,2,2)}=3$.
                  \end{proposition}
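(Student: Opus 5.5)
We must prove $\gamma_{\opt}^{(q,4,2,2)}=3$ for every prime power $q$, which means a lower bound of $3$ and an explicit code attaining it.

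\textbf{Lower bound.} For $n=4$ we have $\lceil(4n-7)/3\rceil=\lceil 9/3\rceil=3$, so Theorem~\ref{thm:ZLH-lb} already gives $\gamma(\mathcal C)\ge 3$ for every $(4,2,2)$ MDS array code. A self-contained argument is also available. Take a repair matrix $M$ for node $i$ with $W=\ker M$, so $\dim W=2$ and $W\cap\mathcal H_i=\{0\}$. If some helper $j$ had both columns in $W$, then $W=\mathcal H_j$. A second helper $j'$ with a column in $W$ would then meet $\mathcal H_j$ nontrivially, contradicting MDS. That gives $\lambda_i\le 2$ and $\gamma_i\ge 6-2=4$ in this case. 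Otherwise each of the three helpers contributes at most $1$, so $\gamma_i\ge 6-3=3$. Either way $\gamma(\mathcal C)\ge 3$.

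\textbf{Construction.} We need, for each $i$, a $2$-dimensional $W_i$ that avoids $\mathcal H_i$ and contains exactly one chosen column of each of the other three nodes. Over any $\F_q$, including $q=2$, I would try
\[
H_1=\begin{bmatrix} I_2\\ 0\end{bmatrix},\quad H_2=\begin{bmatrix}0\\ I_2\end{bmatrix},\quad H_3=\begin{bmatrix} I_2\\ I_2\end{bmatrix},\quad H_4=\begin{bmatrix} I_2\\ J\end{bmatrix},
\]
where $J$ is chosen with $J$ and $J-I$ invertible and with both diagonal entries zero where possible, e.g. $J=\begin{bmatrix}0&1\\1&1\end{bmatrix}$. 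This $J$ has $\det J=-1$ and $\det(J-I)=-1$, so all pairs are skew over every field. For the columns we keep $e_1,e_2,e_3,e_4$, $e_1+e_3$, $e_2+e_4$, $e_1+e_4$, and $e_2+e_3+e_4$.

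For each node we then pick one column from each helper and check that these three columns span a $2$-dimensional space. That space is $W_i$, and we must verify $W_i\cap\mathcal H_i=\{0\}$. For node $1$, for instance, take $e_3$ (from $H_2$), $e_1+e_3$ (from $H_3$) and $e_1+e_4$ (from $H_4$). These are linearly independent, so this choice fails and I would instead search over alternatives. Where needed I will use Lemma~\ref{lem:node-wise-monomial-io} or a general column basis inside each $\mathcal H_j$, since column directions are free: $\mathcal X_j$ may be any two distinct points of $\mathbb P(\mathcal H_j)$.

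A cleaner route is to choose the column sets from the repair subspaces. Fix three pairwise skew planes $U_1,U_2,U_3$ (for example $W_1,W_2,W_3$ of the long-length section), each skew to the relevant node, and for each node take its two columns as its intersections with two of these planes. This mimics the odd case of Lemma~\ref{lem:long-io-endpoint} and only needs $q+1\ge 3$ points.

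Concretely I would use $\Omega=\{L_1(a,b),L_2(b,a),L_3(\cdot,\cdot),\dots\}$ with $a\ne b$ in $\mathbb P^1(\F_q)$, taking $|X|,|Y|,|Z|$ sizes $(2,1,1)$ and the assignments $X\leadsto W_3$, $Y\leadsto W_2$, $Z\leadsto W_1$. The difficulty is that repairing a node in $X$ with $W_3$ hits only $Y\cup Z$, which is $2$ helpers, giving I/O $4$. So the main obstacle is ensuring that every node's repair plane hits all three helpers, i.e.\ that each helper has a column in it. I will settle this by a direct small search for four skew planes and four repair planes $W_i$ with the incidence "misses own node, contains one chosen column of each other node". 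That amounts to a $4\times 4$ incidence with $W_i$ meeting $\mathcal H_j$ for all $j\neq i$, plus a consistent choice of two columns per node, each column serving the other two repair planes. Each node $j$ must put its two chosen columns on three planes $W_i$ ($i\ne j$), so some column must lie on two of them, i.e.\ on $\mathcal H_j\cap W_i\cap W_{i'}$. I would therefore design $W_1,\dots,W_4$ so that the lines $W_i\cap W_{i'}$ are arranged to meet the correct $\mathcal H_j$, and verify the result by direct computation valid for all $q$.
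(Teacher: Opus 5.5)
Your lower bound is correct. Theorem~\ref{thm:ZLH-lb} gives $\lceil(4\cdot 4-7)/3\rceil=3$, and your direct argument is also sound: a helper with both columns in $W$ forces $W=\mathcal H_j$, which then meets no other node.

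The gap is the upper bound, which is the real content of the proposition. You never exhibit a $(4,2,2)$ MDS array code with $\gamma(\mathcal C)=3$. Your first attempt is dropped after one bad triple. Your second (three repair planes with class sizes $(2,1,1)$) necessarily gives I/O $4$ to the two nodes sharing a plane, as you note yourself. The proposal then ends with a plan for a ``direct small search'' rather than a verified code.

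Moreover, the search over alternatives with your fixed columns cannot succeed. For node $2$, an I/O-$3$ repair plane would have to be $\Span\{a,b\}$ with $a\in\{e_1,e_2\}$ and $b\in\{e_1+e_3,e_2+e_4\}$. But $\Span\{e_1,e_1+e_3\}$ and $\Span\{e_2,e_2+e_4\}$ meet $\mathcal H_2$. Meanwhile $\Span\{e_1,e_2+e_4\}$ and $\Span\{e_2,e_1+e_3\}$ contain neither $e_1+e_4$ nor $e_2+e_3+e_4$. So the column directions themselves must be changed, and that is exactly the step left undone.

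Your node subspaces can in fact be completed. Keep $H_2,H_3,H_4$, replace the columns of $H_1$ by $e_1$ and $e_2-e_1$, and use
$W_1=\Span\{e_3,e_2+e_4\}$, $W_2=\Span\{e_2-e_1,e_2+e_4\}$, $W_3=\Span\{e_1,e_4\}$, $W_4=\Span\{e_1,e_3\}$.
Each $W_i$ is skew to $\mathcal H_i$ and contains exactly one chosen column of every other node. All determinants $\det[H_i\ H_j]$ ($i\neq j$) are $\pm1$, so this single code works over every $\F_q$, including $q=2$.

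That is slightly more uniform than the paper's proof. The paper combines Theorem~\ref{thm:ZLH-lb} with two separate explicit codes and repair matrices: one for odd $q$, where some $\det[H_i\ H_j]=2$, and one for even $q$. Either way, some explicit code with verified repair subspaces of this kind has to be supplied. As written, your proposal proves only $\gamma_{\opt}^{(q,4,2,2)}\ge 3$.
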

                  
                    \begin{proof}
                    By Theorem~\ref{thm:ZLH-lb}, it suffices to construct, for every prime power \(q\), a \((4,2,2)\) MDS array code \(\mathcal C\) over $\mathbb{F}_q$ with \(\gamma(\mathcal C)=3\). We distinguish two cases.
                  
                    \smallskip
                    \noindent\textbf{Case 1: \(q\) is odd.}
                    Let $\mathcal{C}$ be the \((4,2,2)\) array code with parity-check matrix \(H=[\,H_1\ H_2\ H_3\ H_4\,]\) given by
                    \[
                    H_1=
                    \begin{bmatrix}
                    1&1\\
                    0&1\\
                    0&0\\
                    0&0
                    \end{bmatrix},\qquad
                    H_2=
                    \begin{bmatrix}
                    0&0\\
                    0&0\\
                    1&-1\\
                    0&1
                    \end{bmatrix},\qquad H_3=
                    \begin{bmatrix}
                    1&-1\\
                    0&1\\
                    1&-1\\
                    0&1
                    \end{bmatrix},\qquad H_4=
                    \begin{bmatrix}
                    1&1\\
                    0&1\\
                    0&-1\\
                    1&1
                    \end{bmatrix}.
                    \]
                    A direct calculation shows that for any indices $i,j\in [4]$ with $i<j$, the \(4\times4\) matrix
                    \(
                    [H_i\ H_j]
                    \)
                    has determinant $1$ or \(2\). Since \(2\neq 0\) in odd characteristic, \(\mathcal C\) is a \((4,2,2)\) MDS array code over every field of odd characteristic.
                    
                    Now define the repair matrices
                    \[
                    M_1=
                    \begin{bmatrix}
                    1&0&-1&-1\\
                    0&1&0&0
                    \end{bmatrix},\qquad
                    M_2=
                    \begin{bmatrix}
                    0&1&1&0\\
                    0&-1&0&1
                    \end{bmatrix},
                    \]
                    \[
                    M_3=
                    \begin{bmatrix}
                    1&-1&0&0\\
                    0&0&1&1
                    \end{bmatrix},\qquad
                    M_4=
                    \begin{bmatrix}
                    0&1&0&0\\
                    0&0&0&1
                    \end{bmatrix}.
                    \]
                    One can check directly that
                    \[
                    M_1H_1,\ M_2H_2,\ M_3H_3,\ M_4H_4\in \GL_2(\F_q),
                    \]
                    while for any indices $i,j\in [4]$ with $i\ne j$, the matrix \(M_iH_j\) has exactly one non-zero column. Consequently, $\gamma(\mathcal C)=3$.
                    
                    \smallskip
                    \noindent\textbf{Case 2: \(q\) is even.}
                    Let \(H=[\,H_1\ H_2\ H_3\ H_4\,]\) be the parity-check matrix, where
                    \[
                    H_1=
                    \begin{bmatrix}
                    1&1\\
                    1&0\\
                    0&0\\
                    0&0
                    \end{bmatrix},\qquad
                    H_2=
                    \begin{bmatrix}
                    0&0\\
                    0&0\\
                    1&1\\
                    0&1
                    \end{bmatrix},\qquad H_3=
                    \begin{bmatrix}
                    0&1\\
                    1&1\\
                    0&1\\
                    1&1
                    \end{bmatrix},\qquad H_4=
                    \begin{bmatrix}
                    1&1\\
                    1&0\\
                    0&1\\
                    1&1
                    \end{bmatrix}.
                    \]
                    A direct calculation shows that for any indices $i,j\in [4]$ with $i<j$, the matrix
                    \(
                    [H_i\ H_j]
                    \)
                    has determinant \(1\). Hence \(\mathcal C\) is a \((4,2,2)\) MDS array code over every field of characteristic \(2\).
                    
                    Now define the repair matrices
                    \[
                    M_1=
                    \begin{bmatrix}
                    0&1&0&1\\
                    1&0&0&1
                    \end{bmatrix},\qquad
                    M_2=
                    \begin{bmatrix}
                    0&0&1&0\\
                    0&1&0&1
                    \end{bmatrix},
                    \]
                    \[
                    M_3=
                    \begin{bmatrix}
                    0&0&1&1\\
                    0&1&0&0
                    \end{bmatrix},\qquad
                    M_4=
                    \begin{bmatrix}
                    0&0&1&1\\
                    1&1&0&0
                    \end{bmatrix}.
                    \]
                    Again, one can check directly that
                    \[
                    M_1H_1,\ M_2H_2,\ M_3H_3,\ M_4H_4\in \GL_2(\F_q),
                    \]
                    while for any indices $i,j\in [4]$ with $i\ne j$, the matrix \(M_iH_j\) has exactly one non-zero column. Consequently, $\gamma(\mathcal C)=3$. This completes the proof.
                    \end{proof}

\clearpage
\phantomsection
\addcontentsline{toc}{section}{\refname}
\bibliographystyle{alphaurl}
\bibliography{references}


\end{document}